\newtheorem{theorem}{Theorem}
\newtheorem{fact}{Fact}
\newtheorem{proposition}{Proposition}
\newcommand{\mbb}{\mathbb}
\newcommand{\comments}[1]{}
\begin{document}
	\title{Detecting entanglement in quantum many-body systems via permutation moments}
	
	\begin{abstract}
		Multipartite entanglement plays an essential role in both quantum information science and many-body physics. Due to the exponentially large dimension and complex geometric structure of the state space, the detection of entanglement in many-body systems is extremely challenging in reality. Conventional means, like entanglement witness and entropy criterion, either highly depend on the prior knowledge of the studied systems or the detection capability is relatively weak. In this work, we propose a framework for designing multipartite entanglement criteria based on permutation moments, which have an effective implementation with either the generalized control-SWAP quantum circuits or the random unitary techniques. As an example, in the bipartite scenario, we develop an entanglement criterion that can detect bound entanglement and show strong detection capability in the multi-qubit Ising model with a long-range $XY$ Hamiltonian. In the multipartite case, the permutation-moment-based criteria can detect entangled states that are not detectable by any criteria extended from the bipartite case. Our framework also shows potential in entanglement quantification and entanglement structure detection.
	\end{abstract}
 \date{\today}

\author{Zhenhuan Liu}
\affiliation{Center for Quantum Information, Institute for Interdisciplinary Information Sciences, Tsinghua University, Beijing 100084, China}

\author{Yifan Tang}
\affiliation{Center for Quantum Information, Institute for Interdisciplinary Information Sciences, Tsinghua University, Beijing 100084, China}
\affiliation{Department of Mathematics and Computer Science, Freie Universität Berlin, 14195 Berlin, Germany}
\affiliation{Department of Physics, Freie Universität Berlin, 14195 Berlin, Germany}

\author{Hao Dai}
\affiliation{Center for Quantum Information, Institute for Interdisciplinary Information Sciences, Tsinghua University, Beijing 100084, China}

\author{Pengyu Liu}
\affiliation{Center for Quantum Information, Institute for Interdisciplinary Information Sciences, Tsinghua University, Beijing 100084, China}

\author{Shu Chen}
\affiliation{Center for Quantum Information, Institute for Interdisciplinary Information Sciences, Tsinghua University, Beijing 100084, China}

\author{Xiongfeng Ma}
\email{xma@tsinghua.edu.cn}
\affiliation{Center for Quantum Information, Institute for Interdisciplinary Information Sciences, Tsinghua University, Beijing 100084, China}

\maketitle

The past decades have witnessed great progress in understanding quantum entanglement \cite{Horodecki2009entanglement}. To date, entanglement acts not only as the cornerstone of quantum information science, but also as a new perspective in many other fields, like quantum thermodynamics \cite{ueda2020quantum,adam2016thermal}, condensed matter physics \cite{LAFLORENCIE2016quantum}, and quantum gravity \cite{nishioka2018entanglement}. Especially in many-body physics, the dynamical behavior, scaling property, and spectral form of entanglement are key indicators to characterize different phases of the system \cite{abanin2019manybody,amico2008entanglement}.

As a resource that cannot be produced by local operations and classical communication (LOCC), entangled $k$-partite states are those that cannot be written in a separable form, $\rho=\sum_ip_i\rho_{1}^i\otimes \cdots \otimes\rho_{k}^i$, where $p_i\ge 0$ satisfying the normalization condition $\sum_ip_i=1$ and $\rho_{r}^i$ is the density matrix of the $r$-th subsystem. As the dimension of a quantum system grows exponentially with the number of qubits, the geometric structure of state space becomes highly complicated, making entanglement detection a resource-consuming task. In fact, determining whether a state is entangled or not is generally a NP-hard problem \cite{gurvits2003complexity}. For pure states or states with enough prior knowledge, entanglement can be effectively detected by purity measurements \cite{adam2016thermal,brydges2019probing}, variational algorithms \cite{kokail2021variational}, or entanglement witness \cite{gunhe2009entanglement,zhang2021efficient}. While in the noisy intermediate-scale quantum era \cite{Preskill2018quantumcomputingin}, the processed states are usually disturbed by unpredictable noise, rendering the detection capability of conventional means ineffective. Consequently, it is important to find implementable and efficient methods to detect multipartite entanglement with state-of-the-art devices.

For a generic mixed multipartite state without prior knowledge, there are two commonly-used techniques to detect entanglement, density matrix moments and index permutation. Moments of density matrix, $\tr(\rho^n)$, carry much information about the states and are relatively easy to measure \cite{ekert2002direct,van2012measuring,elben2019toolbox}.  Hence, they become practical tools in estimating properties of quantum systems \cite{smith2017quantifying}, including quantum entanglement \cite{horodecki2003measuring,gunhe2009entanglement,imai2021bound,beckey2021computable,ketter2022stat}. However, most moment-based entanglement criteria are specially designed for states with few parties or low dimensions. 
A general moment-based entanglement-detection framework for multipartite systems is still missing. 

Based on the rearrangement of density matrix elements, the index permutation criterion \cite{horodecki2006separability} can be applied in systems with an arbitrary number of parties and dimensions and has many generalizations \cite{zhang2008entanglement}. In general, a $k$-partite quantum state can be represented using a matrix with $2k$ indices,
\begin{equation}
\begin{split}
\rho=\sum_{s_1,\cdots,s_{2k}}\rho_{s_1s_2,\cdots , s_{2k-1}s_{2k}}\ketbra{s_1\cdots s_{2k-1}}{s_{2}\cdots s_{2k}},
\end{split}
\end{equation}
where $s_1,s_3,\dots,s_{2k-1}$ represent the row indices, and $s_2,s_4,\dots,s_{2k}$ represent the column ones. The two indices, $s_{2r-1}$ and $s_{2r}$, denote for the $r$-th subsystem. By changing the order of these $2k$ indices, one gets a new matrix, $\mathcal{R}_{\pi}(\rho)$, with
\begin{equation}\label{eq:defRpi}
\begin{split}
\left[\mathcal{R}_{\pi}(\rho)\right]_{s_1s_2,\cdots, s_{2k-1}s_{2k}} = \rho_{s_{\pi(1)}s_{\pi(2)},\cdots , s_{\pi(2k-1)}s_{\pi(2k)}},
\end{split}
\end{equation}
where $\pi$ is an element of $2k$-th permutation group $\mathcal{S}_{2k}$. For simplicity, hereafter, we use $\mathcal{R}_\pi$ to denote $\mathcal{R}_\pi(\rho)$. Using the property of index permutation, one could prove that \cite{horodecki2006separability}
\begin{equation}\label{eq:permutationcriterion}
\begin{split}
\norm{\mathcal{R}_\pi}=\tr\left(\sqrt{\mathcal{R}_\pi\mathcal{R}_\pi^\dagger}\right)=\sum_i\lambda_i \le 1,
\end{split}
\end{equation}
for all $k$-partite separable states, where $\{\lambda_i\}$ are the singular values of $\mathcal{R}_\pi$. The violation of this inequality indicates entanglement.  In the bipartite scenario,  when setting $\pi$ to be $(1,2)$ and $(2,3)$, where $(\cdot,\cdot)$ denotes exchanging two indices, one gets the widely-used positive partial transposition (PPT) criterion \cite{peres1996separability} and the computable cross norm (CCNR) criterion \cite{chen2002matrix}, respectively. However, because index permutation is an unphysical operation, and the permutation criteria are based on singular value decomposition, a highly nonlinear operation, measurement of $\norm{\mathcal{R}_\pi}$ usually requires full state tomography, which is extremely resource-consuming \cite{haah2017sample}.

To harness the power of permutation criteria in multipartite entanglement detection, we borrow the idea from moment criteria. Although it is generally hard to measure $\norm{\mathcal{R}_\pi}$ directly, one can alternatively estimate the higher-order moments, $M_{2n}^\pi=\tr[(\mathcal{R}_\pi\mathcal{R}_\pi^\dagger)^n]=\sum_i\lambda_i^{2n}$, which are much easier to access. These permutation moments can help to lower bound $\norm{\mathcal{R}_\pi}=\sum_i\lambda_i$ and infer whether the state is multipartite entangled or not. A similar idea has also been used in the estimation of quantum negativity \cite{gray2018machine,yu2021optimal} and entropy \cite{smith2017quantifying}. By changing the index permutation operation $\mathcal{R}_\pi(\cdot)$ and measuring different orders of moments, we generate a series of implementable multipartite entanglement criteria, which we call \emph{moment-based permutation criterion}. The entanglement detection flowchart is shown in Fig.~\ref{fig:overview}.

\begin{figure}[htbp]
    \centering
    \includegraphics[scale=0.24]{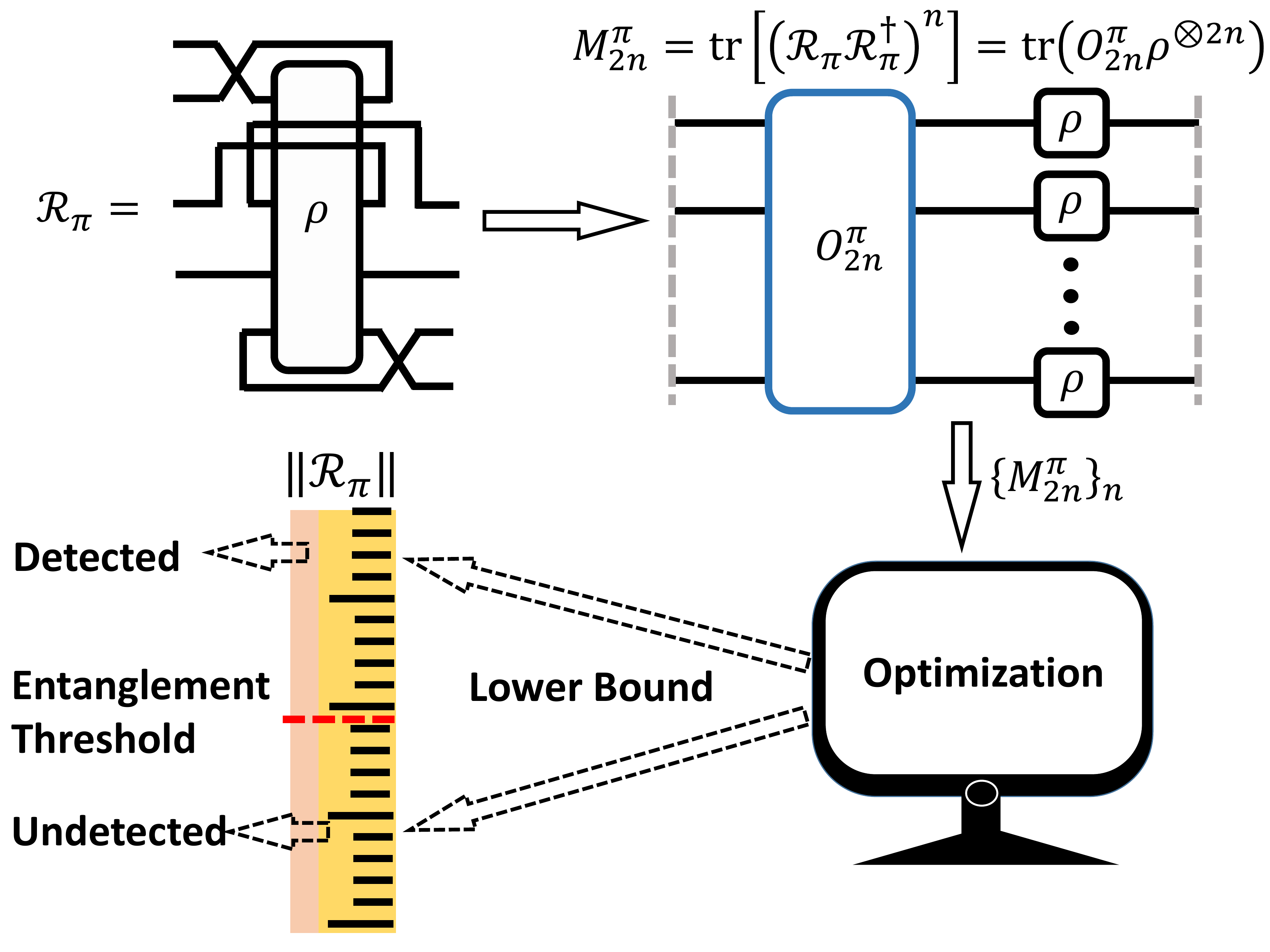}
    \caption{Flowchart of entanglement detection. To detect multipartite entanglement of $\rho$, one needs to first choose an index permutation operation $\mathcal{R}_\pi(\cdot)$ and set $\norm{\mathcal{R}_\pi(\rho)}=\sum_i\lambda_i$ as the entanglement indicator. Then, one measures the permutation moments $\{M_{2n}^\pi=\sum_i\lambda_i^{2n}\}_n$ and use these moments to lower bound $\norm{\mathcal{R}_\pi(\rho)}$. If the lower bound is larger than the entanglement threshold set for $\norm{\mathcal{R}_\pi(\rho)}$, the multipartite entanglement is successfully detected. Otherwise, one can measure higher-order moments or pick another index permutation and repeat the procedure. }
    \label{fig:overview}
\end{figure}

\emph{Moment-based permutation criteria.} ---  For multipartite quantum state $\rho$, each party has two indices, one for row and one for column. Given $\mathcal{R}_\pi$, all the parties can be divided into four types, based on the position transition of their two indices, as shown in the first line of Table \ref{tab:mainres}.

We find two properties of index permutation. First, if a party is T1-type or T2-type for $\mathcal{R}_\pi$, then it will keep the type for $\mathcal{R}_\pi^\dagger$; while if it is R1-type or R2-type for $\mathcal{R}_\pi$, then it becomes R2-type or R1-type for $\mathcal{R}_\pi^\dagger$, respectively. Second, the indices contraction in $M_{2n}^\pi=\tr[(\mathcal{R}_\pi\mathcal{R}_\pi^\dagger)^n]$ only acts on the indices from the same party of the $2n$ copies of $\rho$. Hence, if we list these $2n$ copies of states in order, the indices from an R-type party will contract with one of its two neighboring states and the indices from a T-type party will contract with both of its neighboring states. Note that $\mathcal{R}_\pi$ might not be a square matrix, so its odd moments are generally inaccessible. 

\begin{table}
\begin{tabular}{ccccc}
\hline
Type & T1 & T2 & R1 & R2 \\
\hline
$\mathcal{R}_{\pi}$ &
\begin{tikzpicture}[
	scale=1.5,
	baseline, 
	Lline/.style={-,blue,very thick,rounded corners},
	Rline/.style={-,red,very thick,rounded corners},
	]
	\node[draw=black,rectangle,rounded corners=0.1cm,minimum size=.42cm,,text=black,very thick] (rho) at (0,0) {$\rho$};
	\draw[Lline] (rho.west) --++ (-.2,0);
	\draw[Rline] (rho.east) --++ (.2,0);
\end{tikzpicture} 
& 
\begin{tikzpicture}[
	scale=1.5,
	baseline, 
	Lline/.style={-,blue,very thick,rounded corners},
	Rline/.style={-,red,very thick,rounded corners},
	]
	\node[draw=black,rectangle,rounded corners=0.1cm,minimum size=.42cm,,text=black,very thick] (rho) at (0,0) {$\rho$};
	\draw[Lline] (rho.west) --++ (-.15,0) to [bend left=80] ($(rho.east)+(.225,0)$) --++ (.15,0);
	\draw[Rline] (rho.east) --++ (.15,0) to [bend left=80] ($(rho.west)+(-.225,0)$) --++ (-.15,0);
\end{tikzpicture} 
& 
\begin{tikzpicture}[
	scale=1.5,
	baseline, 
	Lline/.style={-,blue,very thick,rounded corners},
	Rline/.style={-,red,very thick,rounded corners},
	]
	\node[draw=black,rectangle,rounded corners=0.1cm,minimum size=.42cm,,text=black,very thick] (rho) at (0,0) {$\rho$};
	\draw[Lline] (rho.west) --++ (-.3,0);
	\draw[Rline] (rho.east) --++ (.15,0) to [bend left=50] ($(rho.west)+(-.15,-0.15)$) --++ (-.15,0);
\end{tikzpicture}
&
\begin{tikzpicture}[
	scale=1.5,
	baseline, 
	Lline/.style={-,blue,very thick,rounded corners},
	Rline/.style={-,red,very thick,rounded corners},
	]
	\node[draw=black,rectangle,rounded corners=0.1cm,minimum size=.42cm,,text=black,very thick] (rho) at (0,0) {$\rho$};
	\draw[Lline] (rho.west) --++ (-.15,0) to [bend left=50] ($(rho.east)+(.15,0.15)$) --++ (.15,0);;
	\draw[Rline] (rho.east) --++ (.3,0);
\end{tikzpicture}
\\
$\tr(\mathcal{R}_{\pi}\mathcal{R}_{\pi}^{\dagger})$ &
\begin{tikzpicture}[
	scale=1.5,
	baseline, 
	Lline/.style={-,blue,very thick,rounded corners},
	Rline/.style={-,red,very thick,rounded corners},
	Bline/.style={dashed,gray,thick},
	]
	\node[draw=black,rectangle,rounded corners=0.1cm,minimum size=.42cm,,text=black,very thick] (rho1) at (0,.175) {$\rho$};
	\draw[Lline] (rho1.west) --++ (-.1,0) --++ (-.2,-.35) --++ (-0.1,0) node (L1) {};
	\draw[Rline] (rho1.east) --++ (.1,0) node (R1) {};
	\node[draw=black,rectangle,rounded corners=0.1cm,minimum size=.42cm,,text=black,very thick] (rho2) at ($(rho1)+(0,-.35)$) {$\rho$};
	\draw[Lline] (rho2.west) --++ (-.1,0) --++ (-.2,.35) --++ (-0.1,0) node (L2) {};
	\draw[Rline] (rho2.east) --++ (.1,0) node (R2) {};
	\draw[Bline] (L1.south) -- (L2.north);
	\draw[Bline] (R1.north) -- (R2.south);
\end{tikzpicture} 
& 
\begin{tikzpicture}[
	scale=1.5,
	baseline, 
	Lline/.style={-,blue,very thick,rounded corners},
	Rline/.style={-,red,very thick,rounded corners},
	Bline/.style={dashed,gray,thick},
	]
	\node[draw=black,rectangle,rounded corners=0.1cm,minimum size=.42cm,,text=black,very thick] (rho1) at (0,.175) {$\rho$};
	\draw[Lline] (rho1.west) --++ (-.1,0) --++ (-.2,-.35) --++ (-0.1,0) node (L1) {};
	\draw[Rline] (rho1.east) --++ (.1,0) node (R1) {};
	\node[draw=black,rectangle,rounded corners=0.1cm,minimum size=.42cm,,text=black,very thick] (rho2) at ($(rho1)+(0,-.35)$) {$\rho$};
	\draw[Lline] (rho2.west) --++ (-.1,0) --++ (-.2,.35) --++ (-0.1,0) node (L2) {};
	\draw[Rline] (rho2.east) --++ (.1,0) node (R2) {};
	\draw[Bline] (L1.south) -- (L2.north);
	\draw[Bline] (R1.north) -- (R2.south);
\end{tikzpicture}
&
\begin{tikzpicture}[
	scale=1.5,
	baseline, 
	Lline/.style={-,blue,very thick,rounded corners},
	Rline/.style={-,red,very thick,rounded corners},
	Bline/.style={dashed,gray,thick},
	]
	\node[draw=black,rectangle,rounded corners=0.1cm,minimum size=.42cm,,text=black,very thick] (rho1) at (0,.175) {$\rho$};
	\draw[Lline] (rho1.west) --++ (-.1,0) --++ (-.2,-.35) --++ (-0.1,0) node (L1) {};
	\draw[Rline] (rho1.east) --++ (.1,0) node (R1) {};
	\node[draw=black,rectangle,rounded corners=0.1cm,minimum size=.42cm,,text=black,very thick] (rho2) at ($(rho1)+(0,-.35)$) {$\rho$};
	\draw[Lline] (rho2.west) --++ (-.1,0) --++ (-.2,.35) --++ (-0.1,0) node (L2) {};
	\draw[Rline] (rho2.east) --++ (.1,0) node (R2) {};
	\draw[Bline] (L1.south) -- (L2.north);
	\draw[Bline] (R1.north) -- (R2.south);
\end{tikzpicture}
&
\begin{tikzpicture}[
	scale=1.5,
	baseline, 
	Lline/.style={-,blue,very thick,rounded corners},
	Rline/.style={-,red,very thick,rounded corners},
	Bline/.style={dashed,gray,thick},
	]
	\node[draw=black,rectangle,rounded corners=0.1cm,minimum size=.42cm,,text=black,very thick] (rho1) at (0,.175) {$\rho$};
	\draw[Lline] (rho1.west) --++ (-.1,0) --++ (-.2,-.35) --++ (-0.1,0) node (L1) {};
	\draw[Rline] (rho1.east) --++ (.1,0) node (R1) {};
	\node[draw=black,rectangle,rounded corners=0.1cm,minimum size=.42cm,,text=black,very thick] (rho2) at ($(rho1)+(0,-.35)$) {$\rho$};
	\draw[Lline] (rho2.west) --++ (-.1,0) --++ (-.2,.35) --++ (-0.1,0) node (L2) {};
	\draw[Rline] (rho2.east) --++ (.1,0) node (R2) {};
	\draw[Bline] (L1.south) -- (L2.north);
	\draw[Bline] (R1.north) -- (R2.south);
\end{tikzpicture}
\vspace{0.2cm} \\
$\tr[(\mathcal{R}_{\pi}\mathcal{R}_{\pi}^{\dagger})^2]$ &
\begin{tikzpicture}[
	scale=1.5,
	baseline, 
	Lline/.style={-,blue,very thick,rounded corners},
	Rline/.style={-,red,very thick,rounded corners},
	Bline/.style={dashed,gray,thick},
	]
	\node[draw=black,rectangle,rounded corners=0.1cm,minimum size=.42cm,,text=black,very thick] (rho1) at (0,.525) {$\rho$};
	\draw[Lline] (rho1.west) --++ (-.1,0) --++ (-.2,-.35) --++ (-0.1,0);
	\draw[Rline] (rho1.east) --++ (.1,0) node (R1) {};
	\node[draw=black,rectangle,rounded corners=0.1cm,minimum size=.42cm,,text=black,very thick] (rho2) at ($(rho1)+(0,-.35)$) {$\rho$};
	\draw[Lline] (rho2.west) --++ (-.1,0) --++ (-.2,-.35) --++ (-0.1,0);
	\draw[Rline] (rho2.east) --++ (.1,0);
	\node[draw=black,rectangle,rounded corners=0.1cm,minimum size=.42cm,,text=black,very thick] (rho3) at ($(rho2)+(0,-.35)$) {$\rho$};
	\draw[Lline] (rho3.west) --++ (-.1,0) --++ (-.2,-.35) --++ (-0.1,0) node (L1) {};
	\draw[Rline] (rho3.east) --++ (.1,0);
	\node[draw=black,rectangle,rounded corners=0.1cm,minimum size=.42cm,,text=black,very thick] (rho4) at ($(rho3)+(0,-.35)$) {$\rho$};
	\draw[Lline] (rho4.west) --++ (-.1,0) --++ (-.2,1.05) --++ (-0.1,0) node (L2) {};
	\draw[Rline] (rho4.east) --++ (.1,0) node (R2) {};
	\draw[Bline] (L1.south) -- (L2.north);
	\draw[Bline] (R1.north) -- (R2.south);
\end{tikzpicture}
&
\begin{tikzpicture}[
	scale=1.5,
	baseline, 
	Lline/.style={-,blue,very thick,rounded corners},
	Rline/.style={-,red,very thick,rounded corners},
	Bline/.style={dashed,gray,thick},
	]
	\node[draw=black,rectangle,rounded corners=0.1cm,minimum size=.42cm,,text=black,very thick] (rho1) at (0,.525) {$\rho$};
	\draw[Lline] (rho1.west) --++ (-.1,0) --++ (-.2,-1.05) --++ (-0.1,0) node (L1) {};
	\draw[Rline] (rho1.east) --++ (.1,0) node (R1) {};
	\node[draw=black,rectangle,rounded corners=0.1cm,minimum size=.42cm,,text=black,very thick] (rho2) at ($(rho1)+(0,-.35)$) {$\rho$};
	\draw[Lline] (rho2.west) --++ (-.1,0) --++ (-.2,.35) --++ (-0.1,0) node (L2) {};
	\draw[Rline] (rho2.east) --++ (.1,0);
	\node[draw=black,rectangle,rounded corners=0.1cm,minimum size=.42cm,,text=black,very thick] (rho3) at ($(rho2)+(0,-.35)$) {$\rho$};
	\draw[Lline] (rho3.west) --++ (-.1,0) --++ (-.2,.35) --++ (-0.1,0);
	\draw[Rline] (rho3.east) --++ (.1,0);
	\node[draw=black,rectangle,rounded corners=0.1cm,minimum size=.42cm,,text=black,very thick] (rho4) at ($(rho3)+(0,-.35)$) {$\rho$};
	\draw[Lline] (rho4.west) --++ (-.1,0) --++ (-.2,.35) --++ (-0.1,0);
	\draw[Rline] (rho4.east) --++ (.1,0) node (R2) {};
	\draw[Bline] (L1.south) -- (L2.north);
	\draw[Bline] (R1.north) -- (R2.south);
\end{tikzpicture}
&
\begin{tikzpicture}[
	scale=1.5,
	baseline, 
	Lline/.style={-,blue,very thick,rounded corners},
	Rline/.style={-,red,very thick,rounded corners},
	Bline/.style={dashed,gray,thick},
	]
	\node[draw=black,rectangle,rounded corners=0.1cm,minimum size=.42cm,,text=black,very thick] (rho1) at (0,.525) {$\rho$};
	\draw[Lline] (rho1.west) --++ (-.1,0) --++ (-.2,-1.05) --++ (-0.1,0) node (L1) {};
	\draw[Rline] (rho1.east) --++ (.1,0) node (R1) {};
	\node[draw=black,rectangle,rounded corners=0.1cm,minimum size=.42cm,,text=black,very thick] (rho2) at ($(rho1)+(0,-.35)$) {$\rho$};
	\draw[Lline] (rho2.west) --++ (-.1,0) --++ (-.2,-.35) --++ (-0.1,0);
	\draw[Rline] (rho2.east) --++ (.1,0);
	\node[draw=black,rectangle,rounded corners=0.1cm,minimum size=.42cm,,text=black,very thick] (rho3) at ($(rho2)+(0,-.35)$) {$\rho$};
	\draw[Lline] (rho3.west) --++ (-.1,0) --++ (-.2,.35) --++ (-0.1,0);
	\draw[Rline] (rho3.east) --++ (.1,0);
	\node[draw=black,rectangle,rounded corners=0.1cm,minimum size=.42cm,,text=black,very thick] (rho4) at ($(rho3)+(0,-.35)$) {$\rho$};
	\draw[Lline] (rho4.west) --++ (-.1,0) --++ (-.2,1.05) --++ (-0.1,0) node (L2) {};
	\draw[Rline] (rho4.east) --++ (.1,0) node (R2) {};
	\draw[Bline] (L1.south) -- (L2.north);
	\draw[Bline] (R1.north) -- (R2.south);
\end{tikzpicture}
&
\begin{tikzpicture}[
	scale=1.5,
	baseline, 
	Lline/.style={-,blue,very thick,rounded corners},
	Rline/.style={-,red,very thick,rounded corners},
	Bline/.style={dashed,gray,thick},
	]
	\node[draw=black,rectangle,rounded corners=0.1cm,minimum size=.42cm,,text=black,very thick] (rho1) at (0,.525) {$\rho$};
	\draw[Lline] (rho1.west) --++ (-.1,0) --++ (-.2,-.35) --++ (-0.1,0);
	\draw[Rline] (rho1.east) --++ (.1,0) node (R1) {};
	\node[draw=black,rectangle,rounded corners=0.1cm,minimum size=.42cm,,text=black,very thick] (rho2) at ($(rho1)+(0,-.35)$) {$\rho$};
	\draw[Lline] (rho2.west) --++ (-.1,0) --++ (-.2,.35) --++ (-0.1,0) node (L1) {};
	\draw[Rline] (rho2.east) --++ (.1,0);
	\node[draw=black,rectangle,rounded corners=0.1cm,minimum size=.42cm,,text=black,very thick] (rho3) at ($(rho2)+(0,-.35)$) {$\rho$};
	\draw[Lline] (rho3.west) --++ (-.1,0) --++ (-.2,-.35) --++ (-0.1,0) node (L2) {};
	\draw[Rline] (rho3.east) --++ (.1,0);
	\node[draw=black,rectangle,rounded corners=0.1cm,minimum size=.42cm,,text=black,very thick] (rho4) at ($(rho3)+(0,-.35)$) {$\rho$};
	\draw[Lline] (rho4.west) --++ (-.1,0) --++ (-.2,.35) --++ (-0.1,0);
	\draw[Rline] (rho4.east) --++ (.1,0) node (R2) {};
	\draw[Bline] (L1.south) -- (L2.north);
	\draw[Bline] (R1.north) -- (R2.south);
\end{tikzpicture}
\\
\hline
\end{tabular}
\centering \caption{We use the tensor network to illustrate the four kinds of parties and their second and fourth moments. The boxes represent the subsystems of a generic $k$-partite state, and the two legs represent the row and column indices. The grey dashed lines represent periodic boundary condition. From the second line, one can find that the operators to estimate the second moments for all these four kinds of parties are the SWAP operators, which are represented by changing the order of two legs. Hence, $\tr(\mathcal{R}_\pi\mathcal{R}_\pi^\dagger)=\tr(\rho^2)$ for all $\pi\in\mathcal{S}_{2k}$. From the third line, one can find that the operators to estimate $\tr[(\mathcal{R}_\pi\mathcal{R}_\pi^\dagger)^2]$ for R-type parties are still SWAP operators, while the operators for T-type parties are cyclic permutation operators, which are represented by changing the order of the four legs cyclically. }
\label{tab:mainres}
\end{table}

Based on these results, we can prove that $M_{2n}^\pi$ can be measured directly by a joint observable.

\begin{theorem}\label{theorem:main res}
Given a $k$-partite state $\rho$ and the index permutation operation $\mathcal{R}_\pi$, the $2n$-th moment of $\mathcal{R}_{\pi}$, $M_{2n}^\pi:=\tr[(\mathcal{R}_{\pi}\mathcal{R}_{\pi}^\dagger)^n]$, can be estimated by observable measurement on 2n copies of $\rho$,
\begin{equation}\label{eq:main res}
\begin{split}
M_{2n}^\pi=\tr\left(O^\pi_{2n}\rho^{\otimes 2n}\right)=\frac{1}{2}\tr\left[\left(\bigotimes_{i=1}^k U^\pi_{i}+h.c.\right) \rho^{\otimes 2n}\right].
\end{split}
\end{equation}
For T1-type parties $U_{i}^\pi=\overrightarrow{\Pi}_i$ and for T2-type parties $U_{i}^\pi=\overleftarrow{\Pi}_i$.
Here $\overrightarrow{\Pi}$ and $\overleftarrow{\Pi}$ are the cyclic permutation operators in different directions, satisfying $\overrightarrow{\Pi}\ket{s_1,\cdots,s_{2n}}=\ket{s_{2n},s_1,\cdots,s_{2n-1}}$ and $\overleftarrow{\Pi}\ket{s_1,\cdots,s_{2n}}=\ket{s_2,\cdots,s_{2n},s_1}$. For R1-type parties $
U_{i}^\pi=\mbb{S}^{(2n,1)}_i\otimes \mbb{S}^{(2,3)}_i\otimes \cdots \otimes \mbb{S}^{(2n-2,2n-1)}_i$ and for R2-type parties $U_{i}^\pi=\mbb{S}^{(1,2)}_i\otimes \mbb{S}^{(3,4)}_i\otimes \cdots \otimes \mbb{S}^{(2n-1,2n)}_i$, where $\mbb{S}^{(u,v)}$ is the SWAP operator acting on the $u$-th and $v$-th copies.
\end{theorem}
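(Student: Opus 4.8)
The plan is to expand the scalar $M_{2n}^\pi=\tr[(\mathcal{R}_\pi\mathcal{R}_\pi^\dagger)^n]$ as a single tensor-network contraction of $2n$ copies of $\rho$ and then to recognize that contraction as the expectation value of a tensor product of copy-permutation operators. First I would write $(\mathcal{R}_\pi\mathcal{R}_\pi^\dagger)^n$ as the alternating product of $n$ factors $\mathcal{R}_\pi$ (the ``odd'' copies) and $n$ factors $\mathcal{R}_\pi^\dagger$ (the ``even'' copies), and use Eq.~\eqref{eq:defRpi} together with the Hermiticity of $\rho$ (so that $\overline{\rho_{\mb{s}\mb{t}}}=\rho_{\mb{t}\mb{s}}$ for the full multi-indices) to replace every $\mathcal{R}_\pi$ and every $\mathcal{R}_\pi^\dagger$ by one copy of $\rho$ with its $2k$ indices relabelled according to $\pi$; the matrix products then contract the column multi-index of one copy with the row multi-index of the next, and the outer trace imposes the periodic boundary identifying the last copy with the first. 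The two structural properties stated just above the theorem now do the bookkeeping: the second property (that a contraction only ever links the two legs of one party across the $2n$ copies) factorizes the entire contraction into $k$ independent per-party sub-contractions, while the first property (the behaviour of the four types under $\dagger$) says that, read along the alternating chain, a T-type party keeps its behaviour at every step whereas an R-type party alternates between R1 and R2. It then suffices to identify, for each party $i$, which permutation $\sigma_i\in\mathcal{S}_{2n}$ of the $2n$ copies its sub-contraction realizes, because the generalized swap-test identity for $\tr[(\bigotimes_{i=1}^k P_{\sigma_i})\,\rho^{\otimes 2n}]$, with $P_{\sigma_i}$ the operator permuting the $2n$ copies of party $i$'s Hilbert space by $\sigma_i$, reproduces exactly that factorized contraction.

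For the per-party identification I would trace the two legs of party $i$ through the chain, guided by the $n=1$ and $n=2$ pictures in Table~\ref{tab:mainres}. If party $i$ is R-type, then by the first property both of its legs are row-type indices of $\mathcal{R}_\pi$ and both are column-type indices of $\mathcal{R}_\pi^\dagger$; following where the matrix products and the periodic trace send them glues the two members of each pair of copies along \emph{both} legs at once, producing the perfect matching $\{2n,1\},\{2,3\},\{4,5\},\dots$ when party $i$ is R1-type for $\mathcal{R}_\pi$ and the matching $\{1,2\},\{3,4\},\dots$ when it is R2-type, which are exactly $\mbb{S}^{(2n,1)}_i\otimes\mbb{S}^{(2,3)}_i\otimes\cdots\otimes\mbb{S}^{(2n-2,2n-1)}_i$ and $\mbb{S}^{(1,2)}_i\otimes\mbb{S}^{(3,4)}_i\otimes\cdots\otimes\mbb{S}^{(2n-1,2n)}_i$. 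If party $i$ is T-type, each copy carries one row-type and one column-type leg for party $i$ (a feature preserved under $\dagger$ because conjugation transposes $\rho$ and cancels the row/column exchange of the matrix transpose), so the column leg of copy $c$ is glued to the row leg of copy $c+1$ for every $c$ and the trace closes copy $2n$ to copy $1$; the $2n$ legs thus form a single $2n$-cycle whose two orientations correspond to $\overrightarrow{\Pi}_i$ (T1) and $\overleftarrow{\Pi}_i$ (T2). Matching these $\sigma_i$ with the operators in the statement yields $M_{2n}^\pi=\tr[(\bigotimes_{i=1}^k U_i^\pi)\,\rho^{\otimes 2n}]$.

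Finally I would turn $\bigotimes_i U_i^\pi$, which need not be Hermitian, into a genuine observable. Since $M_{2n}^\pi=\sum_i\lambda_i^{2n}\ge 0$ is real, the number $\tr[(\bigotimes_i U_i^\pi)\rho^{\otimes 2n}]$ equals its own complex conjugate; combining $(\bigotimes_i U_i^\pi)^\dagger=\bigotimes_i (U_i^\pi)^\dagger$ with $\overline{\tr[A\rho^{\otimes 2n}]}=\tr[A^\dagger\rho^{\otimes 2n}]$ (valid because $\rho^{\otimes 2n}$ is Hermitian) then gives $M_{2n}^\pi=\tfrac12\tr[(\bigotimes_i U_i^\pi+h.c.)\rho^{\otimes 2n}]=\tr(O_{2n}^\pi\rho^{\otimes 2n})$ with $O_{2n}^\pi$ manifestly Hermitian, which is Eq.~\eqref{eq:main res}. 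The main obstacle is the middle step: one must verify carefully that the T-type legs close into a \emph{single} $2n$-cycle of the stated orientation rather than into several shorter cycles, and that the alternation between $\mathcal{R}_\pi$-slots and $\mathcal{R}_\pi^\dagger$-slots, together with the wrap-around from the outer trace, is exactly what fixes both the direction of that cycle and the one-copy offset distinguishing the R1 matching from the R2 one. Once the $n=1$ and $n=2$ cases drawn in Table~\ref{tab:mainres} are checked, the general $n$ is the same index algebra repeated, and the only real delicacy is keeping the orientation conventions consistent throughout.
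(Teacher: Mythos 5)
Your proposal is correct, and its core is the same as the paper's proof: you expand $\tr[(\mathcal{R}_\pi\mathcal{R}_\pi^\dagger)^n]$ as a contraction of $2n$ copies of $\rho$, use the mirror-symmetry/type bookkeeping (the two properties stated before the theorem) to factorize the contraction party by party, and identify the per-party permutations as the pair-matchings (SWAP products) for R-type parties and a single oriented $2n$-cycle for T-type parties, exactly as the paper does with its tensor-network diagrams. The one place you genuinely diverge is the final Hermitization step. The paper gets from $M_{2n}^\pi=\tr[(\bigotimes_i U_i^\pi)\rho^{\otimes 2n}]$ to the observable form by conjugating the copies with the inversion permutation $\Pi_{\mathrm{inv}}$ (and cyclic shifts), which shows that the T1/T2 and R1/R2 assignments can be exchanged without changing the trace; the symmetrized $\tfrac12(\bigotimes_i U_i^\pi+h.c.)$ then follows, and this relabeling freedom is also what makes the orientation/offset conventions immaterial. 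You instead observe that $M_{2n}^\pi=\sum_i\lambda_i^{2n}$ is real, so $\tr[(\bigotimes_i U_i^\pi)\rho^{\otimes 2n}]=\overline{\tr[(\bigotimes_i U_i^\pi)\rho^{\otimes 2n}]}=\tr[(\bigotimes_i U_i^\pi)^\dagger\rho^{\otimes 2n}]$ by Hermiticity of $\rho^{\otimes 2n}$, and average the two expressions. This is shorter and fully sufficient for the theorem as stated; it also has the pleasant side effect that a globally reversed orientation convention for the T-type cycles (or a shifted R-type matching) cannot affect the symmetrized observable, which partially defuses the "delicacy" you flag at the end. What it does not give you, and what the paper's copy-relabeling argument does, is the explicit statement that the R1/R2 and T1/T2 operator assignments may be interchanged in the unsymmetrized formula itself, a fact the paper finds useful in its measurement protocols. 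The remaining step you single out as the main obstacle — that the T-type legs close into one $2n$-cycle rather than several shorter ones — is handled correctly by your gluing rule (column leg of copy $c$ to row leg of copy $c+1$, with the trace closing $2n$ to $1$), which is the same level of detail the paper's graphical argument supplies.
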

We leave the proofs of theorems in Appendix \ref{subsec:proofofobservable}. The special cases of partial transposed and realigned moments for a two-qubit system have been discussed in Ref.~\cite{carteret2005noiseless,cai2008novel}.

Borrowing the ideas from \cite{ekert2002direct,knill1998power,cai2008novel}, by introducing an ancilla qubit, we can design a quantum circuit to measure $M_{2n}^\pi$ based on the control-unitary operations, see Fig.~\ref{fig:SWAP_test}. As the SWAP operators are the generators of the permutation group, all the control-unitary operators in this circuit can be decomposed into a polynomial number of the 3-qubit control-SWAP operators.
\begin{figure}[htbp]
    \centering
    \includegraphics[scale=0.17]{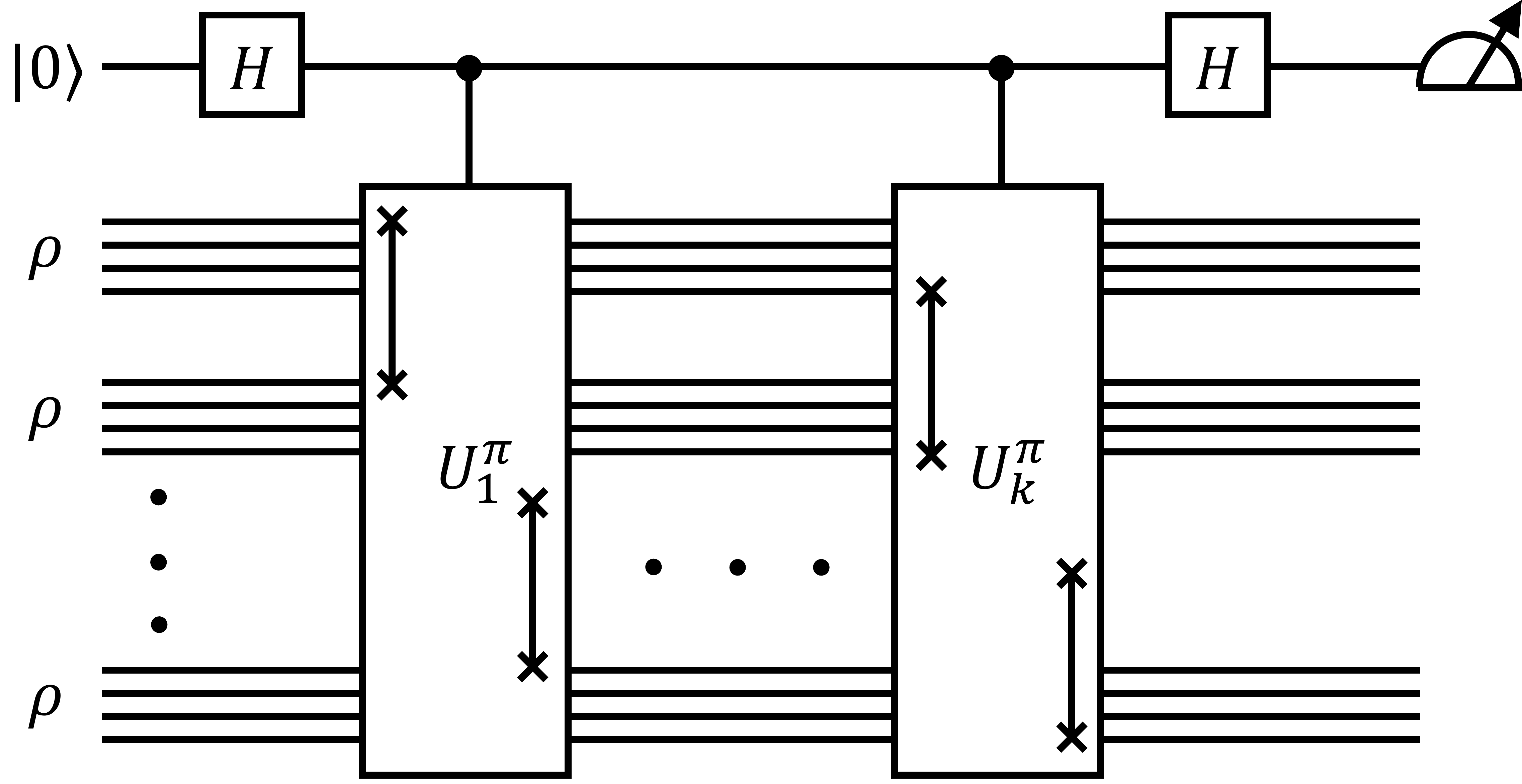}
    \caption{Quantum circuit for measuring $M_{2n}^\pi$. Inputs of this algorithm are $2n$ identical copies of $\rho$ and an ancilla qubit. The quantum gates in this circuit include Hadamard gates, labeled with $H$, and the control-$U_i^\pi$ gate. \comments{As $U_i^\pi$ can be decomposed to a series of SWAP operators, denoted by the connected $X$-form signs, this circuit can be realized by 3-qubit quantum gates.} The measurement of the ancilla qubit is in computational basis. }
    \label{fig:SWAP_test}
\end{figure}

However, the simultaneous preparation of $2n$ identical copies of $\rho$ is greatly challenging for state-of-the-art quantum devices. Fortunately, the recently developed techniques, shadow estimation \cite{aaronson2019shadow,huang2020predicting} and randomized measurements \cite{van2012measuring,brydges2019probing,elben2019toolbox}, provide means to measure these moments by local (single-qubit) or global (multi-qubit) single-copy operations. Practically speaking, global operations are still challenging, and local protocols are the ones commonly used in real experiments. Shadow estimation has a wide range of applications while inefficient in general. Randomized measurements has lower sample complexities, while it can only measure some specialized physical quantities. In Appendix \ref{sec:protocols}, we propose the measurement of the permutation moments using these two protocols and a hybrid one and analyze their sample complexities.

In the bipartite scenario, $M_n^{(1,2)}=\tr[(\mathcal{R}_{(1,2)})^n]=\tr[(\rho_{AB}^{T_A})^n]$ and $M_{2n}^{(2,3)}$ are key quantities that help to construct the weak-form PPT criteria \cite{elben2020mixed,yu2021optimal,neven2021symmetry} and the criteria proposed later in Eq.~\eqref{eq:criterionformulation} and Eq.~\eqref{eq:enhanced CCNR}, respectively. The local randomized measurements scheme is not applicable for the measurement of $M_n^{(1,2)}$ \cite{zhou2020Single}. The existing single-copy local protocol for measuring $M_n^{(1,2)}$ requires the shadow scheme. While, according to Theorem \ref{theorem:main res}, the observables for measuring $M_{2n}^{(2,3)}$ have a simple form. Thus, $M_{2n}^{(2,3)}$ can be measured through the local randomized measurements protocol and have a much lower sample complexity. We list the sample complexities of measuring $M_3^{(1,2)}$ and $M_{4}^{(2,3)}$ in Table \ref{tab:variance}.

\begin{table}[htbp]
    \centering
    \begin{tabular}{c|c|c}
    \hline
         & Global Protocol & Local Protocol \\
    \hline 
        $M_3^{(1,2)} $ & $O(D^{\frac{2}{3}})$\cite{zhou2020Single} & $O(D^2)$\cite{elben2020mixed} \\
    
        $M_4^{(2,3)}$ & $O(D^{\frac{1}{2}})$ & $O(D^{1.187})$\\
    \hline
    \end{tabular}
    \caption{
    This table shows the best-known sample complexities of measuring $M_3^{(1,2)}$ and  the complexities of protocols developed in Appendix \ref{subsec:randomprotocol} to measure $M_4^{(2,3)}$. $D$ is the dimension of the underlying Hilbert space. This table shows a nearly quadratic improvement in the local case.
    }
    
    \label{tab:variance}
\end{table}

To find the lower bound of $\norm{\mathcal{R}_\pi}=\sum_i\lambda_i$ using these moments, one needs to solve an optimization problem. The original optimization problem is extremely hard to solve because we have an exponentially large number of $\lambda_i$.  Adopting the Lagrange multiplier method, we can simplify the optimization to a problem of solving a set of polynomial equations.

\begin{theorem}\label{thm:LagrangeMultipler}
The minimum value of $\norm{\mathcal{R}_\pi}$ given $M_2^\pi$, ..., $M_{2n}^\pi$ is reached when there are at most $n$ non-zero $\lambda_i$s. Thus, denote the solution of this problem to be $E_{2n}^\pi(\rho)$, it is equal to the solution of the following optimization problem,
\begin{equation}
\begin{split}
\min_{q_1,\cdots,q_n\in\mathbb{N}}& E_{2n}^\pi(\rho)= q_1\lambda_1+q_2\lambda_2+\cdots+q_n\lambda_n\\
\mathrm{s.t.}& \ \sum_{i=1}^nq_i\lambda_i^2=M_2^\pi,\cdots, \ \sum_{i=1}^nq_i\lambda_i^{2n}=M_{2n}^\pi\\
& \ q_1+q_2+\cdots+q_n\le L,
\end{split}
\end{equation}
where $L$ is the number of the singular values of $\mathcal{R}_\pi$ and $q_i$ is the degeneracy of singular value $\lambda_i$.
\end{theorem}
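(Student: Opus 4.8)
The plan is to treat this as a finite-dimensional minimization over the singular-value spectrum of $\mathcal{R}_\pi$ and to extract the support bound from the Lagrange (KKT) stationarity conditions, as the surrounding text already anticipates. The sharp constant $n$ will come from Descartes' rule of signs applied to a polynomial that, because only the \emph{even} moments $M_2^\pi,\dots,M_{2n}^\pi$ are fixed, contains a constant term together with only odd-degree monomials.

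First I would make the optimization precise. Represent the spectrum of $\mathcal{R}_\pi$ by a vector $(\lambda_1,\dots,\lambda_L)\in[0,\infty)^L$, padded with zeros so there are exactly $L$ entries; then $\norm{\mathcal{R}_\pi}=\sum_i\lambda_i$ and the constraints read $\sum_i\lambda_i^{2j}=M_{2j}^\pi$ for $j=1,\dots,n$. The feasible set is nonempty (it contains the true spectrum), closed, and bounded since each $\lambda_i^2\le\sum_k\lambda_k^2=M_2^\pi$; hence it is compact and the minimum is attained at some $\lambda^\star$. Let $\Lambda_1>\dots>\Lambda_m>0$ be the distinct positive values of $\lambda^\star$ with multiplicities $q_1,\dots,q_m\ge 1$, so that $\sum_k q_k\le L$. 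It then suffices to show $m\le n$: grouping the spectrum by value and rewriting the moment constraints as $\sum_k q_k\Lambda_k^{2j}=M_{2j}^\pi$ turns the problem into exactly the displayed optimization, whose optimum is $E_{2n}^\pi(\rho)$, and the integrality $q_k\in\mathbb{N}$ is automatic since $q_k$ is the number of equal singular values.

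The core step is the stationarity analysis at $\lambda^\star$. I would introduce multipliers $\mu_1,\dots,\mu_n$ for the moment equalities and nonnegative multipliers for the bounds $\lambda_i\ge 0$. Complementary slackness kills the bound multiplier on any coordinate with $\lambda_i^\star>0$, so the stationarity condition there reads $1=\sum_{j=1}^n 2j\,\mu_j\,(\lambda_i^\star)^{2j-1}$, while coordinates with $\lambda_i^\star=0$ constrain nothing further. Hence every one of the $m$ distinct positive values is a positive root of the single polynomial $Q(x)=\left(\sum_{j=1}^n 2j\mu_j\,x^{2j-1}\right)-1$. This $Q$ has a nonzero constant term and, apart from it, only the $n$ odd-degree monomials $x^1,x^3,\dots,x^{2n-1}$, so its sequence of nonzero coefficients has length at most $n+1$ and therefore at most $n$ sign changes; by Descartes' rule of signs $Q$ has at most $n$ positive roots, and thus $m\le n$.

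The delicate point is the legitimacy of the multipliers, because the minimizer typically lies on the boundary with many $\lambda_i^\star=0$. I would resolve this by a dichotomy: if $\lambda^\star$ already has fewer than $n$ distinct positive values the claim holds; and if it has at least $n$, then the moment-constraint gradients are linearly independent---this is exactly the statement that $\{x,x^3,\dots,x^{2n-1}\}$ is a Chebyshev system on $(0,\infty)$, so the associated generalized Vandermonde matrix has full rank---and since the active bound gradients are supported on the complementary zero coordinates, the linear independence constraint qualification holds and KKT applies. A multiplier-free alternative is to perturb only the multiplicities $q_k$ with the locations $\Lambda_k$ held fixed: the moments depend linearly on the $q_k$, so if $m>n$ a nonzero perturbation preserves all moments, along which $\sum_k q_k\Lambda_k$ is affine and hence strictly decreasable unless $\sqrt{\Lambda_k}$ coincides at all $m$ points with a fixed polynomial in $\Lambda_k$, which again bounds $m$; but the KKT/Descartes route is what delivers the constant $n$ directly, so I would present it as the primary argument.
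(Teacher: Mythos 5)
Your proof is correct, and its core coincides with the paper's: at a minimizer, stationarity with respect to the $n$ moment constraints forces every distinct positive singular value to be a root of a single polynomial of the form $1+\sum_{j=1}^{n}c_j\lambda^{2j-1}$, which, having only a constant term and odd-degree monomials, admits at most $n$ positive roots; hence at most $n$ distinct positive values survive and the problem collapses to the stated finite form with degeneracies $q_i$ and $q_1+\cdots+q_n\le L$. Where you differ is in the scaffolding around that step. The paper eliminates the ordering and nonnegativity constraints by the substitution $\lambda_i=x_L^2+\cdots+x_i^2$ and then applies unconstrained Lagrange multipliers, extracting the root count from the oddness of the non-constant part of the stationarity polynomial; you instead keep the problem on $[0,\infty)^L$, justify attainment of the minimum by compactness of the feasible set, verify LICQ through the dichotomy $m<n$ versus $m\ge n$ together with the Chebyshev-system (generalized Vandermonde) rank argument for $\{x,x^3,\dots,x^{2n-1}\}$, and count positive roots by Descartes' rule of signs. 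Your route makes explicit two points the paper leaves implicit, namely the existence of a minimizer and the legitimacy of multipliers at a boundary point where many $\lambda_i$ vanish, while the paper's squared-variable trick sidesteps the inequality constraints at the cost of the somewhat messier case analysis "either $\lambda_{j}=\lambda_{j+1}$ or the bracket vanishes." Your secondary multiplier-free perturbation sketch is vaguer and not needed; the primary KKT/Descartes argument stands on its own.
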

As a special case, when we only know the value of $M_2^\pi$ and $M_4^\pi$, the minimum of $\sum_i\lambda_i$ has an analytical form \cite{berry2003bounds}
\begin{equation}\label{eq:E_4 definition}
\begin{split}
E_4^\pi(\rho)=\sqrt{\frac{q(qM_2^\pi+U)}{q+1}}+\sqrt{\frac{M_2^\pi-U}{q+1}},
\end{split}
\end{equation}
where $q=\lfloor \frac{(M_2^\pi)^2}{M_4^\pi}\rfloor$ and $U=\sqrt{q(q+1)M_4^\pi-q (M_2^\pi)^2}$.

Now, we can formally represent the moment-based permutation criteria as
\begin{equation}\label{eq:criterionformulation}
\begin{split}
    E_{2n}^\pi(\rho)\le 1 \ , \ \forall \pi\in\mathcal{S}_{2k} \ , \ n\in\mathbb{N}
\end{split}
\end{equation}
for all separable $k$-partite state $\rho$. In fact, $E_{2n}^\pi(\cdot)$ may not necessarily be the function of $\rho$. Adopting the bipartite entanglement criterion introduced in Ref.~\cite{zhang2008entanglement}, we get
\begin{equation}\label{eq:enhanced CCNR}
\begin{split}
E_{2n}^{(2,3)}(\rho_{AB}-\rho_A\otimes\rho_B)\le \sqrt{(1-\tr\rho_A^2)(1-\tr\rho_B^2)}
\end{split}
\end{equation}
for separable $\rho_{AB}$.

\emph{Bipartite entanglement detection.} --- Compared with existing entanglement detection schemes based on partial transposed moments \cite{elben2020mixed,yu2021optimal,neven2021symmetry}, this framework is not only a direct generalization to multipartite entanglement, but also enhances the detection capability in the bipartite scenario.  With the second and fourth moments only, Eq.~\eqref{eq:enhanced CCNR} can detect $3\times 3$-dimensional bound entanglement constructed using the unextendible product basis proposed in Ref.~\cite{bennet1999unextendible}. We leave the detailed discussion in Appendix \ref{subsec:localentdecay}.

The criterion of Eq.~\eqref{eq:enhanced CCNR} also performs well in practical physical systems. We study the local bipartite entanglement dynamics in a quantum system evolved under a long-range $XY$ Hamiltonian. Specifically, we choose a 10-qubit open boundary Ising model with the Hamiltonian of the form
\begin{equation}\label{eq:XY Hamiltonian}
\begin{split}
H_{XY}=\sum_{i<j}J_{ij}(\hat{\sigma}_i^+\hat{\sigma}_j^-+\hat{\sigma}_i^-\hat{\sigma}_j^+)+ B_z\sum_i\hat{\sigma}_i^z,
\end{split}
\end{equation}
where $\hat{\sigma}_i^z$, $\hat{\sigma}_i^+$, and $\hat{\sigma}_i^-$ are the spin-$\frac{1}{2}$ Pauli-$Z$, raising, and lowering operator acting on the $i$-th qubit; $J_{ij}=\frac{J_0}{|i-j|^\alpha}$ is the interaction strength following the power-law decay with $J_0$ and $\alpha$ set to be $420s^{-1}$ and $1.24$, respectively \cite{brydges2019probing}; $B_z$ stands for transverse field and is set to be $400s^{-1}$. This Hamiltonian has been realized in real physical systems \cite{jurcevic2014quasiparticle,brydges2019probing} and is often served as the benchmark of detection capabilities of entanglement criteria \cite{neven2021symmetry,elben2020mixed}.

\begin{figure}
    \centering
    \includegraphics[width=8cm]{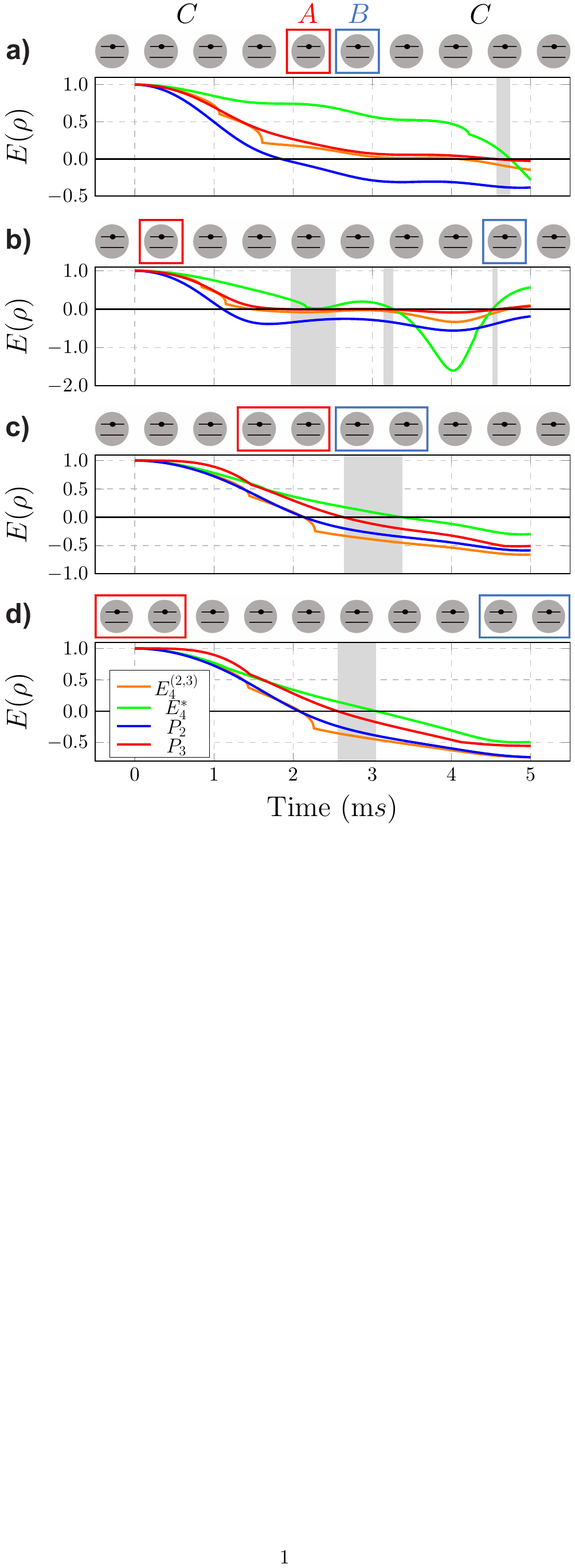}
\caption{Local entanglement decay in thermal system (Color Online). The entanglement dynamics of the local systems $A$ and $B$, which are marked by squares and initialized to be $\ket{\psi(t=0)}_{AB}=\frac{1}{\sqrt{2}}(\ket{0}^{\otimes N_{AB}}+\ket{1}^{\otimes N_{AB}})$. Qubits without squares are initialized to be the tensor product of $\ketbra{0}$, act as part $C$. The entanglement of $AB$ is detected when the value is above zero for each criterion. The grey areas represent the time periods in which the entanglement can only be detected by the $E_4^*$ criterion. }
    \label{fig:Quench Dynamics}
\end{figure}

The 10-qubit chain is divided into three parts, $A$, $B$ and $C$, where $A$ and $B$ constitute the local system we study, initialized to be $\frac{1}{\sqrt{2}}(\ket{0}^{\otimes N_{AB}}+\ket{1}^{\otimes N_{AB}})$. $C$ acts as the bath, which is initialized to be the tensor product of $\ket{0}$. We compare four implementable nonlinear criteria in investigating the entanglement dynamics of systems composed of $A$ and $B$. The first two criteria are \comments{moment-based permutation criteria based on $M_4^{(2,3)}$, }  Eq.~\eqref{eq:criterionformulation} and Eq.~\eqref{eq:enhanced CCNR}, when setting $\pi=(2,3)$ and $n=2$, labeled by $E_4^{(2,3)}$ and $E_4^*$, respectively. Others are the entropy criterion based on the comparison of the purities \cite{Horodecki2009entanglement}, labeled by $P_2$; and the weak-form PPT criterion based on $M_3^{(1,2)}=\tr[(\rho_{AB}^{T_A})^3]$ \cite{yu2021optimal,neven2021symmetry}, labeled by $P_3$. We define four quantities to represent these criteria which satisfy that $E(\rho)>0$ iff the entanglement is detected by the corresponding criterion. 

The numerical simulation results \cite{JOHANSSON2012qutip} are shown in Fig.~\ref{fig:Quench Dynamics}. One could find that \comments{the decay of the entanglement initially localized in $A$ and $B$ is observed by all four criteria. Besides, }the moment-based permutation criteria, especially $E_4^*$, have an obvious advantage since it detects entanglement while all others fail in various time periods and different choices of $A$ and $B$.

In addition to the strong detection capability, the key quantities in this framework, $E_{2n}^\pi(\rho)$, have clear mathematical meaning as they give the lower bounds of the permutation norms. The permutation norms, including entanglement negativity, can be treated as entanglement measures. We thus conjecture that these quantities can also be used as entanglement measures. In Appendix \ref{sec:physics}, we support this conjecture by showing that $E_4^{(2,3)}(\rho)$ can witness the entanglement scaling transition in a quantum dynamical phase transition \cite{abanin2019manybody,serbyn2013local,smith2016many,wu2016understanding} and the entanglement rainbow structure for the eigenstates of a thermal Hamiltonian \cite{ueda2020quantum,kim2014eth,cotler2021emergent}.

\emph{Multipartite entanglement detection.}---Another advantage of our framework lies in multipartite entanglement detection. There exist multipartite entangled states that are separable in any bipartition and thus cannot be detected by any criteria extended from the bipartite case \cite{jungnitsch2011taming}, including the PPT and CCNR criteria. 
Theorem~\ref{thm:LagrangeMultipler} provides us new means to design practical entanglement criteria for these states. We depict the sets of detectable multipartite entangled states of different criteria in Fig~\ref{fig:entlevel}.

An important example is also based on the unextendible product basis \cite{bennet1999unextendible}. Consider a three-qubit system and define four product pure states
\begin{equation}
\{\ket{\psi}_i\}_{i=1}^4=\{\ket{0,1,+},\ket{1,+,0},\ket{+,0,1},\ket{-,-,-}\},
\end{equation}
where $\ket{\pm}=(\ket{0}\pm\ket{1})/\sqrt{2}$. It has been proved that the state
\begin{equation}
    \rho = \frac{1}{4}\left(\mathbb{I}_8-\sum_{i=1}^4\ketbra{\psi_i}{\psi_i}\right)
\end{equation}
is separable in any bipartition and thus its detection needs a new kind of moment-based permutation criterion other than PPT and CCNR. We find that when setting $\pi=\tbinom{1,2,3,4,5,6}{1,3,2,4,5,6}$, realigning the first two parties and keeping the third party unchanged, the entanglement of this state can be detected using $E_8^\pi(\rho)$, which only requires four orders of moments. We leave some details of calculating $E_8^\pi(\rho)$ in Appendix \ref{subsec:multientdetection}.

\begin{figure}
    \centering
    \includegraphics[width=0.45\textwidth]{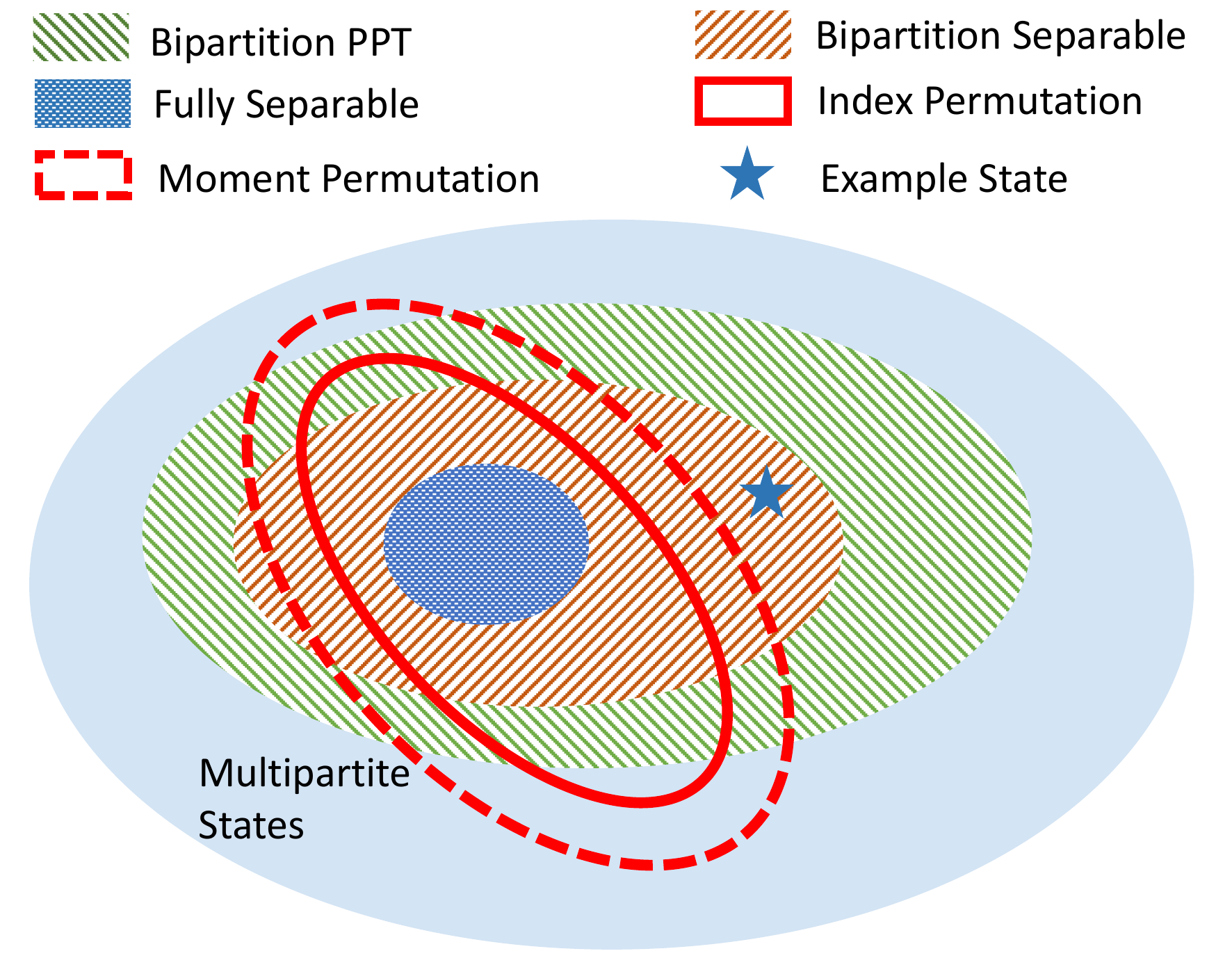}
\caption{Illustration of different sets of detectable states (Color Online). Bipartition PPT: PPT in any bipartition; Bipartition Separable: separable in any bipartition; Fully Separable: $\sum_ip_i\rho_{1}^i\otimes \cdots \otimes\rho_{k}^i$; Index Permutation: states that cannot be detected by an index permutation criterion other than the bipartite partial transposition; Moment Permutation: states that cannot be detected using finite numbers of permutation moments; Example State: a state that is separable in any bipartition while can be detected by a moment-based permutation criterion. When the number of moments increases, the dashed red circle will approach the solid red circle.
    }
    \label{fig:entlevel}
\end{figure}

For multipartite quantum systems, entanglement can have a rather complex entanglement structure \cite{Lu2018Structure}. At the same time, the tools for detecting entanglement structure are quite restrictive \cite{ren2021metrology}. In Appendix \ref{sec:structure}, we show that our framework can also be generalized to detect the multipartite entanglement structure.

\emph{Outlook.}---The techniques we developed in this work, including the moment measurements and bounding the lower-order moment using the higher-order moments, have many potential applications, like the positive map entanglement detection \cite{gunhe2009entanglement} and the trace distance estimation. Furthermore, it is also interesting to investigate how to generalize the framework to entanglement detection in continuous variable systems \cite{zhang2013CV}.

We thank Junjie Chen, Zhaohui Wei, and Xiaodong Yu for valuable discussions. This work was supported by the National Natural Science Foundation of China Grants No. 12174216 and No. 11875173 and the National Key Research and Development Program of China Grants No. 2019QY0702 and No. 2017YFA0303903.

\bibliographystyle{apsrev}

\appendix

\onecolumngrid
\newpage

\setcounter{theorem}{0}
\setcounter{table}{0}

\section{Preliminaries}\label{sec:Preliminary}
\subsection{Tensor Network Basis}
As our work is mainly based on index permutation criterion, much tensor calculation is needed to derive the results. So here we will introduce a graphical method to conduct the tensor calculation, the tensor network \cite{wood2011tensor}, which also plays a crucial role in quantum simulation. In the following context, we will frequently use the technique introduced in this section to do the tensor calculation.

In tensor network representation, a matrix is represented as a box with open legs, which correspond to uncontracted indices. Those left-oriented legs stand for row indices and right-oriented ones stand for column indices. Vectors and scalars are represented as boxes with legs of same orientation and boxes with no leg, respectively. The number of legs depends on the number of parties we are interested in. A $k$-partite state is represented by a tensor with $k$ pairs of legs, with each pair of legs standing for row and column indices of each party. Connecting legs stands for index contraction, like matrix production and taking trace:
\begin{eqnarray}
AB=
\begin{tabular}{c}
     \includegraphics[scale=0.15]{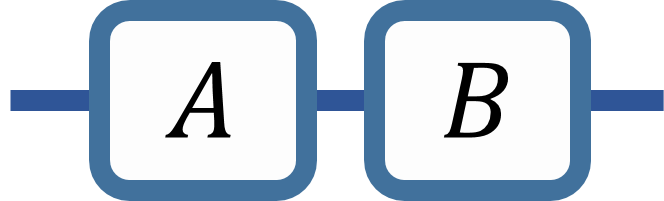} 
\end{tabular}
, \ \ \
\tr(A)=
\begin{tabular}{c}
     \includegraphics[scale=0.15]{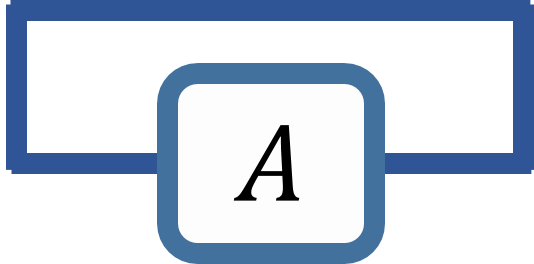} 
\end{tabular}.
\end{eqnarray}
For tensor production operation $A\otimes B$, where no indices are contracted, $A$ and $B$ are just put together with no leg connection:
\begin{eqnarray}
A\otimes B=
\begin{tabular}{c}
    \includegraphics[scale=0.15]{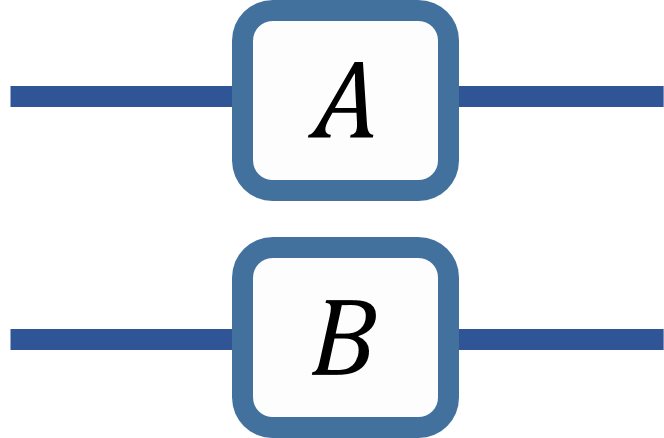}
\end{tabular}
\end{eqnarray}

Index permutation operations can be easily represented by changing the order of legs. Take partial transposition map and realignment map as examples,
\begin{eqnarray}
\rho_{AB}=
\begin{tabular}{c}
    \includegraphics[scale=0.15]{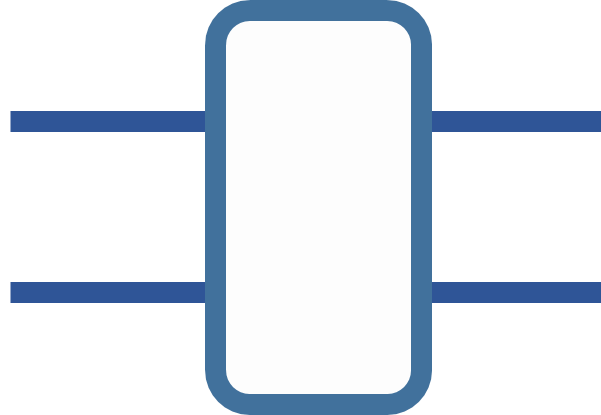}
\end{tabular}
, \ \ \
\mathcal{R}_{(2,3)}(\rho_{AB})=
\begin{tabular}{c}
    \includegraphics[scale=0.15]{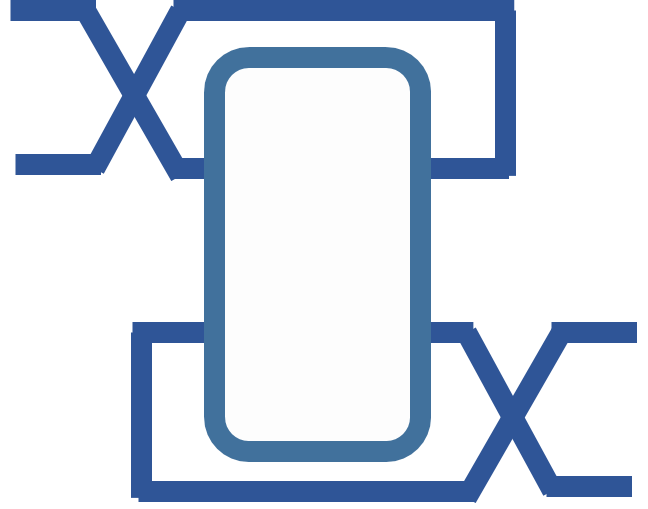}
\end{tabular}
, \ \ \
\rho_{AB}^{T_A}=\mathcal{R}_{(1,2)}(\rho_{AB})=
\begin{tabular}{c}
     \includegraphics[scale=0.15]{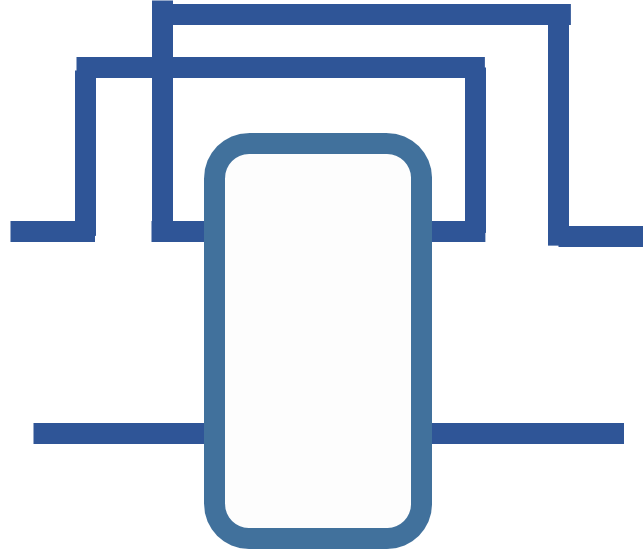}
\end{tabular}
.
\end{eqnarray}

In tensor network, permutation operators can be represented by changing the position of legs. SWAP operator is represented by exchanging two legs, cyclic operator is represented by sequentially moving each leg to the position of its neighboring leg. The operators we use to calculate the fourth moment of $\mathcal{R}_\pi$ can be graphically represented as:
\begin{eqnarray}
\mathbb{S}^{(1,2)}\otimes\mathbb{S}^{(3,4)}=
\begin{tabular}{c}
    \includegraphics[scale=0.15]{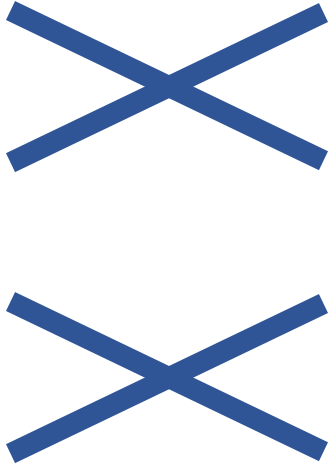}
\end{tabular}
, \ \
\mathbb{S}^{(2,3)}\otimes\mathbb{S}^{(4,1)}=
\begin{tabular}{c}
    \includegraphics[scale=0.15]{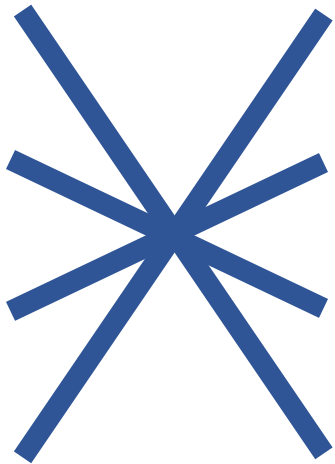}
\end{tabular}
, \ \
\overrightarrow{\Pi}=
\begin{tabular}{c}
    \includegraphics[scale=0.15]{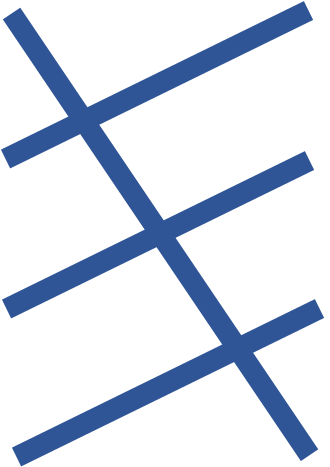}
\end{tabular}
, \ \
\overleftarrow{\Pi}=
\begin{tabular}{c}
    \includegraphics[scale=0.15]{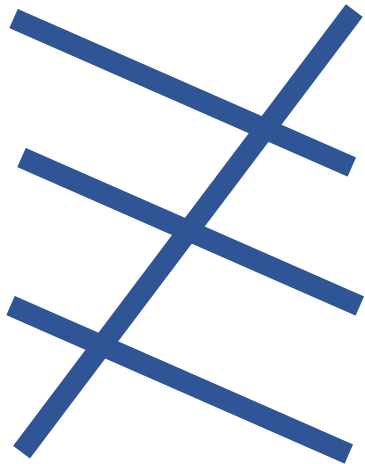}
\end{tabular}
.
\end{eqnarray}

\subsection{Random Unitary Basis}\label{subsec:random}
According to Schur-Weyl duality, random unitary is closely related to permutation operator, and is the basis of shadow estimation and randomized measurements, which will be discussed later. In unitary group, there exists a unique measure, which is called Haar measure, that satisfies
\begin{equation}
\begin{split}
\int_{\mathrm{Haar}}dU=1,\ \int_{\mathrm{Haar}}dUf(U)=\int_{\mathrm{Haar}}dUf(VU)=\int_{\mathrm{Haar}}dUf(UV)
\end{split}
\end{equation}
for arbitrary unitary $V$, where $U$ is integrated by Haar measure. Using Haar measure, one can define the $t$-fold twirling channel
\begin{equation}
\begin{split}
\Phi_t(O)=\int_{\mathrm{Haar}}dUU^{\otimes t}OU^{\dagger\otimes t},
\end{split}
\end{equation}
which equals to the linear combination of permutation operators \cite{gu2013moments,roberts2017chaos}
\begin{equation}
\begin{split}
\Phi_t(O)=\sum_{\pi,\sigma\in\mathcal{S}_t}C_{\pi,\sigma}\tr\left(\hat{W}_\pi O\right)\hat{W}_\sigma,
\end{split}
\end{equation}
where $\mathcal{S}_t$ is the permutation group of order $t$, $\pi$ and $\sigma$ are its elements, $C_{\pi,\sigma}$ is the element of Weigarten matrix, and $\hat{W}_\pi$ is the permutation operator corresponding to $\pi$.

However, to construct a $t$-fold twirling channel, Haar measure may not be necessary, one can only average over a unitary ensemble $\mathcal{E}_t$ with a finite number of elements
\begin{equation}
\begin{split}
\Phi_t(O)=\frac{1}{|\mathcal{E}_t|}\sum_{U\in\mathcal{E}_t}U^{\otimes t}O U^{\dagger \otimes t},
\end{split}
\end{equation}
where $|\mathcal{E}_t|$ denotes the number of elements in $\mathcal{E}_t$. All the unitary ensembles that satisfy this equation are called unitary $t$-design. The commonly-used Clifford group has been proved to be a unitary 3-design for multi-qubit systems \cite{zhu2017multiqubit}.

\section{Proof of Theorems}\label{sec:proofs}

\subsection{Observable of Permutation Moments}\label{subsec:proofofobservable}

\begin{theorem}
Given a $k$-partite state $\rho$ and the index permutation operation $\mathcal{R}_\pi$, the $2n$-th moment of $\mathcal{R}_{\pi}$, $M_{2n}^\pi:=\tr[(\mathcal{R}_{\pi}\mathcal{R}_{\pi}^\dagger)^n]$, can be estimated by observable measurement on 2n copies of $\rho$,
\begin{equation}\label{eq:main res}
\begin{split}
M_{2n}^\pi=\tr\left(O^\pi_{2n}\rho^{\otimes 2n}\right)=\frac{1}{2}\tr\left[\left(\bigotimes_{i=1}^k U^\pi_{i}+h.c.\right) \rho^{\otimes 2n}\right].
\end{split}
\end{equation}
For T1-type parties $U_{i}^\pi=\overrightarrow{\Pi}_i$ and for T2-type parties $U_{i}^\pi=\overleftarrow{\Pi}_i$.
Here $\overrightarrow{\Pi}$ and $\overleftarrow{\Pi}$ are the cyclic permutation operators in different direction, satisfying $\overrightarrow{\Pi}\ket{s_1,\cdots,s_{2n}}=\ket{s_{2n},s_1,\cdots,s_{2n-1}}$ and $\overleftarrow{\Pi}\ket{s_1,\cdots,s_{2n}}=\ket{s_2,\cdots,s_{2n},s_1}$. For R1-type parties $
U_{i}^\pi=\mbb{S}^{(2n,1)}_i\otimes \mbb{S}^{(2,3)}_i\otimes \cdots \otimes \mbb{S}^{(2n-2,2n-1)}_i$ and for R2-type parties $U_{i}^\pi=\mbb{S}^{(1,2)}_i\otimes \mbb{S}^{(3,4)}_i\otimes \cdots \otimes \mbb{S}^{(2n-1,2n)}_i$, where $\mbb{S}^{(u,v)}$ is the SWAP operator acting on the $u$-th and $v$-th copies.
\end{theorem}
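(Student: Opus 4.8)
The plan is to express $M_{2n}^\pi = \tr[(\mathcal{R}_\pi\mathcal{R}_\pi^\dagger)^n]$ as a single index contraction over $2n$ copies of $\rho$ and to show that this contraction factorizes across the $k$ parties into a tensor product of permutation operators on the $2n$ copies, one factor per party, whose shape is dictated solely by that party's type (T1, T2, R1, R2). Concretely, write $(\mathcal{R}_\pi\mathcal{R}_\pi^\dagger)^n$ as an alternating product of $n$ copies of $\mathcal{R}_\pi$ and $n$ copies of $\mathcal{R}_\pi^\dagger$ closed by a trace; the matrix multiplications contract the column multi-index (the even slots) of each factor with the row multi-index (the odd slots) of the next one, cyclically. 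The goal is then to read off, slot by slot, that these contractions only ever identify $\rho$-indices belonging to the same party, so the whole tensor network is a product over parties of a closed diagram on the $2n$ copies of that party's Hilbert space. That closed diagram is a permutation operator $U_i^\pi$, giving $M_{2n}^\pi=\tr[(\bigotimes_{i=1}^k U_i^\pi)\,\rho^{\otimes 2n}]$.

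First I would establish the two structural facts quoted in the main text. The cleanest route is to observe that $\mathcal{R}_\pi^\dagger$ is again a reshuffling: conjugate-transposing $\mathcal{R}_\pi$ exchanges its global row and column multi-indices, and using $\rho^\dagger=\rho$ one obtains $\mathcal{R}_\pi^\dagger=\mathcal{R}_{\tau\pi\tau}$, where $\tau=(1\,2)(3\,4)\cdots(2k-1\,2k)$ is the involution that swaps the row slot and the column slot of every party. From this the type of party $i$ for $\mathcal{R}_\pi^\dagger$ follows at once: conjugation by $\tau$ fixes T1 and T2 and exchanges R1 with R2, which is exactly the claimed behaviour under $\dagger$. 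Moreover, since $\mathcal{R}_\pi$'s slot $m$ carries $\rho$'s index from slot $\pi^{-1}(m)$, while $\mathcal{R}_\pi^\dagger=\mathcal{R}_{\tau\pi\tau}$ carries in its matching row slot $2j-1$ the $\rho$-index from slot $\tau\pi^{-1}\tau(2j-1)=\tau\pi^{-1}(2j)$, each row--column contraction between consecutive factors pairs $\rho$-slots $\pi^{-1}(2j)$ and $\tau\pi^{-1}(2j)$, which are the row and column slots of one and the same party. This is precisely the party-local contraction property that licenses the factorization over parties.

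With the network factorized, the remaining work is to identify $U_i^\pi$ for each of the four types by unwinding the chain of contractions on the $2n$ copies of party $i$; I would do this by induction on $n$, with the $n=1$ and $n=2$ diagrams of Table~\ref{tab:mainres} as base cases. For a T-type party, every copy is linked to both of its cyclic neighbours, so the contractions splice all $2n$ copies into a single $2n$-cycle; the orientation of that cycle — inherited from whether the party's two indices sit "straight" or "crossed" inside $\mathcal{R}_\pi$ — is what distinguishes $\overrightarrow{\Pi}_i$ from $\overleftarrow{\Pi}_i$. For an R-type party, the type alternates R1/R2 along the $2n$ copies by the $\dagger$ rule, so each copy is linked to exactly one neighbour and the contractions close into $n$ disjoint transpositions, whose offset ($\{(2n,1),(2,3),\dots,(2n-2,2n-1)\}$ versus $\{(1,2),(3,4),\dots,(2n-1,2n)\}$) is fixed by whether party $i$ begins as R1 or R2. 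Assembling the per-party factors yields $M_{2n}^\pi=\tr[(\bigotimes_i U_i^\pi)\,\rho^{\otimes 2n}]$.

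Finally, to obtain the Hermitian observable form, note that $\bigotimes_i U_i^\pi$ is a tensor product of permutation operators, hence unitary, but in general not Hermitian. Since $M_{2n}^\pi=\sum_i\lambda_i^{2n}$ is real and $\rho^{\otimes 2n}$ is Hermitian, $M_{2n}^\pi$ equals its own complex conjugate $\tr[(\bigotimes_i U_i^\pi)^\dagger\rho^{\otimes 2n}]$; averaging the two expressions gives $M_{2n}^\pi=\frac12\tr[(\bigotimes_i U_i^\pi+\mathrm{h.c.})\,\rho^{\otimes 2n}]$ with a manifestly Hermitian $O_{2n}^\pi$, as stated. I expect the genuinely delicate step to be the per-party bookkeeping in the third paragraph: correctly matching the four diagrammatic cases to the algebraic formulas for all $n$, where the direction of the cyclic operator and the offset of the SWAP pairing are easy to get backwards. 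Once the base cases and the alternation rule are pinned down, the induction and the rest of the argument are routine tensor manipulations.
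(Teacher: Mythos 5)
Your proposal is correct and follows essentially the same route as the paper's proof: your identity $\mathcal{R}_\pi^\dagger=\mathcal{R}_{\tau\pi\tau}$ is precisely the paper's ``mirror symmetry'' of the daggered reshuffling, and the party-local contraction plus the per-party identification of SWAP pairings (R-type) versus $2n$-cycles (T-type) is exactly the paper's tensor-network argument. The only divergence is the final Hermitization step: the paper obtains $\tr[(\bigotimes_i U_i^{\pi})^\dagger\rho^{\otimes 2n}]=\tr[(\bigotimes_i U_i^{\pi})\rho^{\otimes 2n}]$ by conjugating with the copy-inversion permutation $\Pi_{\mathrm{inv}}$ (which additionally establishes the exchangeability of the R1/R2 and T1/T2 assignments), whereas you invoke the reality of $M_{2n}^\pi$ together with Hermiticity of $\rho^{\otimes 2n}$ --- a valid and somewhat simpler shortcut that suffices for the stated observable form.
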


\begin{proof}
The key of this proof is to figure out the index permutation rules in $\tr[(\mathcal{R}_\pi\mathcal{R}_\pi^\dagger)^n]$. First recall the definition of index permutation operation:
\begin{equation}\label{eq:A1}
\begin{split}
\left[\mathcal{R}_\pi\right]_{s_1s_2,\cdots,s_{2k-1}s_{2k}}=\rho_{s_{\pi(1)}s_{\pi(2)},\cdots,s_{\pi(2k-1)}s_{\pi(2k)}}.
\end{split}
\end{equation}
According to the linearity of $\mathcal{R}_\pi(\cdot)$ and hermitian of $\rho$,
\begin{equation}
\begin{split}
\mathcal{R}_\pi^\dagger=[\mathcal{R}_\pi(\rho)^*]^T=\mathcal{R}_\pi(\rho^*)^T=\mathcal{R}_\pi(\rho^T)^T,
\end{split}
\end{equation}
where $T$ denotes the transposition operation. Hence, the element of $\mathcal{R}_\pi(\rho)^\dagger$ is
\begin{equation}\label{eq:A3}
\begin{split}
\left[\mathcal{R}_\pi^\dagger\right]_{s_1s_2,\cdots,s_{2k-1}s_{2k}}&=[\mathcal{R}_\pi(\rho^T)]_{s_2s_1,\cdots,s_{2k}s_{2k-1}}\\
&=[\rho^T]_{s_{\pi'(1)}s_{\pi'(2)},\cdots,s_{\pi'(2k-1)}s_{\pi'(2k)}}\\
&=[\rho]_{s_{\pi'(2)}s_{\pi'(1)},\cdots,s_{\pi'(2k)}s_{\pi'(2k-1)}},
\end{split}
\end{equation}
where $\pi'(\cdot)$ denotes the corresponding row number or column number of $\pi(\cdot)$, if $\pi(\cdot)$ is odd, then $\pi'(\cdot)=\pi(\cdot)+1$; otherwise, $\pi'(\cdot)=\pi(\cdot)-1$. Make a comparison of Eq.~\eqref{eq:A1} and Eq.~\eqref{eq:A3}, we find that the $r$-th row index of $\mathcal{R}_\pi$ and the $r$-th column index of $\mathcal{R}_\pi^\dagger$ are the row (column) and column (row) indices from the same subsystem of $\rho$. And this is same for the $r$-th column index of $\mathcal{R}_\pi$ and $r$-th row index of $\mathcal{R}_\pi^\dagger$. Using tensor network representation, this conclusion tells us that if transversely drawing the tensors of $\mathcal{R}_\pi$ and $\mathcal{R}_\pi^\dagger$, we will find they are mirror symmetric. We take $\mathcal{R}_{(2,3)}$, $\mathcal{R}_{(2,3)}$ and a generic permutation matrix $\mathcal{R}_\pi$ as examples:
\begin{eqnarray}
\mathcal{R}_{(2,3)}^\dagger=&
\left(
\begin{tabular}{c}
    \includegraphics[scale=0.12]{R_rhoAB_.png}
\end{tabular}
\right)^\dagger=
\begin{tabular}{c}
    \includegraphics[scale=0.12]{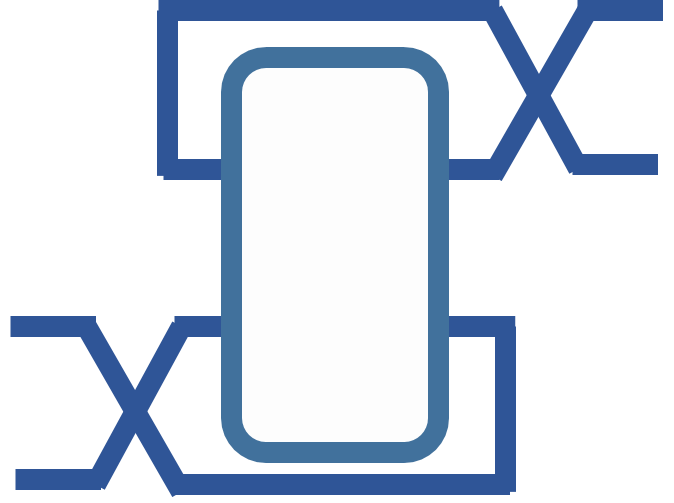}
\end{tabular}
, \ \ \
\mathcal{R}_{(1,2)}=
\left(
\begin{tabular}{c}
    \includegraphics[scale=0.15]{rhoABTA.png}
\end{tabular}
\right)^\dagger=
\begin{tabular}{c}
    \includegraphics[scale=0.15]{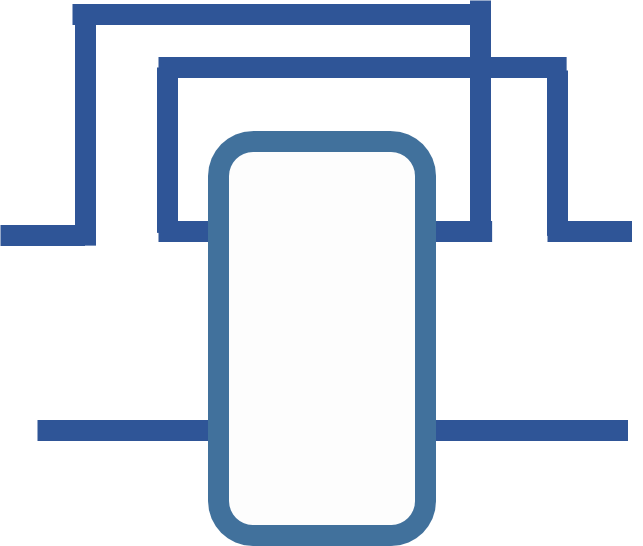}
\end{tabular}
,\\
&\mathcal{R}_\pi^\dagger=
\left(
\begin{tabular}{c}
    \includegraphics[scale=0.12]{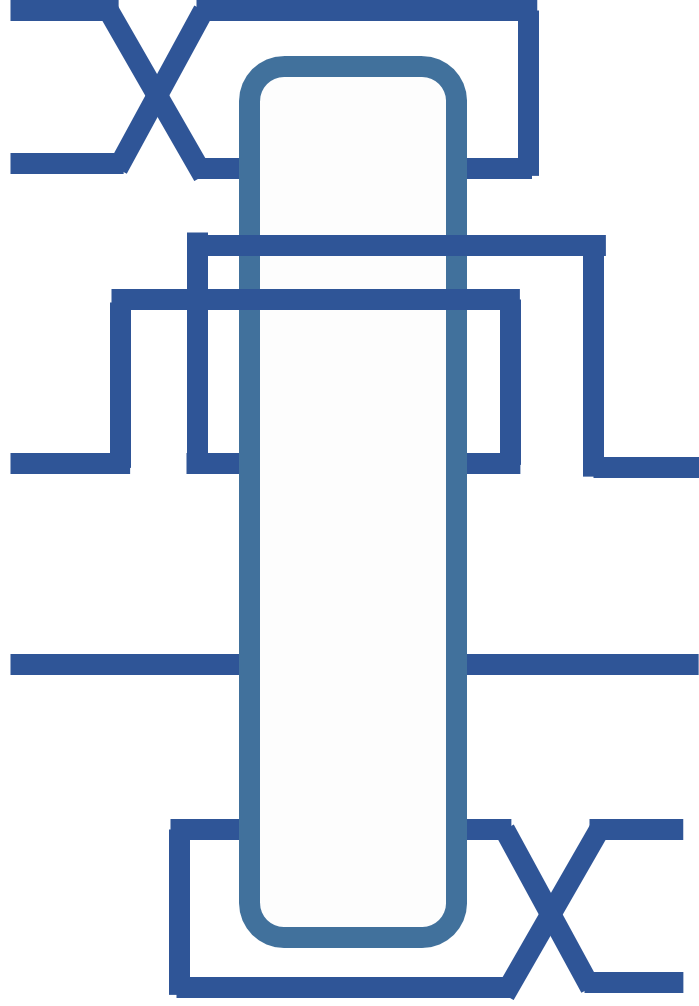}
\end{tabular}
\right)^\dagger=
\begin{tabular}{c}
    \includegraphics[scale=0.12]{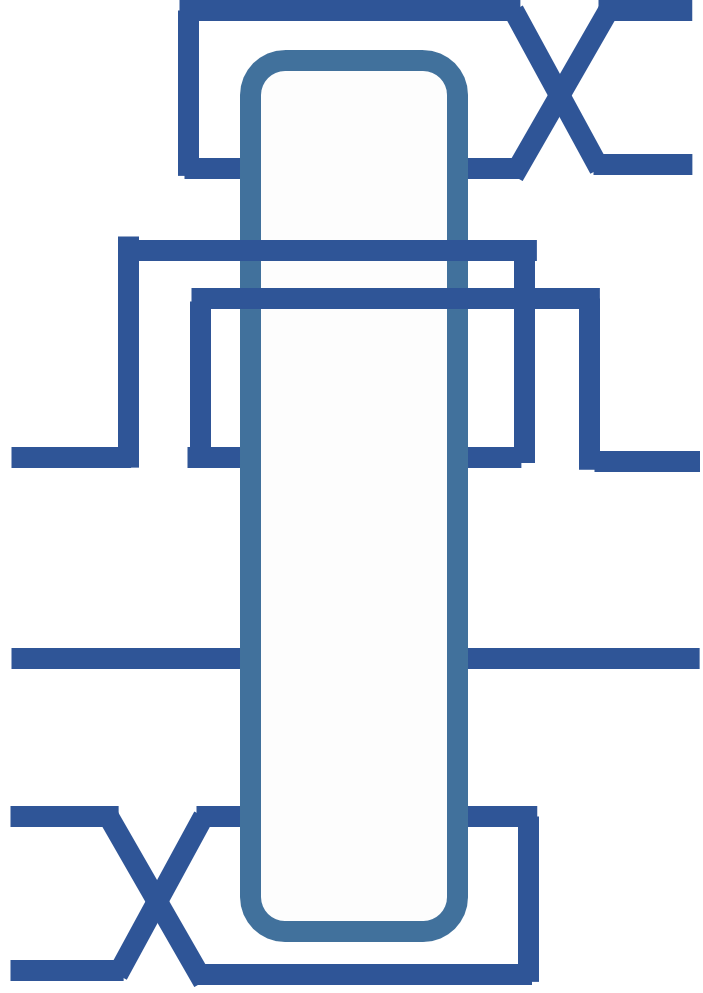}
\end{tabular}.
\end{eqnarray}

Because of this mirror symmetry property, when multiplying $\mathcal{R}_\pi$ and $\mathcal{R}_\pi^\dagger$, leg connections only occur in the same subsystem. Besides, for those R1-type subsystems, their two legs connect in the multiplication $\mathcal{R}_\pi\mathcal{R}_\pi^\dagger$, and for those R2-type subsystems, their two legs connect in $\mathcal{R}_\pi^\dagger\mathcal{R}_\pi$. While for T-type subsystem, because its two legs are on both sides of $\mathcal{R}_\pi$ and $\mathcal{R}_\pi^\dagger$, so it will connect with both of its neighboring $\rho$. Hence, in $\tr[(\mathcal{R}_\pi\mathcal{R}_\pi^\dagger)^n]$, one would find that the R-type parties of two neighboring $\rho$ are either connected by legs or have no connection at all, depending on their exact types; all the T-type parties are connected by their legs, while the directions of connection are different for T1-type and T2-type parties. This connection rule is equivalent to measuring SWAP operators, $\mbb{S}$, on connected neighboring R-type subsystems and cyclic permutation operators $\overrightarrow{\Pi}$ or $\overleftarrow{\Pi}$ on $2n$ copies of T-type subsystems. 

To summarize, 
\begin{equation}\label{eq:M2nunitary}
\begin{split}
M_{2n}^\pi=\tr\left[\left(\bigotimes_{i=1}^kU_i^\pi\right)\rho^{\otimes 2n}\right]
\end{split}
\end{equation}
where $U^\pi_i=\mathbb{S}_i^{(2,3)}\otimes \cdots\otimes\mathbb{S}_i^{(2n,1)}$ for R1-type parties, $U^\pi_i=\mathbb{S}_i^{(1,2)}\otimes \cdots\otimes\mathbb{S}_i^{(2n-1,2n)}$ for R2-type parties, $U^\pi_i=\overrightarrow{\Pi}_i$ for T1-type parties, and $U^\pi_i=\overleftarrow{\Pi}_i$ for T2-type parties. In fact, the definitions of $U_i^\pi$ for R1-type and R2-type parties can be exchanged, and the definitions of $U_i^\pi$ for T1-type and T2-type parties can also be exchanged. This can be proved by the permutation invariant properties of $M_{2n}^\pi$. 

Define $\Pi_{\mathrm{inv}}$ to be the $2n$-th order inversely permutation operator, which satisfies $\Pi_{\mathrm{inv}}[\rho^{(1)}\otimes\rho^{(2)}\cdots\otimes\rho^{(2n)}]\Pi_{\mathrm{inv}}^\dagger=\rho^{(2n)}\otimes\rho^{(2n-1)}\cdots\otimes\rho^{(1)}$. Because the $2n$ copies of $\rho$ in $M_{2n}^\pi$ are identical, we have
\begin{equation}
\begin{split}
M_{2n}^\pi=\tr\left[\left(\bigotimes_{i=1}^kU_i^\pi\right)\Pi_{\mathrm{inv}}\left(\rho^{\otimes 2n}\right)\Pi_{\mathrm{inv}}^\dagger\right]=\tr\left[\Pi_{\mathrm{inv}}^\dagger\left(\bigotimes_{i=1}^kU_i^\pi\right)\Pi_{\mathrm{inv}}\left(\rho^{\otimes 2n}\right)\right]=\tr\left[\left(\bigotimes_{i=1}^k\Pi_{\mathrm{inv},i}^\dagger U_i^\pi\Pi_{\mathrm{inv},i}\right)\left(\rho^{\otimes 2n}\right)\right].
\end{split}
\end{equation}
For R1-type and R2-type parties, $\Pi_{\mathrm{inv},i}^\dagger U_i^\pi\Pi_{\mathrm{inv},i}=U_i^\pi$, while for T1-type parties, $\Pi_{\mathrm{inv},i}^\dagger \overrightarrow{\Pi}_i\Pi_{\mathrm{inv},i}=\overrightarrow{\Pi}_i^\dagger=\overleftarrow{\Pi}_i$, and for T2-type parties, $\Pi_{\mathrm{inv},i}^\dagger \overleftarrow{\Pi}_i\Pi_{\mathrm{inv},i}=\overleftarrow{\Pi}_i^\dagger=\overrightarrow{\Pi}_i$. If we replace $\Pi_{\mathrm{inv},i}$ with $\overrightarrow{\Pi}$,
\begin{equation}
\begin{split}
M_{2n}^\pi=\tr\left[\left(\bigotimes_{i=1}^kU_i^\pi\right)\overrightarrow{\Pi}\left(\rho^{\otimes 2n}\right)\overrightarrow{\Pi}^\dagger\right]=\tr\left[\overrightarrow{\Pi}^\dagger\left(\bigotimes_{i=1}^kU_i^\pi\right)\overrightarrow{\Pi}\left(\rho^{\otimes 2n}\right)\right]=\tr\left[\left(\bigotimes_{i=1}^k\overrightarrow{\Pi}_i^\dagger U_i^\pi\overrightarrow{\Pi}_i\right)\left(\rho^{\otimes 2n}\right)\right].
\end{split}
\end{equation}
In this scenario, the definitions of $U_i^\pi$ of T1-type and T2-type parties will not change. While for R1-type parties, $\overrightarrow{\Pi}^\dagger_i[\mathbb{S}_i^{(2,3)}\otimes \cdots\otimes\mathbb{S}_i^{(2n,1)}]\overrightarrow{\Pi}_i=\mathbb{S}_i^{(1,2)}\otimes \cdots\otimes\mathbb{S}_i^{(2n-1,2n)}$, and for R2-type parties, $\overrightarrow{\Pi}^\dagger_i[\mathbb{S}_i^{(1,2)}\otimes \cdots\otimes\mathbb{S}_i^{(2n-1,2n)}]\overrightarrow{\Pi}_i=\mathbb{S}_i^{(2,3)}\otimes \cdots\otimes\mathbb{S}_i^{(2n,1)}$. Similarly, if we adopt $\overrightarrow{\Pi}\Pi_{\mathrm{inv}}$ to rotate these $2n$ copies of $\rho$, we can prove that the definitions of $U_i^\pi$ for R1-type and R2-type parties, and the definitions for T1-type and T2-type parties can be exchanged simultaneously.

\comments{
However, this is not in an observable form and is not measurable. We need to prove that this equation also holds when changing $U_i^\pi$ to $U_i^{\pi\dagger}$, which is equivalent to exchanging the definition of $U_i^\pi$ for T1-type and T2-type parties.

In fact, the definitions of $U_i^\pi$ for R1-type parties and R2-type parties can be exchanged, and so can T1-type parties and T2-type parties. This can be proved by changing the order of these $2n$ copies of $\rho$. Imagine in Eq.~\eqref{eq:M2nunitary}, all these $2n$ identical copies of $\rho$ have virtual label, $M_{2n}^\pi=\tr\left[\left(\bigotimes_{i=1}^kU_i^\pi\right)\rho^{(1)}\otimes\rho^{(2)}\otimes\cdots\otimes\rho^{(2n)}\right]$. If we inverse the order of these $2n$ copies of $\rho$ and ask the equality to hold for the new order, we need to change the definition of $U_i^\pi$ for different types of parties, $M_{2n}^\pi=\tr\left[\left(\bigotimes_{i=1}^kU_i^{\pi'}\right)\rho^{(2n)}\otimes\rho^{(2n-1)}\otimes\cdots\otimes\rho^{(1)}\right]$. Without loss of generality, we use the fourth order moment to show the change of $U_i^\pi$ for the four types of parties,
\begin{eqnarray}
\begin{tabular}{c}
     \includegraphics[scale=0.14]{T1_4_inorder.png}
\end{tabular}
=
\begin{tabular}{c}
     \includegraphics[scale=0.14]{T1_4_outorder.png}
\end{tabular}
\ , \
\begin{tabular}{c}
     \includegraphics[scale=0.14]{T2_4_inorder.png}
\end{tabular}
=
\begin{tabular}{c}
     \includegraphics[scale=0.14]{T2_4_outorder.png}
\end{tabular}
\ , \
\begin{tabular}{c}
     \includegraphics[scale=0.14]{R1_4_inorder.png}
\end{tabular}
=
\begin{tabular}{c}
     \includegraphics[scale=0.14]{R1_4_outorder.png}
\end{tabular}
\ , \
\begin{tabular}{c}
     \includegraphics[scale=0.14]{R2_4_inorder.png}
\end{tabular}
=
\begin{tabular}{c}
     \includegraphics[scale=0.14]{R2_4_outorder.png}
\end{tabular}
.
\end{eqnarray}
Therefore, if set $U_i^\pi=\overrightarrow{\Pi}^\dagger=\overleftarrow{\Pi}$ for T1-type parties, $U_i^\pi=\overleftarrow{\Pi}^\dagger=\overrightarrow{\Pi}$ for T2-type parties, and keep the definitions of $U_i^\pi$ for R1-type and R2-type parties, Eq.~\eqref{eq:M2nunitary} also holds. If we cyclically change the order of these $2n$ copies of $\rho$, $M_{2n}^\pi=\tr\left[\left(\bigotimes_{i=1}^kU_i^{\pi'}\right)\rho^{(2n)}\otimes\rho^{(1)}\otimes\cdots\otimes\rho^{(2n-1)}\right]$, the definitions of $U_i^\pi$ for T1-type parties and T2-type parties will not change, while the definitions of $U_i^\pi$ for R1-type parties and R2-type parties will exchange, which can be shown also by the fouth order moments,
\begin{eqnarray}
\begin{tabular}{c}
     \includegraphics[scale=0.14]{T1_4_inorder.png}
\end{tabular}
=
\begin{tabular}{c}
     \includegraphics[scale=0.14]{T1_4_perorder.png}
\end{tabular}
\ , \
\begin{tabular}{c}
     \includegraphics[scale=0.14]{T2_4_inorder.png}
\end{tabular}
=
\begin{tabular}{c}
     \includegraphics[scale=0.14]{T2_4_perorder.png}
\end{tabular}
\ , \
\begin{tabular}{c}
     \includegraphics[scale=0.14]{R1_4_inorder.png}
\end{tabular}
=
\begin{tabular}{c}
     \includegraphics[scale=0.14]{R1_4_perorder.png}
\end{tabular}
\ , \
\begin{tabular}{c}
     \includegraphics[scale=0.14]{R2_4_inorder.png}
\end{tabular}
=
\begin{tabular}{c}
     \includegraphics[scale=0.14]{R2_4_perorder.png}
\end{tabular}
.
\end{eqnarray}
If we inverse and cyclically change the order of these $2n$ copies of $\rho$, $\rho^{(1)}\otimes\rho^{(2)}\otimes\cdots\otimes\rho^{(2n)}\to\rho^{(2n)}\otimes\rho^{(2n-1)}\otimes\cdots\otimes\rho^{(1)}\to\rho^{(1)}\otimes\rho^{(2n)}\otimes\cdots\otimes\rho^{(2)}$, we can prove that Eq.~\eqref{eq:M2nunitary} also holds when changing the definitions of $U_i^\pi$ for T1-type parties and T2-type parties, and the definitions of $U_i^\pi$ for R1-type parties and R2-type parties simultaneously. 
}

Therefore, we can rewrite Eq.~\eqref{eq:M2nunitary} into observable form
\begin{equation}
\begin{split}
    M_{2n}^\pi=\frac{1}{2}\tr\left[\left(\bigotimes_{i=1}^kU_i^\pi+h.c.\right)\rho^{\otimes 2n}\right],
\end{split}
\end{equation}
where $U_i^\pi$ follows the same definition in Eq.\eqref{eq:M2nunitary}.
\end{proof}
Take a four-partite state $\rho$ as an example. The four parties of $\rho$ are all different types with respect to $\mathcal{R}_\pi(\cdot)$, T1-type, T2-type, R1-type, and R2-type respectively. It can be represented using tensor network as
\begin{eqnarray}
\mathcal{R}_\pi(\rho)=
\begin{tabular}{c}
     \includegraphics[scale=0.12]{Rpirho.png}
\end{tabular}
.
\end{eqnarray}
Because $\mathcal{R}_\pi$ and $\mathcal{R}_\pi^\dagger$ are mirror symmetric, the fourth moment, $M_4^\pi=\tr[(\mathcal{R}_\pi\mathcal{R}_\pi^\dagger)^2]$ can be graphically represented as
\begin{eqnarray}
M_4^\pi=\tr\left(
\begin{tabular}{c}
     \includegraphics[scale=0.12]{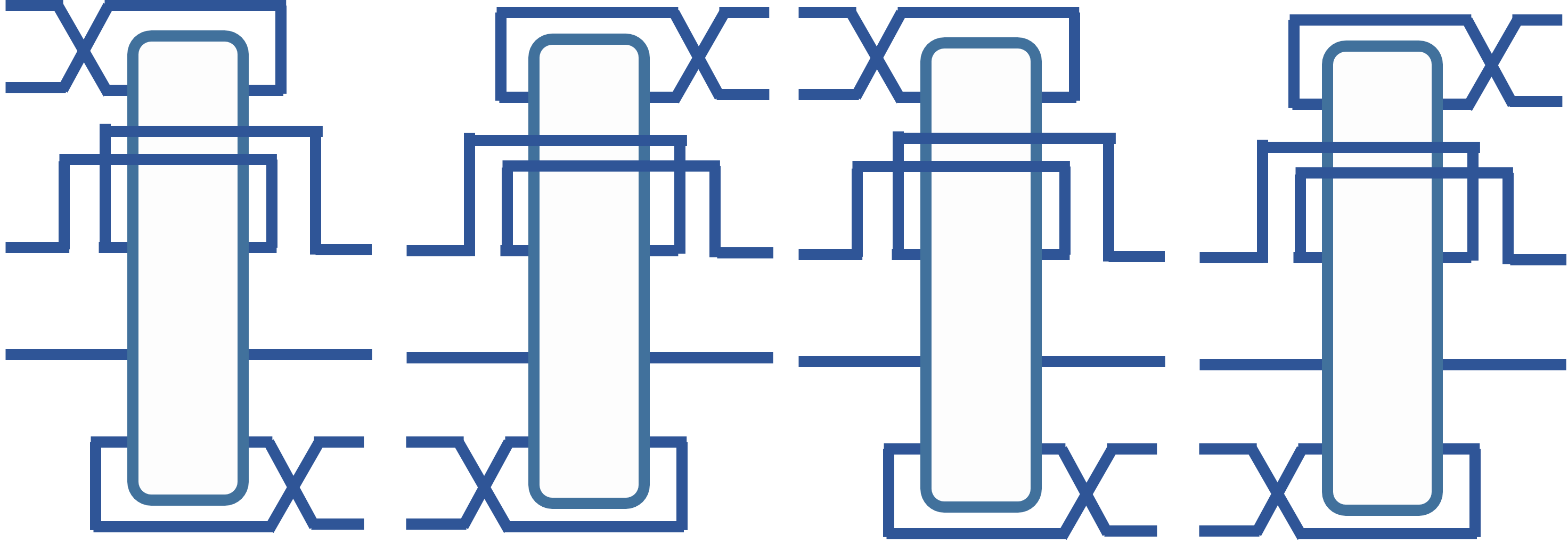}
\end{tabular}
\right)
=
\begin{tabular}{c}
     \includegraphics[scale=0.12]{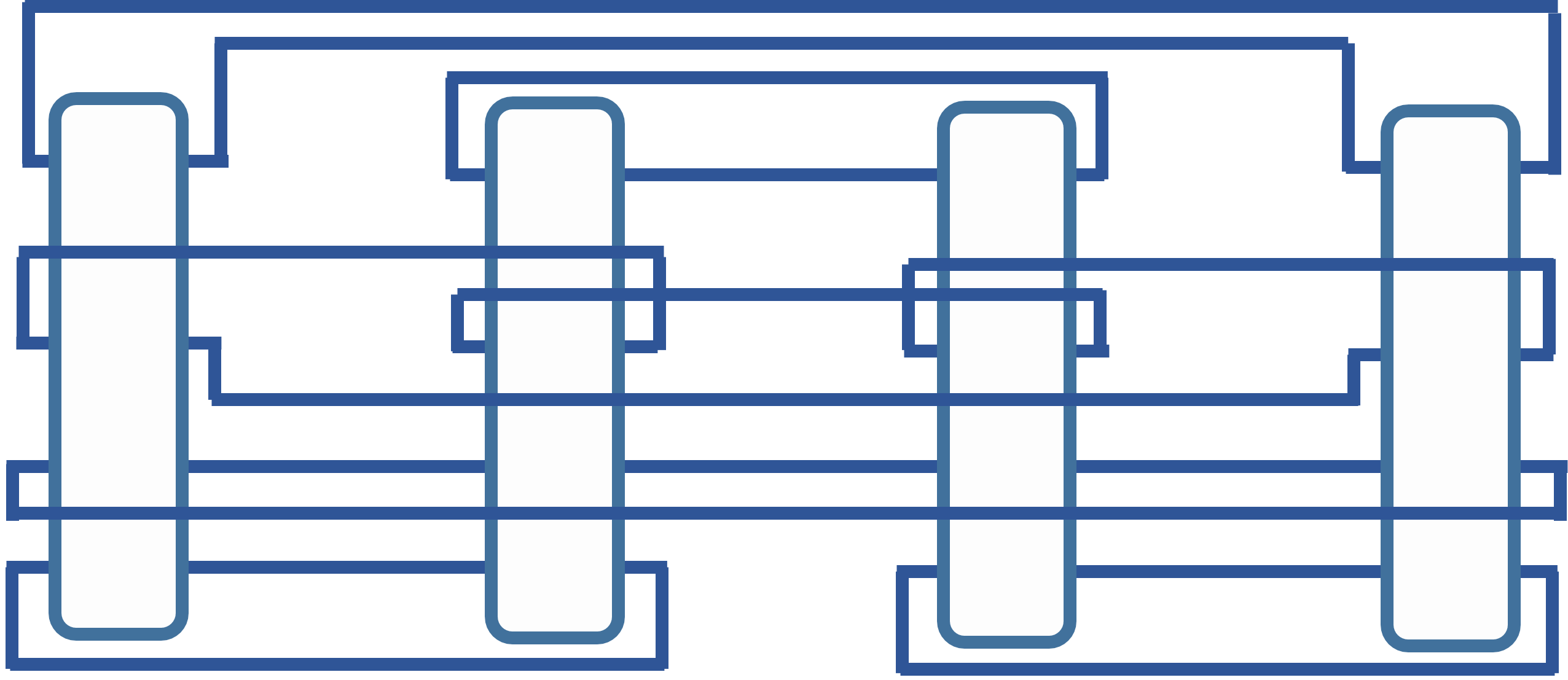}
\end{tabular}
.
\end{eqnarray}
As we said, one can find that the connection of legs belonging to T-type parties is among the four copies of $\rho$, while the connection of R-type legs is between two neighboring $\rho$. Hence, $M_4^\pi$ is equivalent to measuring 
\begin{equation}
\begin{split}
    O^\pi=\frac{1}{2}\left\{\left[\mathbb{S}_1^{(4,1)}\otimes\mathbb{S}_1^{(2,3)}\right]\otimes\overleftarrow{\Pi}_2\otimes\overrightarrow{\Pi}_3\left[\mathbb{S}_4^{(1,2)}\otimes\mathbb{S}_4^{(3,4)}\right]+h.c.\right\}
\end{split}
\end{equation}
on $4$ copies of $\rho$.

\subsection{Optimization Problem}\label{subsec:optimization}
\begin{theorem}\label{theorem:Lagrange Multipler}
The minimum value of $\norm{\mathcal{R}_\pi}$ given $M_2^\pi$, ..., $M_{2n}^\pi$ is reached when there are at most $n$ non-zero $\lambda_i$s. Thus, denote the solution of this problem to be $E_{2n}^\pi(\rho)$, the minimum value of $\sum_i\lambda_i$ is the solution of the following optimization problem,
\begin{equation}
\begin{split}
\min_{q_1,\cdots,q_n\in\mathbb{N}}& E_{2n}^\pi(\rho)= q_1\lambda_1+q_2\lambda_2+\cdots+q_n\lambda_n\\
\mathrm{s.t.}& \ \sum_{i=1}^nq_i\lambda_i^2=M_2^\pi,\cdots, \ \sum_{i=1}^nq_i\lambda_i^{2n}=M_{2n}^\pi\\
& \ q_1+q_2+\cdots+q_n\le L
\end{split}
\end{equation}
\end{theorem}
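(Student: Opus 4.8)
The plan is to treat this as a moment-constrained minimization and combine the Lagrange (KKT) stationarity conditions with a Descartes rule-of-signs count. First I would record that the minimum is attained: the feasible set $K=\{\lambda=(\lambda_1,\dots,\lambda_L):\lambda_i\ge 0,\ \sum_i\lambda_i^{2j}=M_{2j}^\pi\ \text{for } j=1,\dots,n\}$ is nonempty (it contains the singular-value vector of $\mathcal{R}_\pi$) and compact, since $\sum_i\lambda_i^2=M_2^\pi$ forces $0\le\lambda_i\le\sqrt{M_2^\pi}$; hence the continuous function $\sum_i\lambda_i$ attains a minimum on $K$. Fix a minimizer $\lambda^\ast$ and let $\mu_1>\dots>\mu_m>0$ be its distinct nonzero values, with integer multiplicities $q_1,\dots,q_m$ whose sum is the number of nonzero singular values of $\mathcal{R}_\pi$, hence at most $L$. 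Everything then reduces to the claim $m\le n$: granting it, $\lambda^\ast$ is encoded by the data $(\mu_k,q_k)_{k\le m}$ (padded with zeros), which is exactly a feasible point of the reduced optimization, and conversely any feasible $(\lambda_i,q_i)_{i\le n}$ of the reduced problem lifts to a feasible point of the original with the same value of $\sum\lambda_i$; so the two minima coincide, the reduced constraint $\sum_i q_i\le L$ being just the statement that $\mathcal{R}_\pi$ has $L$ singular values in total.

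To prove $m\le n$, suppose $m\ge n+1$ and apply the Lagrange conditions at $\lambda^\ast$. With Lagrangian $\sum_i\lambda_i-\sum_{j=1}^n\alpha_j(\sum_i\lambda_i^{2j}-M_{2j}^\pi)$, stationarity in each coordinate with $\lambda_i^\ast>0$ gives $1=\sum_{j=1}^n 2j\,\alpha_j\,(\lambda_i^\ast)^{2j-1}$; that is, every nonzero $\lambda_i^\ast$, and in particular each $\mu_k$, is a positive root of the single polynomial $h(x)=1-\sum_{j=1}^n 2j\,\alpha_j\,x^{2j-1}$. Its nonzero coefficients, listed by increasing degree (degrees $0,1,3,\dots,2n-1$), form a sequence of length $n+1$, so Descartes' rule of signs gives at most $n$ positive roots counted with multiplicity — contradicting $m\ge n+1$. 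It is the sparse structure of $h$ (no even powers) that sharpens the naive degree bound $2n-1$ down to $n$.

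The step that needs care, and which I expect to be the main obstacle, is guaranteeing that the multipliers $\alpha_j$ exist at $\lambda^\ast$, i.e.\ checking a constraint qualification. Restricting the gradients of the $n$ equality constraints to the coordinates where $\lambda^\ast_i>0$ and grouping equal entries, the resulting matrix has the same row space as $(2j\,q_k\,\mu_k^{2j-1})_{1\le j\le n,\ 1\le k\le m}$; pulling the positive factors $2j$, $q_k$ and one factor $\mu_k$ per column out of it leaves the Vandermonde matrix $((\mu_k^2)^{\,j-1})_{1\le j\le n,\ 1\le k\le m}$ in the distinct points $\mu_1^2,\dots,\mu_m^2$, which has full rank $\min(n,m)=n$ once $m\ge n$. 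Together with the manifestly independent gradients $-e_i$ of the active bounds $\lambda_i=0$, the gradients of all active constraints are linearly independent, so the KKT system above is solvable. If one prefers to avoid KKT, the same full-rank fact shows the moment map is a submersion near $\lambda^\ast$, hence the first-order feasible directions $\delta$ solving $\sum_k q_k\mu_k^{2j-1}\delta_k=0$ ($j=1,\dots,n$) are genuinely tangent to a feasible submanifold of dimension at least $m-n\ge 1$; along it the objective has derivative $\sum_k q_k\delta_k$, which cannot vanish on the whole tangent space unless $1=\sum_j c_j\mu_k^{2j-1}$ for all $k$ for some $c_j$, impossible for $m\ge n+1$ by the same Descartes count, so one could move feasibly to strictly decrease $\sum_i\lambda_i$, contradicting minimality. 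The remaining ingredients — compactness for existence, and the bookkeeping matching $\lambda^\ast$ to a solution of the reduced problem — are routine.
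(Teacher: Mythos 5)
Your proof is correct and follows essentially the same route as the paper's: Lagrange/KKT stationarity forces every nonzero singular value to be a positive root of the sparse polynomial $1-\sum_{j=1}^{n}2j\,\alpha_j x^{2j-1}$ (odd-degree terms plus a constant), which admits at most $n$ positive roots, so at most $n$ distinct nonzero values survive and the problem reduces to the stated form in $(q_i,\lambda_i)$. The only real difference is bookkeeping: the paper encodes nonnegativity and ordering via the substitution $\lambda_i=x_L^2+\cdots+x_i^2$ and does not discuss existence of the minimizer or a constraint qualification, whereas you keep the bound constraints explicit and justify the multipliers through the Vandermonde-rank (LICQ) check plus compactness — a more careful execution of the same argument.
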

\begin{proof}
The original optimization problem:
\begin{equation}\label{eq:biopt}
\begin{split}
\min_{\{\lambda_i\}} \ \ &\sum_{i=1}^L \lambda_i\\
\text{subject to} \ \ &\sum_{i=1}^L \lambda_i^{2}=M_{2}^\pi \\
&\vdots\\
&\sum_{i=1}^L \lambda_i^{2n}=M_{2n}^\pi \\
& \lambda_1\ge \lambda_2\ge\cdots\ge \lambda_L\ge 0.
\end{split}
\end{equation}
Here, $L$ is the number of the singular values. Let $\lambda_i=x_L^2+\cdots+x_i^2$ with $x_1,\cdots,x_L\in \mathbb{R}.$ The above optimization problem can be written as  
\begin{equation}
\begin{split}
\min_{\{x_i\}} \ \ &\sum_{i=1}^L\left( x_L^2+\cdots+x_i^2\right)\\
\text{subject to} \ \ &\sum_{i=1}^L (x_L^2+\cdots+x_i^2)^{2}=M_{2}^\pi \\
&\vdots\\
&\sum_{i=1}^L (x_L^2+\cdots+x_i^2)^{2n}=M_{2n}^\pi \\
&x_1,\cdots,x_L\in \mathbb{R}.
\end{split}
\end{equation}
Define the Lagrange function as 
\begin{equation}
\begin{split}
\mathcal{L}&=\sum_{i=1}^L  (x_L^2+\cdots+x_i^2)+\alpha_1\Big(\sum_{i=1}^L (x_L^2+\cdots+x_i^2)^{2}-M_{2}^\pi\Big)\\
&+\cdots +\alpha_n\left(\sum_{i=1}^L (x_L^2+\cdots+x_i^2)^{2n}-M_{2n}^\pi\right).
\end{split}
\end{equation}
Take partial derivative with respect to all variables and the minimum value in the optimization problem can be achieved at the points where all derivatives are equal to $0.$ Therefore, we have
\begin{equation}
\begin{split}
0=&\frac{\partial \mathcal{L}}{\partial x_L } =2L x_L+4\alpha_1 x_L\sum_{i=1}^L (x_L^2+\cdots+x_i^2)+\cdots+4n\alpha_n x_L\sum_{i=1}^L (x_L^2+\cdots+x_i^2)^{2n-1},\\
&\vdots\\
0=&\frac{\partial \mathcal{L}}{\partial x_{1} }=2x_1+4\alpha_1 x_1 (x_L^2+\cdots+x_1^2)+\cdots+4n\alpha_n x_1(x_L^2+\cdots+x_1^2)^{2n-1},
\end{split}
\end{equation}
substituting $\lambda_i=x_L^2+\cdots+x_i^2$ into this equation,
\begin{equation}
\begin{split}
x_1\left[1+2\alpha_1\lambda_1+\cdots+2n\alpha_n\lambda_1^{2n-1}\right]=&0\\
&\vdots\\
x_L\left[L+2\alpha_1\sum_{i=1}^L\lambda_i+\cdots+2n\alpha_n\sum_{i=1}^L\lambda_i^{2n-1}\right]=&0.
\end{split}
\end{equation}
These above equations indicate that the extreme points satisfy	
\begin{equation}\label{eq:origin solution}
\begin{split}
\lambda_L=&0\quad \text{or}  \quad L+2\alpha_1\sum_{i=1}^L \lambda_i+\cdots+2n\alpha_n\sum_{i=1}^L \lambda_i^{2n-1}=0,\\
\lambda_L=&\lambda_{L-1}\quad \text{or}  \quad L-1+2\alpha_1\sum_{i=1}^{L-1} \lambda_i+\cdots+2n\alpha_n\sum_{i=1}^{L-1} \lambda_i^{2n-1}=0,\\
&\vdots\\
\lambda_2=&\lambda_{1}\quad \text{or}  \quad 1+2\alpha_1\lambda_1+\cdots+2n\alpha_n\lambda_1^{2n-1}=0.
\end{split}
\end{equation}
After simple analysis, there's only two possible kinds of $\lambda_i$ that satisfies the above equations: $\lambda_i=0$ or are the roots of the high-degree equation
\begin{equation}\label{eq:extreme}
2n\alpha_n\lambda^{2n-1}+2(n-1)\alpha_{n-1}\lambda^{2n-3}\cdots+2\alpha_1\lambda+1=0.
\end{equation}
Notice that all even-degree terms are zero and 
\begin{equation}
2n\alpha_n\lambda^{2n-1}+2(n-1)\alpha_{n-1}\lambda^{2n-3}\cdots+2\alpha_1\lambda
\end{equation}
is an odd function. As a consequence, Eq.~\eqref{eq:extreme} has at most $n$ positive roots. This implies that the set $\{\lambda_i\}$ have at most $n$ different positive values and the rest are $0$. Assume the $n$ different positive values are $\lambda_1\ge \cdots\ge \lambda_n\ge 0$. Denote the number of $\lambda_i$ as $q_i$ and the sum satisfies $q_1+\cdots +q_n\le L$. The optimization problem can be reduced to a new simpler optimization problem.

\begin{equation}
\begin{split}
\min_{\{q_i\in\mathbb{N}\}} \ \ &q_1 \lambda_1+\cdots+q_n \lambda_n\\
\text{subject to} \ \ &\sum_{i=1}^n q_i\lambda_i^{2}=M_{2}^\pi \\
&\vdots\\
&\sum_{i=1}^n q_i\lambda_i^{2n}=M_{2n}^\pi \\
& \lambda_1\ge \lambda_2\ge\cdots\ge \lambda_n\ge 0\\
& q_1+\cdots q_n\le L.
\end{split}
\end{equation}
\end{proof}

Particularly, we give an analytical solution in the case $n=2$. The optimization problem is
\begin{equation}\label{eq:optimizationM2M4}
\begin{split}
\min_{\{\lambda_i\}} \ \ &\sum_{i=1}^L \lambda_i\\
\text{subject to} \ \ &\sum_{i=1}^L \lambda_i^{2}=M_{2}^\pi \\
&\sum_{i=1}^L \lambda_i^{4}=M_{4}^\pi \\
& \lambda_1\ge \lambda_2\ge\cdots\ge \lambda_L\ge 0.
\end{split}
\end{equation}
To solve this problem, we need to adopt a theorem in \cite{berry2003bounds}.
\begin{fact}
Suppose $\{p_1,p_2,\cdots,p_{L}\}$ is a probability distribution, namely,  $p_i\geq 0$ and $\sum_i p_i=1.$ The sum $H_f=\sum_i f(p_i)$ is called an entropy measure, if $f(\cdot)$ is a function satisfying the following conditions,
\begin{enumerate}
    \item f(0)=0,
    \item $f(\cdot)$ is strictly convex or strictly concave,
    \item the first derivative $f^{\prime}$ exists and is continuous in the interval $(0,1)$.
\end{enumerate}

Let $f(\cdot)$ and $g(\cdot)$ be two functions satisfying the above conditions. Thus, $H_f=\sum_i f(p_i)$ and $H_g=\sum_i g(p_i)$ are two entropy measures. Define $\tilde{f}^{\prime}(g^{\prime})=f^{\prime}(p(g^{\prime}))$, which is formulated by treating $f^{\prime}=\frac{df}{dp}$ as the function of $g^{\prime}=\frac{dg}{dp}$. If $\tilde{f}^{\prime}(g^{\prime})$ is a strictly convex function of $g^{\prime}$, then the solution of the optimization problem
\begin{equation}\label{eq:optimizationentropy}
\begin{split}
\min_{\{p_i\ge 0\}} \ \ &H_f=\sum_{i=1}^L f(p_i)\\
subject \ to \ \ &\sum_{i=1}^L p_i=1\\
& \sum_{i=1}^Lg(p_i)=H_g
\end{split}
\end{equation}
is obtained when
\begin{equation}
\begin{split}
\{p_i\}=\{p_1,\cdots,p_1,p_2,0,\cdots,0\},\qquad p_2=1-p_1\left \lfloor \frac{1}{p_1}  \right \rfloor< p_1.
\end{split}
\end{equation}
If $\tilde{f}^{\prime}(g^{\prime})$ is a strictly concave function of $g^{\prime}$, the minimum is obtained when
\begin{equation}
\{p_i\}=\{p_1,p_2,\cdots,p_2\},\qquad p_2=\frac{1-p_1}{d-1}\leq p_1.
\end{equation}

\end{fact}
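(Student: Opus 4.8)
\emph{Proof strategy.} I would argue by a Lagrange-multiplier analysis, then a geometric observation that collapses the optimizer to at most two distinct nonzero values, and finally a discrete comparison that pins down the exact extremal form. Since the feasible set --- the simplex intersected with the level set $\{\sum_i g(p_i)=H_g\}$ --- is compact and $H_f$ is continuous, a minimizer exists; I would introduce multipliers $\mu,\nu$ for the two equality constraints and $\eta_i\ge 0$ for the constraints $p_i\ge 0$, and write the KKT stationarity conditions. For every index $i$ in the support ($p_i>0$) these say that $f'(p_i)=\mu+\nu\,g'(p_i)$ with $\mu,\nu$ independent of $i$; for $p_i=0$ they reduce to the corresponding one-sided inequality, which merely allows a subset of the $p_i$ to vanish.

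The key reduction is geometric. Since $\tilde f'$ is defined by $\tilde f'\bigl(g'(p)\bigr)=f'(p)$, for each $i$ in the support the point $\bigl(g'(p_i),f'(p_i)\bigr)$ lies simultaneously on the graph of $\tilde f'$ and on the line $y=\mu+\nu x$. A strictly convex (or strictly concave) function's graph meets any line in at most two points, and $g'$ is strictly monotone because $g$ is strictly convex or concave, so distinct $p_i$ yield distinct abscissae $g'(p_i)$. Hence the support contains at most two distinct values $a>b>0$, with multiplicities $k,m\ge 1$, the remaining $L-k-m$ entries being $0$; the two constraints then become $ka+mb=1$ and $k\,g(a)+m\,g(b)=H_g$.

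It then remains to select the optimal multiplicities. For fixed $(k,m)$ the two equations determine $(a,b)$ up to finitely many choices, so the candidates form a discrete family over which I minimize $k\,f(a)+m\,f(b)$. I would show that the sign of the change produced by transferring one unit of multiplicity between $a$ and $b$ --- re-adjusting $a,b$ so that both constraints stay satisfied --- is controlled by whether $\tilde f'$ is convex or concave: in the convex case the objective keeps decreasing until exactly one component carries the remainder mass $p_2=1-p_1\lfloor 1/p_1\rfloor<p_1$ and every other non-support entry has been pushed to $0$, while in the concave case it keeps decreasing until one entry is singled out and the remaining $L-1$ share the uniform value $p_2=(1-p_1)/(L-1)\le p_1$; a second-order check then confirms these stationary configurations are minima rather than maxima or saddles. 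This last combinatorial step --- passing from ``at most two distinct nonzero values'' to the canonical form with the floor function, and correctly handling the boundary transitions where a copy of $p_1$ is gained or lost (as $p_2\to 0$, or as $p_1\to p_2$) --- is the part I expect to be the main obstacle, since the Lagrangian analysis alone is blind to the integer multiplicities. Finally, specializing the established Fact to the $n=2$ problem \eqref{eq:optimizationM2M4} through the substitution $p_i=\lambda_i^2/M_2^\pi$ --- under which the objective is $\sqrt{M_2^\pi}\sum_i\sqrt{p_i}$, so $f(p)=\sqrt p$, the fourth-moment constraint is $\sum_i p_i^2=M_4^\pi/(M_2^\pi)^2$, so $g(p)=p^2$, and $\tilde f'(x)\propto x^{-1/2}$ is strictly convex --- puts the optimum in the ``spike plus remainder'' form, and solving $q p_1+p_2=1$ together with $q p_1^2+p_2^2=M_4^\pi/(M_2^\pi)^2$ with $q=\lfloor(M_2^\pi)^2/M_4^\pi\rfloor$ reproduces the closed form in Eq.~\eqref{eq:E_4 definition}.
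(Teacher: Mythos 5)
Your first two steps are sound and in fact reproduce the heart of the cited result: the paper itself does not prove this Fact but imports it from Berry and Sanders \cite{berry2003bounds}, and that argument is precisely your Lagrange stationarity condition $f'(p_i)=\mu+\nu\,g'(p_i)$ on the support, combined with the observation that a line meets the graph of the strictly convex (or concave) $\tilde f'$ in at most two points, with strict monotonicity of $g'$ turning this into ``at most two distinct nonzero values.'' The genuine gap is your third step, which you yourself flag as the main obstacle: everything that makes the Fact usable --- that in the convex case the smaller value occurs with multiplicity one while the larger value occurs exactly $\lfloor 1/p_1\rfloor$ times (hence $p_2=1-p_1\lfloor 1/p_1\rfloor<p_1$), and that in the concave case there are no zeros and the smaller value is shared uniformly by the remaining $L-1$ slots --- lives in the selection of the multiplicities, and the proposal only gestures at it. ``Transferring one unit of multiplicity while re-adjusting $a,b$'' is not yet an argument: the readjustment is implicit (two nonlinear constraints determine how $a,b$ move), the claimed monotonicity of the objective along such moves is exactly what must be proved, and a ``second-order check'' cannot by itself separate the two candidate families, because both the spike-plus-remainder and the spike-plus-uniform-tail configurations are stationary for the same Lagrangian; the convexity or concavity of $\tilde f'$ has to enter through an explicit comparison --- for instance, fix the number of nonzero components, identify the two stationary families on that face, decide which is the minimum and which the maximum according to the convexity of $\tilde f'$, and only then optimize over the number of nonzero components. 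Until that is carried out, you have established only that a minimizer takes at most two distinct nonzero values, which is strictly weaker than the Fact.

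Two smaller points. The KKT conditions at zero coordinates need care, since $f'(0^+)$ or $g'(0^+)$ may diverge (indeed $f(p)=\sqrt p$ in the intended application has $f'(0^+)=\infty$); the clean route is to fix the support, apply Lagrange multipliers on the relative interior of that face, and then compare faces --- which your sketch implicitly does and should state. Your final paragraph, specializing to $f(p)=\sqrt p$, $g(p)=p^2$, $\tilde f'(g')\propto (g')^{-1/2}$ strictly convex and recovering the closed form of $E_4^\pi$, is correct and coincides with how the paper applies the Fact; that part is not in question, but it belongs to the application rather than to the proof of the Fact itself.
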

Substituting $p_i=\lambda_i^2$ , $f(p)=\sqrt{p}$  and $g(p)=p^2$ and ignore the normalization constant, Eq.~\eqref{eq:optimizationentropy} can be converted to our target optimization problem, Eq.~\eqref{eq:optimizationM2M4}. It is easy to verify that the two functions, $f(p)=\sqrt{p}$  and $g(p)=p^2$, satisfy the required three conditions, and $\tilde{f}^{\prime}(g^{\prime})=1/(\sqrt{2g^{\prime}})$ is strictly convex. Therefore, the minimum value is achieved when 
\begin{equation}
\{\lambda_i\}=\{\lambda_1,\cdots,\lambda_1,\lambda_2,0,\cdots,0\}.
\end{equation}
Suppose the number of $\lambda_1$ is $q$, then institute it into the constraint conditions, one gets
\begin{equation}
\begin{split}
\lambda_1 &=\frac{1}{q\sqrt{q+1}}\sqrt{q(qM_2^\pi+U)}
,\\
\lambda_2 &=\frac{1}{\sqrt{q+1}}\sqrt{M_2^\pi-U},
\end{split}
\end{equation}
with $U=\sqrt{q(q+1)M_4^\pi-q(M_2^\pi)^2}$. To make this a valid solution, there is only one possible value of $q$, which is $q=\left \lfloor \frac{(M_2^\pi)^2}{M_4} \right \rfloor$. Therefore, the solution of the target optimization problem Eq.~\eqref{eq:optimizationM2M4} is  
\begin{equation}
\begin{split}
    E_4^\pi=\frac{1}{\sqrt{q+1}}\sqrt{q(qM_2^\pi+U)}+\frac{1}{\sqrt{q+1}}\sqrt{M_2^\pi-U}.
\end{split}
\end{equation}

\subsection{Entanglement Structure Detection}\label{sec:structure}
In the multipartite system, when relaxing the restriction of LOCC, states can have much more complicated entanglement structures \cite{Lu2018Structure} than in the bipartite scenario. Concepts like entanglement depth and intactness naturally arise. A $k$-partite state $\rho$ is called $t$-separable iff it can be written as 
\begin{equation}
\begin{split}
\rho=\sum_{g_1\cup g_2\cup\cdots\cup g_t=[k]}\sum_i p_{\{g\},i}\rho_{g_1}^i\otimes \rho_{g_2}^i\otimes\cdots\otimes\rho_{g_t}^i,
\end{split}
\end{equation}
where $g_1$, ..., $g_t$ are $t$ disjoint non-empty sets, $\{p_{\{g\},i}\}$ are probabilities satisfying the normalization condition, $\sum_{\{g\},i}p_{\{g\},i}=1$. The largest number of $t$ for a given $k$-partite state $\rho$ is called the \emph{entanglement intactness} of $\rho$.

To generalize our moment-based permutation criteria to entanglement structure detection, we need to first develop the norm-based permutation entanglement structure criteria, and then prove that they can also be estimated by the permutation moments. Adopting the Ky Fan matrix norm \cite{horn1994topics}, we find two indicators for entanglement intactness.

\begin{theorem}
Given a $k$-partite state $\rho$ with intactness $t$, it satisfies
\begin{equation}
\begin{split}
    G_R(\rho)&=\sum_{g\subsetneq [k]}\norm{\mathcal{R}_{(2,3)}(\rho_{g,\bar{g}})}_{d^2}\le(2^k-2^t)d+(2^t-2),\\
    G_T(\rho)&=\sum_{g\subsetneq [k]}\norm{\mathcal{R}_{(1,2)}(\rho_{g,\bar{g}})}_{d^2}\le(2^k-2^t)d+(2^t-2),
\end{split}
\end{equation}
where $\norm{\cdot}_{d^2}$ is the Ky Fan $d^2$ matrix norm, defined by the sum of $d^2$ largest singular values of the matrix; $d$ is an integer that is not greater than the dimension of the smallest party; $\rho_{g,\bar{g}}$ is the bipartite state constructed by treating $g$ and $\bar{g}$ as the two parties of $\rho$, where $\bar{g}$ is the complement of $g$; $\mathcal{R}_{(2,3)}$ and $\mathcal{R}_{(1,2)}$ are the bipartite index permutation operations acting on the indices of parties $g$ and $\bar{g}$.
\end{theorem}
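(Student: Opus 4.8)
The plan is to reduce the inequality to two elementary bounds on the Ky Fan $d^2$ norm of a single realigned (respectively partial-transposed) bipartite block, and then to combine them with a counting argument over bipartitions and with convexity. I describe the argument for $G_R$; the one for $G_T$ is identical with the CCNR criterion replaced by the PPT criterion. First I would record a \emph{generic} bound: for every bipartite state $\sigma_{AB}$, $\norm{\mathcal{R}_{(2,3)}(\sigma_{AB})}_{d^2}\le d$. This holds because realignment merely relabels the matrix entries and so preserves the Frobenius norm, giving $\sum_i\lambda_i^2=\tr(\sigma_{AB}^2)\le 1$ for the singular values $\{\lambda_i\}$; Cauchy--Schwarz then yields $\sum_{i=1}^{d^2}\lambda_i\le d\,(\sum_{i=1}^{d^2}\lambda_i^2)^{1/2}\le d$, and the assumption that $d$ does not exceed the smallest local dimension ensures each block truly has both local dimensions at least $d$. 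Second I would record a \emph{separable} bound: if $\sigma_{AB}$ is separable across $A|B$, then Eq.~\eqref{eq:permutationcriterion} with $\pi=(2,3)$ gives $\norm{\mathcal{R}_{(2,3)}(\sigma_{AB})}_1\le 1$, hence $\norm{\mathcal{R}_{(2,3)}(\sigma_{AB})}_{d^2}\le 1$ since the Ky Fan $d^2$ norm is dominated by the trace norm. For $G_T$ the generic bound uses instead that partial transposition preserves the Frobenius norm, and the separable bound uses Eq.~\eqref{eq:permutationcriterion} with $\pi=(1,2)$ (equivalently $\sigma^{T_A}\ge 0$).

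Next I would bound $G_R$ on a single product term. Fix a partition of $[k]$ into $t$ nonempty blocks $g_1,\dots,g_t$ and a product state $\sigma=\sigma_{g_1}\otimes\cdots\otimes\sigma_{g_t}$. Whenever a bipartition $g\subsetneq[k]$ has every block lying entirely inside $g$ or entirely inside $\bar g$, the block state $\sigma_{g,\bar g}$ is a tensor product of two states and is therefore separable across $g|\bar g$, so the separable bound applies; for the remaining bipartitions I would use the generic bound. Counting the bipartitions of the first kind amounts to distributing the $t$ blocks among the two labelled sides $g$ and $\bar g$ and discarding the two trivial distributions, which gives $2^t-2$; as there are $2^k-2$ bipartitions $g\subsetneq[k]$ in total, exactly $2^k-2^t$ are of the second kind. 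Hence
\[
G_R(\sigma)=\sum_{g\subsetneq[k]}\norm{\mathcal{R}_{(2,3)}(\sigma_{g,\bar g})}_{d^2}\le (2^k-2^t)\,d+(2^t-2),
\]
and the same estimate holds for product states with mixed factors, since both single-block bounds needed only that a state has purity at most one and that a product across a respected cut is separable.

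Finally I would lift this to an arbitrary $t$-separable state by convexity. The maps $g\mapsto\rho_{g,\bar g}$ and $\mathcal{R}_{(2,3)}$ are linear and $\norm{\cdot}_{d^2}$ is a norm, so $G_R$ is convex. A state of intactness $t$ is in particular $t$-separable, hence a convex mixture of product states of the form $\rho_{g_1}^i\otimes\cdots\otimes\rho_{g_t}^i$ whose $t$-block partitions may vary from term to term; since the bound just derived depends only on $t$ and not on the partition, convexity gives $G_R(\rho)\le(2^k-2^t)d+(2^t-2)$, and the argument for $G_T$ is verbatim. (The right-hand side is decreasing in $t$, so it is enough that $\rho$ be $t$-separable for the stated $t$; using the largest such $t$, the intactness, gives the tightest bound.)

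The step I expect to be the main obstacle is the bookkeeping in the middle paragraph: one must check that ``every block on one side of the cut'' is indeed sufficient for separability of the block across that cut, that the number of such bipartitions is \emph{exactly} $2^t-2$ under the $g\subsetneq[k]$ convention of the statement so that $(2^k-2^t)+(2^t-2)$ recovers the total $2^k-2$, and --- crucially for the convexity step --- that the per-term bound is uniform over all $t$-block partitions, so that it survives mixing of terms adapted to different partitions. The two single-block norm bounds themselves are routine once one recalls that realignment and partial transposition preserve the Frobenius norm and that $\tr\sigma^2\le 1$ for any state.
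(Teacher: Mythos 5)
Your proposal is correct and follows essentially the same route as the paper: convexity of $G_R$ (and $G_T$), the count of $2^t-2$ block-respecting bipartitions versus $2^k-2^t$ remaining ones, a bound of $1$ on the respected cuts and a purity/Cauchy--Schwarz bound of $d$ on the rest. The only cosmetic difference is that you bound the respected cuts via the CCNR/PPT trace-norm criterion for mixed product terms, whereas the paper decomposes into pure product states and evaluates that norm exactly as $1$; both are valid.
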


When $k=3$, this criterion gives the genuine tripartite entanglement criterion proposed in Ref.~\cite{Li2017genuine}.

\begin{proof}
Here, we use $G_R(\rho)$ as an example to prove our theorem, the proof for $G_T(\rho)$ is quite similar. According to the convexity of Ky Fan norm, these two functions are all convex:
\begin{equation}
\begin{split}
G_R(a\rho_1+b\rho_2)&=\sum_{g\subsetneq [k]}\norm{\mathcal{R}_{(2,3)}\left[a(\rho_1)_{g,\bar{g}}+b(\rho_2)_{g,\bar{g}}\right]}_{d^2}\\
&\le\sum_{g\subsetneq [k]}a\norm{\mathcal{R}_{(2,3)}\left[(\rho_1)_{g,\bar{g}}\right]}_{d^2}+b\norm{\mathcal{R}_{(2,3)}\left[(\rho_2)_{g,\bar{g}}\right]}_{d^2}\\
&=a G_R(\rho_1)+b G_R(\rho_2).
\end{split}
\end{equation}
Hence, for a state with intactness $t$, $\rho=\sum_{i,\psi_i\in t-int}p_i\ketbra{\psi_i}{\psi_i}$, 
\begin{equation}
\begin{split}
G_R(\rho)\le \max_{\psi\in t-int}G_R(\psi),
\end{split}
\end{equation}
where we use $t-int$ to denote the set of states that have intactness $t$. For pure state with intactness $t$, $\ket{\psi}=\bigotimes_{i=1}^t\ket{\psi_i}$, there's $2^t-2$ non-trivial partition $g|\bar{g}$ that ensure each $\ket{\psi_i}$ is in either $g$ or $\bar{g}$. In this case, $g$ and $\bar{g}$ are two separate parts, $\norm{\mathcal{R}_{(2,3)}(\psi_{g,\bar{g}})}_{d^2}=1$ because $\tr\left\{[\mathcal{R}_{(2,3)}(\psi_{g,\bar{g}})\mathcal{R}_{(2,3)}(\psi_{g,\bar{g}})^\dagger]^n\right\}=1$ for all $n$. In other cases, there exists some $\psi_i$ that distributed both in $g$ and $\bar{g}$, hence $g$ and $\bar{g}$ are two entangled parties. Also use singular values to denote the norm, $\norm{\mathcal{R}_{(2,3)}(\psi_{g,\bar{g}})}_{d^2}=\sum_{i=1}^{d^2}\lambda_i$, where $\lambda_i$ are arranged in decreasing order, the purity condition $\tr(\psi^2)=1$ gives a constrain that $\sum_{i=1}^{d^2}\lambda_i^2\le 1$. Recalling that singular values are all non-negative, hence the maximum value of $\sum_{i=1}^{d^2}\lambda_i$ under this purity constrain is $d$.

Considering other constraints in addition to purity constraint, the maximum will be less than $d$. In fact, the exact maximum value depends on the number of $\psi_i$ that is not fully contained in $g$ or $\bar{g}$. We will leave the calculation of the exact upper bound in our future work. Therefore, $(2^k-2^t)d+(2^t-2)$ generally cannot be achieved by some $k$-partite states with intactness $t$. However, this bound is nontrivial because $k$-partite GHZ state reaches $(2^k-2)d$ which is larger than $(2^k-2^t)d+(2^t-2)$ for any $t\ge 2$. 

\end{proof}

The lower bound of Ky Fan norm can also be constructed using moments. Here we only consider the case knowing the second and fourth moments. The optimization problem for entanglement structure detection is similar as one in Sec.~\ref{subsec:optimization}, and can be simplified as 
\begin{equation}\label{eq:gmeopt}
\begin{split}
\min_{\lambda_i} \ \ &\sum_{i=1}^l \lambda_i\\
\text{subject to} \ \ &\sum_{i=1}^L \lambda_i^{2}=M_{2} \\
&\sum_{i=1}^L \lambda_i^{4}=M_{4} \\
& \lambda_1\ge \lambda_2\ge\cdots\ge \lambda_L\ge 0\\
& l\le L.
\end{split}
\end{equation}
Using the similar strategy to simplify this optimization problem, it can be obtained that the extreme points satisfy	
\begin{equation}
\begin{split}
\lambda_L&=0\quad \text{or}  \quad l+2\alpha\sum_{i=1}^L \lambda_i+4\beta\sum_{i=1}^L \lambda_i^{3}=0,\\
&\vdots\\
\lambda_{l+1}&=\lambda_{l}\quad \text{or}  \quad l+2\alpha\sum_{i=1}^l \lambda_i+4\beta\sum_{i=1}^l \lambda_i^{3}=0,\\
\lambda_l&=\lambda_{l-1}\quad \text{or}  \quad l-1+2\alpha\sum_{i=1}^{l-1} \lambda_i+4\beta\sum_{i=1}^{l-1} \lambda_i^{3}=0,\\
&\vdots\\
\lambda_2&=\lambda_{1}\quad \text{or}  \quad 1+2\alpha \lambda_1+4\beta\lambda_1^{3}=0,
\end{split}
\end{equation}
where $\alpha$ and $\beta$ are Lagrange coefficients. Compared with the original optimization problem, Eq.~\eqref{eq:origin solution}, the above equations are different in constant terms of the first $(L-l)$ lines. We will see in the following analysis that this difference will make the problem much harder than the original problem. We discuss the solution in different cases. 

\begin{figure}
    \centering
    \includegraphics[width=15cm]{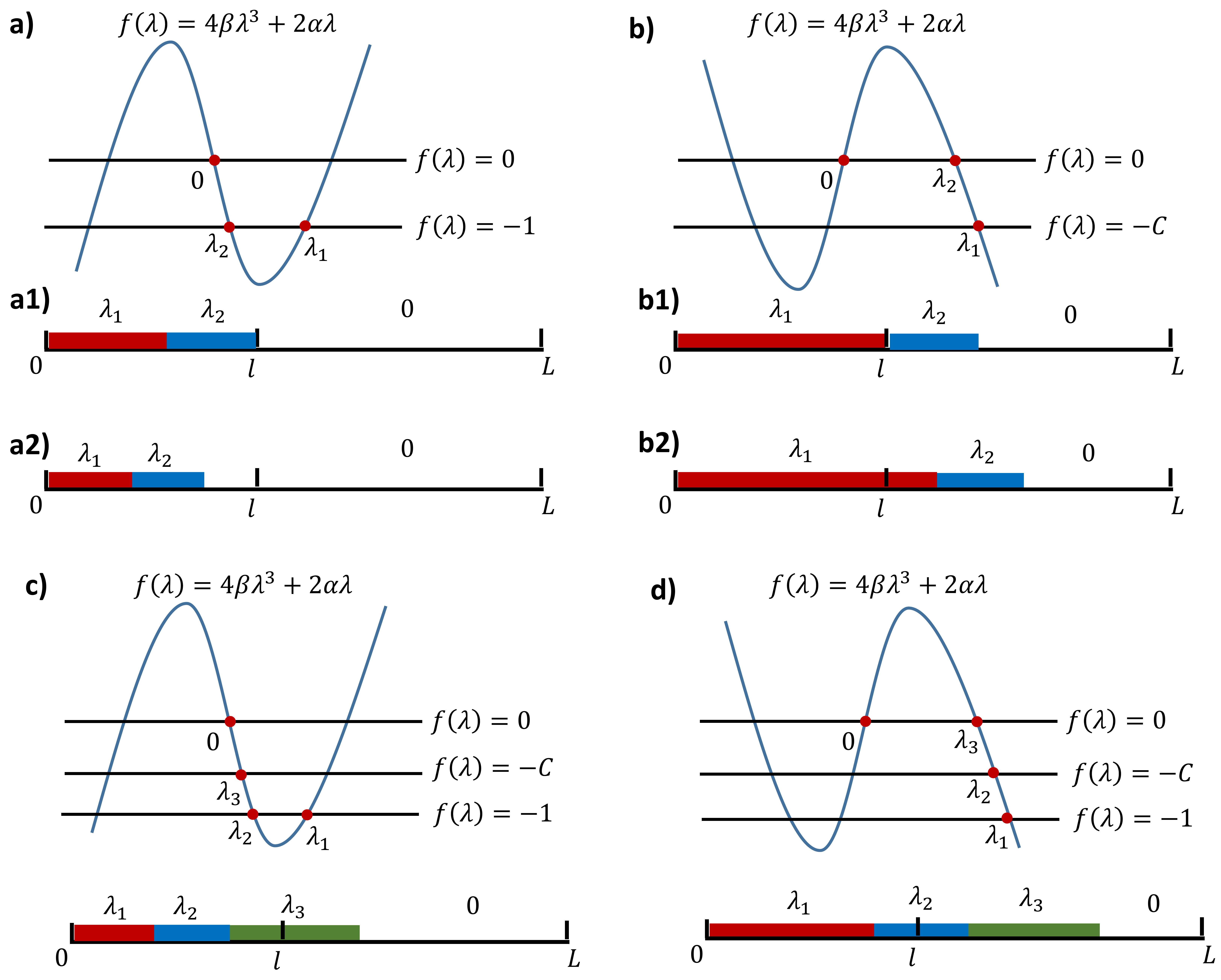}
    \caption{Different cases in entanglement structure optimization problem.}
    \label{fig:gmecase}
\end{figure}

1).  $\lambda_l\neq \lambda_{l+1}$.

For $i\leq l,$ $\lambda_i$ is the root of the equation $4\beta \lambda^3+2\alpha \lambda =-1$.  For $i>l$,  $\lambda_i$ is the root of the equation $4\beta \lambda^3+2\alpha \lambda=0$ and obviously, $0$ is one root. For the convenience of discussion, we draw the distribution diagrams of roots Fig.~\ref{fig:gmecase}a1) and Fig.~\ref{fig:gmecase}a2), and there are two possible cases, depending on the sign of $\alpha$ and $\beta$. For the first case, as shown in  Fig~.\ref{fig:gmecase}a1), $\lambda_i$ must equal to 0 for all $i>l$ because the possible nonzero solution is greater than $\lambda_1$ and $\lambda_2$. Hence, the original optimization problem becomes a much simpler minimization problem,
\begin{equation}
\begin{split}
\min \ \ &f=q_1\lambda_1+q_2\lambda_2 \\
\text{subject to} \ \ &q_1\lambda_1^2+q_2\lambda_2^2=M_2\\
&q_1\lambda_1^4+q_2\lambda_2^4 =M_4,\\
&\lambda_1\ge \lambda_2> 0,\quad q_1,q_2\in\mathbb{N}, \quad q_1+q_2= l.
\end{split}
\end{equation}
For the second case, as shown in Fig~.\ref{fig:gmecase}a2), $\lambda_i$ only has one possible value $\lambda_1$ for all $i\le l$, while has a possible positive value $\lambda_2$ for $i>l$. Hence, the original optimization problem becomes,
\begin{equation}
\begin{split}
\min \ \ &f=l\lambda_1+q_2\lambda_2 \\
\text{subject to} \ \ &l\lambda_1^2+q_2\lambda_2^2=M_2\\
&l\lambda_1^4+q_2\lambda_2^4 =M_4,\\
&\lambda_1\ge \lambda_2> 0,\quad q_2\in\mathbb{N},\\
&q_2\leq L-l.
\end{split}
\end{equation}
In fact, it is possible to find the analytical solutions for these two cases, by the ordinary derivation process. However, in the following cases, finding analytical solutions is not likely to be done.

2).  $\lambda_l= \lambda_{l+1}$ and there exists an positive integer $u$ such that $u<l$, $\lambda_u\neq \lambda_{u+1}$, $\lambda_{u+1}=\lambda_{u+2}=\cdots=\lambda_L=0$.
Then $\lambda_1,\cdots,\lambda_u$ are roots of the equation $4\beta \lambda^3+2\alpha \lambda+1 =0$, the distribution is shown in Fig.~\ref{fig:gmecase}a2) which is quite similar to the one shown in Fig.~\ref{fig:gmecase}a1), and can be unified with that case to get a more general minimization problem:
\begin{equation}\label{eq:gmecase1}
\begin{split}
\min \ \ &f=q_1\lambda_1+q_2\lambda_2 \\
\text{subject to} \ \ &q_1\lambda_1^2+q_2\lambda_2^2=M_2\\
&q_1\lambda_1^4+q_2\lambda_2^4 =M_4,\\
&\lambda_1\ge \lambda_2> 0,\quad q_1,q_2\in\mathbb{N},\\
&q_1+q_2\le l.
\end{split}
\end{equation}

3). If  $\lambda_1=\cdots=\lambda_{l+1}=\cdots=\lambda_q\neq\lambda_{q+1}$ and $q\ge l+1$, then $\lambda_1,\cdots,\lambda_{q}$ are roots of the equation $4\beta \lambda^3+2\alpha \lambda+\frac{l}{q}=0$ and $\lambda_{q+1},\cdots,\lambda_L$ are roots of the equation $4\beta \lambda^3+2\alpha \lambda=0$. As shown in Fig~.\ref{fig:gmecase}b2), there is only one possible case.  This case can be unified with the case shown in Fig.~\ref{fig:gmecase}b1), to give a more general optimization problem:
\begin{equation}\label{eq:gmecase2}
\begin{split}
\min \ \ &f=q_1\lambda_1+q_2\lambda_2 \\
\text{subject to} \ \ &q_1\lambda_1^2+q_2\lambda_2^2=M_2\\
&q_1\lambda_1^4+q_2\lambda_2^4 =M_4,\\
&\lambda_1\ge \lambda_2> 0,\quad,q_1,q_2\in\mathbb{N},\\
&l\le q_1,\quad q_1+q_2\le L.
\end{split}
\end{equation}

4).  $\lambda_l= \lambda_{l+1}$ and there exists integers $u$ and $v$ satisfying $0<u<l$ and $v>l+1$, such that $\lambda_{u}\neq \lambda_{u+1}=\cdots=\lambda_l=\lambda_{l+1}=\cdots=\lambda_{v}\neq\lambda_{v+1}=0$. Then $\lambda_1,\cdots,\lambda_{u}$ are roots of the equation $4\beta \lambda^3+2\alpha \lambda =-1$ and $\lambda_{u+1},\cdots,\lambda_{v}$ are roots of the equation $4\beta \lambda^3+2\alpha \lambda=-\frac{l-u}{v-u}$. The last terms $\lambda_{v+1},\cdots,\lambda_L$ are roots of $4\beta \lambda^3+2\alpha \lambda=0$. As shown in Fig~.\ref{fig:gmecase}c) and Fig.~\ref{fig:gmecase}d), there are also two cases dependent on the sign of $\alpha$ and $\beta$. Now we discuss them in details.

For the first case, as shown in Fig~.\ref{fig:gmecase}c), $\lambda_i$ has two possible values for $i\le u$ and one possible value for $u<i\le v$. In this scenario, the optimization problem becomes
\begin{equation}
\begin{split}
\min \ \ &f=q_1\lambda_1+q_2\lambda_2 +q_3\lambda_3\\
\text{subject to} \ \ &q_1\lambda_1^2+q_2\lambda_2^2+q_3\lambda_3^2=M_2\\
&q_1\lambda_1^4+q_2\lambda_2^4 +q_3\lambda_3^4=M_4,\\
&4\beta \lambda_1^3+2\alpha \lambda_1+1=0,\\
&4\beta \lambda_2^3+2\alpha \lambda_2+1=0,\\
&4\beta \lambda_3^3+2\alpha \lambda_3+\frac{l-q_1-q_2}{q_3}=0,\\
&\lambda_1\ge \lambda_2\geq\lambda_3 >0,\quad q_1,q_2,q_3\in\mathbb{N},\\
&\quad q_1+q_2<l,\quad l<q_1+q_2+q_3\le L.
\end{split}
\end{equation}
This optimization problem can be further simplified to 
\begin{equation}\label{eq:gmecase3}
\begin{split}
\min \ \ &f=q_1\lambda_1+q_2\lambda_2 +q_3\lambda_3\\
\text{subject to} \ \ &q_1\lambda_1^2+q_2\lambda_2^2+q_3\lambda_3^2=M_2\\
&q_1\lambda_1^4+q_2\lambda_2^4 +q_3\lambda_3^4=M_4,\\
&\frac{\lambda_3^3}{\lambda_1\lambda_2(\lambda_1+\lambda_2)}-\frac{(\lambda_1^3-\lambda_2^3)\lambda_3}{\lambda_1\lambda_2(\lambda_1^2-\lambda_2^2)}+\frac{l-q_1-q_2}{q_3}=0,\\
&\lambda_1\ge \lambda_2\geq\lambda_3 >0,\quad q_1,q_2,q_3\in\mathbb{N},\\
&\quad q_1+q_2<l,\quad l<q_1+q_2+q_3\le L.
\end{split}
\end{equation}
For the second case, as seen in Fig~.\ref{fig:gmecase}d), $\lambda_i$ has one possible value in each of the three intervals, $1\le i\le u$, $u+1\le i\le v$ and $v+1\le i\le L$. Then the optimization problem becomes
\begin{equation}\label{eq:gmecase4}
\begin{split}
\min \ \ &f=q_1\lambda_1+q_2\lambda_2 +q_3\lambda_3\\
\text{subject to} \ \ &q_1\lambda_1^2+q_2\lambda_2^2+q_3\lambda_3^2=M_2\\
&q_1\lambda_1^4+q_2\lambda_2^4 +q_3\lambda_3^4=M_4,\\
&-\frac{\lambda_2^3}{\lambda_1(\lambda_1^2-\lambda_3^2)}+\frac{\lambda_2\lambda_3^2}{\lambda_1(\lambda_1^2-\lambda_3^2)}+\frac{l-q_1}{q_2}=0,\\
&\lambda_1\ge \lambda_2\geq\lambda_3 >0,\quad q_1,q_2,q_3\in\mathbb{N},\\
& q_1<l , l< q_1+q_2<L, q_1+q_2+q_3\le L.
\end{split}
\end{equation}

Therefore, to solve the entanglement structure optimization problem, Eq.~\eqref{eq:gmeopt}, what we need to do is to find the minimum result of the four much easier optimization problems, Eq.~\eqref{eq:gmecase1}, Eq.~\eqref{eq:gmecase2}, Eq.~\eqref{eq:gmecase3} and Eq.~\eqref{eq:gmecase4}. Although it is also hard to find the analytical result, the complexity of this problem has been greatly reduced.

\section{Measurement Protocols of Permutation Moments and The Statistical Analysis}\label{sec:protocols}
\subsection{Shadow Estimation and Statistical Analysis}\label{subsec:shadow}
Shadow tomography \cite{huang2020predicting} is a systematic framework which can help us to benchmark the property of an unknown quantum system with only partial knowledge. Compared with former protocols, shadow tomography is much more resource-saving than full tomography and it does not need joint operations among multiple copies to estimate nonlinear functions.  Shadow tomography consists of two phases: quantum measurement and data postprocessing. In quantum measurement phase, we adopt $U$ to rotate $N$-qubit state $\rho$ to $U\rho U^\dagger$, and measure it in computational basis $\{\ket{\vec{s}} =\ket{s_1,\cdots,s_N}\}$. The measurement results and the rotation unitary together construct unbiased estimators of $\rho$:
\begin{equation}\label{eq:global shadow}
\begin{split}
\hat{\rho}=(2^N+1)U^\dagger\ketbra{\vec{s}}{\vec{s}}U-\mathbb{I}_{2^N}
\end{split}
\end{equation}
for $U$ randomly chosen from $N$-qubit Clifford group, and
\begin{equation}\label{eq:local shadow}
\begin{split}
\hat{\rho}=\bigotimes_{i=1}^N\left(3u_i^\dagger\ketbra{s_i}{s_i}u_i-\mathbb{I}_2\right)
\end{split}
\end{equation}
for $U=\bigotimes_{i=1}^Nu_i$ with each $u_i$ randomly chosen from single qubit Clifford group.

Recalling that $\mathcal{R}_\pi$ is just the index rearrangement of $\rho$, so shadow tomography can also construct an unbiased estimator of it
\begin{equation}
\begin{split}
\hat{\mathcal{R}}_\pi=\mathcal{R}_\pi(\hat{\rho}), \mathbb{E}[\hat{\mathcal{R}}_\pi]=\mathcal{R}_\pi(\mathbb{E}\hat{\rho})=\mathcal{R}_\pi.
\end{split}
\end{equation}
Hence, the unbiased estimator for $M_{2n}^\pi$ can be constructed using these $\hat{\mathcal{R}}_\pi$
\begin{equation}
\begin{split}
\hat{M}_{2n}^\pi=\frac{(M-2n)!}{M!}\sum_{\{i_1,\cdots, i_{2n}\}\subset[M]}\sum_{\sigma\in\mathcal{S}_{2n}}\tr[\hat{\mathcal{R}}_{\pi\sigma(i_1)}\cdots\hat{\mathcal{R}}_{\pi\sigma(i_{2n})}^\dagger],
\end{split}
\end{equation}
where $M$ is the number of the $\hat{\mathcal{R}}_\pi$ we prepared using shadow tomography, $\mathcal{S}_{2n}$ is the 2n-order permutation group, and $\hat{\mathcal{R}}_{\pi\sigma(i)}$ is the $\sigma(i)$-th copy of $\hat{\mathcal{R}}_\pi$. The unbiaseness of $\hat{M}_{2n}$ is easy to be proved:
\begin{equation}
\begin{split}
\mbb{E}\left(\hat{M}_{2n}^\pi\right)&=\frac{(M-2n)!}{M!}\sum_{\{i_1,\cdots, i_{2n}\}\subset[M]}\sum_{\sigma\in\mathcal{S}_{2n}}\tr[\mbb{E}\left(\hat{\mathcal{R}}_{\pi\sigma(i_1)}\right)\cdots\mbb{E}\left(\hat{\mathcal{R}}_{\pi\sigma(i_{2n})}^\dagger\right)]\\
&=\frac{(M-2n)!}{M!}\sum_{\{i_1,\cdots, i_{2n}\}\subset[M]}\sum_{\sigma\in\mathcal{S}_{2n}}\tr\left[\mathcal{R}_\pi\cdots\mathcal{R}_\pi^\dagger\right]\\
&=\frac{(M-2n)!}{M!}\times\frac{M!}{(M-2n)!}\tr\left[\left(\mathcal{R}_\pi\mathcal{R}_\pi^\dagger\right)^n\right]=M_{2n}^\pi.
\end{split}
\end{equation}

In fact, there exists many other protocols to measure nonlinear functions of state, like randomized measurements\cite{brydges2019probing,elben2019toolbox}, which has been proved to be more efficient than shadow estimation in some cases \cite{elben2020mixed,zhou2020Single,liu2021characterizing}. However, unlike shadow estimation, randomized measurements cannot always be easily conducted by qubit level operation, especially for observables like higher order permutation operators. Therefore, except for some special cases we will discuss in Sec.~\ref{subsec:random}, shadow estimation is a practical choice to measure the permutation moments.

In the following context, we take $M_4^{(2,3)}(\rho_{AB})$ as an example to illustrate the variance of different protocols, which is closely related to the bipartite entanglement criterion proposed in this work, and the conclusion can be easily generalized to other permutation moments. To analyze the error scaling of estimators constructed by shadow estimation, here we adopt the conclusions in \cite{huang2020predicting}:
\begin{fact}\label{fact:shadow variance}
The variance of predicting $\hat{O}=\tr(O\hat{\rho})$ on locally constructed $\hat{\rho}$, Eq.~\eqref{eq:local shadow}, scales like 
\begin{equation}
\begin{split}
\mathrm{Var}(\hat{O})\leq 2^{\mathrm{locality}(\hat{O})}\tr(\hat{O}^2).
\end{split}
\end{equation}
For globally constructed $\hat{\rho}$, Eq.~\eqref{eq:global shadow}, the variance scales like
\begin{equation}
\begin{split}
\mathrm{Var}(\hat{O})\leq 3\tr(\hat{O}^2).
\end{split}
\end{equation}
\end{fact}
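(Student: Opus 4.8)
\emph{Proof proposal (sketch).} The plan is to reduce both inequalities to a bound on the second moment of a single snapshot and then feed in the Schur--Weyl structure of a $3$-design. Let $\mathcal{M}$ be the measurement channel $\mathcal{M}(X)=\mathbb{E}_{U}\sum_{\vec s}\langle\vec s|UXU^\dagger|\vec s\rangle\,U^\dagger\ketbra{\vec s}{\vec s}U$; for the global Clifford ensemble $\mathcal{M}(X)=(X+\tr(X)\mathbb{I})/(2^N+1)$, and for the local ensemble $\mathcal{M}=\bigotimes_{i=1}^N\mathcal{M}_1$ with $\mathcal{M}_1(\sigma)=(\sigma+\tr(\sigma)\mathbb{I}_2)/3$, which is exactly why the snapshots of Eqs.~\eqref{eq:global shadow} and \eqref{eq:local shadow} have the form $\hat\rho=\mathcal{M}^{-1}(U^\dagger\ketbra{\vec s}{\vec s}U)$. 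Since $\mathrm{Var}(\hat O)\le\mathbb{E}[\hat o^2]$ with $\hat o=\tr(O\hat\rho)$, and by linearity $\hat o=\langle\vec s|U\,\mathcal{M}^{-1}(O)\,U^\dagger|\vec s\rangle$, I would first write
\begin{equation}
\mathbb{E}[\hat o^{2}]=\mathbb{E}_U\sum_{\vec s}\langle\vec s|U\rho U^\dagger|\vec s\rangle\,\langle\vec s|U\,\mathcal{M}^{-1}(O)\,U^\dagger|\vec s\rangle^{2},
\end{equation}
and maximize over states $\rho$; call the result $\norm{O}_{\mathrm{shadow}}^{2}$. It then suffices to bound this quantity, because $\mathrm{Var}(\hat O)\le\norm{O}_{\mathrm{shadow}}^{2}$.

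For the global case, I would use that the $N$-qubit Clifford group is a unitary $3$-design (Sec.~\ref{subsec:random}), so the $U$-average equals the Haar one and, by Weingarten/Schur--Weyl calculus, $\mathbb{E}_U\sum_{\vec s}(U^\dagger\ketbra{\vec s}{\vec s}U)^{\otimes 3}=\sum_{\pi\in\mathcal{S}_3}c_\pi\hat W_\pi$ is a linear combination of the six permutation operators on $(\mathbb{C}^{2^N})^{\otimes 3}$ with coefficients of order $2^{-2N}$. Contracting against $\rho\otimes\mathcal{M}^{-1}(O)\otimes\mathcal{M}^{-1}(O)$ yields the usual six scalars; using $\tr\rho=1$, $\tr\mathcal{M}^{-1}(O)=\tr O$, dropping the single manifestly nonpositive term (it carries $-\tr(\rho X)$ with $\rho\ge 0$ after completing a square), and checking that the $O(2^{2N})$-size pieces from $\tr(\mathcal{M}^{-1}(O)^2)$ cancel against the $O(2^{-2N})$ weights, one arrives at $\norm{O}_{\mathrm{shadow}}^{2}\le 3\,\tr(O_0^{2})$ with $O_0=O-2^{-N}(\tr O)\mathbb{I}$, and finally $\tr(O_0^{2})\le\tr(O^{2})$.

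For the local case, $\mathcal{M}^{-1}=\bigotimes_i\mathcal{M}_1^{-1}$ factorizes and single-qubit Clifford is again a $3$-design, so the snapshot expectation factorizes qubit-by-qubit. Expanding $O$ in the Pauli basis and using $\mathcal{M}_1^{-1}(P)=3P$ for $P\neq\mathbb{I}_2$ and $\mathcal{M}_1^{-1}(\mathbb{I}_2)=\mathbb{I}_2$, every qubit outside $\mathrm{supp}(O)$ contributes a trivial factor and each qubit inside contributes at most a bounded factor, giving $\norm{O}_{\mathrm{shadow}}^{2}\le 2^{\mathrm{locality}(O)}\tr(O^{2})$; combining with $\mathrm{Var}(\hat O)\le\norm{O}_{\mathrm{shadow}}^{2}$ closes both claims. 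The main obstacle is the middle step: keeping the $3$-design Weingarten bookkeeping straight and verifying that the individually $\sim 2^{2N}$ traces involving $\mathcal{M}^{-1}(O)^{2}$ collapse to the Hilbert--Schmidt norm $\tr(O^{2})$ with the clean constants $3$ and $2^{\mathrm{locality}}$; the reduction to the shadow norm and the local factorization are routine. This is exactly the content of the variance lemma of Ref.~\cite{huang2020predicting}, which we invoke directly.
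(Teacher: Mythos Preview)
The paper does not prove this statement at all: it is labeled a \emph{Fact} and introduced with ``here we adopt the conclusions in \cite{huang2020predicting}'', so the paper's ``proof'' is simply a citation. Your proposal ends in the same place---you explicitly invoke the variance lemma of Ref.~\cite{huang2020predicting}---so at the level of what is actually claimed, you and the paper agree.

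The difference is that you additionally sketch \emph{how} that lemma is proved (bounding $\mathrm{Var}(\hat O)$ by the shadow norm $\norm{O}_{\mathrm{shadow}}^2$, using the $3$-design property to reduce the third-moment Haar integral to permutation operators, and factorizing qubit-by-qubit in the local case). That sketch is faithful to the argument in Ref.~\cite{huang2020predicting}; the only caveat is that the local bound there is usually stated as $4^{\mathrm{locality}(O)}\norm{O}_\infty^2$ rather than $2^{\mathrm{locality}(O)}\tr(O^2)$, and the paper's form is a (looser, since $\tr(O^2)\ge\norm{O}_\infty^2$ for nontrivial $O$) restatement convenient for their later $\tr(\hat O_k^2)$ calculations. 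Your derivation is compatible with this, so there is no gap---you have simply supplied more detail than the paper does.
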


Given $M$ measurement results of shadow estimation, the unbiased estimator we constructed to predict $M_4^{(2,3)}$ is
\begin{equation}\label{eq:M4def}
\begin{split}
\hat{M}_4^{(2,3)}=\frac{4(M-4)!}{M!}\sum_{\{i,j,k,l\}\subset[M]}\sum_{\sigma\in\mathcal{S}_3}\tr\left[\hat{R}_{i}\hat{R}_{\sigma(j)}^\dagger\hat{R}_{\sigma(k)}\hat{R}_{\sigma(l)}^\dagger\right],
\end{split}
\end{equation}
where for simplicity, we use $\hat{R}$ to represent $\hat{\mathcal{R}}_{(2,3)}(\rho_{AB})$. Here, because of the properties of trace function that it is invariant under permutation and taking Hermite conjugate, $\tr[\hat{R}_j\hat{R}_i^\dagger\hat{R}_k\hat{R}_l^\dagger]=\tr[\hat{R}_k\hat{R}_l^\dagger\hat{R}_j\hat{R}_i^\dagger]=\tr[\hat{R}_i\hat{R}_j^\dagger\hat{R}_l\hat{R}_k^\dagger]$, we can always set $\hat{R}_i$ in the first place. Combining Fact.~\ref{fact:shadow variance} and the definition Eq.~\eqref{eq:M4def}, one can prove that:

\begin{proposition}\label{proposition:M4shadowvariance}
According to Chebyshev's equation, to make sure our estimation of $M_4$ is accurately enough, which is to say, the estimator defined in Eq.\eqref{eq:M4def} satisfies $|\hat{M}_4^{(2,3)}-M_4^{(2,3)}|\le\epsilon$ with probability at least $1-\delta$, the number of snapshots needed is
\begin{equation}
\begin{split}
M=O\left(\frac{D}{\epsilon^{1/2}\delta^{1/4}}\right)
\end{split}
\end{equation}
for global shadow protocol, and 
\begin{equation}
\begin{split}
M=O\left(\frac{D^2}{\epsilon^{1/2}\delta^{1/4}}\right)
\end{split}
\end{equation}
for local shadow protocol, where $D$ stands for the system dimension. 
\end{proposition}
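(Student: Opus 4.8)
The plan is to express $M_4^{(2,3)}$ as the expectation of a permutation observable, recognize $\hat M_4^{(2,3)}$ as an unbiased U-statistic built from independent single-shot shadows, bound its variance using Fact~\ref{fact:shadow variance}, and close with Chebyshev's inequality. Concretely, I would first apply Theorem~\ref{theorem:main res} with $\pi=(2,3)$ and $n=2$: party $A$ is R1-type and party $B$ is R2-type, so
\[
M_4^{(2,3)}(\rho_{AB})=\tr\!\left[\Big(\mathbb{S}_A^{(4,1)}\otimes\mathbb{S}_A^{(2,3)}\otimes\mathbb{S}_B^{(1,2)}\otimes\mathbb{S}_B^{(3,4)}\Big)\rho_{AB}^{\otimes 4}\right]=:\tr\!\left(O\,\rho_{AB}^{\otimes 4}\right),
\]
where $O$ is a product of commuting SWAP operators, hence Hermitian with $O^2=\mathbb{I}$, $\tr(O^2)=D^4$, and support equal to all $4N$ qubits. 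Since $\hat R=\mathcal{R}_{(2,3)}(\hat\rho)$ obeys $\mathbb{E}[\hat R]=\mathcal{R}_{(2,3)}(\rho)$ and the $M$ snapshots are independent, the estimator $\hat M_4^{(2,3)}$ of Eq.~\eqref{eq:M4def} is unbiased; this is the $n=2$ instance of the general $\hat M_{2n}^\pi$ computation carried out above.

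The core step is bounding $\mathrm{Var}(\hat M_4^{(2,3)})$. I would use the Hoeffding decomposition of a degree-$4$ U-statistic,
\[
\mathrm{Var}\big(\hat M_4^{(2,3)}\big)=\binom{M}{4}^{-1}\sum_{c=1}^{4}\binom{4}{c}\binom{M-4}{4-c}\,\zeta_c,
\]
with $\zeta_c$ the covariance of two kernel evaluations sharing $c$ snapshots. The term that controls the $D$-scaling is $c=4$, which carries the prefactor $\binom{M}{4}^{-1}=O(M^{-4})$ and $\zeta_4\le\mathbb{E}\big[|\tr(\hat R_1\hat R_2^\dagger\hat R_3\hat R_4^\dagger)|^2\big]$. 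In the local case, the tensor product $\hat\rho_1\otimes\cdots\otimes\hat\rho_4$ of four independent single-system local shadows is itself a valid local shadow of $\rho_{AB}^{\otimes4}$, so Fact~\ref{fact:shadow variance} gives $\mathbb{E}[|\,\cdot\,|^2]\le 2^{\mathrm{locality}(O)}\tr(O^2)+O(1)=2^{4N}D^4+O(1)=O(D^8)$; in the global case the product is not a global shadow of $\rho_{AB}^{\otimes4}$, so instead I would insert the single-system second-moment identity for $\mathbb{E}[\hat\rho\otimes\hat\rho]$ (valid since the Clifford group is a $3$-design) into the four-fold trace, each factor having spectral norm $O(D)$, which yields $\zeta_4=O(D^4)$. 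After checking that the $c\le 3$ terms carry strictly smaller powers of $D$ (each contracted non-shared snapshot collapses $\hat R$ to $\mathcal{R}_{(2,3)}(\rho)$, leaving a bounded-norm permutation observable on fewer copies), this gives $\mathrm{Var}(\hat M_4^{(2,3)})=O(D^4/M^4)$ globally and $O(D^8/M^4)$ locally.

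Finally, Chebyshev's inequality gives $\Pr\big[|\hat M_4^{(2,3)}-M_4^{(2,3)}|\ge\epsilon\big]\le\mathrm{Var}(\hat M_4^{(2,3)})/\epsilon^2$; imposing that this be at most $\delta$ forces $D^4/M^4=O(\epsilon^2\delta)$ in the global case, hence $M=O\!\big(D/(\epsilon^{1/2}\delta^{1/4})\big)$, and $D^8/M^4=O(\epsilon^2\delta)$ in the local case, hence $M=O\!\big(D^2/(\epsilon^{1/2}\delta^{1/4})\big)$.

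The main obstacle is the variance bound: determining the explicit permutation observables that appear after contracting the non-shared snapshots and controlling their Hilbert--Schmidt norms, which is delicate because each single-shot estimator $\hat R$ has operator norm of order $D$, so one must exploit the cancellations built into $\mathbb{E}[\hat\rho^{\otimes c}]$ rather than worst-case norm bounds; a secondary point is verifying that the $c=4$ contribution genuinely dominates the $D$-dependence, so that the claimed exponents are the right ones.
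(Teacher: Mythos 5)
Your proposal follows essentially the same route as the paper's proof: the same decomposition of the U-statistic variance by the number of shared snapshots, identification of the fully-shared term as the one controlling the dimension scaling ($O(D^4/M^4)$ globally, $O(D^8/M^4)$ locally, the latter via exactly the paper's use of Fact~\ref{fact:shadow variance} with locality $4N$ and $\tr(O^2)=D^4$), and then Chebyshev, which is what the paper carries out with its explicit operators $\hat{O}_1,\dots,\hat{O}_4$ and the bounds $\tr(\hat{O}_c^2)$. The only quibble is your heuristic for the global $c=4$ term: $\mathbb{E}[\hat{\rho}\otimes\hat{\rho}]$ in fact has $O(1)$ spectral norm and a pure norm bound does not yield $O(D^4)$—the correct count comes from contracting the permutation operators (the paper's computation $\tr(\hat{O}_4^2)\le 36D^4$)—but you correctly flag this cancellation as the delicate step and the claimed scalings, and hence the sample complexities, are right.
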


\begin{proof}
Following the definition of variance, we have
\begin{equation}
\begin{split}
\mathrm{Var}(\hat{M}_4^{(2,3)})=\mathbb{E}\left[\left(\hat{M}_4^{(2,3)}\right)^2\right]-\mathbb{E}\left[\hat{M}_4^{(2,3)}\right]^2=\mathbb{E}\left[\left(\hat{M}_4^{(2,3)}\right)^2\right]-\left(M_4^{(2,3)}\right)^2.
\end{split}
\end{equation}
substituting Eq.~\eqref{eq:M4def}, we get
\begin{equation}
\begin{split}
\left(\hat{M}_4^{(2,3)}\right)^2=\left(\frac{4(M-4)!}{M!}\right)^2\sum_{\{i,j,k,l\}\subset[M]}\sum_{\{i',j',k',l'\}\subset[M]}\left\{\sum_{\sigma\in\mathcal{S}_3}\tr\left[\hat{R}_{i}\hat{R}_{\sigma(j)}^\dagger\hat{R}_{\sigma(k)}\hat{R}_{\sigma(l)}^\dagger\right]\right\}\left\{\sum_{\pi\in\mathcal{S}_3}\tr\left[\hat{R}_{i'}\hat{R}_{\pi(j')}^\dagger\hat{R}_{\pi(k')}\hat{R}_{\pi(l')}^\dagger\right]\right\}
\end{split}
\end{equation}
To benefit our calculation, we can divide the summation terms in the R.H.S of the last equation into five groups, according to the number of equal indices between $\{i,j,k,l\}$ and $\{i',j',k',l'\}$. If there are $0\leq c\leq 4$ pairs of same indices, the total number of such terms are $\binom{M}{4}\binom{4}{c}\binom{M-4}{4-c}$, hence
\begin{equation}
\begin{split}
\mathbb{E}\left[\left(\hat{M}_4^{(2,3)}\right)^2\right]&=\left(\frac{4(M-4)!}{M!}\right)^2\binom{M}{4}\binom{4}{0}\binom{M-4}{4}36\tr\left(RR^\dagger RR^\dagger\right)^2\\
&+\left(\frac{4(M-4)!}{M!}\right)^2\binom{M}{4}\binom{4}{1}\binom{M-4}{3}36\mathbb{E}\left[\tr\left(RR^\dagger R\hat{R}^\dagger\right)^2\right]\\
&+\left(\frac{4(M-4)!}{M!}\right)^2\binom{M}{4}\binom{4}{2}\binom{M-4}{2}4\mathbb{E}\left\{\left[\tr(RR^\dagger\hat{R}_1\hat{R}^\dagger_2)+\tr(RR^\dagger\hat{R}_2\hat{R}^\dagger_1)+\tr(R\hat{R}^\dagger_1R\hat{R}^\dagger_2)\right]^2\right\}\\
&+\left(\frac{4(M-4)!}{M!}\right)^2\binom{M}{4}\binom{4}{3}\binom{M-4}{1}\mathbb{E}\left\{\left[\tr(R\hat{R}^\dagger_1\hat{R}_2\hat{R}^\dagger_3)+\tr(R\hat{R}^\dagger_1\hat{R}_3\hat{R}^\dagger_2)+\tr(R\hat{R}^\dagger_2\hat{R}_1\hat{R}^\dagger_3)+\cdots\right]\right\}^2\\
&\left(\frac{4(M-4)!}{M!}\right)^2\binom{M}{4}\binom{4}{4}\binom{M-4}{0}\mathbb{E}\left\{\left[\tr(\hat{R}_1\hat{R}^\dagger_2\hat{R}_3\hat{R}^\dagger_4)+\tr(\hat{R}_1\hat{R}^\dagger_2\hat{R}_4\hat{R}^\dagger_3)+\tr(\hat{R}_1\hat{R}^\dagger_3\hat{R}_2\hat{R}^\dagger_4)+\cdots\right]^2\right\},
\end{split}
\end{equation}
where $R=\mathcal{R}_{(2,3)}(\rho_{AB})$, and $\hat{R}_1$, $\hat{R}_2$, $\hat{R}_3$, and $\hat{R}_4$ are independent unbiased snapshots of $R$. Hence,
\begin{equation}\label{eq:M_4variance}
\begin{split}
\mathrm{Var}\left[\left(\hat{M}_4^{(2,3)}\right)\right]&=36\left(\frac{4(M-4)!}{M!}\right)^2\binom{M}{4}\binom{4}{1}\binom{M-4}{3}\mathrm{Var}\left[\tr(RR^\dagger R\hat{R}^\dagger)\right]\\
&+4\left(\frac{4(M-4)!}{M!}\right)^2\binom{M}{4}\binom{4}{2}\binom{M-4}{2}\mathrm{Var}\left[\tr(RR^\dagger\hat{R}_1\hat{R}^\dagger_2)+\tr(RR^\dagger\hat{R}_2\hat{R}^\dagger_1)+\tr(R\hat{R}^\dagger_1R\hat{R}^\dagger_2)\right]\\
&+\left(\frac{4(M-4)!}{M!}\right)^2\binom{M}{4}\binom{4}{3}\binom{M-4}{1}\mathrm{Var}\left[\tr(R\hat{R}^\dagger_1\hat{R}_2\hat{R}^\dagger_3)+\tr(R\hat{R}^\dagger_1\hat{R}_3\hat{R}^\dagger_2)+\tr(R\hat{R}^\dagger_2\hat{R}_1\hat{R}^\dagger_3)+\cdots\right]\\
&+\left(\frac{4(M-4)!}{M!}\right)^2\binom{M}{4}\binom{4}{4}\binom{M-4}{0}\mathrm{Var}\left[\tr(\hat{R}_1\hat{R}^\dagger_2\hat{R}_3\hat{R}^\dagger_4)+\tr(\hat{R}_1\hat{R}^\dagger_2\hat{R}_4\hat{R}^\dagger_3)+\tr(\hat{R}_1\hat{R}^\dagger_3\hat{R}_2\hat{R}^\dagger_4)+\cdots\right]\\
&\leq \frac{C_1}{M}\mathrm{Var}\left[\tr(RR^\dagger R\hat{R}^\dagger)\right]+\frac{C_2}{M^2}\mathrm{Var}\left[\tr(RR^\dagger\hat{R}_1\hat{R}^\dagger_2)+\tr(RR^\dagger\hat{R}_2\hat{R}^\dagger_1)+\tr(R\hat{R}^\dagger_1R\hat{R}^\dagger_2)\right]\\
&+\frac{C_3}{M^3}\mathrm{Var}\left[\tr(R\hat{R}^\dagger_1\hat{R}_2\hat{R}^\dagger_3)+\tr(R\hat{R}^\dagger_1\hat{R}_3\hat{R}^\dagger_2)+\tr(R\hat{R}^\dagger_2\hat{R}_1\hat{R}^\dagger_3)+\cdots\right]\\
&+\frac{C_4}{M^4}\mathrm{Var}\left[\tr(\hat{R}_1\hat{R}^\dagger_2\hat{R}_3\hat{R}^\dagger_4)+\tr(\hat{R}_1\hat{R}^\dagger_2\hat{R}_4\hat{R}^\dagger_3)+\tr(\hat{R}_1\hat{R}^\dagger_3\hat{R}_2\hat{R}^\dagger_4)+\cdots\right]\\
&=\frac{C_1}{M}\mathrm{Var}\left\{\tr[\hat{O}_1\hat{\rho}]\right\}+\frac{C_2}{M^2}\mathrm{Var}\left\{\tr[\hat{O}_2(\hat{\rho}_1\otimes\hat{\rho}_2)]\right\}+\frac{C_3}{M^3}\mathrm{Var}\left\{\tr[\hat{O}_3(\hat{\rho}_1\otimes\hat{\rho}_2\otimes\hat{\rho}_3)]\right\}\\
&+\frac{C_4}{M^4}\mathrm{Var}\left\{\tr[\hat{O}_4(\hat{\rho}_1\otimes\hat{\rho}_2\otimes\hat{\rho}_3\otimes\hat{\rho}_4)]\right\}.
\end{split}
\end{equation}
Here, $C_1$, $C_2$, $C_3$ and $C_4$ are constants independent of $M$. To apply Fact \ref{fact:shadow variance}, we need to figure out the exact form of $\hat{O}_1$, $\hat{O}_2$, $\hat{O}_3$ and $\hat{O}_4$. Denote $\hat{O}$ to be the target observable $\hat{O}=\mathbb{S}_A^{(1,2)}\otimes\mathbb{S}_A^{(3,4)}\otimes\mathbb{S}_B^{(2,3)}\otimes\mathbb{S}_B^{(4,1)}$, then the fourth term can be written as
\begin{equation}
\begin{split}
&\tr(\hat{R}_1\hat{R}_2^\dagger\hat{R}_3\hat{R}_4^\dagger)+\tr(\hat{R}_1\hat{R}_2^\dagger\hat{R}_4\hat{R}_3^\dagger)+\tr(\hat{R}_1\hat{R}_3^\dagger\hat{R}_2\hat{R}_4^\dagger)+\cdots\\
=&\tr[\hat{O}(\hat{\rho}_1\otimes\hat{\rho}_2\otimes\hat{\rho}_3\otimes\hat{\rho}_4)]+\tr[\hat{O}(\hat{\rho}_1\otimes\hat{\rho}_2\otimes\hat{\rho}_4\otimes\hat{\rho}_3)]+\tr[\hat{O}(\hat{\rho}_1\otimes\hat{\rho}_3\otimes\hat{\rho}_2\otimes\hat{\rho}_4)]+\cdots.
\end{split}
\end{equation}
To write these six terms into one, we need to define some permutation operators, $\hat{W}_1=\mathbb{I}^{\otimes 4}$, $\hat{W}_2=\mathbb{I}^{\otimes 2}\otimes\mathbb{S}^{(3,4)}$, $\hat{W}_3=\mathbb{I}^{\otimes 2}\otimes\mathbb{S}^{(2,3)}$, $\hat{W}_4=\mathbb{I}^{\otimes 2}\otimes\mathbb{S}^{(2,4)}$, $\hat{W}_5=\mathbb{I}\otimes\overrightarrow{\Pi}^{(2,3,4)}$ and $\hat{W}_6=\mathbb{I}\otimes\overleftarrow{\Pi}^{(2,3,4)}$. Hence
\begin{equation}
\begin{split}
&\tr[\hat{O}(\hat{\rho}_1\otimes\hat{\rho}_2\otimes\hat{\rho}_3\otimes\hat{\rho}_4)]+\tr[\hat{O}(\hat{\rho}_1\otimes\hat{\rho}_2\otimes\hat{\rho}_4\otimes\hat{\rho}_3)]+\tr[\hat{O}(\hat{\rho}_1\otimes\hat{\rho}_3\otimes\hat{\rho}_2\otimes\hat{\rho}_4)]+\cdots\\
=&\tr[\hat{O}\hat{W}_1(\hat{\rho}_1\otimes\hat{\rho}_2\otimes\hat{\rho}_3\otimes\hat{\rho}_4)\hat{W}_1^\dagger]+\tr[\hat{O}\hat{W}_2(\hat{\rho}_1\otimes\hat{\rho}_2\otimes\hat{\rho}_3\otimes\hat{\rho}_4)\hat{W}_2^\dagger]+\tr[\hat{O}\hat{W}_3(\hat{\rho}_1\otimes\hat{\rho}_2\otimes\hat{\rho}_3\otimes\hat{\rho}_4)\hat{W}_3^\dagger]+\cdots\\
=&\tr[(\hat{W}_1^\dagger\hat{O}\hat{W}_1+\hat{W}_2^\dagger\hat{O}\hat{W}_2+\hat{W}_3^\dagger\hat{O}\hat{W}_3+\cdots)(\hat{\rho}_1\otimes\hat{\rho}_2\otimes\hat{\rho}_3\otimes\hat{\rho}_4)]\\
=&\tr[\hat{O}_4(\hat{\rho}_1\otimes\hat{\rho}_2\otimes\hat{\rho}_3\otimes\hat{\rho}_4)],
\end{split}
\end{equation}
where $\hat{O}_4$ can be graphically represented as
\begin{eqnarray}
\hat{O}_4=
\begin{tabular}{c}
     \includegraphics[scale=0.08]{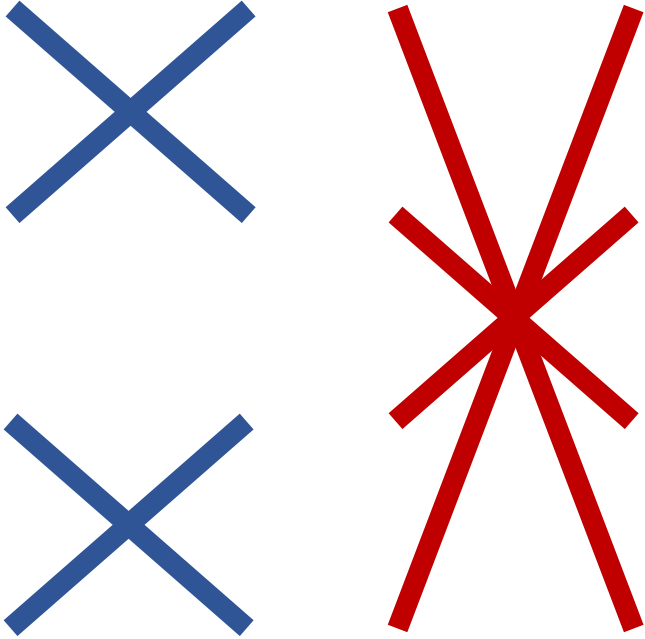} 
\end{tabular}
+
\begin{tabular}{c}
     \includegraphics[scale=0.08]{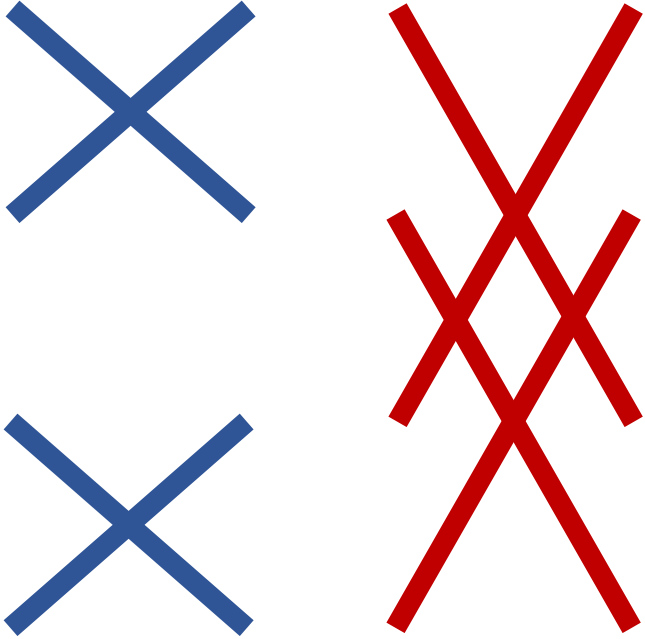} 
\end{tabular}
+
\begin{tabular}{c}
     \includegraphics[scale=0.08]{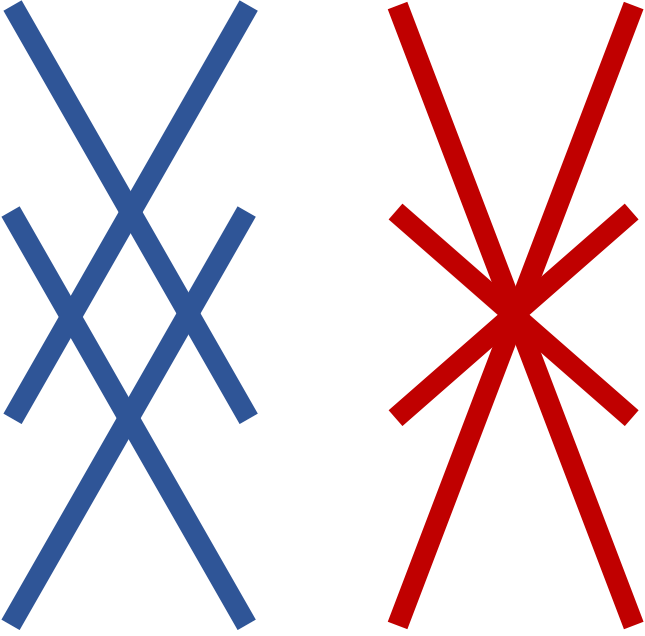} 
\end{tabular}
+
\begin{tabular}{c}
     \includegraphics[scale=0.08]{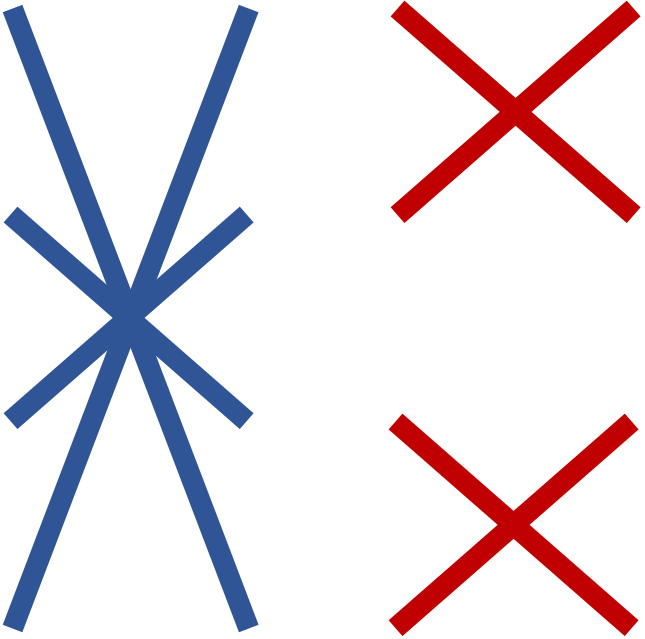} 
\end{tabular}
+
\begin{tabular}{c}
     \includegraphics[scale=0.08]{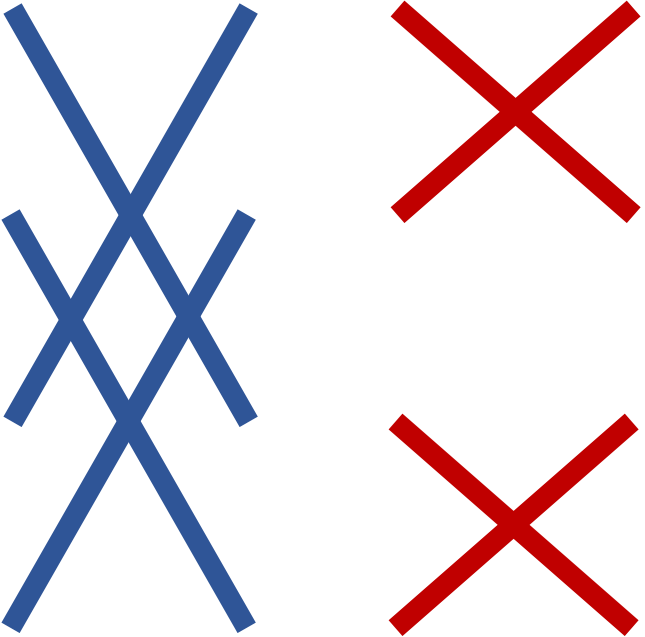} 
\end{tabular}
+
\begin{tabular}{c}
     \includegraphics[scale=0.08]{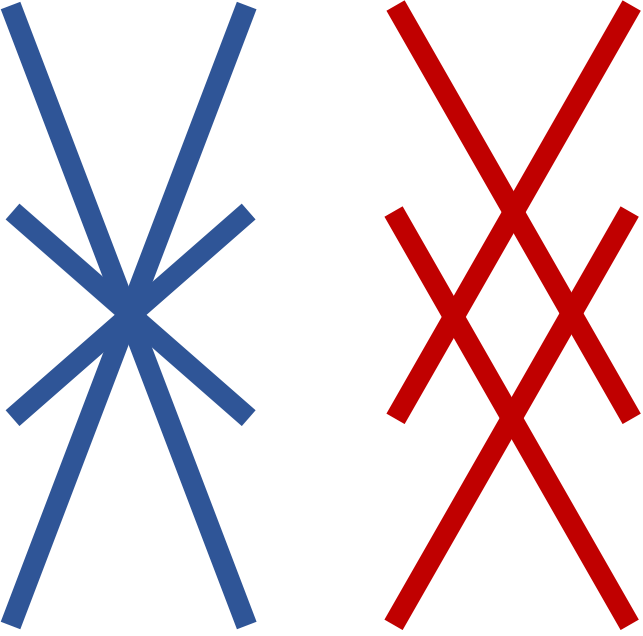} 
\end{tabular},
\end{eqnarray}
where the cross sign is the SWAP operator, and different colors stands for different parties, then
\begin{equation}
\begin{split}
\tr(\hat{O}_4^2)=\sum_{i,j=1}^6\tr[\hat{W}_i^\dagger\hat{O}\hat{W}_i\hat{W}_j^\dagger\hat{O}\hat{W}_j].
\end{split}
\end{equation}
We summarize the calculation in Table \ref{table:O_4calculation}. According to the mathematical property of operator SWAP and identity, $\tr(\mathbb{S})=d$ and $\tr(\mathbb{I}^{\otimes 2})=d^2$, it is easy to find that 
\begin{eqnarray}\label{eq:O_4square}
\tr[\hat{O}_4^2]\leq 36\tr[\hat{O}^2]=
36\tr\left[\left(
\begin{tabular}{c}
     \includegraphics[scale=0.08]{O_4_1.png}
\end{tabular}
\right)\left(
\begin{tabular}{c}
     \includegraphics[scale=0.08]{O_4_1.png}
\end{tabular}
\right)\right]
=
36\tr\left(
\begin{tabular}{c}
     \includegraphics[scale=0.08]{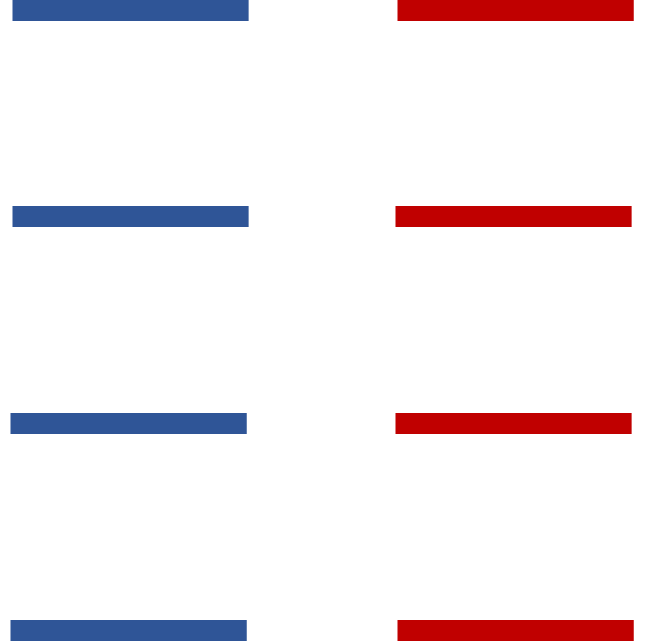}
\end{tabular}
\right)
=36d_A^4d_B^4=36D^4
\end{eqnarray}

\begin{table}
    \centering
    \setlength{\tabcolsep}{4mm}{
    \begin{tabular}{c||c|c|c|c|c|c}
    \hline\hline
         \rule{0pt}{28pt}
          & \includegraphics[scale=0.08]{O_4_1} & \includegraphics[scale=0.08]{O_4_2} &\includegraphics[scale=0.08]{O_4_3}&\includegraphics[scale=0.08]{O_4_4}&\includegraphics[scale=0.08]{O_4_5}&\includegraphics[scale=0.08]{O_4_6} \\
         \hline\hline
         \rule{0pt}{28pt}
         \includegraphics[scale=0.08]{O_4_1}&\includegraphics[scale=0.08]{O_4_01} &\includegraphics[scale=0.08]{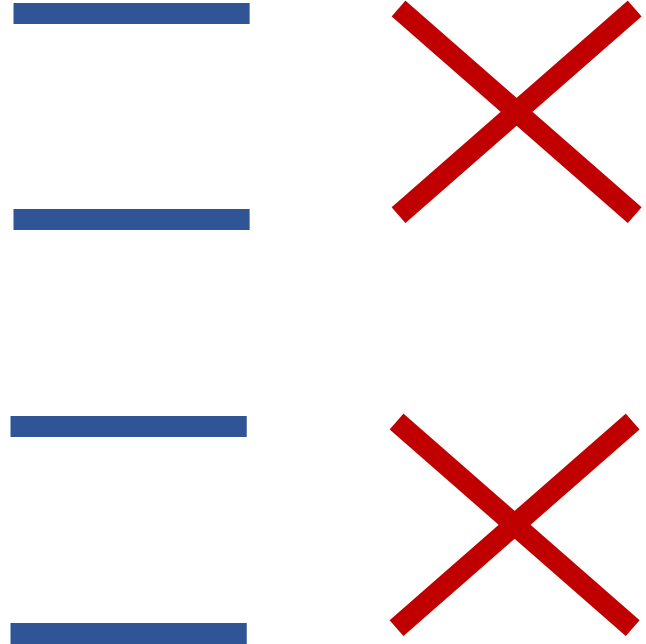} &\includegraphics[scale=0.08]{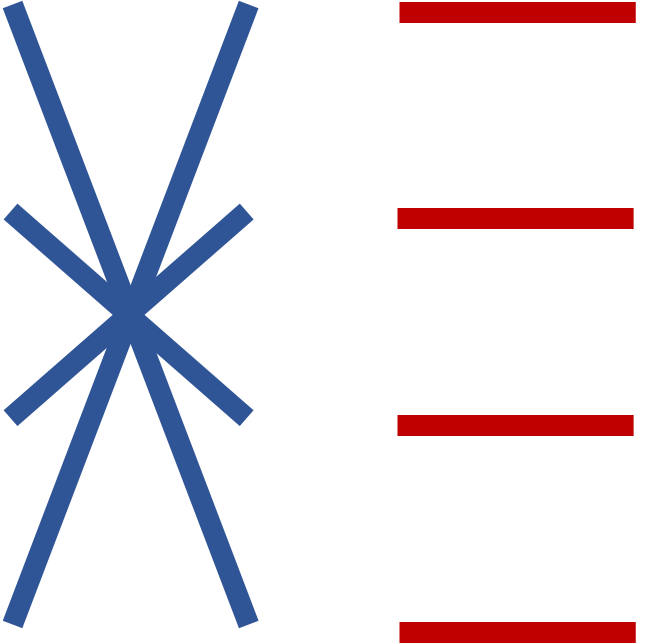} &\includegraphics[scale=0.08]{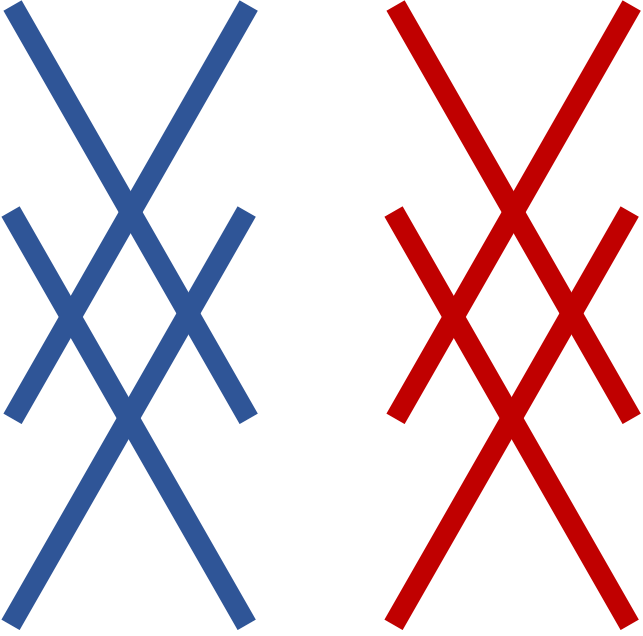} &\includegraphics[scale=0.08]{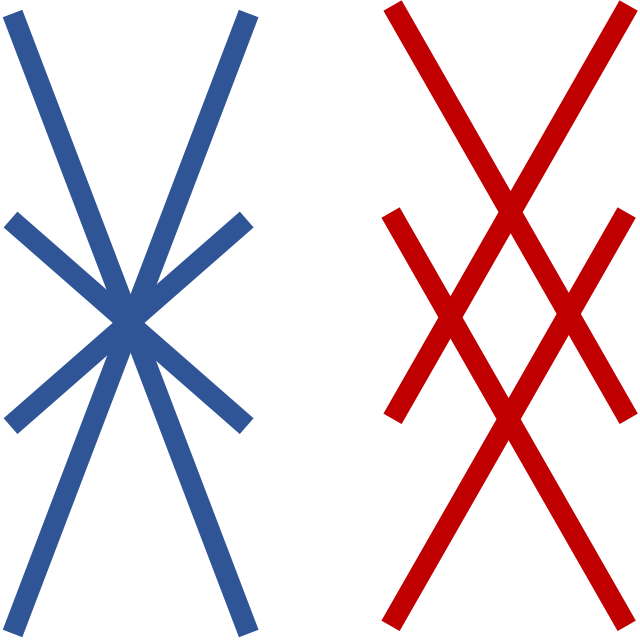} &\includegraphics[scale=0.08]{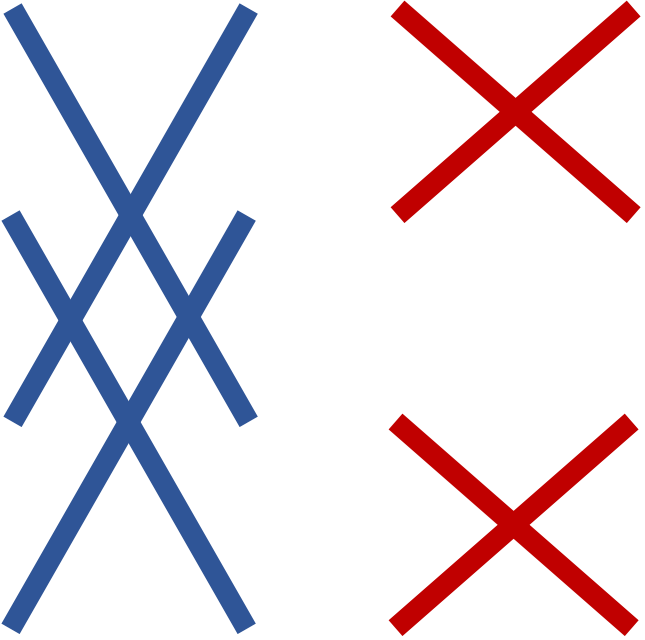} \\
         \hline
         \rule{0pt}{28pt}
         \includegraphics[scale=0.08]{O_4_2}&\includegraphics[scale=0.08]{O_4_02} &\includegraphics[scale=0.08]{O_4_01} &\includegraphics[scale=0.08]{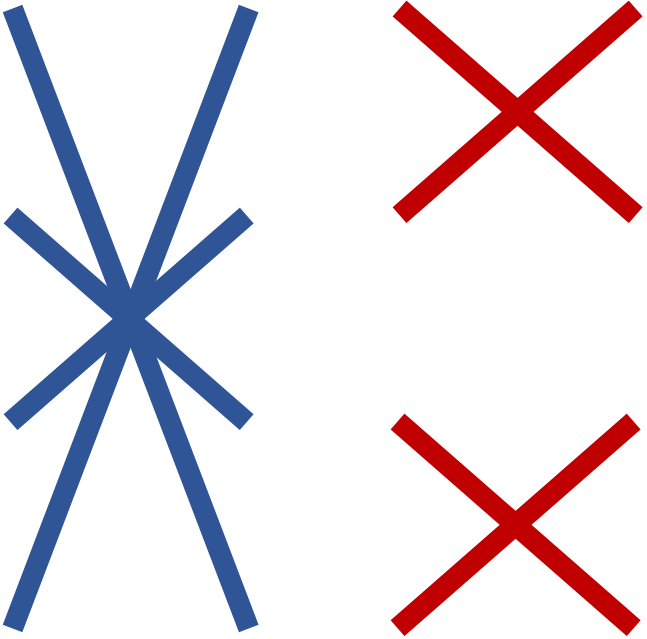} &\includegraphics[scale=0.08]{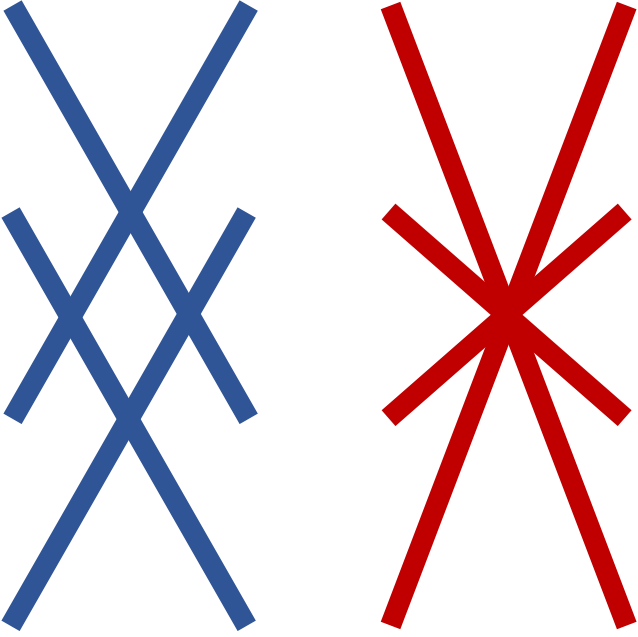} &\includegraphics[scale=0.08]{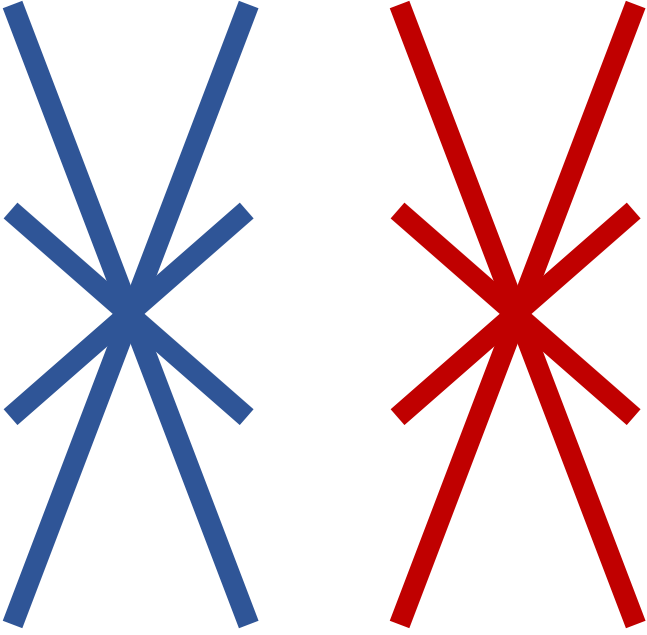} &\includegraphics[scale=0.08]{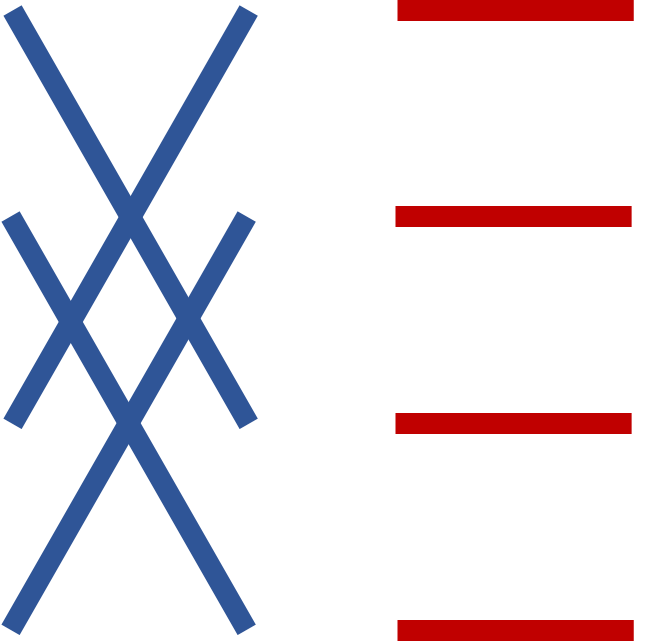} \\
         \hline
         \rule{0pt}{28pt}
         \includegraphics[scale=0.08]{O_4_3}&\includegraphics[scale=0.08]{O_4_03} &\includegraphics[scale=0.08]{O_4_11} &\includegraphics[scale=0.08]{O_4_01} &\includegraphics[scale=0.08]{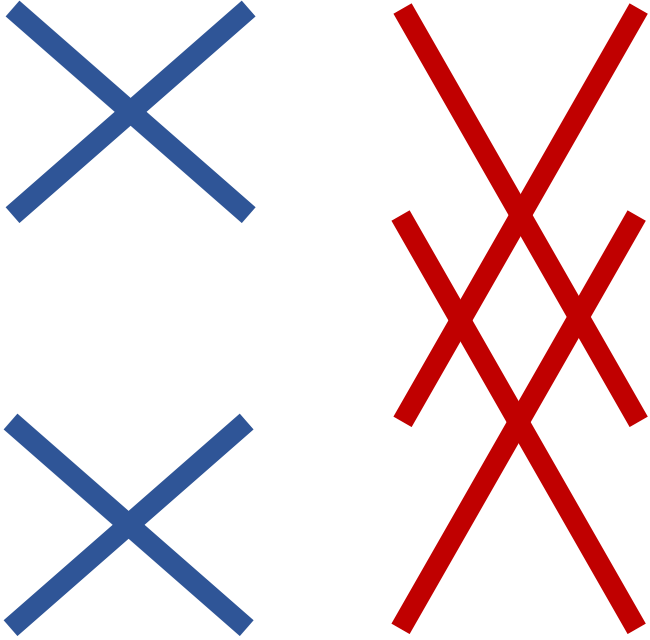} &\includegraphics[scale=0.08]{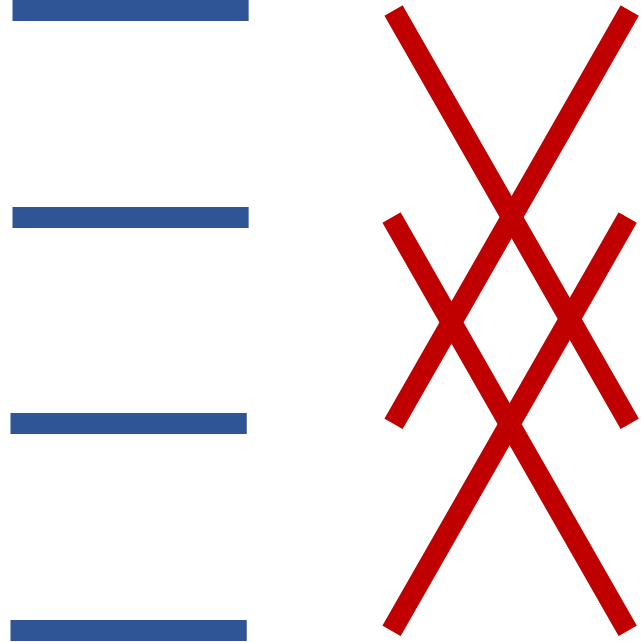} &\includegraphics[scale=0.08]{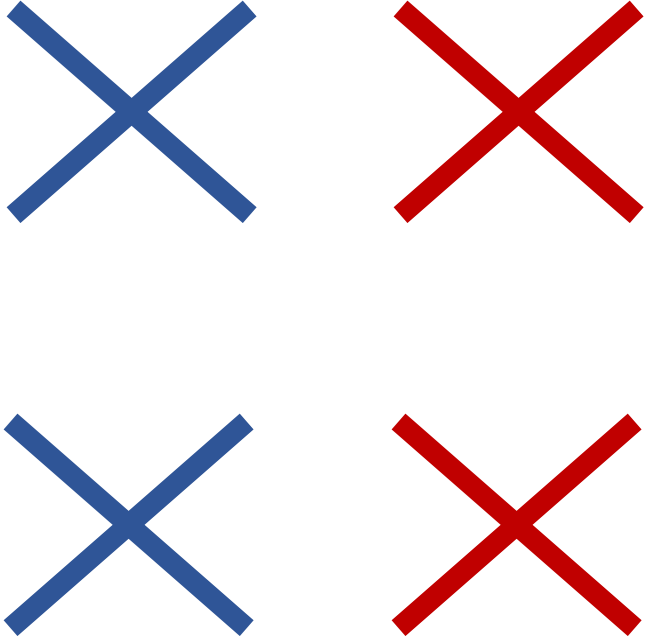} \\
         \hline
         \rule{0pt}{28pt}
         \includegraphics[scale=0.08]{O_4_4}&\includegraphics[scale=0.08]{O_4_04} &\includegraphics[scale=0.08]{O_4_12} &\includegraphics[scale=0.08]{O_4_21} &\includegraphics[scale=0.08]{O_4_01} &\includegraphics[scale=0.08]{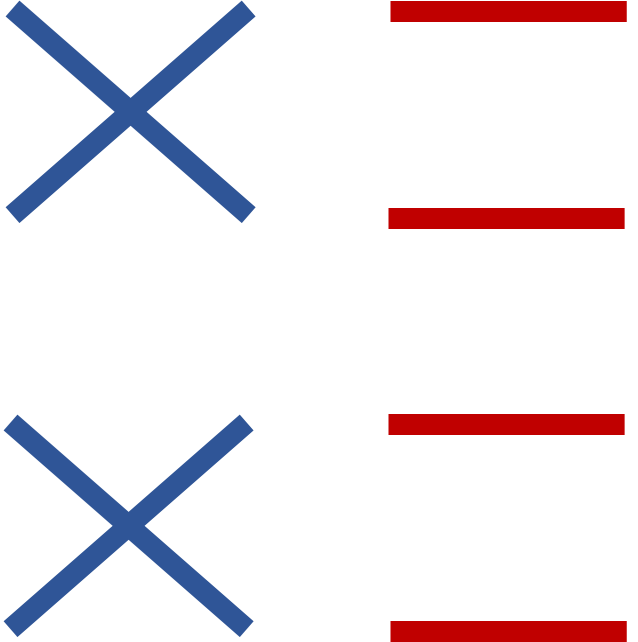} &\includegraphics[scale=0.08]{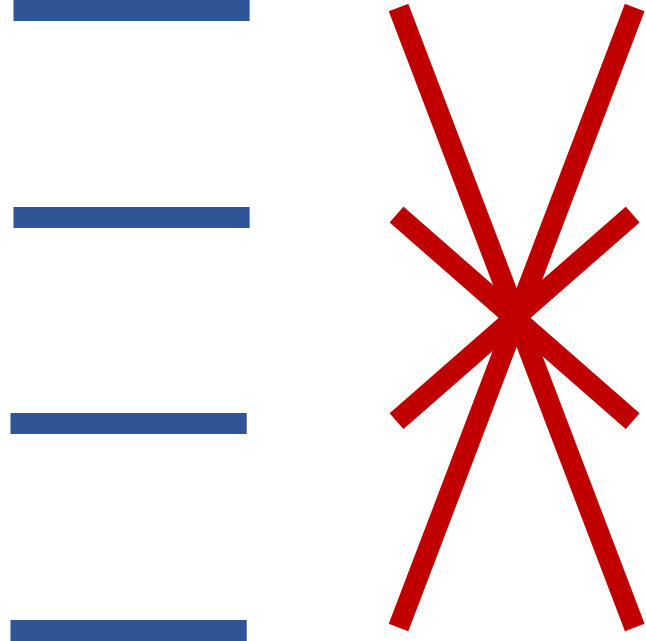} \\
         \hline
         \rule{0pt}{28pt}
         \includegraphics[scale=0.08]{O_4_5}&\includegraphics[scale=0.08]{O_4_05} &\includegraphics[scale=0.08]{O_4_13} &\includegraphics[scale=0.08]{O_4_22} &\includegraphics[scale=0.08]{O_4_31} &\includegraphics[scale=0.08]{O_4_01} &\includegraphics[scale=0.08]{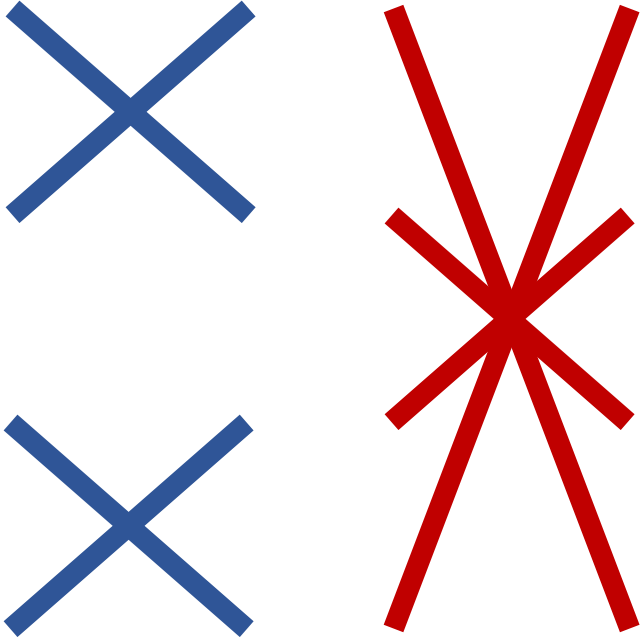} \\
         \hline
         \rule{0pt}{28pt}
         \includegraphics[scale=0.08]{O_4_6}&\includegraphics[scale=0.08]{O_4_06} &\includegraphics[scale=0.08]{O_4_14} &\includegraphics[scale=0.08]{O_4_23} &\includegraphics[scale=0.08]{O_4_32} &\includegraphics[scale=0.08]{O_4_41} &\includegraphics[scale=0.08]{O_4_01} \\
         \hline\hline
    \end{tabular}}
    \caption{ Multiplication of the terms in $\hat{O}_4^2$. In this table, we use blue and red colors to represent the two parties. ``x'' form signs stand for SWAP operators, and the horizontal lines stand for identity operators.  }
    \label{table:O_4calculation}
\end{table}

The other three operators $\hat{O}_1$, $\hat{O}_2$ and $\hat{O}_3$ can be constructed using $\hat{O}_4$. Start from the third term,
\begin{equation}
\begin{split}
&\tr(R\hat{R}^\dagger_1\hat{R}_2\hat{R}^\dagger_3)+\tr(R\hat{R}^\dagger_1\hat{R}_3\hat{R}^\dagger_2)+\tr(R\hat{R}^\dagger_2\hat{R}_1\hat{R}^\dagger_3)+\cdots\\
=&\tr[\hat{O}_4(\rho\otimes\hat{\rho}_1\otimes\hat{\rho}_2\otimes\hat{\rho}_3)]\\
=&\tr[\hat{O}_4(\rho\otimes\mathbb{I}^{\otimes 3})(\mathbb{I}\otimes\hat{\rho}_1\otimes\hat{\rho}_2\otimes\hat{\rho}_3)]\\
=&\tr\left\{\tr_1[\hat{O}_4(\rho\otimes\mathbb{I}^{\otimes 3})](\hat{\rho}_1\otimes\hat{\rho}_2\otimes\hat{\rho}_3)\right\}\\
=&\tr[\hat{O}_3(\hat{\rho}_1\otimes\hat{\rho}_2\otimes\hat{\rho}_3)],
\end{split}
\end{equation}
so that
\begin{equation}
\begin{split}
\hat{O}_3=\tr_1[\hat{O}_4(\rho\otimes\mathbb{I}^{\otimes 3})].
\end{split}
\end{equation}
In order to make the graphic representation more convenient, when analyzing $\hat{O}_3$ $\hat{O}_2$ and $\hat{O}_1$, we will treat $\rho$ \textbf{as if} it is the tensor product of two reduced density matrices, $\rho=\rho_A\otimes\rho_B$, so that we can use two separated boxes to represent it. But one needs to remember that it is actually not the case. Therefore,
\begin{eqnarray}
\hat{O}_3&=\begin{tabular}{c}
     \includegraphics[scale=0.1]{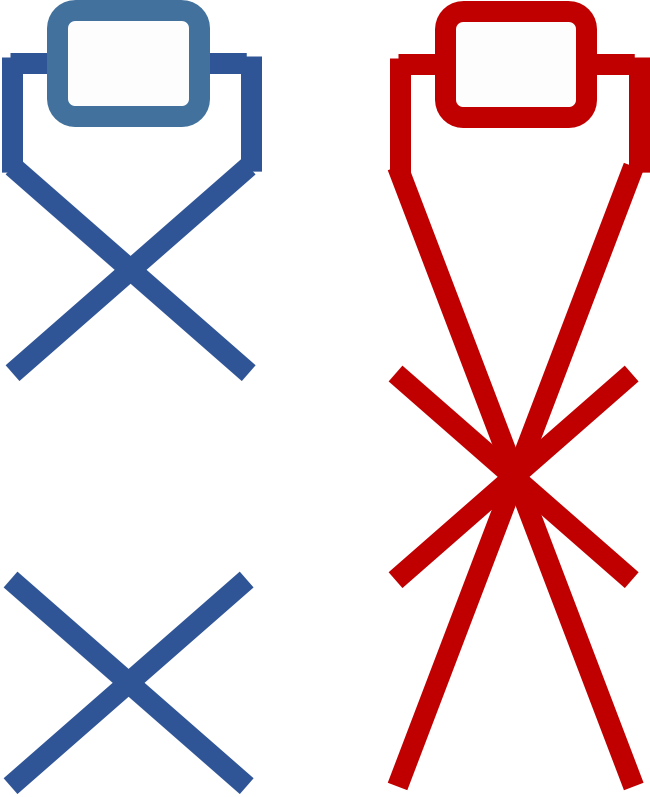} 
\end{tabular} 
+
\begin{tabular}{c}
     \includegraphics[scale=0.1]{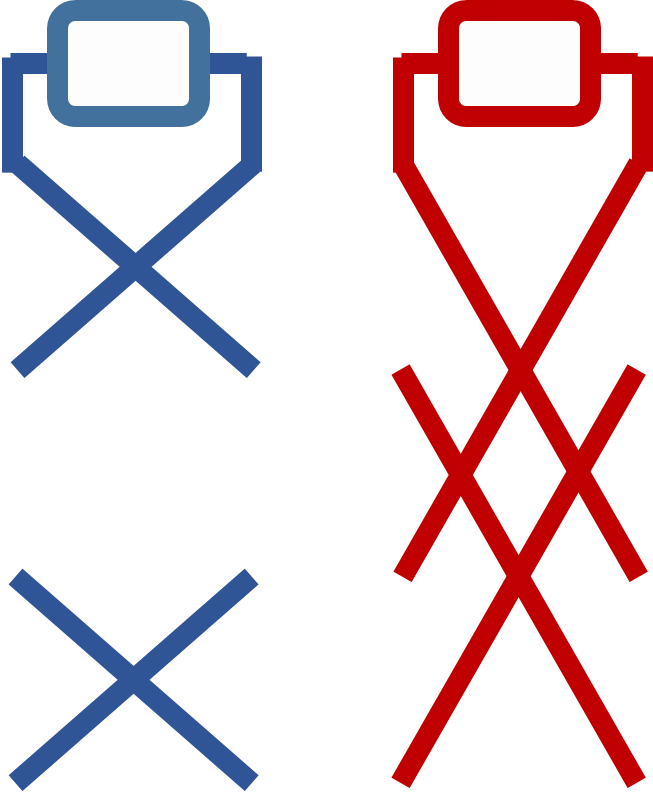} 
\end{tabular}
+
\begin{tabular}{c}
     \includegraphics[scale=0.1]{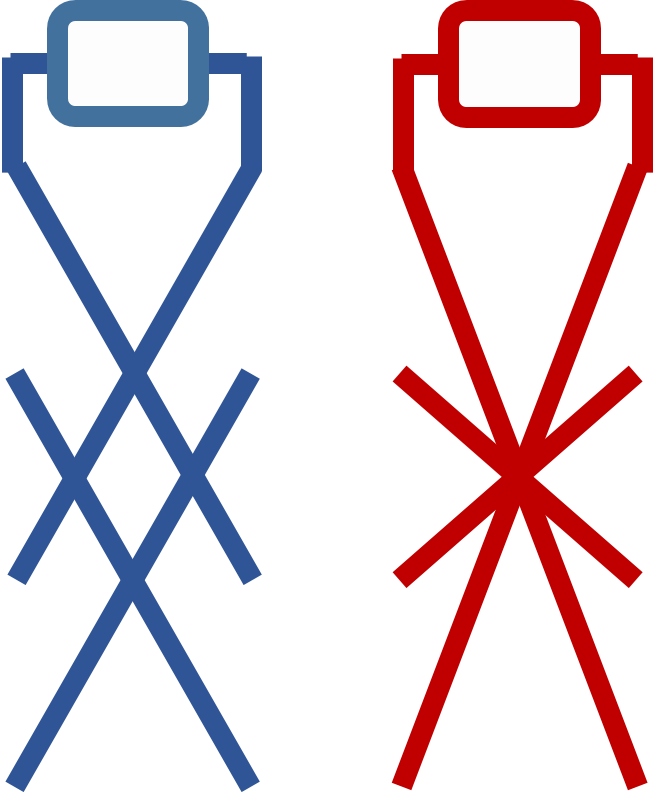} 
\end{tabular}
+
\begin{tabular}{c}
     \includegraphics[scale=0.1]{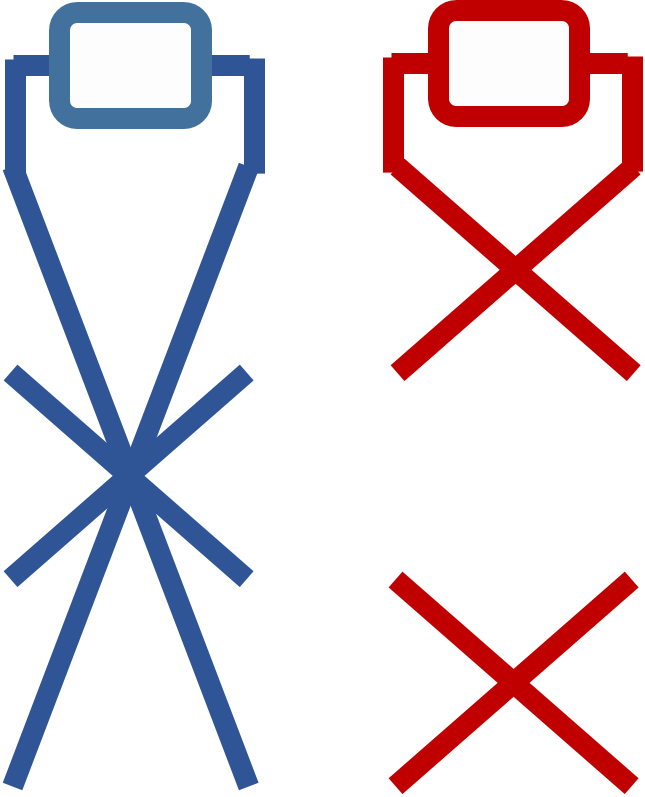} 
\end{tabular}
+
\begin{tabular}{c}
     \includegraphics[scale=0.1]{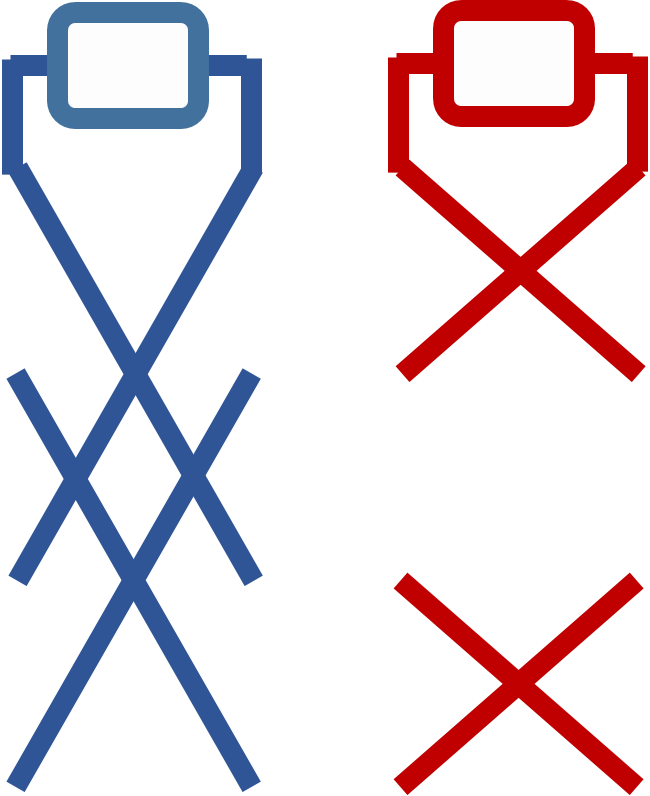} 
\end{tabular}
+
\begin{tabular}{c}
     \includegraphics[scale=0.1]{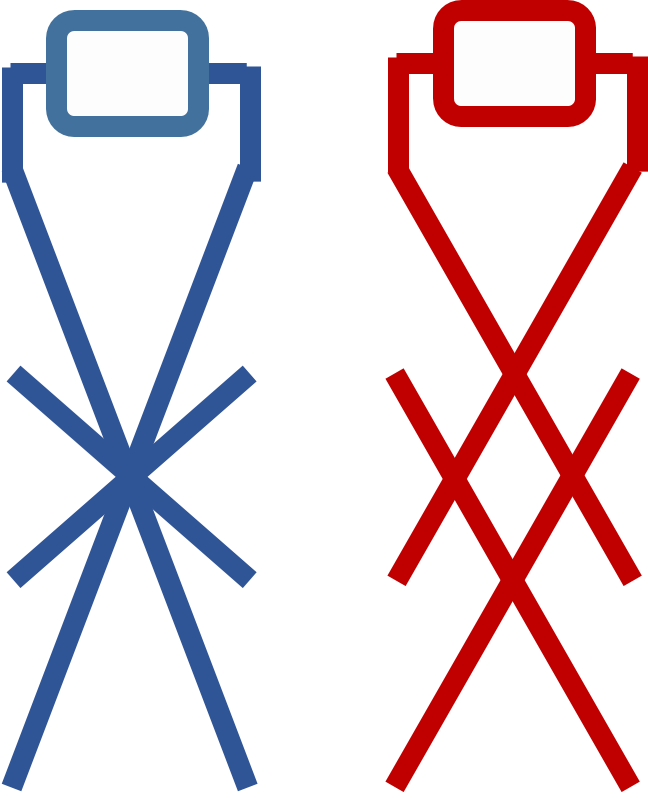} 
\end{tabular}
\\
&=
\begin{tabular}{c}
     \includegraphics[scale=0.1]{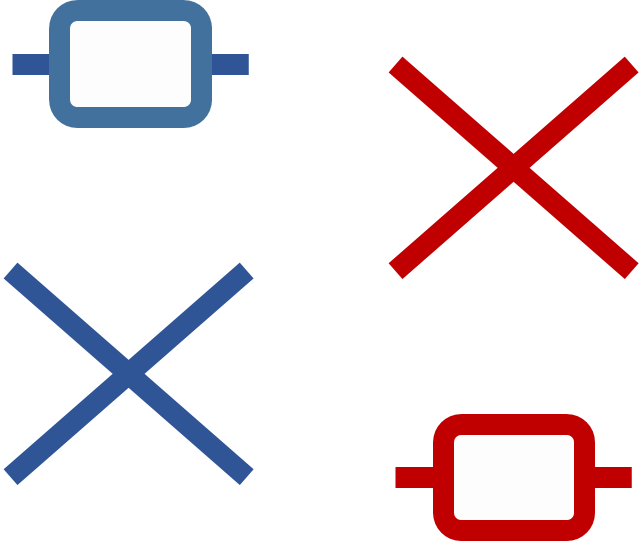} 
\end{tabular}
+
\begin{tabular}{c}
     \includegraphics[scale=0.1]{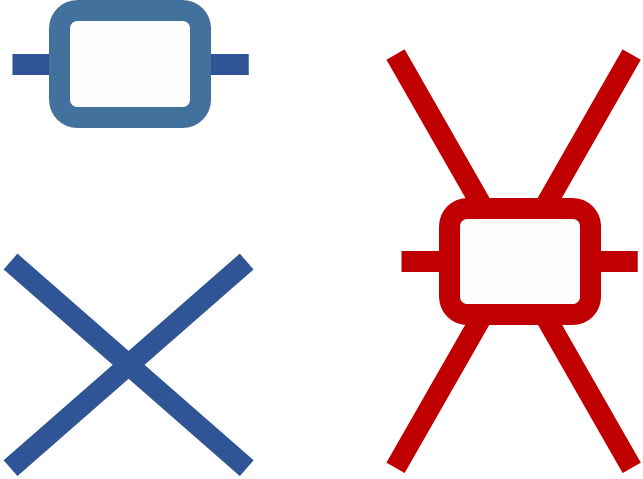} 
\end{tabular}
+
\begin{tabular}{c}
     \includegraphics[scale=0.1]{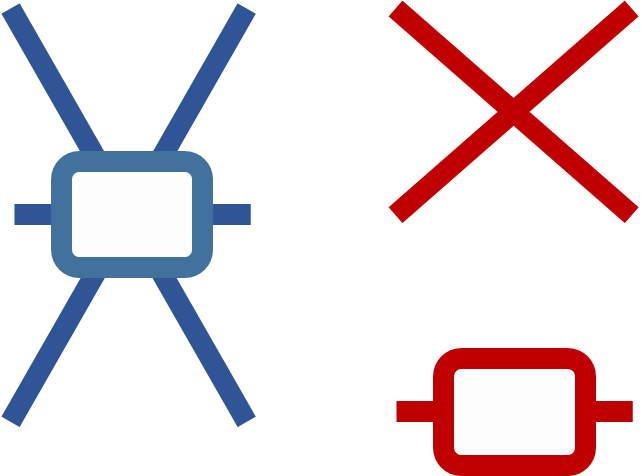} 
\end{tabular}
+
\begin{tabular}{c}
     \includegraphics[scale=0.1]{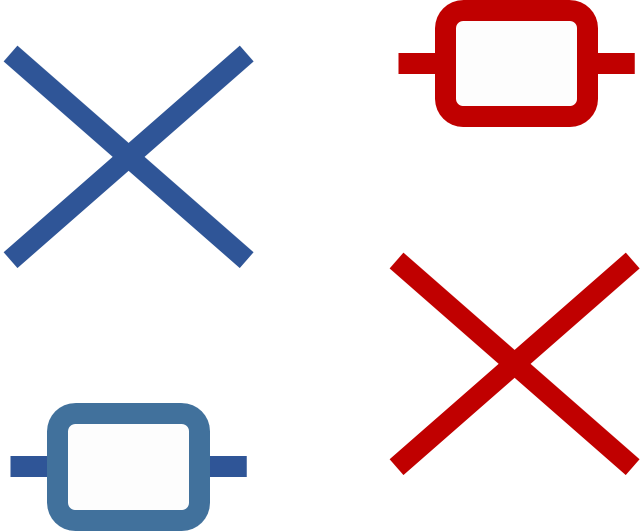} 
\end{tabular}
+
\begin{tabular}{c}
     \includegraphics[scale=0.1]{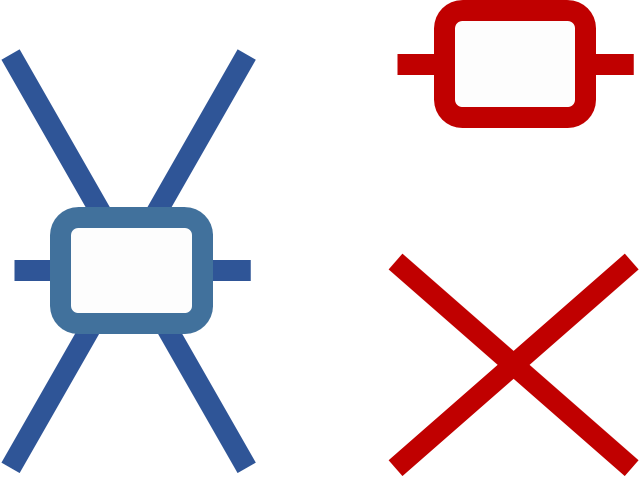} 
\end{tabular}
+
\begin{tabular}{c}
     \includegraphics[scale=0.1]{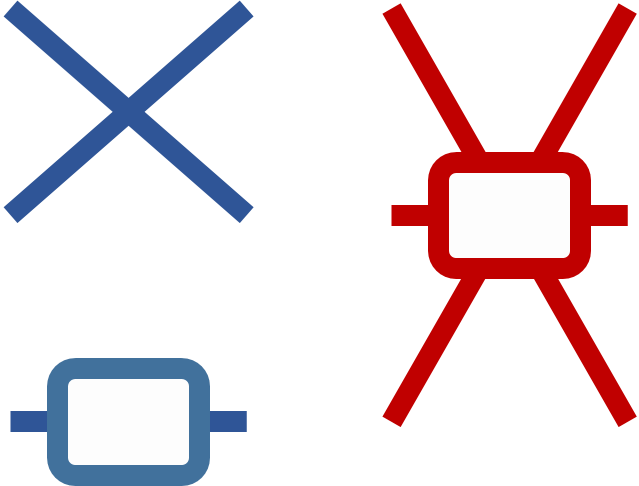} 
\end{tabular},
\end{eqnarray}
where we use the colored boxes to represent $\rho_A$ and $\rho_B$. To find out $\tr(\hat{O}_3^2)$, we make a similar table, Table.~\ref{table:O_3calculation}. By definition, taking trace of the terms in this table gives
\begin{eqnarray}
\tr\left(
\begin{tabular}{c}
     \includegraphics[scale=0.1]{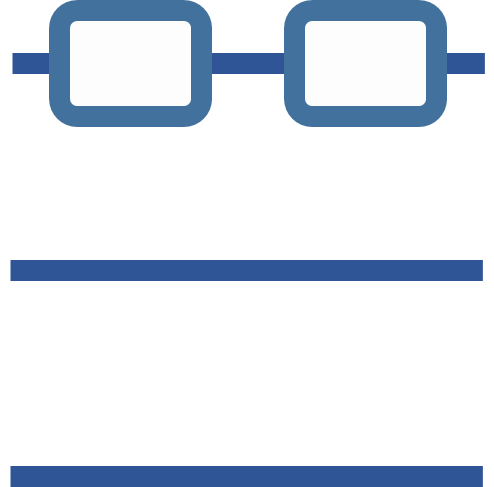} 
\end{tabular}
\right)
=
\tr\left(
\begin{tabular}{c}
     \includegraphics[scale=0.1]{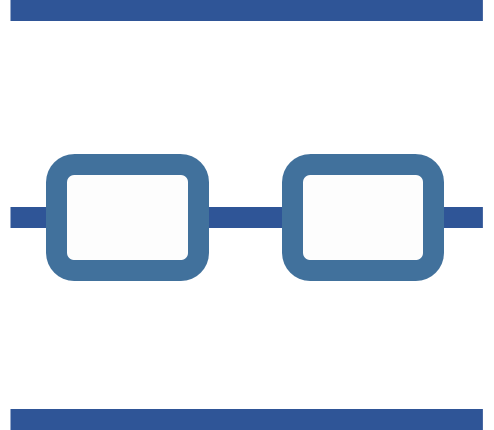} 
\end{tabular}
\right)
=
\tr\left(
\begin{tabular}{c}
     \includegraphics[scale=0.1]{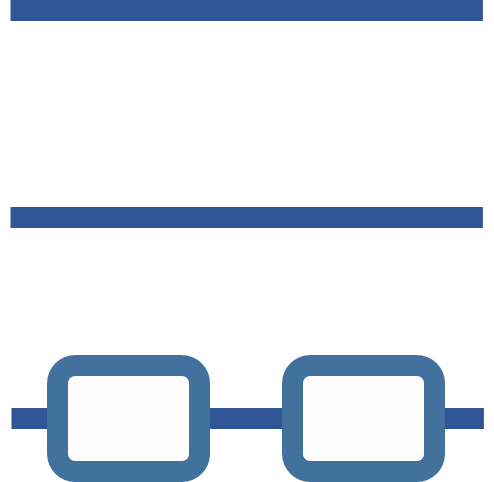} 
\end{tabular}
\right)=d_A^2\tr(\rho_A^2)
\end{eqnarray}
and
\begin{eqnarray}
\tr\left(
\begin{tabular}{c}
     \includegraphics[scale=0.1]{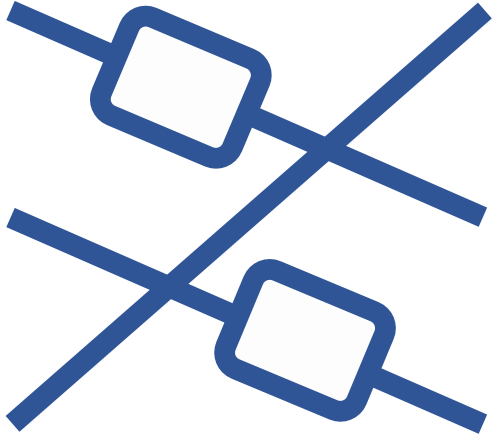} 
\end{tabular}
\right)
=
\tr\left(
\begin{tabular}{c}
     \includegraphics[scale=0.1]{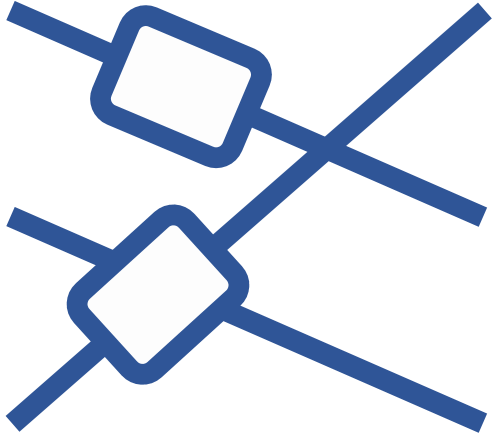} 
\end{tabular}
\right)
=
\tr\left(
\begin{tabular}{c}
     \includegraphics[scale=0.1]{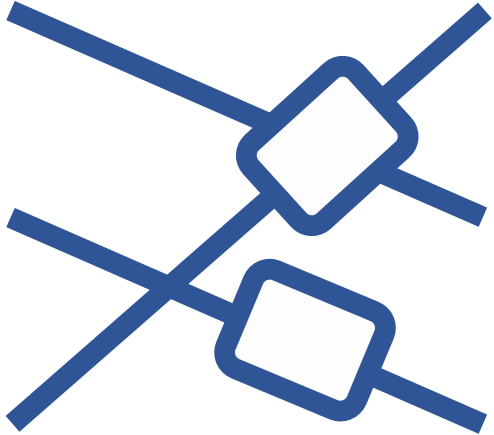} 
\end{tabular}
\right)
=
\tr\left(
\begin{tabular}{c}
     \includegraphics[scale=0.1]{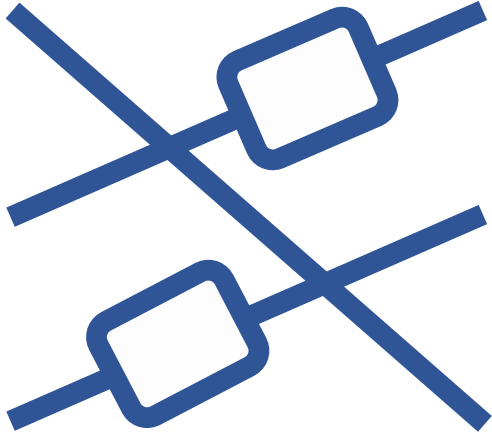} 
\end{tabular}
\right)
=
\tr\left(
\begin{tabular}{c}
     \includegraphics[scale=0.1]{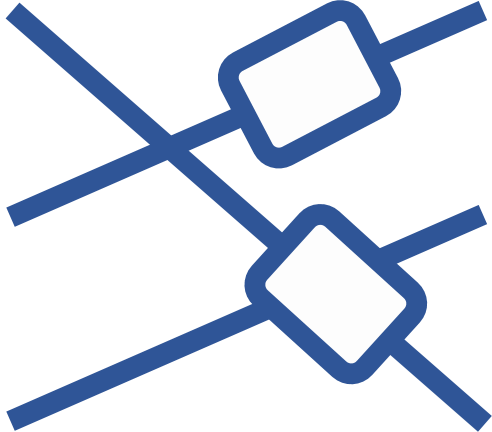} 
\end{tabular}
\right)
=
\tr\left(
\begin{tabular}{c}
     \includegraphics[scale=0.1]{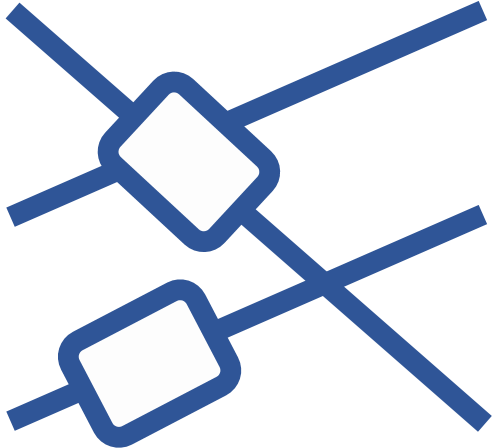} 
\end{tabular}
\right)
=\tr(\rho_A^2).
\end{eqnarray}
Hence
\begin{eqnarray}\label{eq:O_3square}
\tr[\hat{O}_3^2]\leq 36\tr\left[
\left(
\begin{tabular}{c}
     \includegraphics[scale=0.08]{O_3_1.png}
\end{tabular}
\right)
\left(
\begin{tabular}{c}
     \includegraphics[scale=0.08]{O_3_1.png}
\end{tabular}
\right)
\right]
=36\tr\left(
\begin{tabular}{c}
     \includegraphics[scale=0.08]{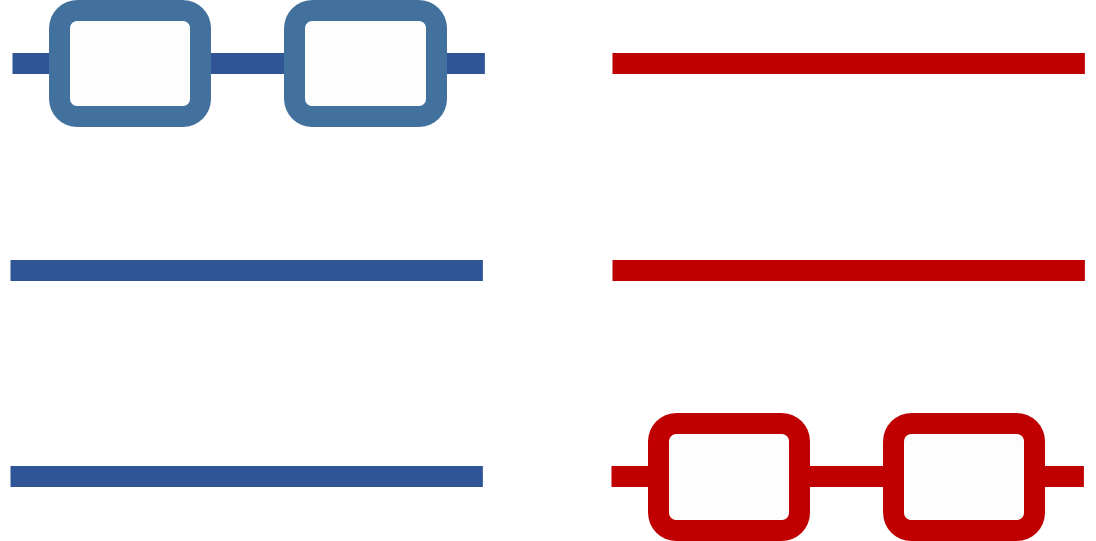}
\end{tabular}
\right)
=36d_A^2d_B^2\tr(\rho^2)=36D^2\tr(\rho^2)
\end{eqnarray}

\begin{table}
    \centering
    \setlength{\tabcolsep}{3mm}{
    \begin{tabular}{c||c|c|c|c|c|c}
    \hline\hline
         \rule{0pt}{28pt}
          & \includegraphics[scale=0.1]{O_3_1} & \includegraphics[scale=0.1]{O_3_2} &\includegraphics[scale=0.1]{O_3_3}&\includegraphics[scale=0.1]{O_3_4}&\includegraphics[scale=0.1]{O_3_5}&\includegraphics[scale=0.1]{O_3_6} \\
         \hline\hline
         \rule{0pt}{28pt}
         \includegraphics[scale=0.1]{O_3_1}&\includegraphics[scale=0.1]{O_3_11} &\includegraphics[scale=0.1]{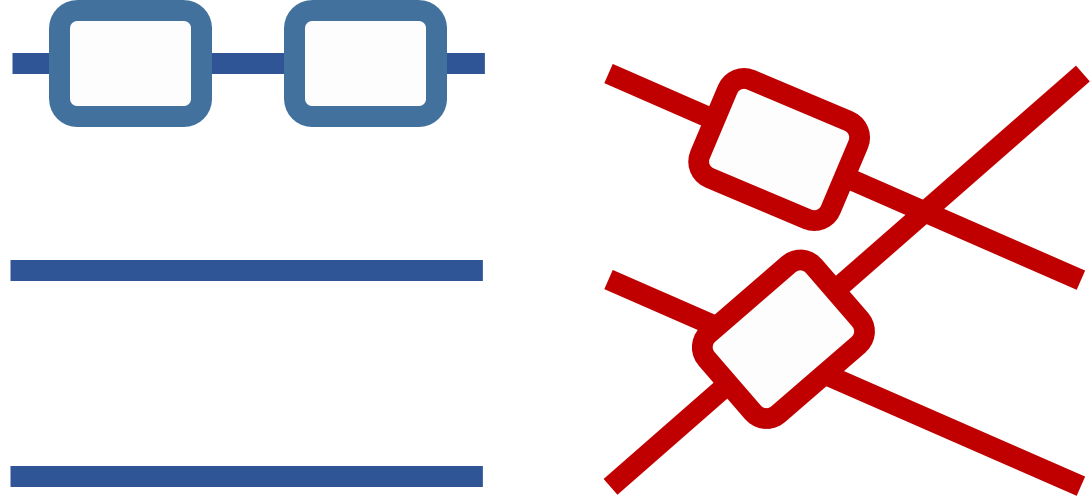} &\includegraphics[scale=0.1]{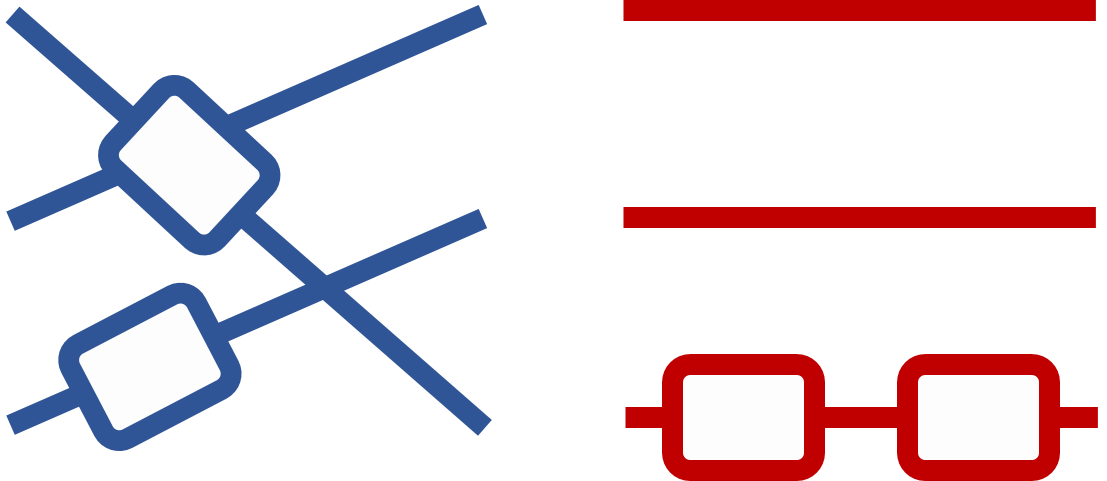} &\includegraphics[scale=0.1]{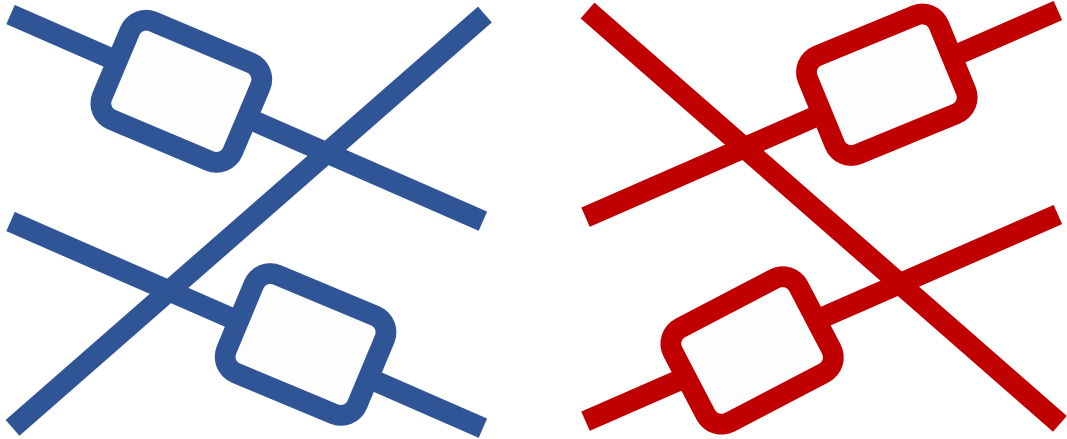} &\includegraphics[scale=0.1]{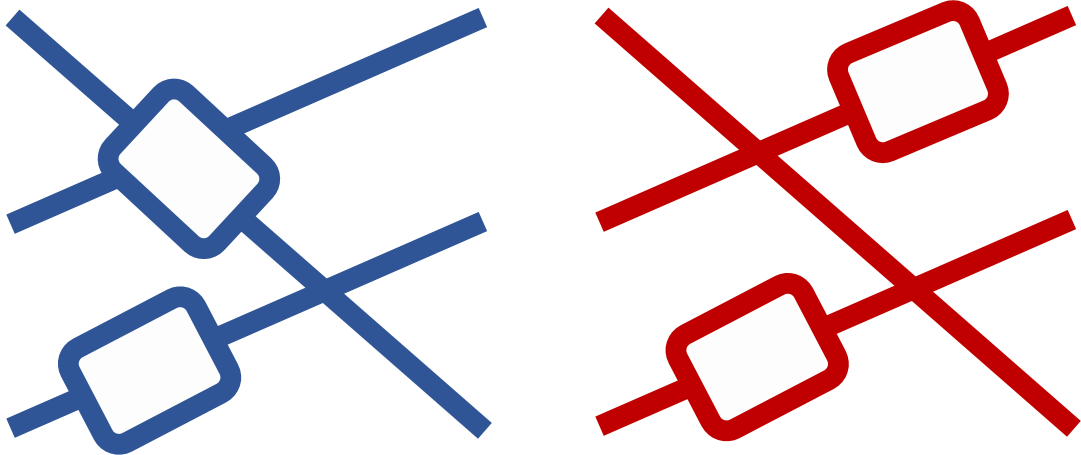} &\includegraphics[scale=0.1]{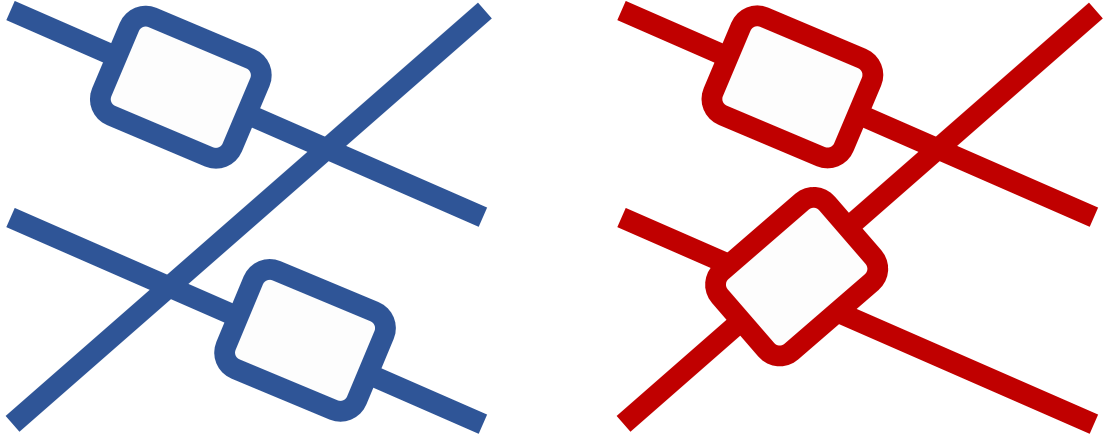} \\
         \hline
         \rule{0pt}{28pt}
         \includegraphics[scale=0.1]{O_3_2}&\includegraphics[scale=0.1]{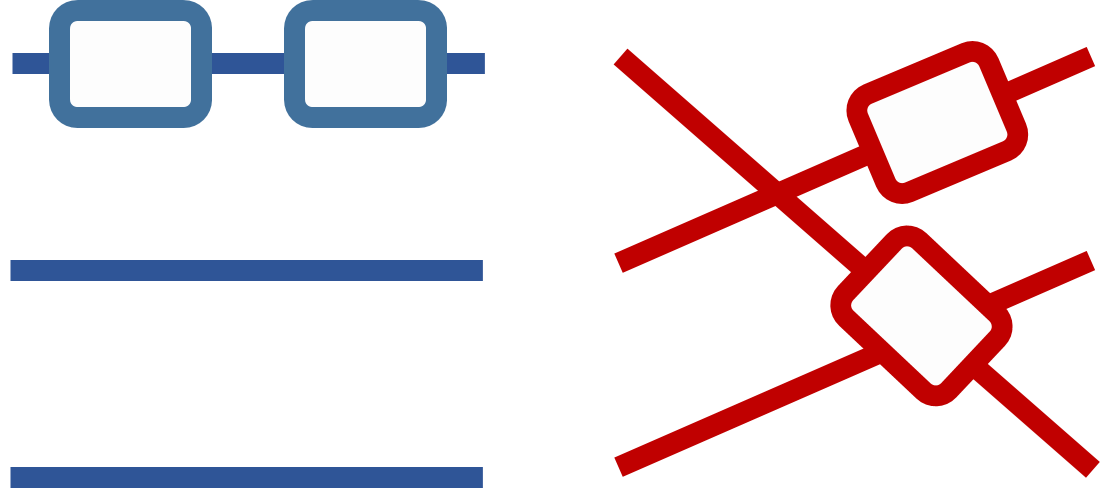} &\includegraphics[scale=0.1]{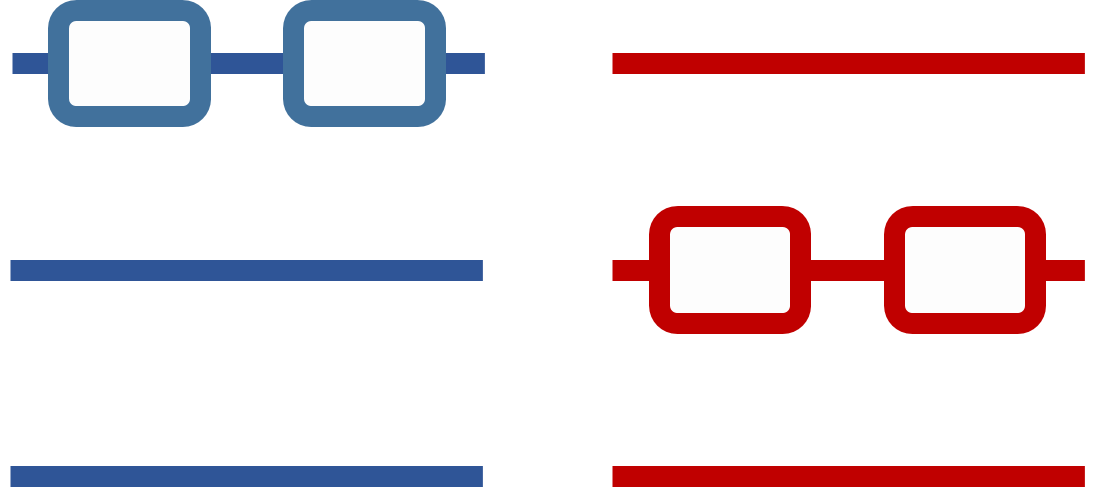} &\includegraphics[scale=0.1]{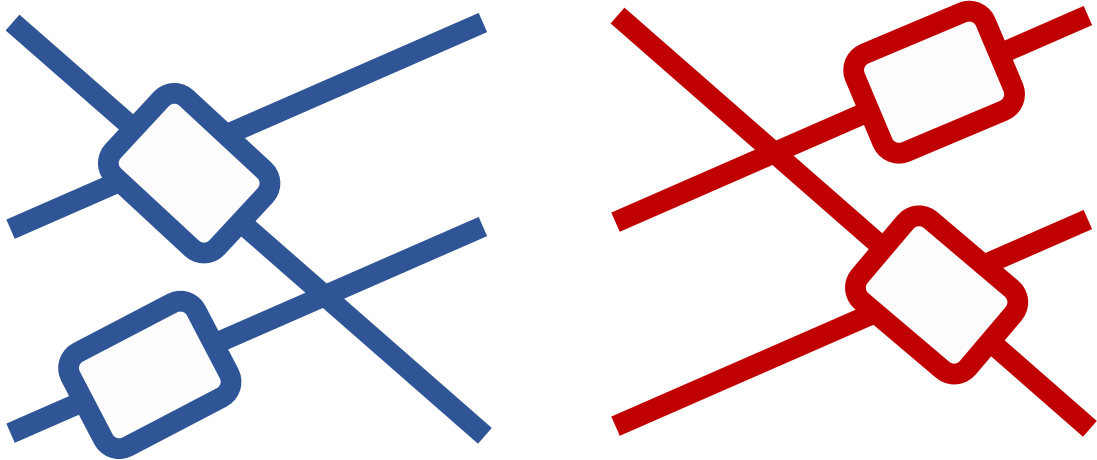} &\includegraphics[scale=0.1]{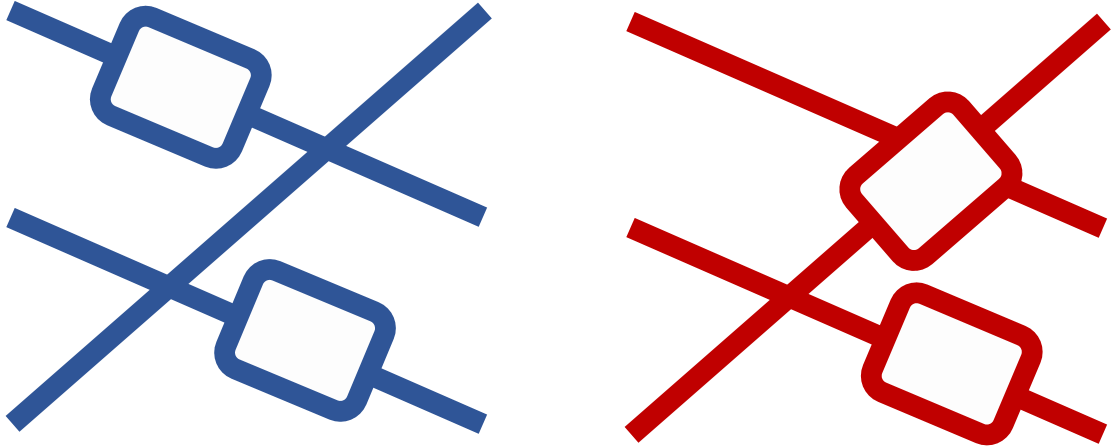} &\includegraphics[scale=0.1]{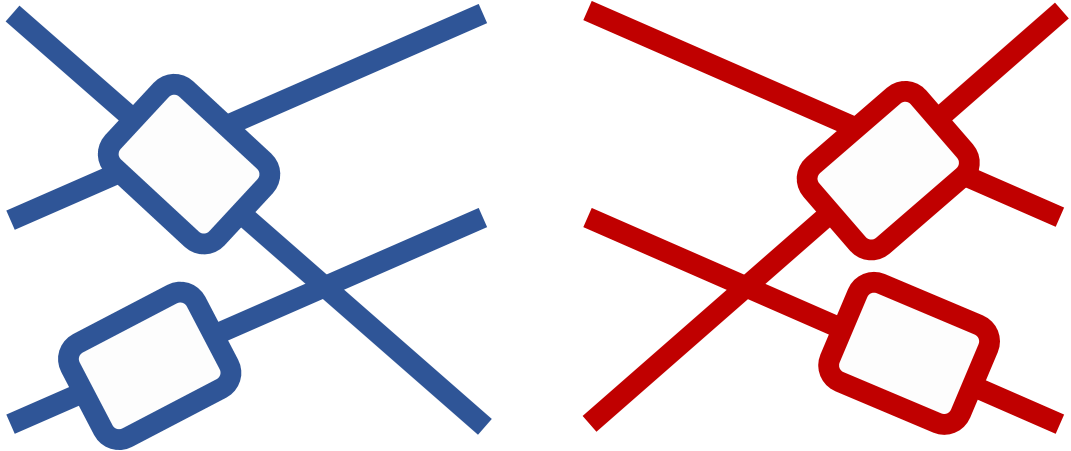} &\includegraphics[scale=0.1]{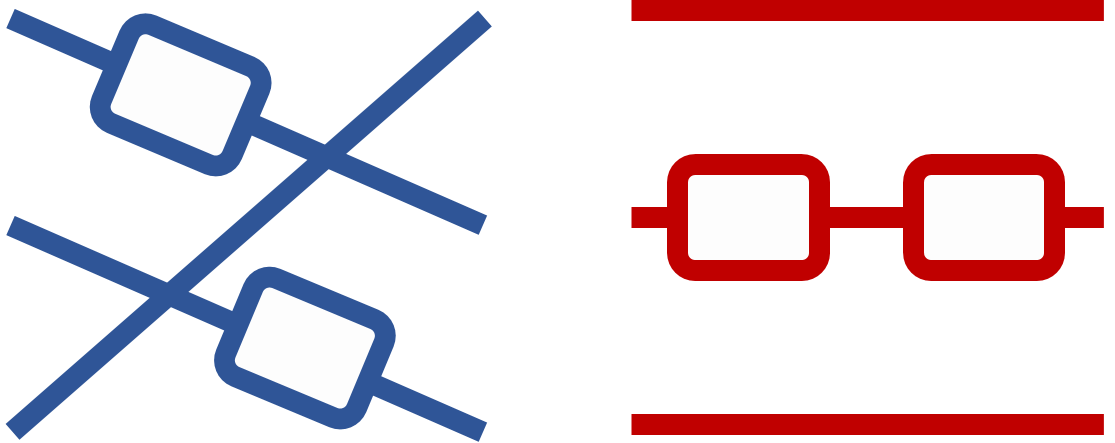} \\
         \hline
         \rule{0pt}{28pt}
         \includegraphics[scale=0.1]{O_3_3}&\includegraphics[scale=0.1]{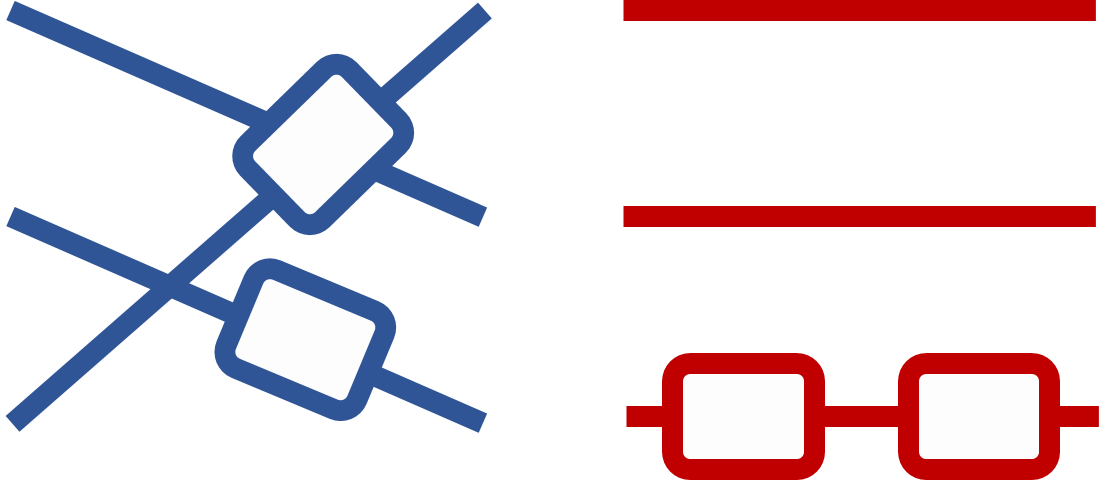} &\includegraphics[scale=0.1]{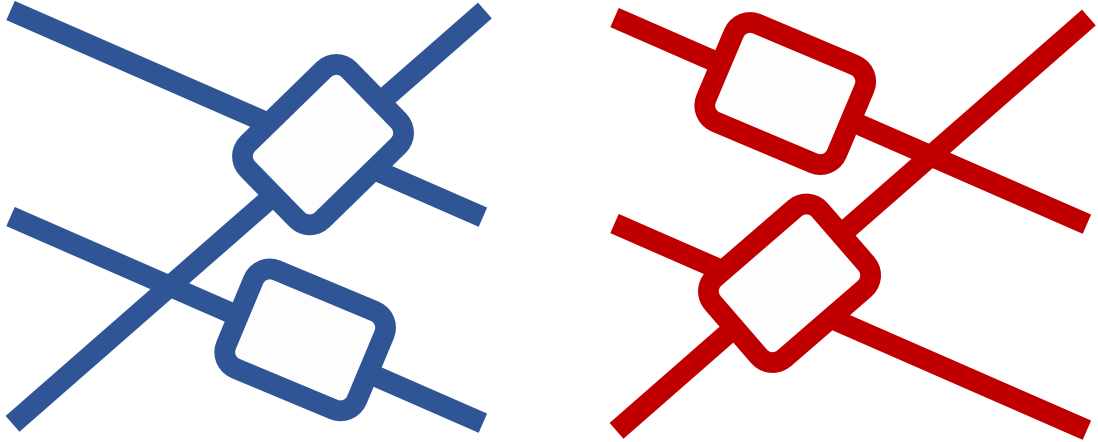} &\includegraphics[scale=0.1]{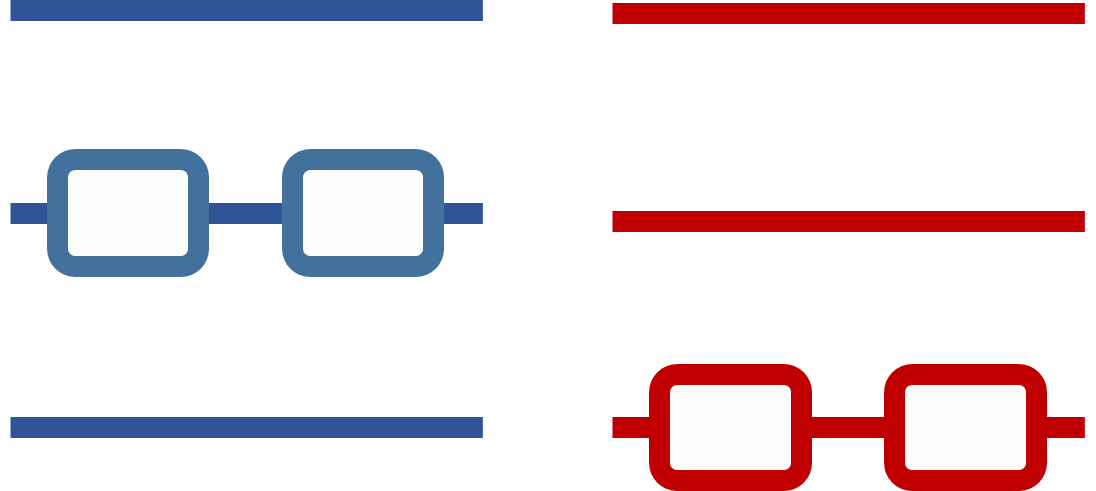} &\includegraphics[scale=0.1]{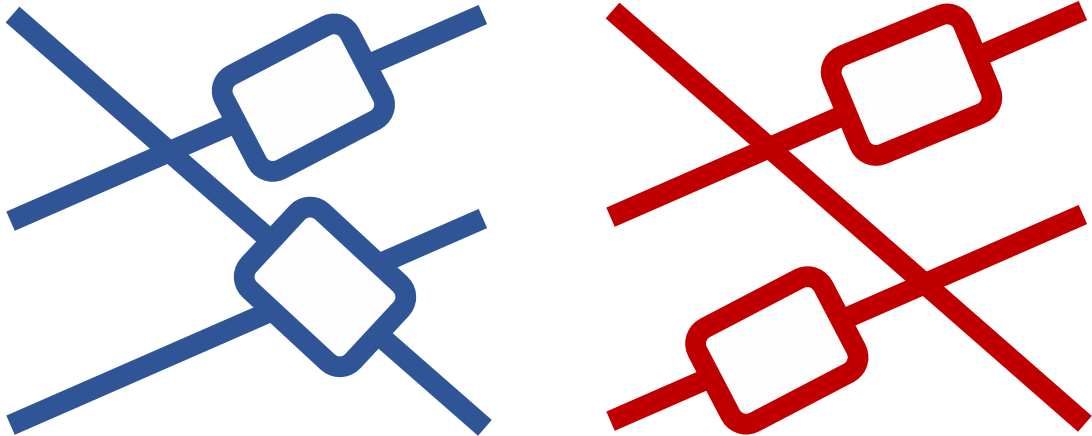} &\includegraphics[scale=0.1]{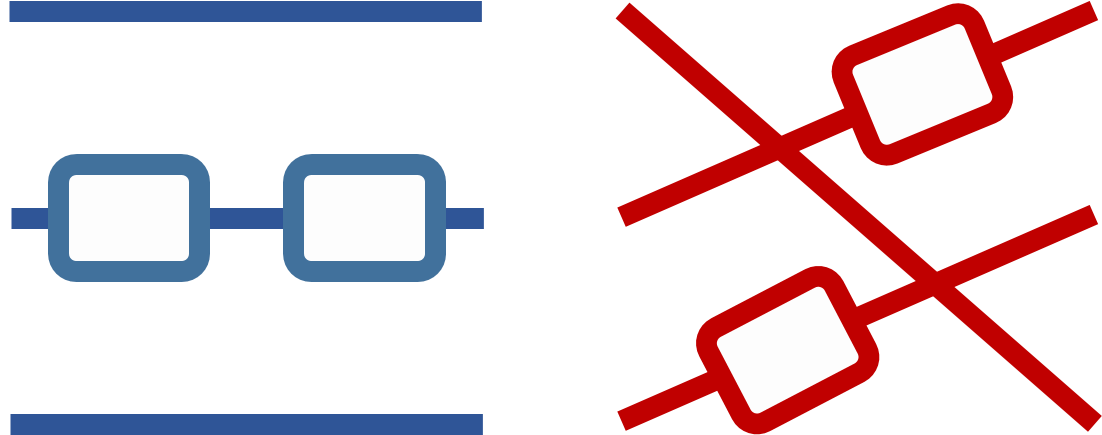} &\includegraphics[scale=0.1]{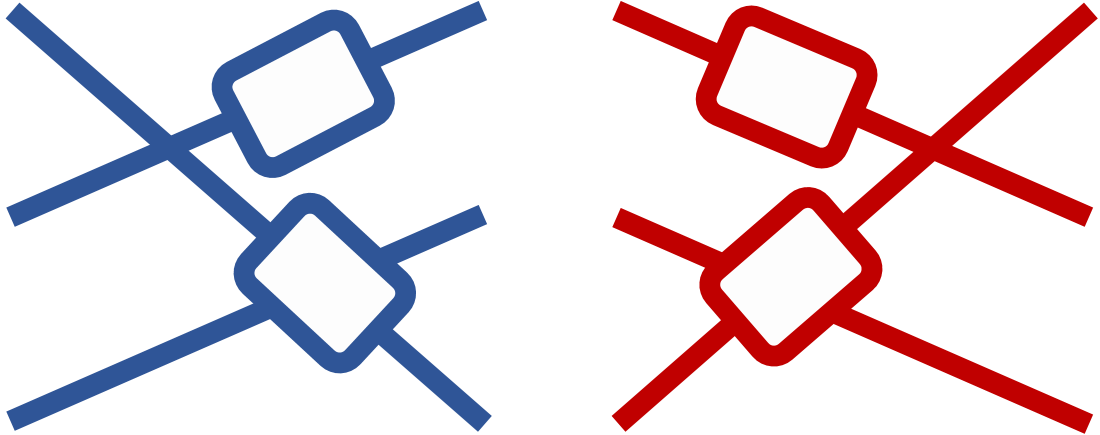} \\
         \hline
         \rule{0pt}{28pt}
         \includegraphics[scale=0.1]{O_3_4}&\includegraphics[scale=0.1]{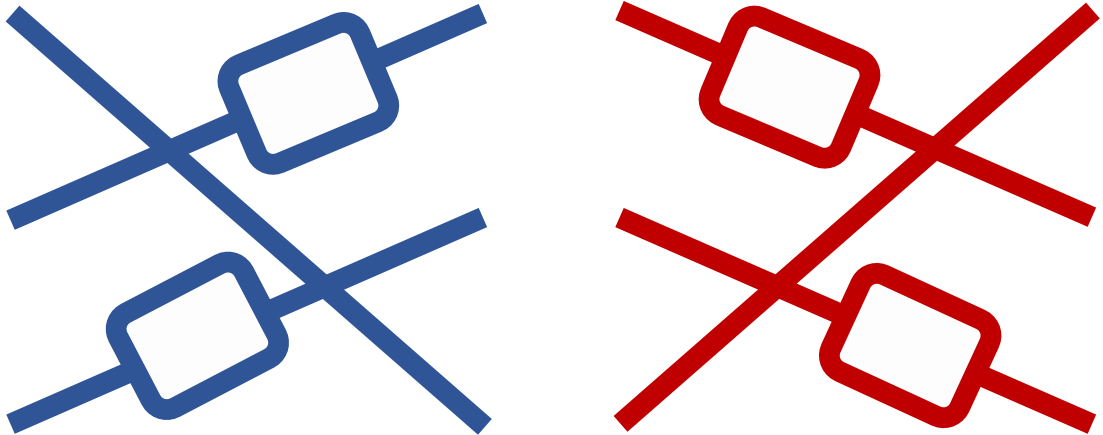} &\includegraphics[scale=0.1]{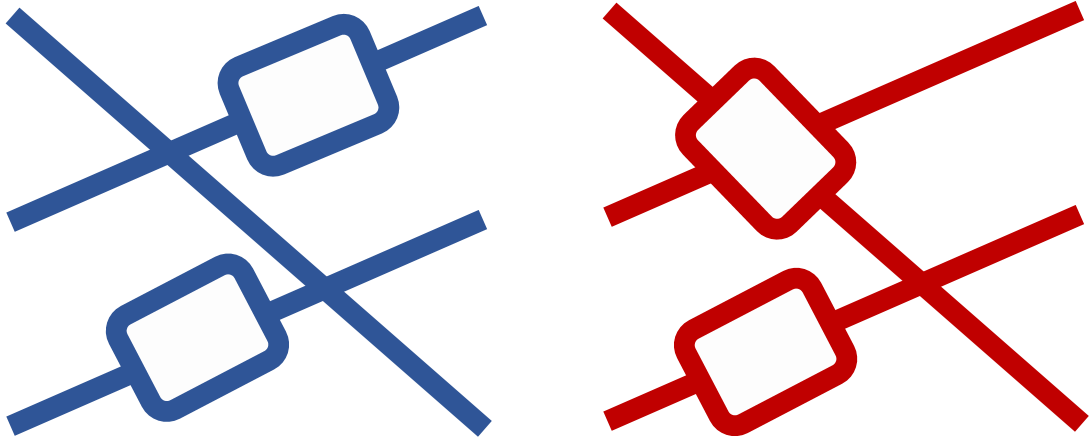} &\includegraphics[scale=0.1]{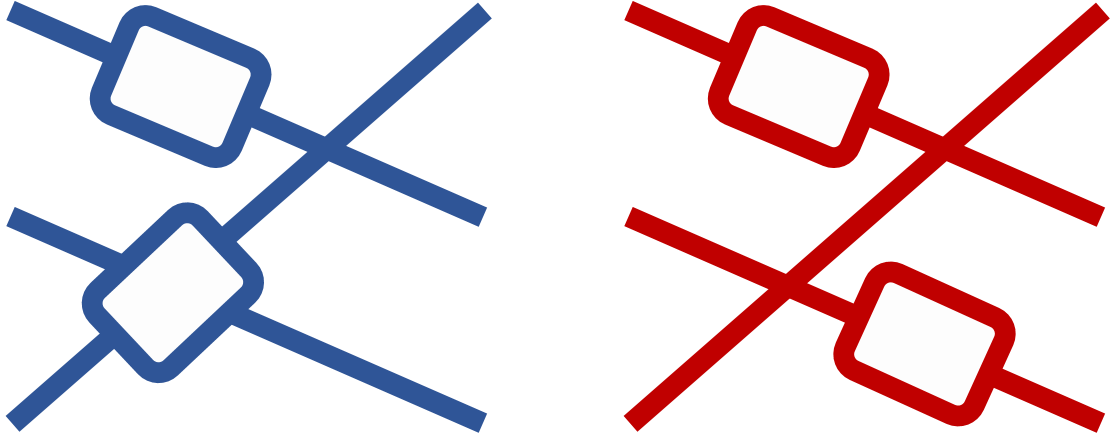} &\includegraphics[scale=0.1]{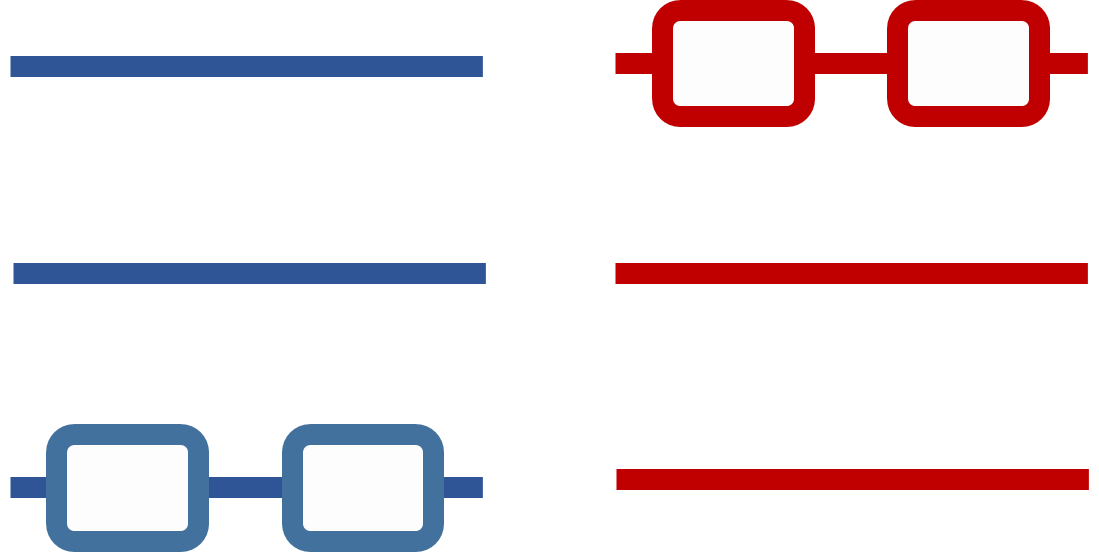} &\includegraphics[scale=0.1]{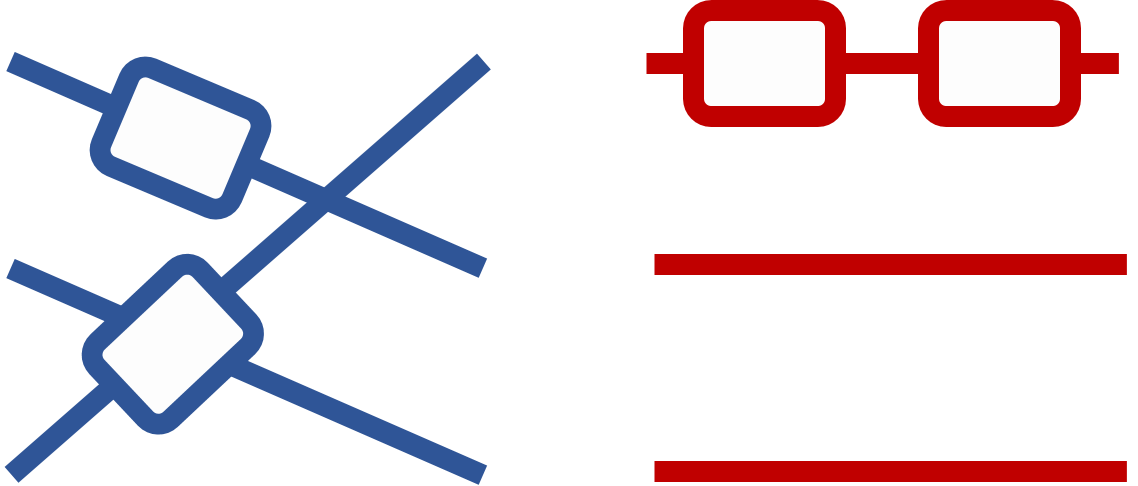} &\includegraphics[scale=0.1]{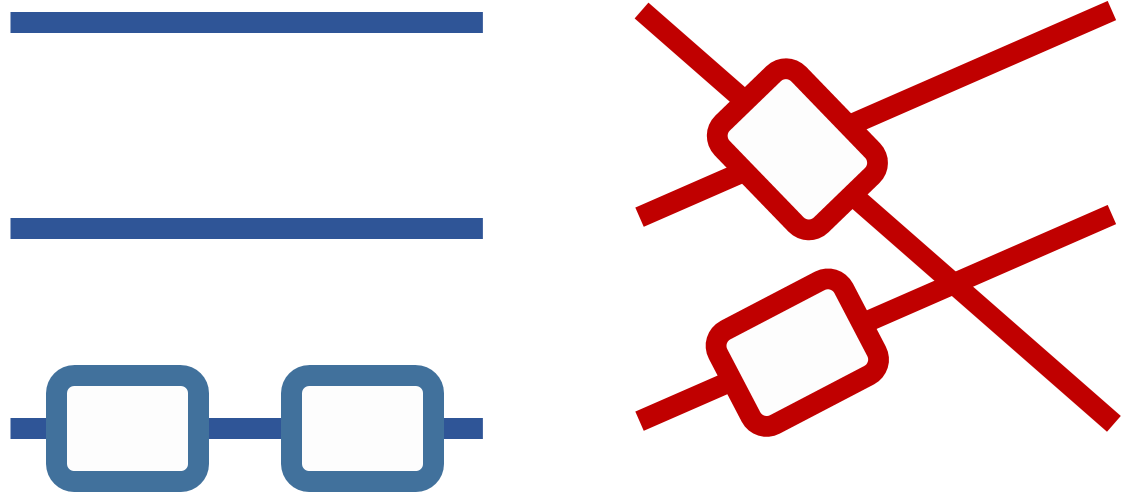} \\
         \hline
         \rule{0pt}{28pt}
         \includegraphics[scale=0.1]{O_3_5}&\includegraphics[scale=0.1]{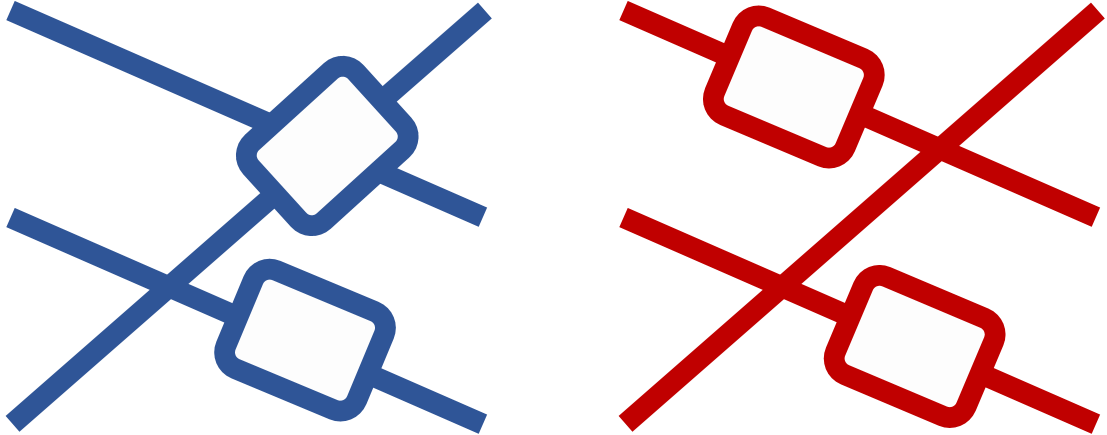} &\includegraphics[scale=0.1]{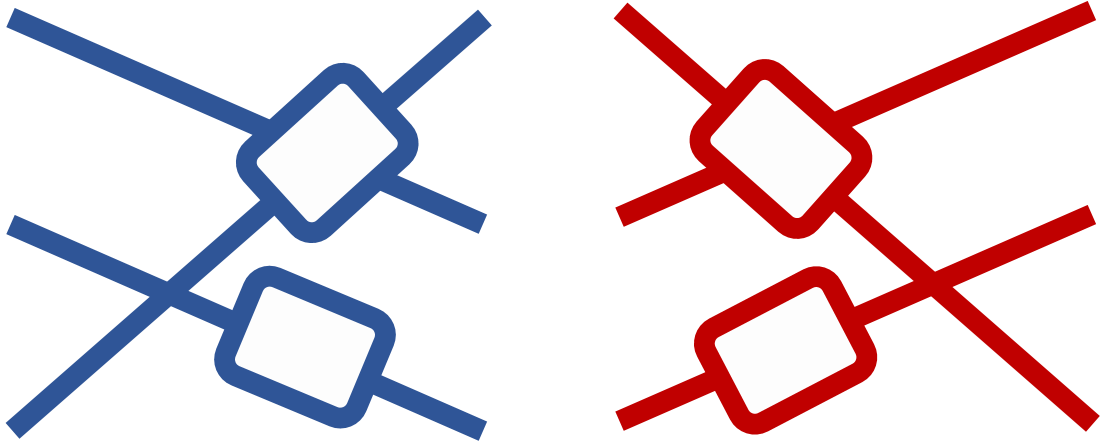} &\includegraphics[scale=0.1]{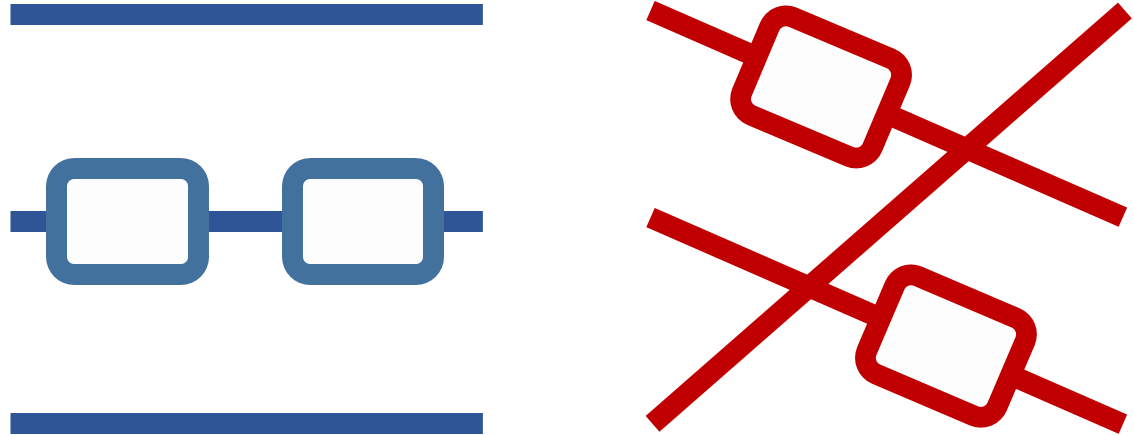} &\includegraphics[scale=0.1]{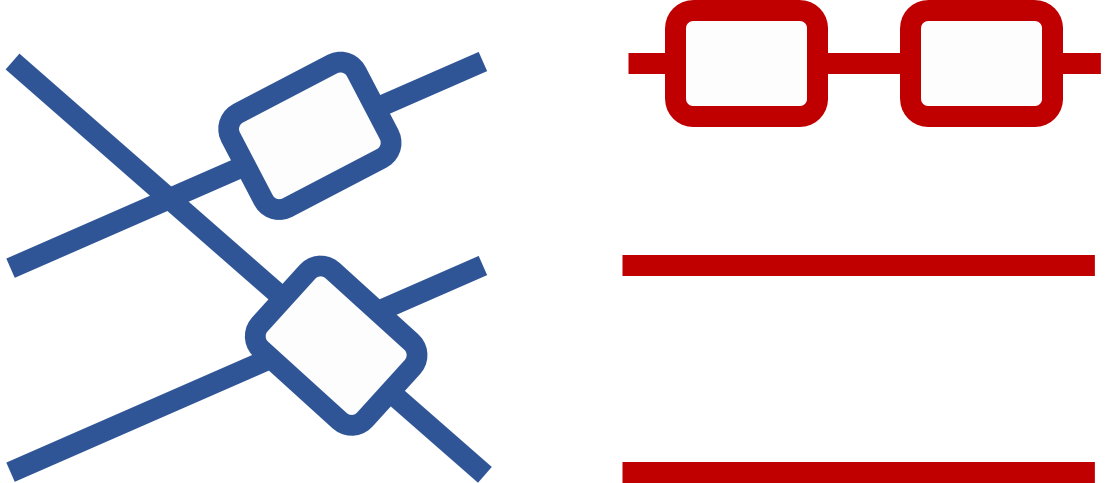} &\includegraphics[scale=0.1]{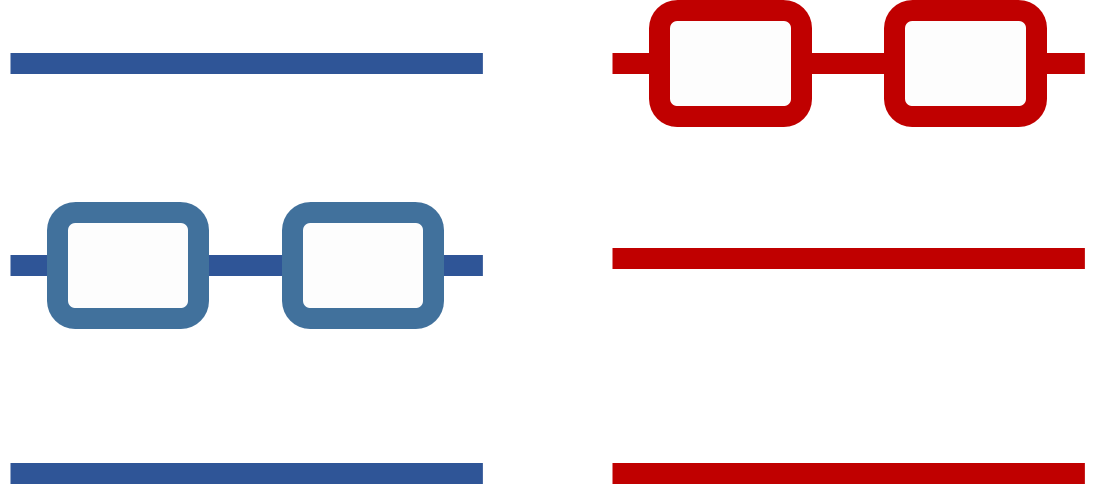} &\includegraphics[scale=0.1]{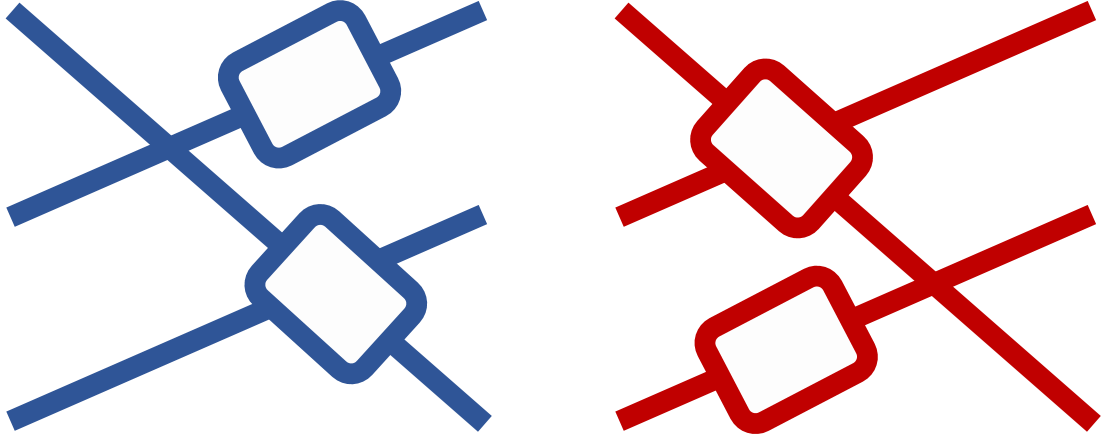} \\
         \hline
         \rule{0pt}{28pt}
         \includegraphics[scale=0.1]{O_3_6}&\includegraphics[scale=0.1]{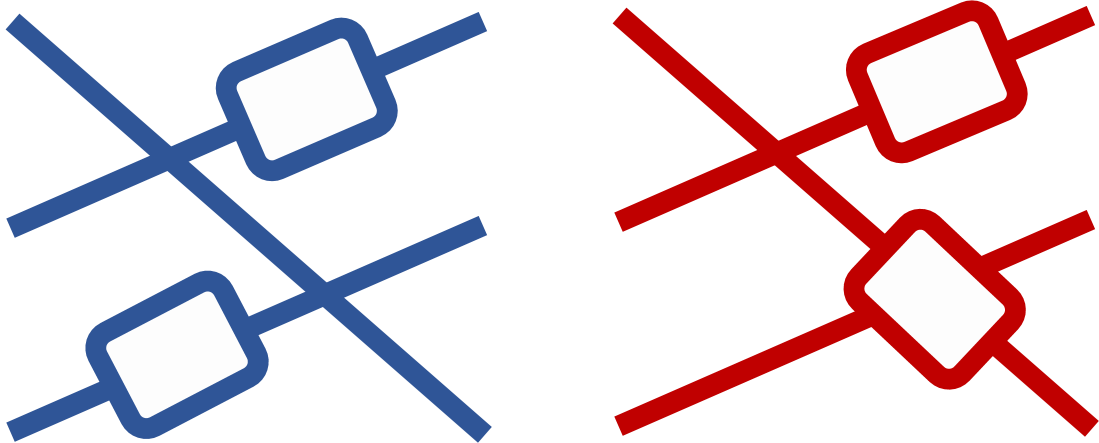} &\includegraphics[scale=0.1]{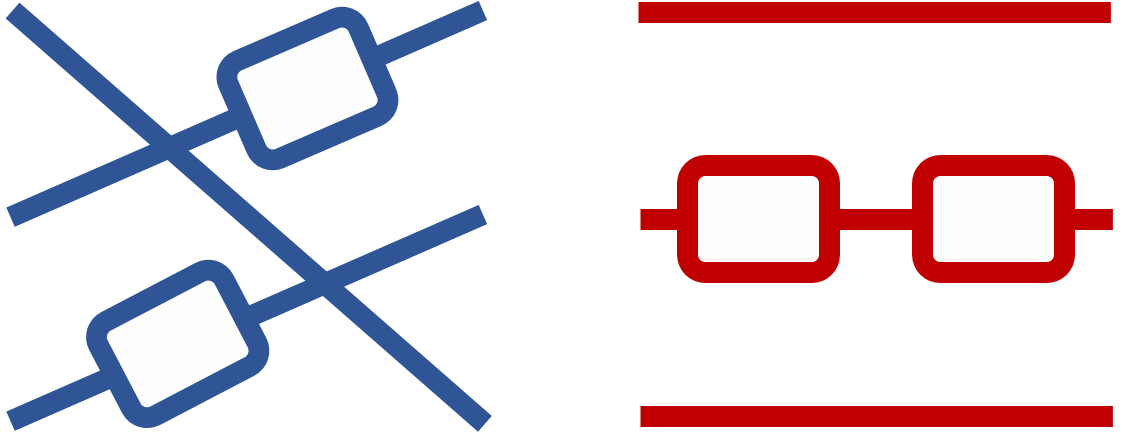} &\includegraphics[scale=0.1]{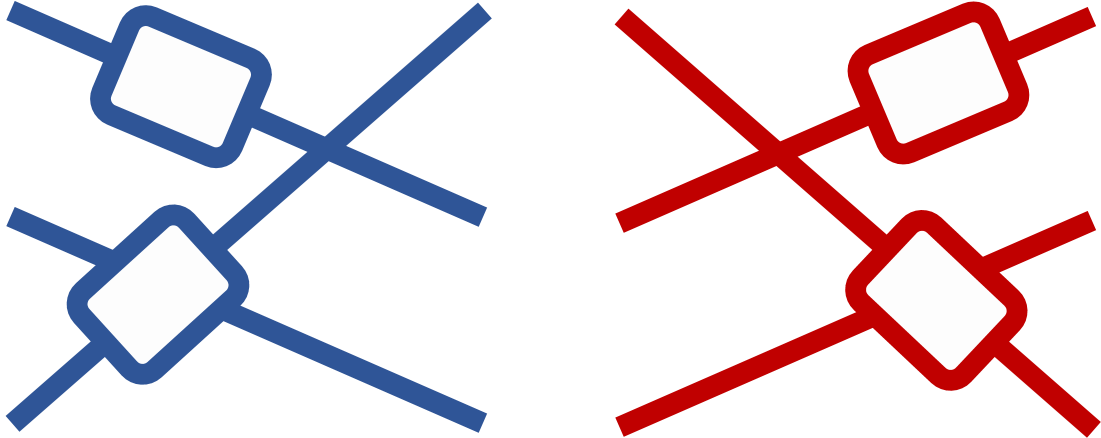} &\includegraphics[scale=0.1]{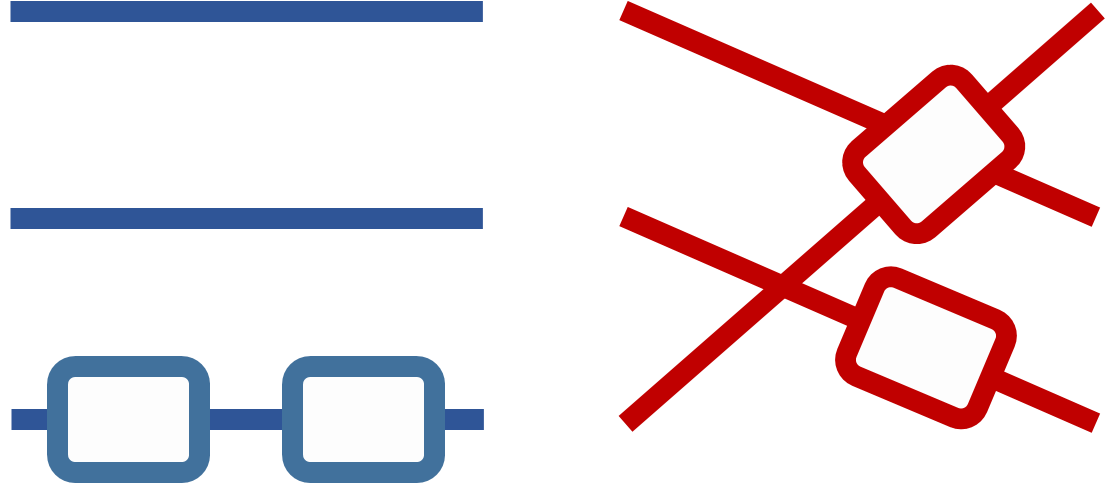} &\includegraphics[scale=0.1]{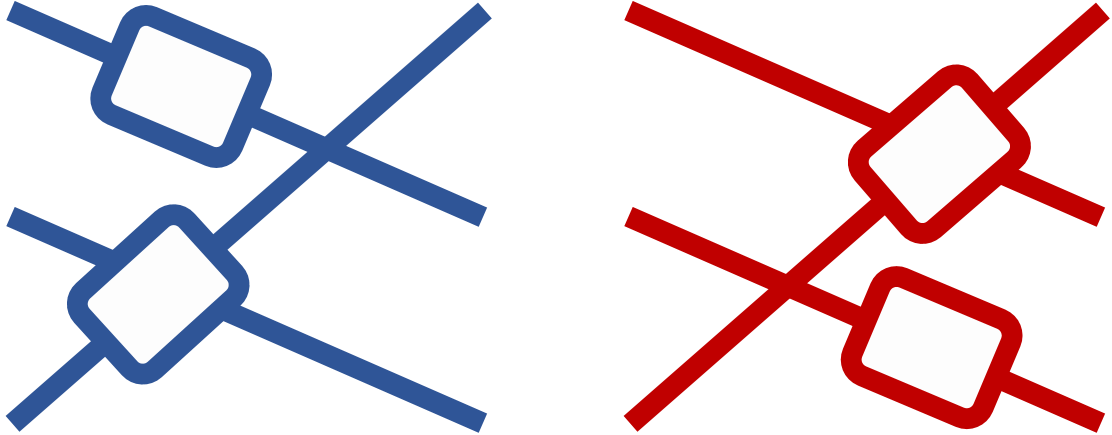} &\includegraphics[scale=0.1]{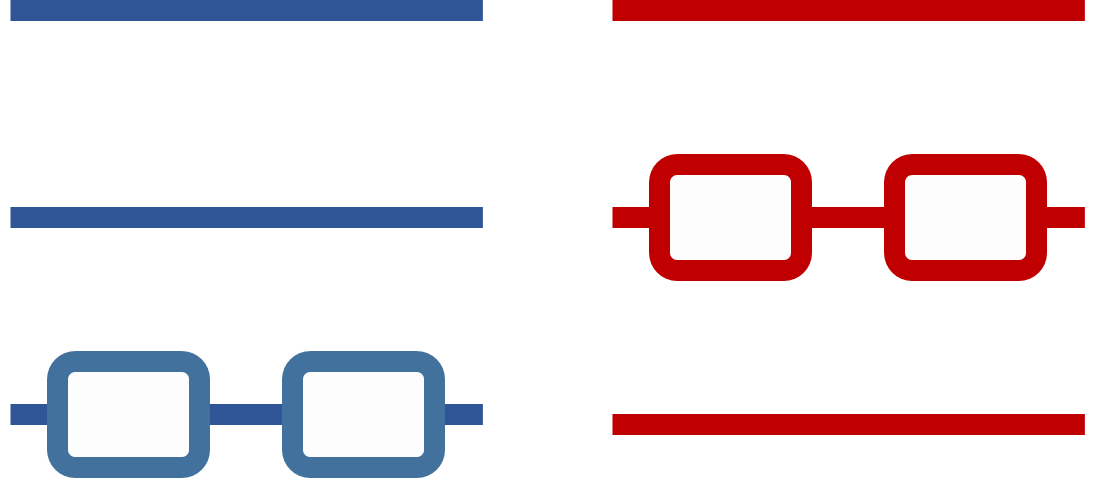} \\
         \hline\hline
    \end{tabular}}
    \caption{ Multiplication of the terms in $\hat{O}_3^2$. Here, colored boxes stand for density matrices corresponding to different subsystems. ``X'' form operators and the horizontal lines follow the same definition as Table.~\ref{table:O_4calculation}, others are tri-partite cyclic permutation operators. }
    \label{table:O_3calculation}
\end{table}

$\hat{O}_2$ can be similarly constructed from $\hat{O}_4$:
\begin{equation}
\begin{split}
\hat{O}_2=\frac{1}{2}\tr_{1,2}\left[\hat{O}_4(\rho\otimes\rho\otimes \mathbb{I}^{\otimes2})\right]=\frac{1}{2}\tr_1\left[\hat{O}_3(\rho\otimes\mathbb{I}^{\otimes2})\right],
\end{split}
\end{equation}
which can be graphically demonstrated as
\begin{eqnarray}
\hat{O}_2
&=
\frac{1}{2}\left(
\begin{tabular}{c}
     \includegraphics[scale=0.1]{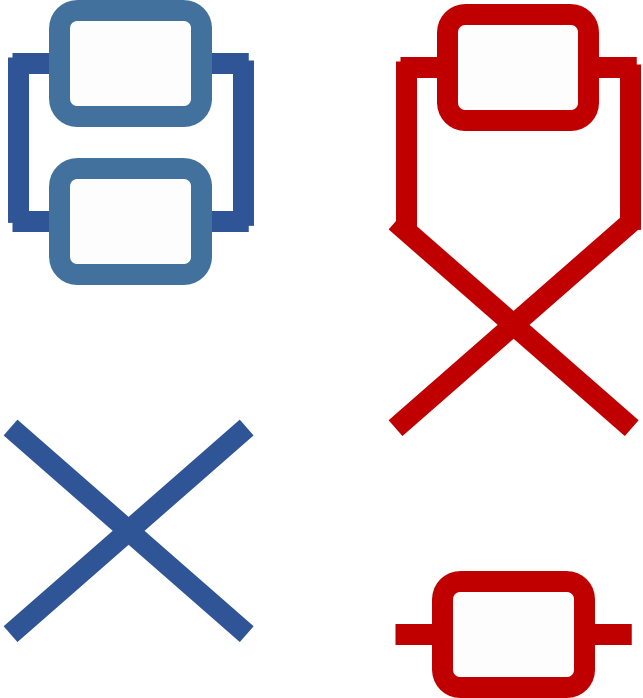}
\end{tabular}
+
\begin{tabular}{c}
     \includegraphics[scale=0.1]{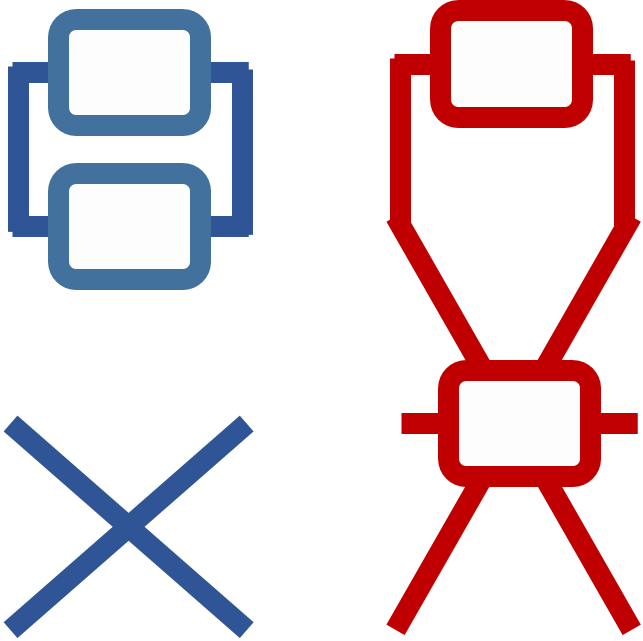}
\end{tabular}
+
\begin{tabular}{c}
     \includegraphics[scale=0.1]{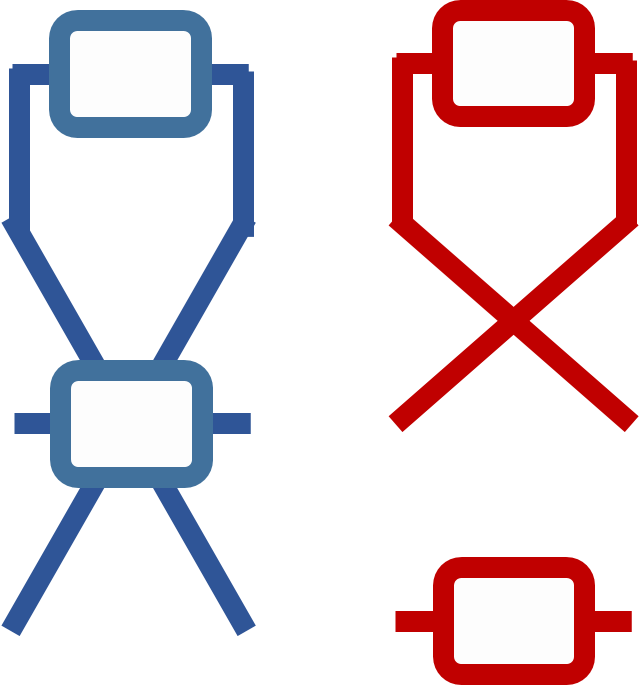}
\end{tabular}
+
\begin{tabular}{c}
     \includegraphics[scale=0.1]{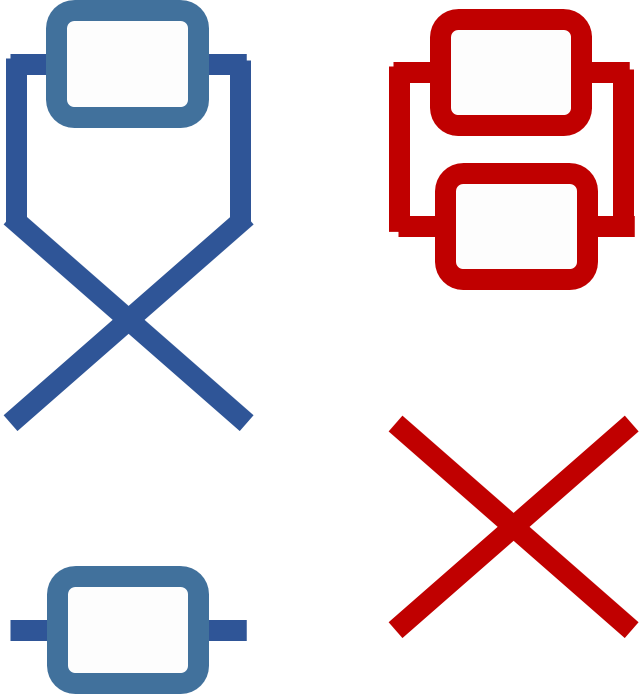}
\end{tabular}
+
\begin{tabular}{c}
     \includegraphics[scale=0.1]{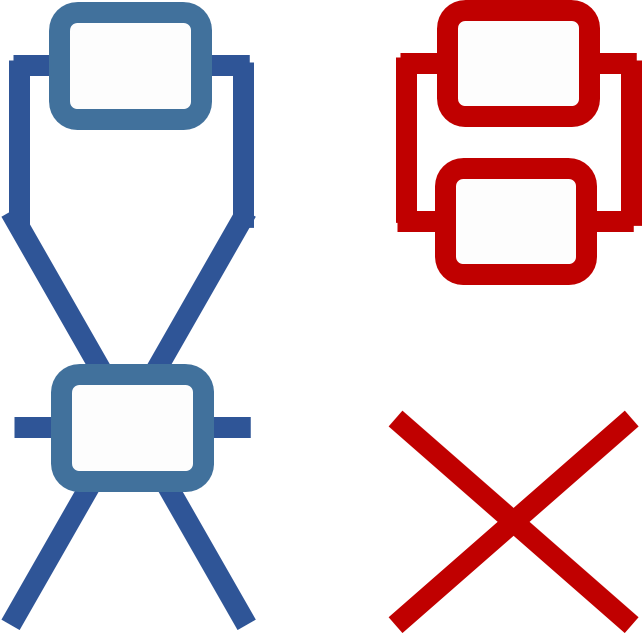}
\end{tabular}
+
\begin{tabular}{c}
     \includegraphics[scale=0.1]{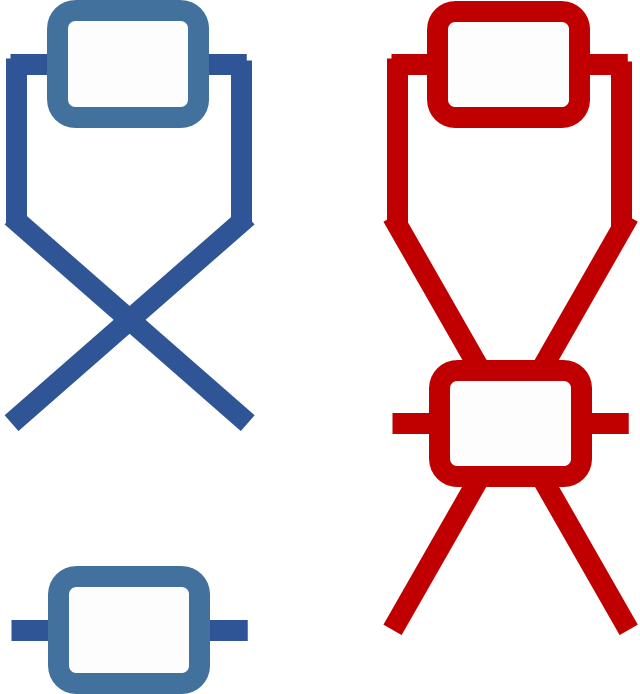}
\end{tabular}
\right)\\
&=
\begin{tabular}{c}
     \includegraphics[scale=0.1]{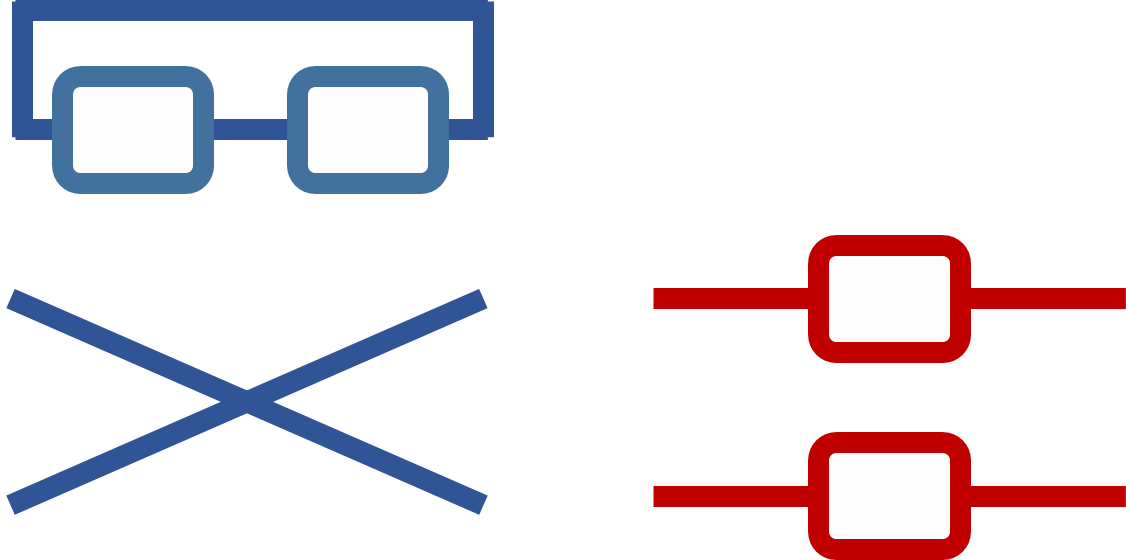}
\end{tabular}
+
\begin{tabular}{c}
     \includegraphics[scale=0.1]{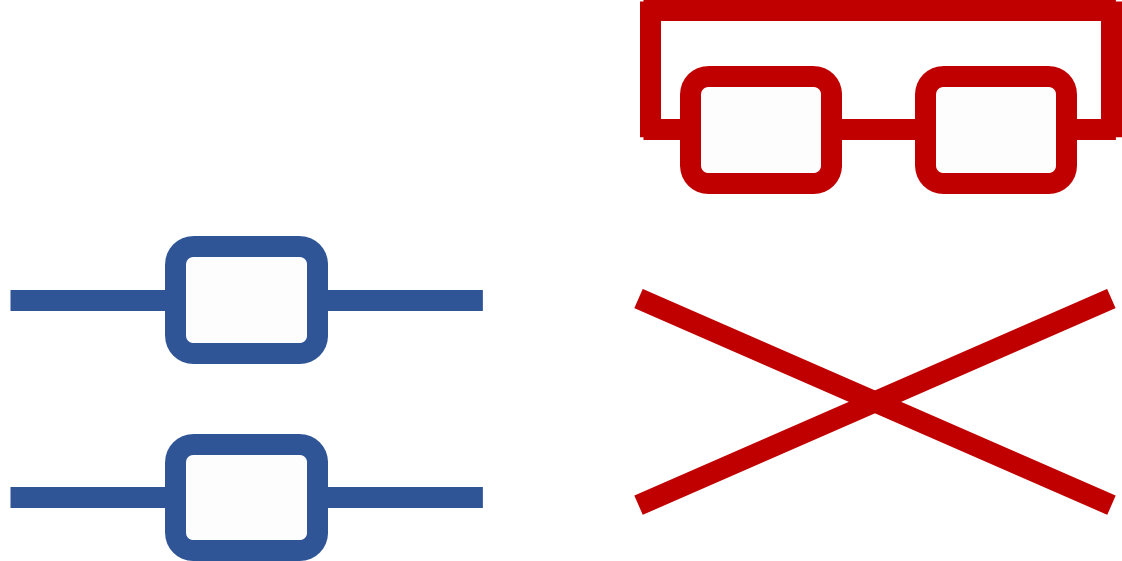}
\end{tabular}
+
\begin{tabular}{c}
     \includegraphics[scale=0.1]{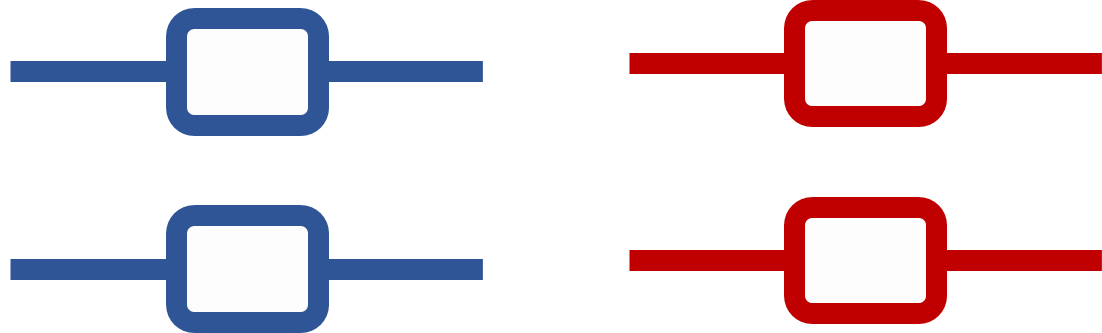}.
\end{tabular}
\end{eqnarray}
After the calculation of $\tr[\hat{O}_4^2]$ and $\tr[\hat{O}_3^2]$, one could realize that, the largest term contains most horizontal lines (identity operator $\mathbb{I}$) without box, which contributes $d$ when taking trace. As a result,
\begin{eqnarray}\label{eq:O_2square}
\tr[\hat{O}_2^2]\leq 9\max\left\{\tr\left[
\left(
\begin{tabular}{c}
     \includegraphics[scale=0.1]{O_2_1.png}
\end{tabular}
\right)^2
\right],
\tr\left[
\left(
\begin{tabular}{c}
     \includegraphics[scale=0.1]{O_2_2.png}
\end{tabular}
\right)^2
\right]
\right\}=\max\left\{d_A^2M_4^{(2,3)},d_B^2M_4^{(2,3)}\right\}.
\end{eqnarray}

Similarly, 
\begin{equation}
\begin{split}
\hat{O}_1=\frac{1}{6}\tr_{1,2,3}\left[\hat{O}_4(\rho^{\otimes 3}\otimes\mathbb{I})\right]=\frac{1}{6}\tr_{1,2}\left[\hat{O}_3(\rho^{\otimes 2}\otimes\mathbb{I})\right]=\frac{1}{3}\tr_1\left[\hat{O}_2(\rho\otimes\mathbb{I})\right],
\end{split}
\end{equation}
which can be graphically represented as
\begin{eqnarray}
\hat{O}_1 =
\frac{1}{3}\left(
\begin{tabular}{c}
    \includegraphics[scale=0.1]{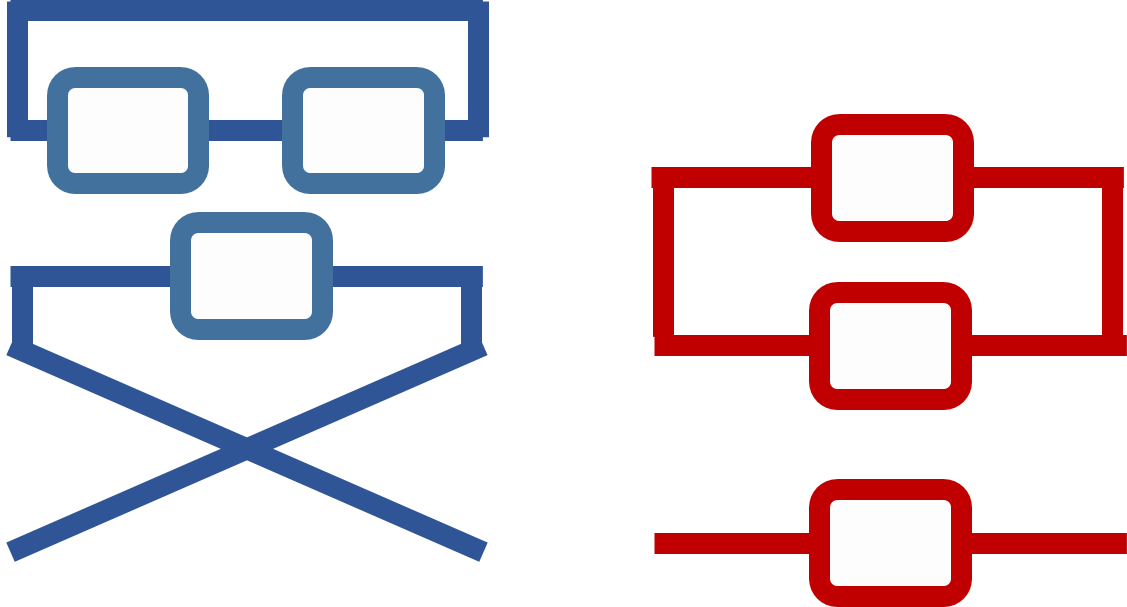}
\end{tabular}
+
\begin{tabular}{c}
    \includegraphics[scale=0.1]{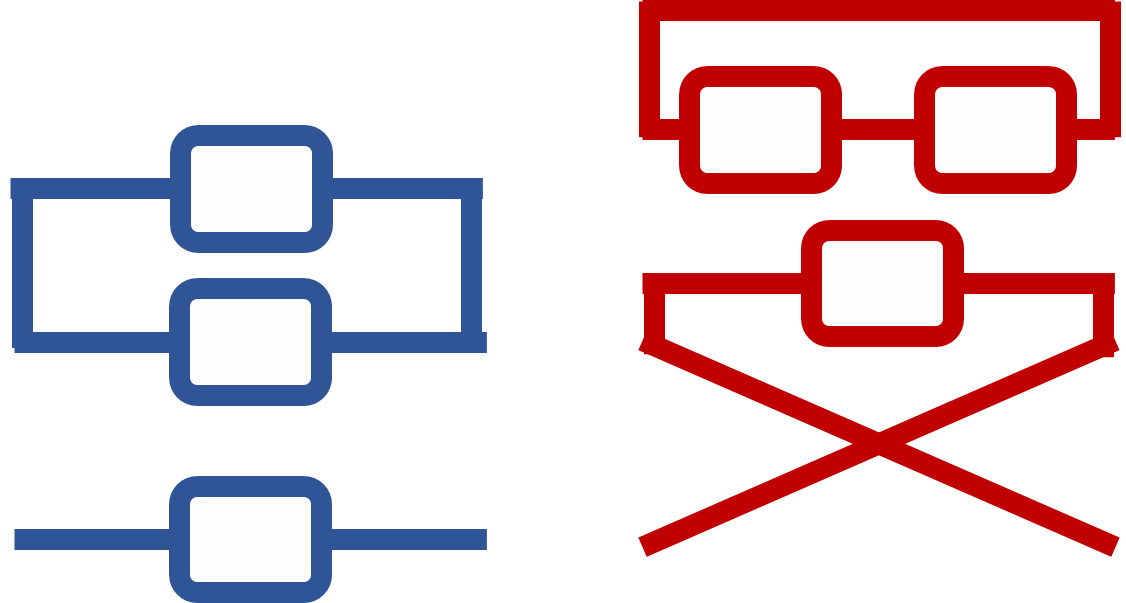}
\end{tabular}
+
\begin{tabular}{c}
    \includegraphics[scale=0.1]{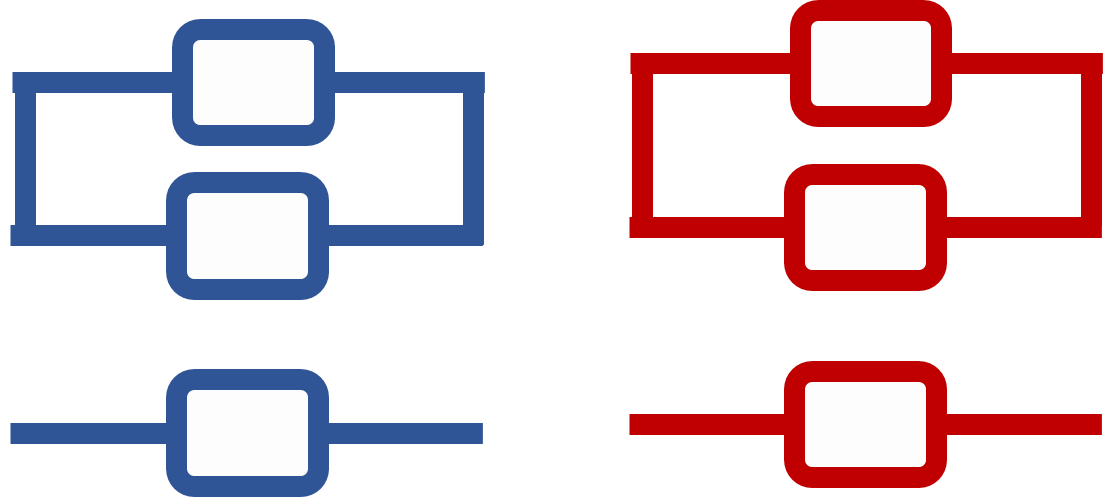}
\end{tabular}
\right)
=
\begin{tabular}{c}
    \includegraphics[scale=0.1]{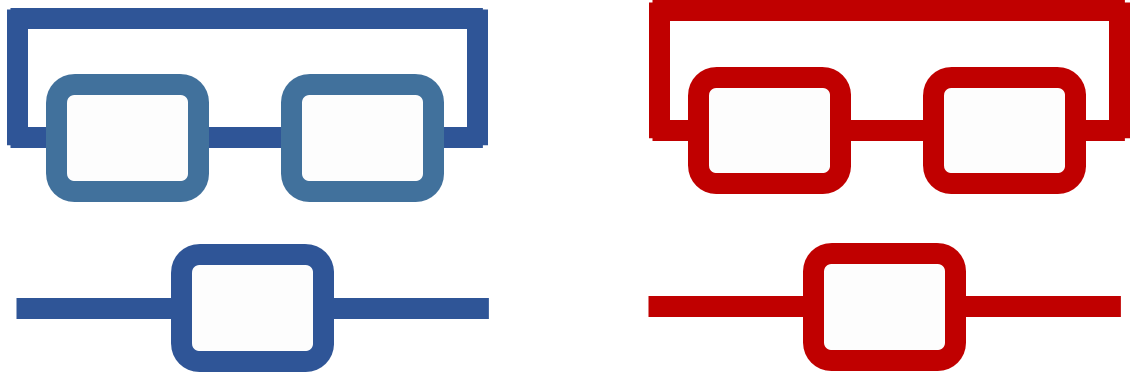}.
\end{tabular}
\end{eqnarray}
Notice that all lines in $\hat{O}_1$ are connected with boxes, so it is easily proved that
\begin{equation}\label{eq:O_1square}
\begin{split}
\tr[\hat{O}_1^2]\le 1.
\end{split}
\end{equation}

Now, substituting Eq.~\eqref{eq:O_4square}, Eq.~\eqref{eq:O_3square}, Eq.\eqref{eq:O_2square}, Eq.~\eqref{eq:O_1square}, and Lemma.~\ref{fact:shadow variance} into Eq.~\eqref{eq:M_4variance}, one can get the upper bound of the variance of the global shadow protocol
\begin{equation}\label{eq:GS variance}
\begin{split}
\mathrm{Var}\left(\hat{M}_4^{(2,3)}\right)\le C_1\frac{1}{M}+C_2\frac{\max\{d_A^2,d_B^2\}M_4^{(2,3)}}{M^2}+C_3\frac{D^2\tr(\rho^2)}{M^3}+C_4\frac{D^4}{M^4}.
\end{split}
\end{equation}
$C_1$, $C_2$, $C_3$ and $C_4$ in this equation are constants independent with number of snapshots $M$ and system dimension $D$.  They are actually not the same constants as those in Eq.~\eqref{eq:M_4variance}, but for simplicity, we use the same notation to represent them. To calculate the variance of local shadow protocol, we remark that
\begin{equation}
\begin{split}
\mathrm{locality}(\hat{O}_4)=4(N_A+N_B),\mathrm{locality}(\hat{O}_3)=3(N_A+N_B),\\
\mathrm{locality}(\hat{O}_2)=2(N_A+N_B),\mathrm{locality}(\hat{O}_1)=N_A+N_B.
\end{split}
\end{equation}
So we just need to make a little adjustment in the variance of global protocol to get the local one
\begin{equation}\label{eq:LS variance}
\begin{split}
\mathrm{Var}\left(\hat{M}_4^{(2,3)}\right)\le C_1\frac{D}{M}+C_2\frac{\max\{d_A^2,d_B^2\}M_4^{(2,3)}D^2}{M^2}+C_3\frac{D^5\tr(\rho^2)}{M^3}+C_4\frac{D^8}{M^4}.
\end{split}
\end{equation}
Therefore, in the large dimension scenario, the variance of global shadow protocol and the local one scale like $D^4/M^4$ and $D^8/M^4$ respectively. substituting this conclusion into the Chebyshev's equation, this proposition is proved. Similar analysis has also been conducted in the estimation of quantum negativity \cite{elben2020mixed} and quantum fisher information \cite{rath2021fisher}.

\end{proof}

\subsection{Randomized Measurements Protocol and Statistical Analysis}\label{subsec:randomprotocol}
The logic of randomized measurement protocol is different with shadow estimation. We first need to write $M_{2n}^{(2,3)}(\rho_{AB})$ in the observable form
\begin{equation}\label{eq:CCNR observable}
\begin{split}
M_{2n}^{(2,3)}(\rho_{AB})=\tr\left[(O_A\otimes O_B)\rho_{AB}^{\otimes 2n}\right],
\end{split}
\end{equation}
where $O_A=\mbb{S}_A^{(1,2)}\otimes\cdots\otimes \mbb{S}_A^{(2n-1,2n)}$ and $O_B=\mbb{S}_B^{(2n,1)}\otimes\cdots\otimes \mbb{S}_B^{(2n-2,2n-1)}$. Notice that the observable is just comprised by SWAP operators. SWAP operator, as the lowest order permutation operators, can be generated by averaging over 2-design global or qubit level local random unitary ensembles among virtual copies. Following the standard procedure of randomized measurements, we can design a protocol to measure $M_{2n}^{(2,3)}(\rho_{AB})$:

\begin{algorithm}[H]
\caption{Measurement protocol of $M_{2n}^{(2,3)}$}
\label{algo:CCNR measurement protocol}
\begin{algorithmic}[1]
\Require
Sequentially prepared $2n\times N_U\times N_M$ $\rho_{AB}$
\Ensure
Probability distribution of the measurement outcomes conditioned on the evolution unitary $\mathrm{Pr}(\vec{s}_A,\vec{s}_B|U_A,U_B)$.

\For{$i= 1~\text{\textbf{to}}~N_U$} 
 \State Construct $2n$ unitary matrix $\{U_1=U_A^1\otimes U_B^1,U_2=U_A^1\otimes U_B^2,\cdots,U_{2n-1}=U_A^n\otimes U_B^n,U_{2n}=U_A^n\otimes U_B^1\}$ with each $U_A^i$ and $U_B^i$ sampled uniformly and independently from $N_A$ and $N_B$ qubits Clifford group (or constructed by tensor product of local unitaries $U_A^i=\bigotimes_{j=1}^{N_A}u_{j}$,$U_B^i=\bigotimes_{j=1}^{N_B}u_j$ with each $u_j$ sampled uniformly from qubit clifford group)
 \State Operate these $2n$ unitaries on $\rho_{AB}$ to get the evolved states $\{U_1\rho_{AB}U_1^\dagger,\cdots,U_{2n}\rho_{AB}U_{2n}^\dagger\}$. 
 \For{$j= 1~\text{\textbf{to}}~N_M$} 
  \State  Measure the evolved states in the computational basis $\{\ket{\vec{s}_A,\vec{s}_B}\}$.
  \State Record the measurement results.
  \EndFor
 \State Estimate the probabilities $\mathrm{Pr}(\vec{s}_A,\vec{s}_B|U_i)$.
 \State Construct an unbiased estimator of $M_{2n}^{(2,3)}$ by these probabilities according to Proposition \ref{prop:CCNR measurement} 
  
\EndFor
\State Take an average of the estimators in each round to get the final estimator of $M_{2n}^{(2,3)}$.
\end{algorithmic}
\end{algorithm}

\begin{proposition}\label{prop:CCNR measurement}
The unbiased estimator of $2n$-th moment of realignment matrix can be constructed using these probabilities $\mathrm{Pr}(\vec{s}_A,\vec{s}_B|U_i)$ 
\begin{equation}\label{eq:CCNR M2n}
\begin{split}
\hat{M}_{2n}^{(2,3)}&=\sum_{\substack{\vec{s}_A^1,\cdots,\vec{s}_A^{2n}\\ \vec{s}_B^1,\cdots,\vec{s}_B^{2n}}}\Omega
\prod_{i=1}^n X_A(\vec{s}_A^{2i-1},\vec{s}_A^{2i})X_B(\vec{s}_B^{2i},\vec{s}_B^{2i+1}),
\end{split}
\end{equation}
where 
\begin{equation}\label{eq:CCNR omega}
\begin{split}
\Omega=\prod_{i=1}^n\mathrm{Pr}(\vec{s}_A^{2i-1},\vec{s}_B^{2i-1}|U_{2i-1})\mathrm{Pr}(\vec{s}_A^{2i},\vec{s}_B^{2i}|U_{2i})
\end{split}
\end{equation}
is the  product of probabilities, $U_i$ is chosen from the prepared unitary evolution group $\{U_1=U_A^1\otimes U_B^1,U_2=U_A^1\otimes U_B^2,\cdots,U_{2n-1}=U_A^n\otimes U_B^n,U_{2n}=U_A^n\otimes U_B^1\}$ as stated in Algo.~\ref{algo:CCNR measurement protocol}, $X(\vec{s},\vec{s}')$ is the weight function. For globally constructed $U_{Ai}$ and $U_{Bi}$
\begin{equation}
\begin{split}
X(\vec{s},\vec{s}')=X_g(\vec{s},\vec{s}')=-(-2^{|\vec{s}|})^{\delta_{\vec{s},\vec{s}'}},
\end{split}
\end{equation}
for locally constructed $U_A^i=\bigotimes_{j=1}^{N_A}u_{j}$,$U_B^i=\bigotimes_{j=1}^{N_B}u_j$
\begin{equation}
\begin{split}
X(\vec{s},\vec{s}')=X_l(\vec{s},\vec{s}')=\prod_{i=1}^{|\vec{s}|}X_g(s_i,s_i')=2^{|\vec{s}|}(-2)^{-D[\vec{s},\vec{s}']},
\end{split}
\end{equation}
where $|\vec{s}|$ is the length of $\vec{s}$, $D[\vec{s},\vec{s}']$ is the Hamming distance between $\vec{s}$ and $\vec{s}'$. In Eq.~\eqref{eq:CCNR M2n}, we set $\vec{s}_B^{2n+1}=\vec{s}_B^1$.

The estimator constructed above satisfies
\begin{equation}
\begin{split}
    \mathbb{E}_{U,\vec{s}}\left(\hat{M}_{2n}^{(2,3)}\right)&=M_{2n}^{(2,3)}
\end{split}
\end{equation}
where $\mbb{E}_{U,\vec{s}}$ denotes the expectation over all Clifford groups and measurement results.

\end{proposition}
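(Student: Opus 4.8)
The plan is to reduce the unbiasedness claim to two ingredients already available: the observable form of $M_{2n}^{(2,3)}$ from Theorem~\ref{theorem:main res}, and the $2$-design identity underlying randomized estimation of $\tr(\rho\sigma)$. First I would note that, viewing the $A|B$ bipartition as the two ``parties'' of a bipartite state, $A$ is R2-type and $B$ is R1-type for $\mathcal{R}_{(2,3)}$, so Theorem~\ref{theorem:main res} yields Eq.~\eqref{eq:CCNR observable}, $M_{2n}^{(2,3)}(\rho_{AB})=\tr[(O_A\otimes O_B)\rho_{AB}^{\otimes 2n}]$ with $O_A=\mbb{S}_A^{(1,2)}\otimes\cdots\otimes\mbb{S}_A^{(2n-1,2n)}$ and $O_B=\mbb{S}_B^{(2,3)}\otimes\cdots\otimes\mbb{S}_B^{(2n,1)}$, the $O_B$ indices read cyclically.

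Next I would compute $\mathbb{E}_{U,\vec{s}}[\hat{M}_{2n}^{(2,3)}]$ in two stages. Conditioned on the sampled unitaries, $\mathbb{E}_{\vec{s}}[\mathrm{Pr}(\vec{s}_A^{\,j},\vec{s}_B^{\,j}|U_j)]$ equals the Born probability $\tr[(U_j^\dagger\ketbra{\vec{s}_A^{\,j},\vec{s}_B^{\,j}}{\vec{s}_A^{\,j},\vec{s}_B^{\,j}}U_j)\rho_{AB}]$, and because the $2n$ batches of copies entering one round of Algorithm~\ref{algo:CCNR measurement protocol} are prepared independently, the average over outcomes factorises over $j$. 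Substituting $U_j=U_A^{\lceil j/2\rceil}\otimes U_B^{k_j}$ and collecting the $2n$ rank-one projectors, weighted as in Eq.~\eqref{eq:CCNR M2n}, rewrites $\mathbb{E}_{\vec{s}}[\hat{M}_{2n}^{(2,3)}]$ as $\tr[\widetilde{O}\,\rho_{AB}^{\otimes 2n}]$ for an explicit operator $\widetilde{O}$ built from the projectors and the weights $X_A,X_B$. I would then average $\widetilde{O}$ over the Clifford-sampled unitaries using the identity
\begin{equation}
\mathbb{E}_V\sum_{\vec{s},\vec{s}'}X(\vec{s},\vec{s}')\,(V^\dagger\ketbra{\vec{s}}{\vec{s}}V)\otimes(V^\dagger\ketbra{\vec{s}'}{\vec{s}'}V)=\mbb{S},
\end{equation}
which holds for $V$ an $N$-qubit Clifford with weight $X_g$ and for $V$ a product of single-qubit Cliffords with weight $X_l$; this is exactly the statement that $X_g,X_l$ invert the two-copy measurement channel (cf.\ Sec.~\ref{subsec:random} and Refs.~\cite{brydges2019probing,elben2019toolbox}).

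The decisive bookkeeping, and what I expect to be the main obstacle, is to check that the weight-function pairings in Eq.~\eqref{eq:CCNR M2n} coincide with the sharing pattern of the random unitaries in Algorithm~\ref{algo:CCNR measurement protocol}: $U_A^i$ is applied to precisely copies $2i-1$ and $2i$, which are exactly the ones joined by $X_A(\vec{s}_A^{2i-1},\vec{s}_A^{2i})$; and $U_B^{i+1}$ is applied to precisely copies $2i$ and $2i+1$ (cyclically, $U_B^{n+1}=U_B^1$), which are exactly the ones joined by $X_B(\vec{s}_B^{2i},\vec{s}_B^{2i+1})$ under the convention $\vec{s}_B^{2n+1}:=\vec{s}_B^{1}$. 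Given this alignment, the independent averages over $U_A^1,\dots,U_A^n$ and over $U_B^1,\dots,U_B^n$ act on disjoint pairs of copies, so the $2$-design identity applies pairwise and collapses $\widetilde{O}$ to $\bigotimes_{i=1}^n\mbb{S}_A^{(2i-1,2i)}\otimes\bigotimes_{i=1}^n\mbb{S}_B^{(2i,2i+1)}=O_A\otimes O_B$. Combining with the first step gives $\mathbb{E}_{U,\vec{s}}[\hat{M}_{2n}^{(2,3)}]=\tr[(O_A\otimes O_B)\rho_{AB}^{\otimes 2n}]=M_{2n}^{(2,3)}$. A minor point to dispose of en route is that the same $U_A^i$ is reused for two batches within a round, so the batches are not unconditionally independent; they are conditionally independent given the unitaries, which is all the factorisation in the second stage uses.
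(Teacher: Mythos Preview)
Your proposal is correct and follows essentially the same route as the paper: substitute Born's rule for the outcome probabilities, pull the random unitaries through the trace onto the diagonal weight operators, and invoke the $2$-design identity $\Phi_2(X)=\mathbb{S}$ pairwise on each $(2i-1,2i)$ block for $A$ and each $(2i,2i+1)$ block for $B$ to recover $O_A\otimes O_B$. Your explicit bookkeeping check that the weight-function pairings match the unitary-sharing pattern of Algorithm~\ref{algo:CCNR measurement protocol}, and your remark about conditional independence of the batches given the unitaries, are useful clarifications that the paper leaves implicit.
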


\begin{proof}
To prove this proposition, we define the postprocessing operator $X_{g/l}=\sum_{\vec{s},\vec{s}'}X_{g/l}(\vec{s},\vec{s}')\ketbra{\vec{s},\vec{s}'}{\vec{s},\vec{s}'}$, which can be used to generate SWAP operator \cite{brydges2019probing,elben2019toolbox} through a twirling channel,
\begin{equation}\label{eq:SWAP construct global}
\begin{split}
\Phi_2(X_g)=\mathbb{E}_{U\in\mathcal{E}}U^{\otimes 2}X_g U^{\dagger\otimes 2}=\mathbb{S},
\end{split}
\end{equation}
where $\mathcal{E}$ is a unitary 2-design, which is fulfilled by Clifford group.  Substituting Born's rule, $\mathbb{E}_{\vec{s}}\mathrm{Pr}(\vec{s}_A,\vec{s}_B|U_A,U_B)=\tr\left[\ketbra{\vec{s}_A,\vec{s}_B}{\vec{s}_A,\vec{s}_B}(U_A\otimes U_B)\rho_{AB}(U_A^\dagger\otimes U_B^\dagger)\right]$, into the right hand side of Eq.~\eqref{eq:CCNR M2n}, and consider the global case,
\begin{equation}
\begin{split}
R.H.S&=\sum_{\substack{\vec{s}_A^1,\cdots,\vec{s}_A^{2n}\\ \vec{s}_B^1,\cdots,\vec{s}_B^{2n}}}\mbb{E}_{U}\prod_{i=1}^n X_A(\vec{s}_A^{2i-1},\vec{s}_A^{2i})X_B(\vec{s}_B^{2i},\vec{s}_B^{2i+1})\\
&\tr\left\{\left(\bigotimes_{i=1}^n\ketbra{\vec{s}_A^{2i-1}\vec{s}_A^{2i}}{\vec{s}_A^{2i-1}\vec{s}_A^{2i}}\otimes\ketbra{\vec{s}_B^{2i-1}\vec{s}_B^{2i}}{\vec{s}_B^{2i-1}\vec{s}_B^{2i}}\right)\left(\bigotimes_{i=1}^n U_i\right)\rho_{AB}^{\otimes 2n}\left(\bigotimes_{i=1}^n U_i\right)^\dagger\right\}\\
&=\mbb{E}_{U}\tr\left\{\left(\bigotimes_{i=1}^n X_A^{(2i-1,2i)}\otimes X_B^{(2i,2i+1)}\right)\left(\bigotimes_{i=1}^n U_i\right)\rho_{AB}^{\otimes 2n}\left(\bigotimes_{i=1}^n U_i\right)^\dagger\right\}\\
&=\mbb{E}_{U}\tr\left\{\left(\bigotimes_{i=1}^n U_i\right)^\dagger\left(\bigotimes_{i=1}^n X_A^{(2i-1,2i)}\otimes X_B^{(2i,2i+1)}\right)\left(\bigotimes_{i=1}^n U_i\right)\rho_{AB}^{\otimes 2n}\right\}\\
&=\tr\left\{\left(\bigotimes_{i=1}^n \Phi_2(X_A^{(2i-1,2i)})\otimes \Phi_2(X_B^{(2i,2i+1)})\right)\rho_{AB}^{\otimes 2n}\right\}\\
&=\tr\left\{\left(\bigotimes_{i=1}^n \mathbb{S}_A^{(2i-1,2i)}\otimes\mathbb{S}_B^{(2i,2i+1)} \right)\rho_{AB}^{\otimes 2n}\right\}\\
&=M_{2n}^{(2,3)}.
\end{split}
\end{equation}
Here we use subscript to denote the party and superscript to denote the copy number, like $X_A^{(2i-1,2i)}$ is the operator acting on the $A$ parties of the $(2i-1)$-th and $2i$-th copies of $\rho_{AB}$.

The proof of local case is quite similar to global one. Notice that $X_l=\bigotimes_{i=1}^N X_{gi}$, hence
\begin{equation}\label{eq:SWAP construct local}
\begin{split}
\mathbb{E}_U\left(\bigotimes_{i=1}^Nu_i\right)^{\otimes 2}X_l\left(\bigotimes_{i=1}^Nu_i\right)^{\dagger\otimes 2}=\mathbb{E}_U\left(\bigotimes_{i=1}^Nu_i^{\otimes 2}X_{gi}u_i^{\dagger \otimes 2}\right)=\bigotimes_{i=1}^N\Phi_2(X_{gi})=\bigotimes_{i=1}^N\mathbb{S}_i=\mathbb{S}.
\end{split}
\end{equation}
Then, Eq.~\eqref{eq:CCNR M2n} also holds for local unitary.

\end{proof}

\begin{figure}[htbp]
    \centering
    \includegraphics[scale=0.14]{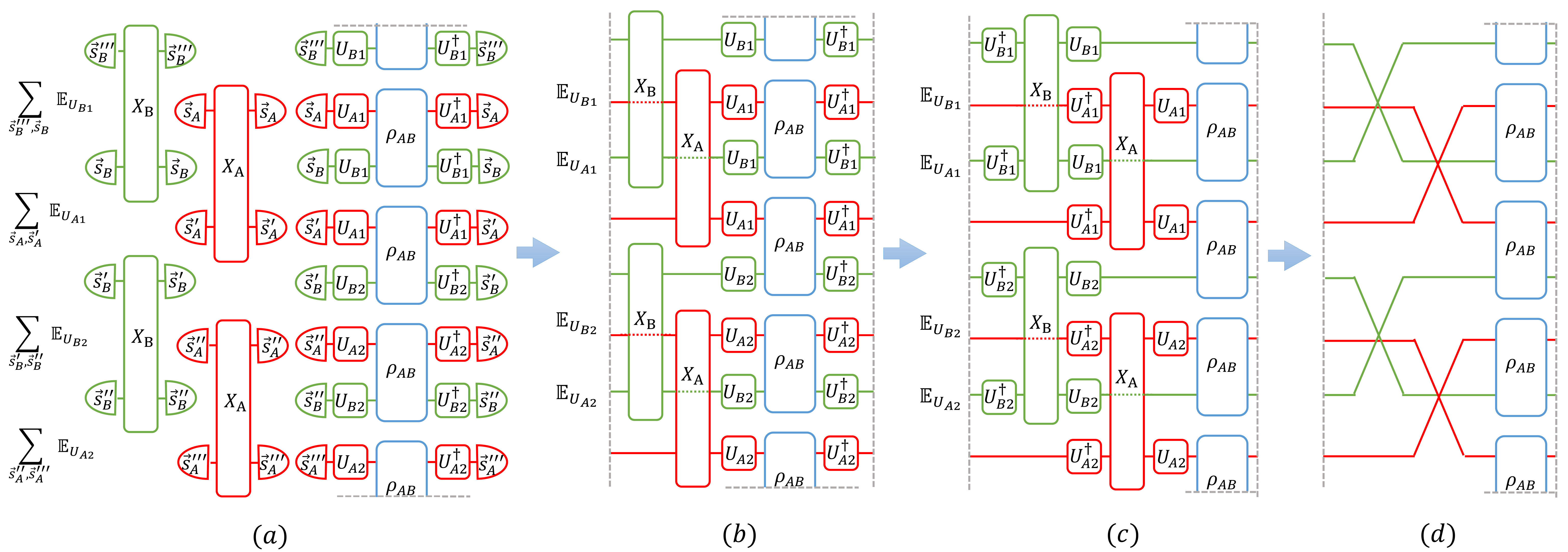}
    \caption{Graphical illustration of the proof of Proposition \ref{prop:CCNR measurement}. The grey dashed lines denote the periodic boundary condition and the colored dashed lines in (b) and (c) indicate that the corresponding indices represented by the lines do not contract with the tensors, they just pass through them. $X_A$ and $X_B$ are the data postprocessing operators, which are diagonal, resulting in the equivalence of (a) and (b).  In (d), we use the X-shaped signs to denote the SWAP operators.}
    \label{fig:proof of M4}
\end{figure}

Here we provide a graphical illustration to sketch the proof, Fig.~\ref{fig:proof of M4}, taking $M_4^{(2,3)}$ as an example. By multiplying the probabilities for four times, $\rho_{AB}^{\otimes 4}$ is introduced into the formula. The postprocessing function, together with measurement bases, constructs the diagonal postprocessing operator $X_A$ and $X_B$. Thus, $\hat{M}_4^{(2,3)}$ is equivalent to the measurement of diagonal observable $X$ on the four evolved states, see Fig.~\ref{fig:proof of M4}(b). Then, the unitary matrices can be moved to the both sides of $X$ 
based on the property of trace function, see Fig.~\ref{fig:proof of M4}(c). After calculating the expectation over unitary group, $X$ operators are turned into SWAP operators according to Eq.~\eqref{eq:SWAP construct global} and Eq.~\eqref{eq:SWAP construct local}, and $M_4^{(2,3)} = \tr[\left(\mathbb{S}_A^{(1,2)}\otimes\mathbb{S}_A^{(3,4)}\otimes\mathbb{S}_B^{(2,3)}\otimes\mathbb{S}_B^{(4,1)}\right)\rho_{AB}^{\otimes 4}]$ is successfully constructed, see Fig.~\ref{fig:proof of M4}(d).  

Now, we also take $M_4^{(2,3)}$ as an example to analyze the sample complexity of this randomized measurement protocol. The general case will be discussed in next section. To simplify our analysis, here we adopt another representation of $\hat{M}_4^{(2,3)}$. As stated in Algorithm \ref{algo:CCNR measurement protocol}, the whole process contains $N_U$ rounds of experiments, and in each round of experiment, we record the measurement results and label them as $\{r_{A1,B1}^1,\cdots,r_{A1,B1}^{N_M}\}$, ..., $\{r_{A2,B1}^1,\cdots,r_{A2,B1}^{N_M}\}$, where $r_{A_i,B_j}^k$ is the $k$-th measurement result measuring state operated by $U_{Ai}\otimes U_{Bj}$. Using these data, the unbiased estimator of $M_4^{(2,3)}$ can be alternatively represented as
\begin{equation}\label{eq:M4estimator2}
\begin{split}
\hat{M}_4^{(2,3)}=\frac{1}{N_M^4}\sum_{i,j,k,l=1}^{N_M}\tr\left[\hat{Q}\left(\hat{r}_{A1,B1}^i\otimes\hat{r}_{A1,B2}^j\otimes\hat{r}_{A2,B2}^k\otimes\hat{r}_{A2,B1}^l\right)\right].
\end{split}
\end{equation}
After $N_U$ independent rounds of experiments, we take an average of estimators in each round to get our final estimator. Here, we use rank-1 matrices $\hat{r}=\ketbra{r}{r}$ to label our measurement results. $\hat{r}_{A1,B1}^i$, $\hat{r}_{A1,B2}^j$, $\hat{r}_{A2,B2}^k$ and $\hat{r}_{A2,B1}^l$ together span a $4$-fold Hilbert space $\mathcal{H}_A^{\otimes 4}\otimes\mathcal{H}_B^{\otimes 4}$. $\hat{Q}=X_A^{(1,2)}\otimes X_A^{(3,4)}\otimes X_B^{(2,3)}\otimes X_B^{(4,1)}$ is the postprocessing operator. In fact, one can easily verify that the definition of $\hat{M}_4^{(2,3)}$ in Eq.\eqref{eq:CCNR M2n} is actually the same as the definition in Eq.\eqref{eq:M4estimator2}. After analytical analysis, we find

\begin{proposition}\label{prop:random variance}
To make sure the estimator of $M_4^{(2,3)}$ defined in Proposition.~\ref{prop:CCNR measurement} satisfies $|\hat{M}_4^{(2,3)}-M_4^{(2,3)}|\le\epsilon$ with probability at least $1-\delta$, the total times of experiments scale like
\begin{equation}
\begin{split}
4\times N_U\times N_M=O(\frac{D^{\frac{1}{2}}}{\epsilon^2\delta})
\end{split}
\end{equation}
for global random protocol, and 
\begin{equation}
\begin{split}
4\times N_U\times N_M=O(\frac{D^{1.187}}{\epsilon^2\delta})
\end{split}
\end{equation}
for local random protocol.
\end{proposition}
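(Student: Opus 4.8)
The plan is to bound the variance of the estimator $\hat{M}_4^{(2,3)}$ as an explicit function of the two sample sizes $N_U$ (number of random-unitary rounds) and $N_M$ (number of projective measurements per configuration), to optimize the split of the experimental budget $T=4N_U N_M$ between them, and finally to convert the variance bound into the claimed confidence statement via Chebyshev's inequality, $\mathrm{Pr}\big[|\hat{M}_4^{(2,3)}-M_4^{(2,3)}|>\epsilon\big]\le \mathrm{Var}(\hat{M}_4^{(2,3)})/\epsilon^2$, so that it suffices to drive the variance below $\epsilon^2\delta$. I would work from the equivalent form of the estimator in Eq.~\eqref{eq:M4estimator2}, which separates the two sources of randomness: the unitaries $U_A^i,U_B^i$ drawn once per round, and the computational-basis outcomes $\hat{r}$ drawn $N_M$ times per configuration. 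Since the $N_U$ rounds are i.i.d.\ and averaged, the final variance carries an overall factor $1/N_U$, and the task reduces to bounding the single-round variance.

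For one round I would invoke the law of total variance, writing the single-round variance as $\mathbb{E}_U\big[\mathrm{Var}_{\vec{s}}(\,\cdot\mid U)\big]+\mathrm{Var}_U\big(\mathbb{E}_{\vec{s}}[\,\cdot\mid U]\big)$. The conditional variance over outcomes I would expand over coincidence patterns of the four measurement-index labels appearing in the two copies of $\hat{M}_4^{(2,3)}$ that enter $\mathbb{E}[(\hat{M}_4^{(2,3)})^2]$: outcomes recorded under the same unitary configuration are i.i.d., the ``no-coincidence'' pattern exactly reproduces $\big(\mathbb{E}_{\vec{s}}[\hat{M}_4^{(2,3)}\mid U]\big)^2$ and cancels, so only patterns with $c\ge1$ coincidences survive, each contributing a term proportional to $N_M^{-c}$. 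This yields $\mathrm{Var}_{\mathrm{round}}=D_0+\sum_{c=1}^{4}D_c/N_M^{\,c}$, where $D_0=\mathrm{Var}_U(\mathbb{E}_{\vec{s}}[\hat{M}_4^{(2,3)}\mid U])$ is the unitary-sampling noise and each $D_c$ collects the covariance terms in which $c$ of the outcome matrices are ``doubled'' (hence, after $\mathbb{E}_{\vec{s}}$, replaced by the evolved state) while the remaining $2(4-c)$ are single.

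The technical core is to bound $D_0$ and $\mathbb{E}_U[D_c]$ as powers of the Hilbert-space dimension $D$. Here I would use that averaging the outcome matrices $\hat r$ over the $2$-design (global Clifford, or tensored single-qubit Clifford) turns each diagonal postprocessing operator $X_{A/B}$ into a SWAP, as in Eqs.~\eqref{eq:SWAP construct global}--\eqref{eq:SWAP construct local}, and more generally that every second moment collapses, through the Weingarten calculus, to a trace of a fixed tensor-network diagram built from the $X_{A/B}$'s, SWAPs, identities and copies of $\rho$ --- exactly as the $\hat{O}_m$ bookkeeping of Sec.~\ref{subsec:shadow} (cf.\ Table~\ref{table:O_4calculation} and Table~\ref{table:O_3calculation}) reduced the shadow variance to traces of $\hat{O}_1,\dots,\hat{O}_4$. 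The elementary inputs are $\tr(X_g)=d$, $\tr(X_g^2)\sim d^3$ in the global case and the single-qubit values $\tr(X_{g,1})=2,\ \tr(X_{g,1}^2)=10$ (with $X_l=\bigotimes_i X_{g,i}$) in the local case, together with $\tr(\mathbb{S})=d$ and $\tr(\rho^2)\le1$. Tracking the leading power of $D$ in each diagram, the dominant contributions turn out to be the all-coincidence shot-noise term $D_4$ and the unitary term $D_0$: for the global ensemble $D_0=O(1)$ and $\mathbb{E}_U[D_4]=O(D^2)$, while the local ensemble accumulates extra per-qubit factors and hence a larger, non-integer power.

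It remains to optimize the budget. With $T=4N_U N_M$ held fixed, $\mathrm{Var}(\hat{M}_4^{(2,3)})=(4N_M/T)\big(D_0+\sum_c \overline{D_c}/N_M^{\,c}\big)$, and its minimum over $N_M$ balances the increasing term $D_0 N_M$ against the leading decreasing term $\overline{D_4}/N_M^{\,3}$, giving $N_M^\ast\sim(\overline{D_4}/D_0)^{1/4}$ and $\mathrm{Var}^\ast=O\big(T^{-1}(D_0^3\,\overline{D_4})^{1/4}\big)$. Setting $\mathrm{Var}^\ast=\epsilon^2\delta$ and inserting the global bounds $D_0=O(1),\ \overline{D_4}=O(D^2)$ gives $T=O\big(D^{1/2}/(\epsilon^2\delta)\big)$; inserting the local bounds gives $T=O\big(D^{1.187}/(\epsilon^2\delta)\big)$, the exponent $1.187$ being fixed by the per-qubit constants that govern $\overline{D_4}$ in the local case. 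The main obstacle is the Weingarten bookkeeping: enumerating the trace diagrams that arise in $D_0$ and $\mathbb{E}_U[D_c]$ for both ensembles, proving that the claimed diagrams dominate in $D$, and pinning down the exact local exponent --- a more elaborate version of Tables~\ref{table:O_4calculation}--\ref{table:O_3calculation}, since now the random-unitary average and the outcome average must be carried out jointly. The downstream steps (budget optimization and Chebyshev) are then routine.
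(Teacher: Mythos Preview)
Your plan is correct and follows essentially the same route as the paper: decompose the second moment of $\hat{M}_4^{(2,3)}$ by coincidence patterns of the outcome indices (the paper's $\hat{Q}_0,\dots,\hat{Q}_4$), evaluate each piece via the twirling channels $\Phi_2(X^2),\Phi_3(X^{(1,2,3)}),\Phi_4(X^{\otimes 2})$ (the paper's Eq.~\eqref{eq:twirling conclusion}), keep the leading powers of $D$, and finish with Chebyshev. Your packaging via the law of total variance and a fixed-budget optimization over $N_M$ is a clean equivalent of the paper's sequential choice $N_M=O(D^{1/2}),\,N_U=O(1/(\epsilon^2\delta))$ (global) and $N_U=O((6/5)^{2N}/(\epsilon^2\delta)),\,N_M=O((5/2)^{N/2})$ (local); one small imprecision is that the local exponent $1.187\approx\log_2\!\big(18\sqrt{2}/(5\sqrt{5})\big)$ is fixed by \emph{both} $D_0\sim(6/5)^{2N}$ and $\overline{D_4}\sim 9^N$ through your own formula $(D_0^3\overline{D_4})^{1/4}$, not by $\overline{D_4}$ alone.
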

\begin{proof}

To verify this proposition, we also need to evaluate the variance of this estimator
\begin{equation}
\begin{split}
\mathrm{Var}(\hat{M}_4^{(2,3)})=\mathbb{E}\left[\left(\hat{M}_4^{(2,3)}\right)^2\right]-\left(M_4^{(2,3)}\right)^2.
\end{split}
\end{equation}
$\left(\hat{M}_4^{(2,3)}\right)^2$ can be decomposed as
\begin{equation}\label{eq:randomM4}
\begin{split}
\left(\hat{M}_4^{(2,3)}\right)^2=\frac{1}{N_M^8}\sum_{i,j,k,l=1}^{N_M}\sum_{i',j',k',l'=1}^{N_M}\tr\left[\hat{Q}\left(\hat{r}_{A1,B1}^i\otimes\hat{r}_{A1,B2}^j\otimes\hat{r}_{A2,B2}^k\otimes\hat{r}_{A2,B1}^l\right)\right]\tr\left[\hat{Q}\left(\hat{r}_{A1,B1}^{i'}\otimes\hat{r}_{A1,B2}^{j'}\otimes\hat{r}_{A2,B2}^{k'}\otimes\hat{r}_{A2,B1}^{l'}\right)\right].
\end{split}
\end{equation}
Following the similar idea of the analysis of the variance of shadow protocol, these $N_M^8$ terms can be divided into several groups by the relation between the indices $i$ and $i'$; $j$ and $j'$; $k$ and $k'$; and $l$ and $l'$. First, because $\hat{Q}$ is a diagonal matrix in the computational basis, we have
\begin{equation}
\begin{split}
\tr\left[\hat{Q}\left(\hat{r}_{A1,B1}^i\otimes\hat{r}_{A1,B2}^j\otimes\hat{r}_{A2,B2}^k\otimes\hat{r}_{A2,B1}^l\right)\right]^2=\tr\left[\hat{Q}^2\left(\hat{r}_{A1,B1}^i\otimes\hat{r}_{A1,B2}^j\otimes\hat{r}_{A2,B2}^k\otimes\hat{r}_{A2,B1}^l\right)\right].
\end{split}
\end{equation}
Then, it is helpful to introduce a new processing operator
\begin{equation}
\begin{split}
X_{g/l}^{(1,2,3)}=\left(X_{g/l}^{(1,2)}\otimes\mathbb{I}^{3}\right)\left(\mathbb{I}^{1}\otimes X_{g/l}^{(2,3)}\right)=\sum_{\vec{s}^1,\vec{s}^3}\sum_{\vec{s}_2}X_{g/l}(\vec{s}^1,\vec{s}^2)X_{g/l}(\vec{s}^2,\vec{s}^3)\ketbra{\vec{s}^1,\vec{s}^2,\vec{s}^3}{\vec{s}^1,\vec{s}^2,\vec{s}^3},
\end{split}
\end{equation}
which satisfies
\begin{equation}
\begin{split}
\tr\left[X^{(1,2)}\left(\hat{r}^i\otimes\hat{r}^j\right)\right]\tr\left[X^{(1,2)}\left(\hat{r}^j\otimes\hat{r}^k\right)\right]=\tr\left[X^{(1,2,3)}\left(\hat{r}^i\otimes\hat{r}^j\otimes\hat{r}^k\right)\right]
\end{split}
\end{equation}
when $i$ is not equal to $k$. Lastly, one can easily prove that
\begin{equation}
\begin{split}
\tr\left[X^{(1,2)}\left(\hat{r}^i\otimes\hat{r}^j\right)\right]\tr\left[X^{(1,2)}\left(\hat{r}^{i'}\otimes\hat{r}^{j'}\right)\right]=\tr\left[\left(X^{(1,2)}\otimes X^{(3,4)}\right)\left(\hat{r}^i\otimes\hat{r}^j\otimes\hat{r}^{i'}\otimes\hat{r}^{j'}\right)\right]
\end{split}
\end{equation}
for $i\neq i'$ and $j\neq j'$. Thus, Eq.~\eqref{eq:randomM4} can be rewritten as
\begin{equation}\label{eq:randomM_4square}
\begin{split}
\left(\hat{M}_4^{(2,3)}\right)^2=&\frac{1}{N_M^8}\sum_{i,j,k,l=1}^{N_M}\tr\left[\hat{Q}_0\left(\hat{r}_{A1,B1}^i\otimes\hat{r}_{A1,B2}^j\otimes\hat{r}_{A2,B2}^k\otimes\hat{r}_{A2,B1}^l\right)\right]\\
+&\frac{4}{N_M^8}\sum_{i,j,k,l=1}^{N_M}\sum_{i'\neq i}\tr\left[\hat{Q}_1\left(\hat{r}_{A1,B1}^i\otimes\hat{r}_{A1,B1}^{i'}\otimes\hat{r}_{A1,B2}^j\otimes\hat{r}_{A2,B2}^k\otimes\hat{r}_{A2,B1}^l\right)\right]\\
+&\frac{2}{N_M^8}\sum_{i,j,k,l=1}^{N_M}\sum_{i'\neq i,j'\neq j}\tr\left[\hat{Q}_2\left(\hat{r}_{A1,B1}^i\otimes\hat{r}_{A1,B1}^{i'}\otimes\hat{r}_{A1,B2}^j\otimes\hat{r}_{A1,B2}^{j'}\otimes\hat{r}_{A2,B2}^k\otimes\hat{r}_{A2,B1}^l\right)\right]\\
+&\frac{2}{N_M^8}\sum_{i,j,k,l=1}^{N_M}\sum_{i'\neq i,k'\neq k}\tr\left[\hat{Q}_2'\left(\hat{r}_{A1,B1}^i\otimes\hat{r}_{A1,B1}^{i'}\otimes\hat{r}_{A1,B2}^j\otimes\hat{r}_{A2,B2}^k\otimes\hat{r}_{A2,B2}^{k'}\otimes\hat{r}_{A2,B1}^l\right)\right]\\
+&\frac{2}{N_M^8}\sum_{i,j,k,l=1}^{N_M}\sum_{i'\neq i,l'\neq l}\tr\left[\hat{Q}_2''\left(\hat{r}_{A1,B1}^i\otimes\hat{r}_{A1,B1}^{i'}\otimes\hat{r}_{A1,B2}^j\otimes\hat{r}_{A2,B2}^k\otimes\hat{r}_{A2,B1}^l\otimes\hat{r}_{A2,B1}^{l'}\right)\right]\\
+&\frac{4}{N_M^8}\sum_{i,j,k,l=1}^{N_M}\sum_{\substack{i'\neq i,j'\neq j,\\k'\neq k}}\tr\left[\hat{Q}_3\left(\hat{r}_{A1,B1}^i\otimes\hat{r}_{A1,B1}^{i'}\otimes\hat{r}_{A1,B2}^j\otimes\hat{r}_{A1,B2}^{j'}\otimes\hat{r}_{A2,B2}^k\otimes\hat{r}_{A2,B2}^{k'}\otimes\hat{r}_{A2,B1}^l\right)\right]\\
+&\frac{1}{N_M^4}\sum_{i,j,k,l=1}^{N_M}\sum_{\substack{i'\neq i,j'\neq j,\\ k'\neq k,l'\neq l}}\tr\left[\hat{Q}_4\left(\hat{r}_{A1,B1}^i\otimes\hat{r}_{A1,B1}^{i'}\otimes\hat{r}_{A1,B2}^j\otimes\hat{r}_{A1,B2}^{j'}\otimes\hat{r}_{A2,B2}^k\otimes\hat{r}_{A2,B2}^{k'}\otimes\hat{r}_{A2,B1}^l\otimes\hat{r}_{A2,B1}^{l'}\right)\right].
\end{split}
\end{equation}
In this equation, different rows correspond to different number of equal indices, like the first row stands for $i=i'$, $j=j'$, $k=k'$ and $l=l'$. And the operators appear in Eq.~\eqref{eq:randomM_4square} are 
\begin{equation}
\begin{split}
\hat{Q}_0&=X_A^{(1,2)2}\otimes X_A^{(3,4)2}\otimes X_B^{(2,3)2}\otimes X_B^{(4,1)2}\\
\hat{Q}_1&=X_A^{(1,3,2)}\otimes X_A^{(4,5)2}\otimes X_B^{(3,4)2}\otimes X_B^{(1,5,2)}\\
\hat{Q}_2&=X_A^{(1,3)}\otimes X_A^{(2,4)}\otimes X_A^{(5,6)2}\otimes X_B^{(1,6,2)}\otimes X_B^{(3,5,4)}\\
\hat{Q}_2'&=X_A^{(1,3,2)}\otimes X_A^{(4,6,5)}\otimes X_B^{(4,3,5)}\otimes X_B^{(1,6,2)}\\
\hat{Q}_2''&=X_A^{(1,3,2)}\otimes X_A^{(5,4,6)}\otimes X_B^{(3,4)2}\otimes X_B^{(5,1)}\otimes X_B^{(6,2)}\\
\hat{Q}_3&=X_A^{(1,3)}\otimes X_A^{(2,4)}\otimes X_A^{(5,7,6)}\otimes X_B^{(3,5)}\otimes X_B^{(4,6)}\otimes X_B^{(1,7,2)}\\
\hat{Q}_4&=X_A^{(1,3)}\otimes X_A^{(2,4)}\otimes X_A^{(5,7)}\otimes X_A^{(6,8)}\otimes X_B^{(3,5)}\otimes X_B^{(4,6)}\otimes X_B^{(7,1)}\otimes X_B^{(8,2)}
\end{split}
\end{equation}
respectively. Taking an average of it, we get
\begin{equation}
\begin{split}
\mathbb{E}_{U,r}\left[\left(\hat{M}_4^{(2,3)}\right)^2\right]&=\frac{1}{N_M^4}\mathbb{E}_{U,r}\left\{\tr\left[\hat{Q}_0\left(\hat{r}_{A1,B1}\otimes\hat{r}_{A1,B2}\otimes\hat{r}_{A2,B2}\otimes\hat{r}_{A2,B1}\right)\right]\right\}\\
&+\frac{4(N_M-1)}{N_M^4}\mathbb{E}_{U,r}\left\{\tr\left[\hat{Q}_1\left(\hat{r}_{A1,B1}\otimes\hat{r}_{A1,B1}'\otimes\hat{r}_{A1,B2}\otimes\hat{r}_{A2,B2}\otimes\hat{r}_{A2,B1}\right)\right]\right\}\\
&+\frac{2(N_M-1)^2}{N_M^4}\mathbb{E}_{U,r}\left\{\tr\left[\hat{Q}_2\left(\hat{r}_{A1,B1}\otimes\hat{r}_{A1,B1}'\otimes\hat{r}_{A1,B2}\otimes\hat{r}_{A1,B2}'\otimes\hat{r}_{A2,B2}\otimes\hat{r}_{A2,B1}\right)\right]\right\}\\
&+\frac{2(N_M-1)^2}{N_M^4}\mathbb{E}_{U,r}\left\{\tr\left[\hat{Q}_2'\left(\hat{r}_{A1,B1}\otimes\hat{r}_{A1,B1}'\otimes\hat{r}_{A1,B2}\otimes\hat{r}_{A2,B2}\otimes\hat{r}_{A2,B2}'\otimes\hat{r}_{A2,B1}\right)\right]\right\}\\
&+\frac{2(N_M-1)^2}{N_M^4}\mathbb{E}_{U,r}\left\{\tr\left[\hat{Q}_2''\left(\hat{r}_{A1,B1}\otimes\hat{r}_{A1,B1}'\otimes\hat{r}_{A1,B2}\otimes\hat{r}_{A2,B2}\otimes\hat{r}_{A2,B1}\otimes\hat{r}_{A2,B1}'\right)\right]\right\}\\
&+\frac{4(N_M-1)^3}{N_M^4}\mathbb{E}_{U,r}\left\{\tr\left[\hat{Q}_3\left(\hat{r}_{A1,B1}\otimes\hat{r}_{A1,B1}'\otimes\hat{r}_{A1,B2}\otimes\hat{r}_{A1,B2}'\otimes\hat{r}_{A2,B2}\otimes\hat{r}_{A2,B2}'\otimes\hat{r}_{A2,B1}\right)\right]\right\}\\
&+\frac{(N_M-1)^4}{N_M^4}\mathbb{E}_{U,r}\left\{\tr\left[\hat{Q}_4\left(\hat{r}_{A1,B1}\otimes\hat{r}_{A1,B1}'\otimes\hat{r}_{A1,B2}\otimes\hat{r}_{A1,B2}'\otimes\hat{r}_{A2,B2}\otimes\hat{r}_{A2,B2}'\otimes\hat{r}_{A2,B1}\otimes\hat{r}_{A2,B1}'\right)\right]\right\},
\end{split}
\end{equation}
where $\mathbb{E}_{U,r}$ denotes taking average over both random unitary and measurement results. Taking $\hat{Q}_0$ term as an example, substituting Born's rule,
\begin{equation}
\begin{split}
&\mathbb{E}_{U,r}\left\{\tr\left[\hat{Q}_0\left(\hat{r}_{A1,B1}\otimes\hat{r}_{A1,B2}\otimes\hat{r}_{A2,B2}\otimes\hat{r}_{A2,B1}\right)\right]\right\}\\
=&\mathbb{E}_U\sum_{\substack{r_{A1,B1},r_{A1,B2},\\r_{A2,B2},r_{A2,B1}}}\tr\left[\hat{r}_{A1,B1}(U_{A1}\otimes U_{B1})\rho(U_{A1}\otimes U_{B1})^{\dagger}\right]\tr\left[\hat{r}_{A1,B2}(U_{A1}\otimes U_{B2})\rho(U_{A1}\otimes U_{B2})^{\dagger}\right]\\
&\tr\left[\hat{r}_{A2,B2}(U_{A2}\otimes U_{B2})\rho(U_{A2}\otimes U_{B2})^{\dagger}\right]\tr\left[\hat{r}_{A2,B1}(U_{A2}\otimes U_{B1})\rho(U_{A2}\otimes U_{B1})^{\dagger}\right]\\
&\tr\left[\hat{Q}_0\left(\hat{r}_{A1,B1}\otimes\hat{r}_{A1,B2}\otimes\hat{r}_{A2,B2}\otimes\hat{r}_{A2,B1}\right)\right]\\
=&\mathbb{E}_U\tr\left[\hat{Q}_0\left(U_{A1}^{\otimes 2}\otimes U_{A2}^{\otimes 2}\otimes U_{B1}^{\otimes2}\otimes U_{B2}^{\otimes 2}\right)\rho^{\otimes4}\left(U_{A1}^{\otimes 2}\otimes U_{A2}^{\otimes 2}\otimes U_{B1}^{\otimes2}\otimes U_{B2}^{\otimes 2}\right)^\dagger\right]\\
=&\mathbb{E}_U\tr\left[\left(U_{A1}^{\otimes 2\dagger} X_A^{(1,2)2}U_{A1}^{\otimes 2}\right)\otimes\left(U_{A2}^{\otimes 2\dagger} X_A^{(3,4)2}U_{A2}^{\otimes 2}\right)\otimes\left(U_{B1}^{\otimes2\dagger} X_B^{(2,3)2}U_{B1}^{\otimes2}\right)\otimes\left(U_{B2}^{\otimes2\dagger} X_B^{(4,1)2}U_{B2}^{\otimes2}\right)\rho^{\otimes 4}\right]\\
=&\tr\left[\left(\Phi_2(X_A^{(1,2)2})^{\otimes 2}\otimes\Phi_2(X_B^{(1,2)2})^{\otimes 2}\right)\rho^{\otimes 4}\right].
\end{split}
\end{equation}
Following the same idea, $\mathbb{E}_{U,r}\left[\left(\hat{M}_4^{(2,3)}\right)^2\right]$ can be rewritten as
\begin{equation}\label{eq:M_4squareexpectation}
\begin{split}
\mathbb{E}_{U,r}\left[\left(\hat{M}_4^{(2,3)}\right)^2\right]&=\frac{1}{N_M^4}\tr\left[\left(\Phi_2(X_A^{(1,2)2})^{\otimes 2}\otimes\Phi_2(X_B^{(1,2)2})^{\otimes 2}\right)\rho^{\otimes 4}\right]\\
&+\frac{4(N_M-1)}{N_M^4}\tr\left[\left(\Phi_3(X_A^{(1,2,3)})\otimes\Phi_2(X_A^{(1,2)2})\otimes\Phi_2(X_B^{(1,2)2})\otimes\Phi_3(X_B^{(1,2,3)})\right)\rho^{\otimes 5}\right]\\
&+\frac{2(N_M-1)^2}{N_M^4}\tr\left[\Phi_4(X_A^{(1,2)\otimes 2})\otimes\Phi_2(X_A^{(1,2)2})\otimes\Phi_3(X_B^{(1,2,3)})^{\otimes 2}\rho^{\otimes 6}\right]\\
&+\frac{2(N_M-1)^2}{N_M^4}\tr\left[\Phi_3(X_A^{(1,2,3)})^{\otimes2} \otimes\Phi_3(X_B^{(1,2,3)})^{\otimes 2}\rho^{\otimes 6}\right]\\
&+\frac{2(N_M-1)^2}{N_M^4}\tr\left[\Phi_3(X_A^{(1,2,3)})^{\otimes2}\otimes\Phi_2(X_B^{(1,2)2})\otimes\Phi_4(X_B^{(1,2)\otimes 2})\rho^{\otimes 6}\right]\\
&+\frac{4(N_M-1)^3}{N_M^4}\tr\left[\left(\Phi_4(X_A^{(1,2)\otimes 2})\otimes\Phi_3(X_A^{(1,2,3)})\otimes\Phi_4(X_B^{(1,2)\otimes 2})\otimes\Phi_3(X_B^{(1,2,3)})\right)\rho^{\otimes 7}\right]\\
&+\frac{(N_M-1)^4}{N_M^4}\tr\left[\left(\Phi_4(X_A^{(1,2)\otimes 2})^{\otimes 2}\otimes \Phi_4(X_B^{(1,2)\otimes 2})^{\otimes 2}\right)\rho^{\otimes 8}\right].
\end{split}
\end{equation}
In this equation, we make the assumption that the unitary ensemble we use is at least a unitary 4-design, which is in fact not fulfilled by Clifford group \cite{zhu2017multiqubit}. However, this assumption does not affect the leading term (we will see later it is indeed the leading term) in the above equation, $\tr\left[\left(\Phi_2(X_A^{(1,2)2})^{\otimes 2}\otimes\Phi_2(X_B^{(1,2)2})^{\otimes 2}\right)\rho^{\otimes 4}\right]/N_M^4$, and our result meets well with the numerical results. 

According to the definition of twirling channel, introduced in Sec.~\ref{subsec:randomprotocol}, $\Phi_t(\cdot)$ terms in Eq.~\eqref{eq:M_4squareexpectation} can be derived,
\begin{equation}\label{eq:twirling conclusion}
\begin{split}
\Phi_2(X^{(1,2)2})&=D\hat{W}_{(1,2)}+(D-1)\hat{W}_{(2,1)}=D\mathbb{I}+(D-1)\mathbb{S}\\
\Phi_3(X^{(1,2,3)})&=-\frac{1}{D+2}\left(\hat{W}_{(1,2,3)}+\hat{W}_{(1,3,2)}+\hat{W}_{(2,1,3)}\right)+\frac{D+1}{D+2}\left(\hat{W}_{(3,2,1)}+\hat{W}_{(3,1,2)}+\hat{W}_{(2,3,1)}\right)\\
\Phi_4(X^{(1,2)\otimes2})&=\frac{2}{D(D+2)(D+3)}\left(\hat{W}_{(1,2,3,4)}+\hat{W}_{(1,2,4,3)}+\hat{W}_{(2,1,3,4)}\right)\\
&-\frac{D+1}{D(D+2)(D+3)}\left(\hat{W}_{(1,3,2,4)}+\hat{W}_{((1,3,4,2))}+\hat{W}_{(1,4,2,3)}+\hat{W}_{(1,4,3,2)}+\hat{W}_{(2,3,1,4)}+\hat{W}_{(2,3,4,1)}+\cdots\right)\\
&+\frac{D+1}{D(D+3)}\left(\hat{W}_{(3,4,1,2)}+\hat{W}_{(3,4,2,1)}+\hat{W}_{(4,3,1,2)}+\hat{W}_{(4,3,2,1)}\right)\\
&+\frac{D(D+2)(D+3)+2}{D(D+2)(D+3)}\hat{W}_{(2,1,4,3)}.
\end{split}
\end{equation}
Here, the subscript $(i,j,k,l)$ represents the permutation operator that can convert $(1,2,3,4)$ to it. In the large dimension regime, $D\gg 1$, we only keep the leading terms,
\begin{equation}
\begin{split}
\Phi_2(X^{(1,2)2})&=D\hat{W}_{(1,2)}+(D-1)\hat{W}_{(2,1)}\approx D\mathbb{I}+D\mathbb{S}\\
\Phi_3(X^{(1,2,3)})&\approx\hat{W}_{(3,2,1)}+\hat{W}_{(3,1,2)}+\hat{W}_{(2,3,1)}\\
\Phi_4(X^{(1,2)\otimes2})&\approx\hat{W}_{(2,1,4,3)}.
\end{split}
\end{equation}
Substituting it into Eq.~\eqref{eq:M_4squareexpectation}, we get
\begin{equation}
\begin{split}
\mathrm{Var}\left(\hat{M}_4\right)\leq\frac{D^2f_0(\rho)}{N_M^4}+\frac{D f_1(\rho)}{N_M^3}+\frac{d_Af_2(\rho)}{N_M^2}+\frac{f_2'(\rho)}{N_M^2}+\frac{d_Bf_2''(\rho)}{N_M^2}+\frac{f_3(\rho)}{N_M}+f_4(\rho),
\end{split}
\end{equation}
where $f_0(\rho)$, ..., $f_4(\rho)$ are functions of $\rho$, bounded by some constants independent of dimension $D$ (This is because $|\tr(\hat{W}\rho^{\otimes k})|\leq 1$ for any permutation operator $\hat{W}$). Considering the average over different random unitary, the total variance is
\begin{equation}
\begin{split}
\frac{1}{N_U}\left\{\frac{D^2f_0(\rho)}{N_M^4}+\frac{D f_1(\rho)}{N_M^3}+\frac{d_Af_2(\rho)}{N_M^2}+\frac{f_2'(\rho)}{N_M^2}+\frac{d_Bf_2''(\rho)}{N_M^2}+\frac{f_3(\rho)}{N_M}+f_4(\rho)\right\}.
\end{split}
\end{equation}
Thus, to make sure the estimator constructed by global random protocol satisfies $|\hat{M}_4^{(2,3)}-M_4^{(2,3)}|\le \epsilon$ with probability at least $1-\delta$, $N_U$ and $N_M$ scales as
\begin{equation}
\begin{split}
N_U=O(\frac{1}{\epsilon^2\delta}),N_M=O(D^{1/2})
\end{split}
\end{equation}

The variance of local protocol is not easy to compute for general case, so we recall that in global protocol, the leading term (which determines the scaling of $N_M$) is the first term $\tr[\Phi_2(X_A^{(1,2)2})^{\otimes 2}\Phi_2(X_B^{(1,2)2})^{\otimes 2}\rho^{\otimes 4}]/N_M^4$. We assume this result also holds in local protocol, so we need to choose a state to maximize the leading term. From Eq.~\eqref{eq:twirling conclusion}, we know that $\Phi_2(X_A^{(1,2)2})^{\otimes 2}\Phi_2(X_B^{(1,2)2})^{\otimes 2}$ is composed of many permutation operators with positive coefficients, so that the maximum of the leading term is reached when $\rho=\ketbra{\psi}{\psi}_A\otimes\ketbra{\varphi}{\varphi}_B$ (according to the mathematical property of permutation operator $|\tr(\hat{W}_\sigma\rho)|\le 1$, and pure product state can reach this maximum). Here, we make a reasonable assumption that, even in local case, the leading term is also the first term. Hence, if we successfully find a state that maximize the leading term, the variance computed using this state will offer an upper bound of the variance in general case.

In local protocol, $X^{(1,2)}$, $X^{(1,2,3)}$ are turned into two qubit and three qubit operators respectively. While for the leading term $\tr[\Phi_2(X_A^{(1,2)2})^{\otimes 2N}\Phi_2(X_B^{(1,2)2})^{\otimes 2N}\rho^{\otimes 4}]$ in global random protocol analyzed before, the state $\rho$ that maximizes it is also the product pure state, in the qubit level, $\rho=\bigotimes_{i=1}^N\ketbra{\psi_i}{\psi_i}$. When $\rho$ is set to be a pure product state, the variance is easy to compute, we first do some necessary calculation:
\begin{equation}
\begin{split}
\tr\left[\Phi_2(X^{(1,2)})\rho^{\otimes 2}\right]=&2\tr[\mathbb{I}\rho^{\otimes 2}]+(2-1)\tr[\mathbb{S}\rho^{\otimes 2}]=2\times 2-1=3\\
\tr\left[\Phi_3(X^{(1,2,3)}\rho^{\otimes 3})\right]=&-\frac{3}{2+1}+\frac{3(2+1)}{2+2}=\frac{3}{2}\\
\tr\left[\Phi_4(X^{(1,2)\otimes 2})\rho^{\otimes 4}\right]=&\frac{6}{2(2+2)(2+3)}-\frac{16(2+1)}{2(2+2)(2+3)}+\frac{4(2+1)}{2(2+3)}+1+\frac{2}{2(2+2)(2+3)}=\frac{6}{5}.
\end{split}
\end{equation}
substituting these results into Eq.~\eqref{eq:M_4squareexpectation}, and change this equation into local version, we get
\begin{equation}
\begin{split}
\mathrm{Var}(\hat{M}_4^{(2,3)})\le\frac{3^{2N}}{N_M^4}+\frac{C_1 (\frac{9}{2})^N}{N_M^3}+\frac{C_2(\frac{18}{5})^{N_A}(\frac{3}{2})^{2N_B}}{N_M^2}+\frac{C_2'(\frac{3}{2})^{2N}}{N_M^2}+\frac{C_2''(\frac{3}{2})^{2N_A}(\frac{18}{5})^{N_B}}{N_M^2}+\frac{C_3(\frac{9}{5})^{N}}{N_M}+C_4(\frac{6}{5})^{2N}.
\end{split}
\end{equation}
From this equation, one can see the apparent difference between the variance of global protocol and local protocol: there is an exponential increasing term in the variance of local protocol that cannot be compressed by $N_M$. In order to suppress the variance, $N_U$ cannot be constant any more. Similarly, the total variance is 
\begin{equation}\label{eq:local total variance}
\begin{split}
\frac{1}{N_U}\left\{\frac{3^{2N}}{N_M^4}+\frac{C_1 (\frac{9}{2})^N}{N_M^3}+\frac{C_2(\frac{18}{5})^{N_A}(\frac{3}{2})^{2N_B}}{N_M^2}+\frac{C_2'(\frac{3}{2})^{2N}}{N_M^2}+\frac{C_2''(\frac{3}{2})^{2N_A}(\frac{18}{5})^{N_B}}{N_M^2}+\frac{C_3(\frac{9}{5})^{N}}{N_M}+C_4(\frac{6}{5})^{2N}\right\}.
\end{split}
\end{equation}
Hence, to make the locally constructed estimator satisfies $|\hat{M}_4^{(2,3)}-M_4^{(2,3)}|\le\epsilon$ with probability at least $1-\delta$, $N_U$ needs to scale like
\begin{equation}
\begin{split}
N_U = O(\frac{(\frac{6}{5})^{2N}}{\epsilon^2\delta}).
\end{split}
\end{equation}
Substituting it into the total variance Eq.~\eqref{eq:local total variance}, it scales like
\begin{equation}
\begin{split}
\epsilon^2\delta\left\{\frac{(\frac{5}{2})^{2N}}{N_M^4}+\frac{C_1 (\frac{25}{8})^N}{N_M^3}+\frac{C_2(\frac{5}{2})^{N_A}(\frac{5}{4})^{2N_B}}{N_M^2}+\frac{C_2'(\frac{5}{4})^{2N}}{N_M^2}+\frac{C_2''(\frac{5}{4})^{2N_A}(\frac{5}{2})^{N_B}}{N_M^2}+\frac{C_3(\frac{5}{4})^{N}}{N_M}+C_4\right\}.
\end{split}
\end{equation}
After calculation, one can find that the first term $(\frac{5}{2})^{2N}/N_M^4$ is indeed the leading term, 
\begin{equation}
\begin{split}
N_M=O((\frac{5}{2})^{N/2})
\end{split}
\end{equation}
is enough to suppress the error. Therefore, the total number of experiments scales
\begin{equation}
\begin{split}
4\times N_U\times N_M = O(\frac{(\frac{18\sqrt{2}}{5\sqrt{5}})^N}{\epsilon^2\delta})\approx O(\frac{D^{1.187}}{\epsilon^2\delta}).
\end{split}
\end{equation}

\end{proof}

\subsection{Hybrid Protocols to Measure General Permutation Moments}
As mentioned earlier, shadow protocols can measure all kinds of permutation moments, while the sample complexities are exponentially higher than the randomized-measurement protocols for the SWAP operators. For a general kind of index permutation, the observable of measuring the moments may be the tensor product of SWAP operators and other permutation operators that are not suitable for randomized measurements protocols. Take a tripartite state $\rho_{ABC}$ as an example, suppose the permutation operation is $\pi=\tbinom{1,2,3,4,5,6}{1,3,2,4,5,6}$. The fourth order of the permutation moment is
\begin{equation}
\begin{aligned}
M_4^{\pi} = \tr\left[\left(\mathbb{S}_A^{(1,2)}\otimes\mathbb{S}_A^{(3,4)}\otimes\mathbb{S}_B^{(2,3)}\otimes\mathbb{S}_B^{(4,1)}\otimes\overrightarrow{P}_C\right)\rho_{ABC}^{\otimes 4}\right],
\end{aligned}
\end{equation}
where $\overrightarrow{P}_C$ is the fourth order permutation operator on party $C$. This moment is important for the multipartite entanglement detection that we will discuss later. We can develop a shadow and randomized measurements hybrid protocol to measure this moment.

First, one needs to prepare $4\times N_U\times N_M$ copies of $\rho_{ABC}$ and divide them into $N_U$ sets, each of which has $4\times N_M$ copies. Then, one operates the unitaries $\{U_A^1\otimes U_B^1,U_A^1\otimes U_B^2,U_A^2\otimes U_B^2,U_A^2\otimes U_B^1\}\otimes\{U_C^i\}_{i=1}^{N_M}$ on the first set and measure them in the computational basis to acquire the results $\{r_{A1,B1}^i,r_C^i\}_{i=1}^{N_M}$, $\{r_{A1,B2}^i,r_C^i\}_{i=1}^{N_M}$, $\{r_{A2,B2}^i,r_C^i\}_{i=1}^{N_M}$, $\{r_{A2,B1}^i,r_C^i\}_{i=1}^{N_M}$. Using these unitaries and measurement results, we can construct the unbiased estimator of $M_4^\pi$ as
\begin{equation}
    \hat{M}_4^\pi=\frac{1}{N_M^4}\sum_{i,j,k,l=1}^{N_M}\tr\left[\hat{Q}_{AB}\left(\hat{r}_{A1,B1}^i\otimes\hat{r}_{A1,B2}^j\otimes\hat{r}_{A2,B2}^k\otimes\hat{r}_{A2,B1}^l\right)\right]\tr\left[\overrightarrow{P}_C\left(\hat{\rho}_C^i\otimes\hat{\rho}_C^j\otimes\hat{\rho}_C^k\otimes\hat{\rho}_C^l\right)\right],
\end{equation}
where $\hat{Q}$ follows the same definition in Eq.~\eqref{eq:M4estimator2} and $\hat{\rho}_C^i$ is the shadow snapshot constructed using $U_C^i$ and $r_C^i$. Then, we repeat this procedure for the remaining $(N_U-1)$ sets with i.i.d.~unitary sets and average the estimators to get the final estimator.

The unbiaseness of this estimator can be proved following the similar idea of the last two subsections. Using this estimator, we can get a reasonable sample complexity for measuring $M_4^\pi$.

\subsection{Numerical Results and Further Discussion}\label{subsec:numerical}
From the statistical analysis, one could find that the random protocol is more suitable than shadow protocol to estimate $\hat{M}_4^{(2,3)}$, especially for large-scale quantum systems. The experiment protocols of former entanglement detection works based on PPT criterion \cite{elben2020mixed,zhou2020Single} are either local shadow protocol, or global random protocol. Therefore, the method developed in this work, based on CCNR criterion that can be conducted by local random protocol, is more practical than the former works in terms of error scaling and feasibility.

So far, we have derived the upper bound of variance for shadow protocols and random protocols, while it still cannot convince us that the random protocols are definitely better than shadow protocols, for these bounds may not be tight, especially for shadow protocol. To make our error analysis more convincing, we carry out some numerical experiments. And the results are listed in Fig.~\ref{fig:numerical study}. In Fig.~\ref{fig:numerical study}(a), we investigate how the variance of predicting $M_4^{(2,3)}$ of a multiqubit W state scales with the qubit number. Comparing these four protocols, we could find that generally, random protocols are better than shadow protocols, and global protocols are better than local protocols. To further investigate the variance scaling in large dimensional scenario, we carry out a linear regression of the last five dots in each line, $\mathrm{Log}(\mathrm{Var}(\hat{M}_4))=\alpha N+\beta$, and find that the slopes are $\alpha=1.8521$ for GR protocol, $\alpha=3.9528$ for GS protocol, $\alpha=3.0457$ for LR protocol, and $\alpha=5.5509$ for LS protocol respectively which all satisfy our variance analysis. Surprisingly, one could see that the LR protocol is even better than GS protocol, this fact reveals the great advantage of random protocol over shadow protocol, which will be investigated in our further works. Fig.~\ref{fig:numerical study}(b) shows how the variance of $\hat{M}_4^{(2,3)}$ scales with number of snapshots $M$ in GS and LS protocols. These two lines show a similar trend, the slopes are large in small $M$ area and getting smaller with the increment of $M$. Such observation actually meets the variance computed in Eq.~\eqref{eq:GS variance} and Eq.~\eqref{eq:LS variance}, which tells us that the dominant term will gradually shift from $\frac{1}{M^4}$ term to $\frac{1}{M}$ term. Using Fig.~\ref{fig:numerical study}(c), we confirm the fact that the variance of $\hat{M}_4^{(2,3)}$ constructed by random protocols is inversely proportional with $N_U$.

In the statistical analysis, we find that the variance of $\hat{M}_4^{(2,3)}$ in random protocols contains some terms independent of $N_M$, which means that even $N_M$ is set to infinity, the variance will not reduce to zero, it just approaches a constant value. To reduce the variance, we need a large enough $N_U$. This result is numerically proved by Fig.~\ref{fig:numerical study}(d), in which the dashed lines correspond to the variance when $N_M$ is set to be infinity. Another important result in the statistical analysis is that when $N_M=\infty$, the variance of GR protocol is bounded by a constant independent of system dimension, while the variance of LR protocol increases with the system size. Such result is demonstrated in Fig.~\ref{fig:numerical study}(e) by the two lines corresponding to GR and LR respectively. One can see that even if $N_M$ is set to be infinity, the variance of locally constructed $\hat{M}_4^{(2,3)}$ also increases with system size, while the other does not. Fig.~\ref{fig:numerical study}(f) shows a generic numerical experiment. The measured state is a 4-qubit noisy W state, $\rho=p\ketbra{W}{W}+\frac{1-p}{2^4}\mathbb{I}$, with a changeable $p$. In this experiment, we use LR protocol, the most practical one, and set $N_U=100$ to observe the performances with different values of $N_M$.  

\begin{figure}
    \centering
    \subfigure{\includegraphics[width=5.3cm]{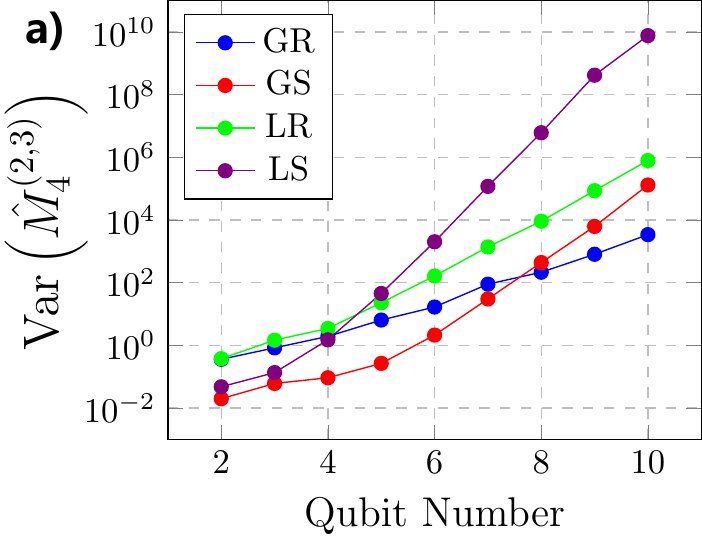}}
    \subfigure{\includegraphics[width=5.3cm]{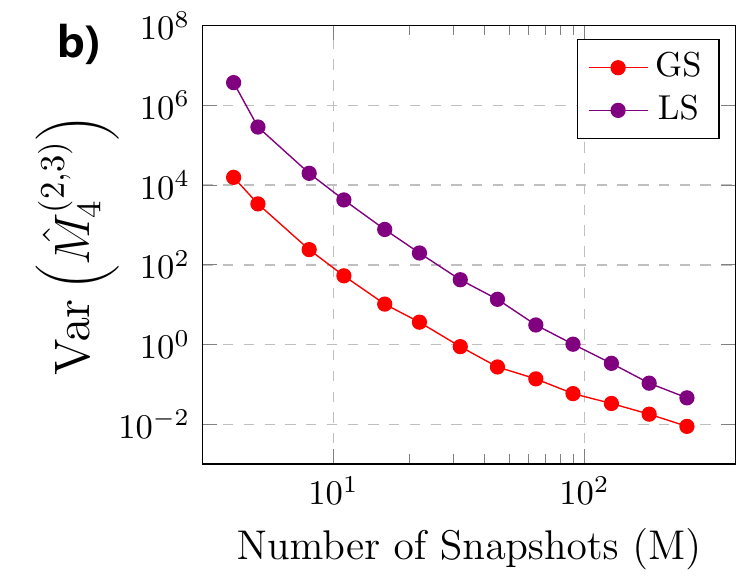}}
    \subfigure{\includegraphics[width=5.3cm]{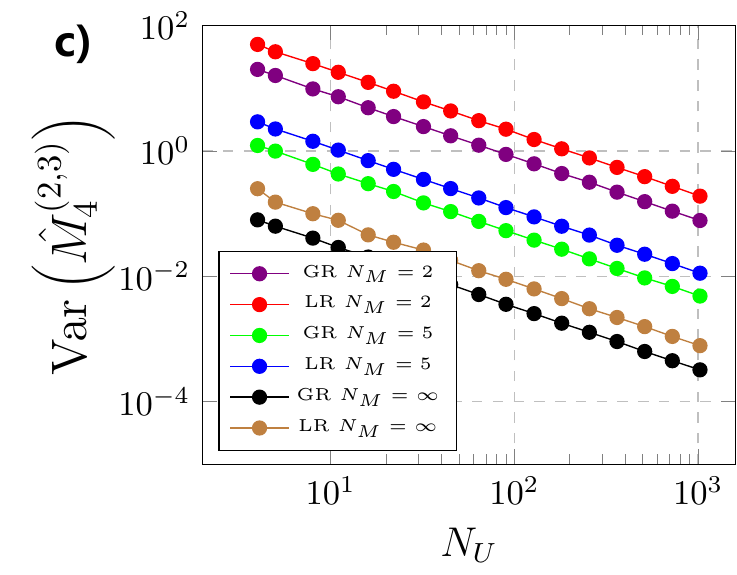}}
    
    \subfigure{\includegraphics[width=5.3cm]{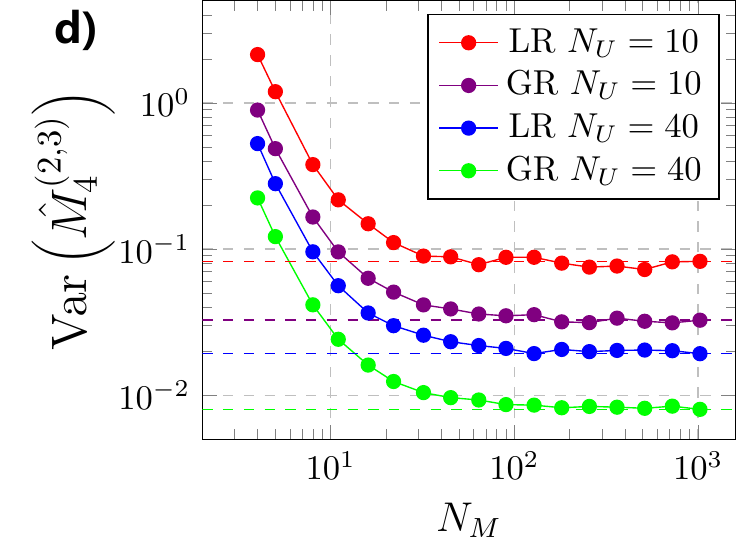}}
    \subfigure{\includegraphics[width=5.3cm]{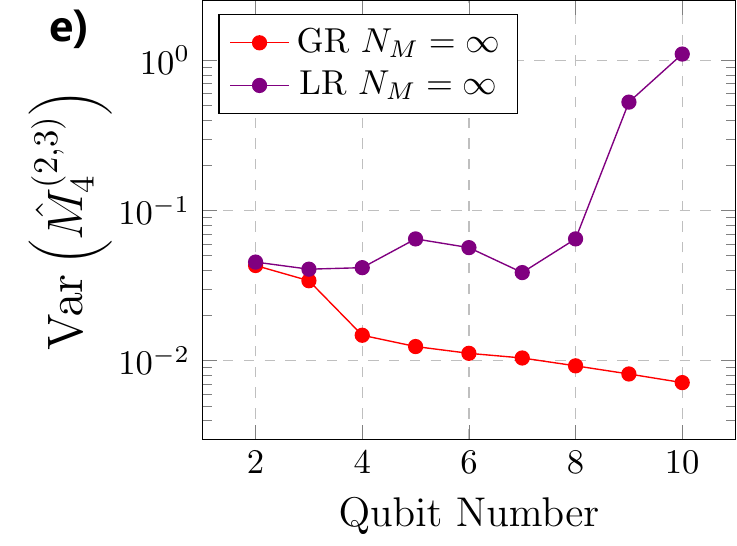}}
    \subfigure{\includegraphics[width=5.3cm]{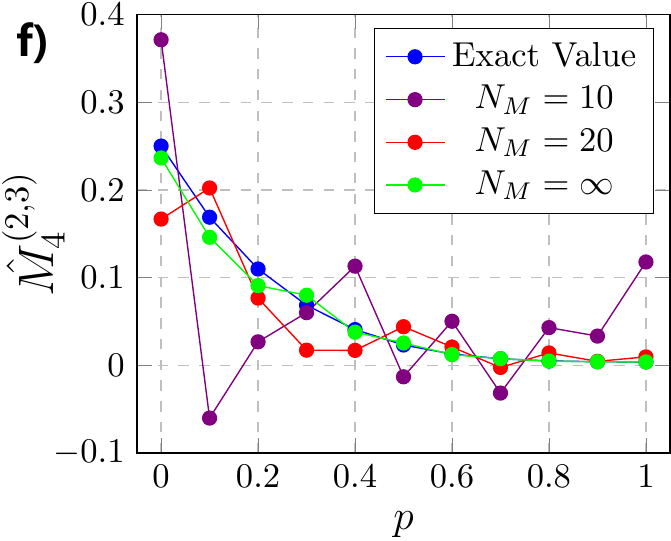}}
    \caption{We conduct some numerical experiments to test our statistical results, using all of these four protocols, global random (GR), local random (LR), global shadow (GS), local shadow (LS). In (a), we use a $N$-qubit W state to examine how the variance scales with qubit number. By definition, the value of $M_4^{(2,3)}$ depends on how we divide it into two parts. When W state contains even number qubits, it is divided into two equal number parts, and in the other case, it is divided into two parts that differ by one qubit. For GS and LS protocols, the number of snapshots, which is actually the number of experiments, is set to be $M=80$. And for GR and LR protocols, we set $N_U=4$ and $N_M=5$ to make the total times of experiment $4\times N_U\times N_M=80$ equal to that of shadow protocol. The variance data points are computed using 200 times of independent numerical experiments. (b), (c) and (d) use the same state, a four qubit W state which is equally divided into two parts, as the test state. In (b), we demonstrate how the variance scales with the number of snapshots $M$ when predicting $M_4$ using shadow protocol. The variance is calculated using 2000 times of independent experiments. (c) shows how the variance of $\hat{M}_4^{(2,3)}$ changes with $N_U$ for different $N_M$ in global random and local random protocols. Here the numbers of independent experiments to compute the variance is 20000. (d) shows how the variance changes with $N_M$ in different $N_U$ for global and local random protocols. The dashed lines denote the values of the variance of $\hat{M}_4^{(2,3)}$ when $N_M=\infty$. The number of independent experiments is also 20000. In (e), we use a $N$-qubit W state to investigate whether the variance of $\hat{M}_4^{(2,3)}$ will increase with system size, using random protocols and setting $N_M$ to be infinity. The number of independent experiments is 20000. (f) is a demonstration of a specific experiment with a 4-qubit noisy W state, $\rho = p\ketbra{W}{W}+\frac{1-p}{2^4}\mathbb{I}$. We use local random protocol to measure $M_4^{(2,3)}$ with different values of $p$. We set $N_U=100$, and set $N_M$ to be $10$, $20$ and infinity to show how the performance varies with different values of $N_M$.  }
    \label{fig:numerical study}
\end{figure}

Now, we will briefly analyze the error scaling of $\hat{M}_{2n}^{(2,3)}$ for a generic value of $n\ll d$. We will directly adopt the conclusion in analysing $\hat{M}_4^{(2,3)}$. For shadow protocol, following the analysis of $\mathrm{Var}\left(\hat{M}_4^{(2,3)}\right)$, one can prove that the leading term in $\mathrm{Var}\left(\hat{M}_{2n}^{(2,3)}\right)$ is also
\begin{equation}
\begin{split}
\frac{1}{M^{2n}}\mathrm{Var}\left\{\tr\left[\hat{O}_{2n}\left(\hat{\rho}_1\otimes\hat{\rho}_2\otimes\cdots\otimes\hat{\rho}_{2n}\right)\right]\right\},
\end{split}
\end{equation}
where 
\begin{equation}
\begin{split}
\hat{O}_{2n}=\sum_{\pi\in\mathcal{S}_{2n-1}}\left(\mathbb{I}\otimes\hat{W}_\pi\right)\left(\mathbb{S}_A^{(1,2)}\otimes\cdots\otimes\mathbb{S}_A^{(2n-1,2n)}\otimes\mathbb{S}_B^{(2,3)}\otimes\cdots\otimes\mathbb{S}_B^{(2n,1)}\right)\left(\mathbb{I}\otimes\hat{W}_\pi^\dagger\right).
\end{split}
\end{equation}
As shown in Fact \ref{fact:shadow variance}, the key value in estimating the variance of $\tr[O\hat{\rho}]$ is $\tr(O^2)$, no matter in global shadow protocol or in local shadow protocol. And following the schematic calculation in $\mathrm{Var}\left(\hat{M}_4^{(2,3)}\right)$, one can verify that $\tr(\hat{O}_{2n}^2)$ is bounded by $Const\times D^{2n}$. Therefore, the leading term of $\mathrm{Var}(\hat{M}_{2n})$ is bounded by $Const \times \frac{D^{2n}}{M^{2n}}$ for global shadow protocol, and $Const \times \frac{D^{4n}}{M^{2n}}$ for local shadow protocol. As a result, the number of $M$ will not increase with $n$, $M=O(D)$ and $M=O(D^2)$ are enough to suppress the error for global shadow and local shadow protocol respectively for arbitrary value of $n$.

However, this simple conclusion cannot be directly adopted into random protocols. First, consider global random protocol. In Eq.~\eqref{eq:M_4squareexpectation}, the one determining the scaling of $N_U$ is the last one, $\tr[(\Phi_4(X_A^{(1,2)\otimes 2})^{\otimes 2n}\otimes\Phi_4(X_B^{(1,2)\otimes 2})^{\otimes 2n})\rho^{\otimes 4n}]=O(1)$. And the one determining the scaling of $N_M$ is the first one, $\tr[(\Phi_2(X_A^{(1,2)2})^{\otimes n}\otimes\Phi_2(X_B^{(1,2)2})^{\otimes n})\rho^{2n}]^{1/2n}=O(D^{\frac{1}{2}})$. Hence, the upper bound of sample complexity has nothing to do with $n$. Consider local random protocol, and also set $\rho$ to be the pure product state, as we do before. One will find that the last term scales like $\tr[(\Phi_4(X_{AL}^{(1,2)\otimes 2})^{\otimes 2nN}\otimes\Phi_4(X_{BL}^{(1,2)\otimes 2})^{\otimes 2nN})\rho^{\otimes 4n}]=O((\frac{6}{5})^{2nN})$. Therefore, $N_U$ will exponentially increase with $n$, which makes local random protocol less feasible in large value of $n$ compared with local shadow protocol.

In conclusion, although random protocols have significant advantages in estimating $\hat{M}_4^{(2,3)}$ than shadow protocols, no matter global one or local one. When experimental conditions are limited, only local operation and measurement are implementable, which is indeed the case for state-of-the-art quantum devices, local shadow protocol will be more practical than local random protocol for estimating $\hat{M}_{2n}^{(2,3)}$ with large $n$. Besides, for a general index permutation $\pi$, where not all the parties are R-type, it is hard to design a local random protocol to measure them.  This is why we spend much effort to analyze both kinds of protocols, shadow and random.

\subsection{Efficient Estimation of Enhanced CCNR Criterion}\label{subsec:enhenced}
In \cite{zhang2008entanglement}, Zhang et,al proposed a bi-partite entanglement criterion which is strictly more enhanced than CCNR criterion and can be generalized to multipartite scenario. It has been proved that, for separable bi-partite quantum state $\rho_{AB}$, 
\begin{equation}\label{eq:enhanced CCNR}
\begin{split}
\norm{\mathcal{R}_{(2,3)}(\rho_{AB}-\rho_A\otimes \rho_B)}\le\sqrt{(1-\tr\rho_A^2)(1-\tr\rho_B^2)}.
\end{split}
\end{equation}
 The R.H.S of Eq.~\eqref{eq:enhanced CCNR} is the function of the purities of $\rho_A$ and $\rho_B$ and there already exists many methods to estimate them, like multi-copy observable or randomized measurements. Following the same idea of main text, the L.H.S of Eq.~\eqref{eq:enhanced CCNR} can also be bounded by the these higher order terms
\begin{equation}
\begin{split}
M_{2n}^{(2,3)'}=\tr[(O_A\otimes O_B)(\rho_{AB}-\rho_A\otimes\rho_B)^{\otimes 2n}],
\end{split}
\end{equation}
where $O_A$ and $O_B$ follow the same definition in Eq.~\eqref{eq:CCNR observable}. Although $(\rho_{AB}-\rho_A\otimes\rho_B)$ is not a quantum state, $(\rho_{AB}-\rho_A\otimes\rho_B)^{\otimes 2n}$ can be expended as linear combination of tensor products of $\rho_{AB}$ and $\rho_A\otimes\rho_B$, so that $M_{2n}^{(2,3)'}$ can also be estimated by randomized measurements protocol without bias. Here we aim to prove that, the second and fourth moments can also be efficiently estimated using the data acquired in Algorithm \ref{algo:CCNR measurement protocol}.

The second order moment is
\begin{equation}
\begin{split}
M_2^{(2,3)'}=\tr[(\rho_{AB}-\rho_A\otimes\rho_A)^2]=\tr(\rho_{AB}^2)+\tr(\rho_A^2)\tr(\rho_B^2)-2\tr[\rho_{AB}(\rho_A\otimes\rho_B)].
\end{split}
\end{equation}
The measurement of $M_2^{(2,3)'}$ has been discussed in \cite{liu2021characterizing}. The fourth order moment is
\begin{equation}
\begin{split}
M_4^{(2,3)'}=\tr[(O_A\otimes O_B)(\rho_{AB}-\rho_A\otimes\rho_B)^{\otimes 4}].
\end{split}
\end{equation}
After expanding $(\rho_{AB}-\rho_A\otimes\rho_B)^{\otimes 4}$, the only non-trivial terms are
\begin{equation}
\begin{split}
M_{4,1}^{(2,3)'}=\tr\left[O_A\otimes O_B\left(\rho_{AB}^{\otimes 3}\otimes\rho_A\otimes\rho_B\right)\right]
\end{split}
\end{equation}
and 
\begin{equation}
\begin{split}
M_{4,2}^{(2,3)'}=\tr\left[O_A\otimes O_B\left(\rho_{AB}^{\otimes 2}\otimes\rho_A^{\otimes 2}\otimes\rho_B^{\otimes 2}\right)\right]/\tr(\rho_A^2)=\tr\left[\left(\mathbb{S}_A^{(1,2)}\otimes\mathbb{S}_B^{(2,3)}\otimes\mathbb{S}_B^{(4,1)}\right)\rho_{AB}^{\otimes 2}\otimes \rho_B^{\otimes 2}\right].
\end{split}
\end{equation}
Other trivial terms contain the multiplication of purity terms $\tr(\rho_A^2)$ and $\tr(\rho_B^2)$, and correlation term $\tr[\rho_{AB}(\rho_A\otimes\rho_B)]$. The measurement protocols measuring them have been discussed in other works. We can prove that:
\begin{proposition}
Using the data acquired in Algorithm \ref{algo:CCNR measurement protocol}, one can construct the unbiased estimator of $M_{4,1}^{(2,3)}$ and $M_{4,2}^{(2,3)}$,
\begin{equation}
\begin{split}
\hat{M}_{4,1}^{(2,3)}=\frac{1}{N_M^4(N_M-1)}\sum_{i,j,k,l=1}^{N_M}\sum_{l'\neq l}\tr\left[\hat{Q}_{4,1}\left(\hat{r}_{A1,B1}^i\otimes\hat{r}_{A1,B2}^j\otimes\hat{r}_{A2,B2}^k\otimes\hat{r}_{A2,B1}^l\otimes\hat{r}_{A2,B1}^{l'}\right)\right],
\end{split}
\end{equation}
where $\hat{Q}_{4,1}=X_A^{(1,2)}\otimes X_A^{(3,4)}\otimes X_B^{(2,3)} \otimes X_B^{(5,1)}$, and
\begin{equation}
\begin{split}
\hat{M}_{4,2}^{(2,3)}=\frac{1}{N_M^4}\sum_{i,j,k,l=1}^{N_M}\tr\left[\hat{Q}_{4,2}\left(\hat{r}_{A1,B1}^i\otimes\hat{r}_{A1,B2}^j\otimes\hat{r}_{A2,B2}^k\otimes\hat{r}_{A2,B1}^l\right)\right]
\end{split}
\end{equation}
where $\hat{Q}_{4,2}=X_A^{(1,2)}\otimes X_B^{(2,3)}\otimes X_B^{(4,1)}$. Besides, the number of experiments proposed in Prop.~\ref{prop:random variance} is enough to estimate them accurately.
\end{proposition}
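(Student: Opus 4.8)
The plan is to reuse the randomized-measurement machinery of Proposition~\ref{prop:CCNR measurement} almost verbatim, the only genuinely new point being that the reduced states $\rho_A$ and $\rho_B$ appearing in $M_{4,1}^{(2,3)\prime}$ and $M_{4,2}^{(2,3)\prime}$ are produced simply by \emph{not} postprocessing certain tensor legs, so that those legs are traced against the identity. First I would rewrite the target scalars in a convenient observable form: since $O_A=\mathbb{S}_A^{(1,2)}\otimes\mathbb{S}_A^{(3,4)}$ and $O_B=\mathbb{S}_B^{(2,3)}\otimes\mathbb{S}_B^{(4,1)}$, tracing out the $B$ leg of the fourth copy of $\rho_{AB}$ (respectively the $A$ leg of a fifth, auxiliary, copy) turns $M_{4,1}^{(2,3)\prime}$ into $\tr[(\mathbb{S}_A^{(1,2)}\otimes\mathbb{S}_A^{(3,4)}\otimes\mathbb{S}_B^{(2,3)}\otimes\mathbb{S}_B^{(5,1)})(\rho_{AB}^{\otimes 3}\otimes\rho_A\otimes\rho_B)]$, and tracing out the $A$ legs of the third and fourth copies turns $M_{4,2}^{(2,3)\prime}$ into $\tr[(\mathbb{S}_A^{(1,2)}\otimes\mathbb{S}_B^{(2,3)}\otimes\mathbb{S}_B^{(4,1)})(\rho_{AB}^{\otimes2}\otimes\rho_B^{\otimes2})]$, the factor $\tr(\rho_A^2)$ coming from $\mathbb{S}_A^{(3,4)}$ acting on $\rho_A^{\otimes2}$ being cancelled by the stated normalization. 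These index patterns coincide exactly with those of $\hat Q_{4,1}$ and $\hat Q_{4,2}$.

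Second, I would prove unbiasedness by inserting Born's rule snapshot by snapshot, $\mathbb{E}_{\vec s}\,\hat r=\mathrm{diag}\bigl(U\rho U^\dagger\bigr)$, using that $\hat Q_{4,1}$ and $\hat Q_{4,2}$ are diagonal to replace each $\mathrm{diag}(\cdot)$ by the full evolved state, commuting the unitaries to the two sides of each $X$-operator, and applying the twirl identities~\eqref{eq:SWAP construct global} and~\eqref{eq:SWAP construct local}, $\Phi_2(X_{g/l})=\mathbb{S}$, one slot at a time. The key bookkeeping point is that Algorithm~\ref{algo:CCNR measurement protocol} pairs the snapshots so that the two legs hit by a given $X_A$ (resp. $X_B$) always carry the \emph{same} sampled $U_A^m$ (resp. $U_B^m$); for instance $X_B^{(5,1)}$ joins $\hat r_{A2,B1}^{l'}$ and $\hat r_{A1,B1}^{i}$, both evolved by $U_B^1$. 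The legs untouched by any $X$-operator are rank-one projectors, hence contribute $1$ upon tracing, and this is precisely the mechanism that produces the marginals $\rho_A,\rho_B$; for $M_{4,1}$ it also forces $l\neq l'$ so that $\hat r_{A2,B1}^{l}$ and $\hat r_{A2,B1}^{l'}$ are statistically independent estimates of the same evolved state (equivalently, the single factor $\rho_A\otimes\rho_B$ is sourced as two independent marginals from two distinct physical snapshots), explaining the sum over $l'\neq l$ and the $1/(N_M^4(N_M-1))$ prefactor, whereas for $M_{4,2}$ the four snapshots come from four distinct unitary-combination groups and are independent automatically, giving the plain $1/N_M^4$. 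Carrying out the two computations reproduces the two target scalars.

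Third, for the sample complexity I would argue that the variance estimate of Proposition~\ref{prop:random variance} carries over with only cosmetic changes. Writing $\mathrm{Var}=\mathbb{E}[(\cdot)^2]-(\cdot)^2$ and grouping the double sum according to which snapshot indices coincide yields, exactly as in~\eqref{eq:M_4squareexpectation}, a finite list of terms in which the squared postprocessing operators get twirled into $\Phi_2,\Phi_3,\Phi_4$; the leading term is again $\mathcal{O}(D^2)/N_M^4$ in the global case (from $\Phi_2(X^{(1,2)2})\approx D(\mathbb{I}+\mathbb{S})$ on the untraced legs), while all higher twirls contribute subleading pieces bounded by dimension-independent constants since $|\tr(\hat W_\sigma\rho^{\otimes k})|\le1$ and the extra traced legs only insert factors $\le1$; the constrained sum ($l'\neq l$) changes nothing asymptotically. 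Since $(\rho_{AB}-\rho_A\otimes\rho_B)^{\otimes4}$ is a fixed linear combination of $16$ tensor products of $\rho_{AB}$ and $\rho_A\otimes\rho_B$, each handled identically, $\mathrm{Var}(\hat M_4^{(2,3)\prime})$ is at most a constant times $\mathrm{Var}(\hat M_4^{(2,3)})$, so the same $N_U,N_M$ suffice in both the global ($\mathcal{O}(D^{1/2})$) and local ($\mathcal{O}(D^{1.187})$) regimes. The main obstacle I anticipate is purely organizational: tracking which physical snapshot feeds which tensor leg and checking that every $X$-operator is twirled by a matched pair of random unitaries — in particular the two-snapshot sourcing of $\rho_A\otimes\rho_B$ in $M_{4,1}$ — and confirming that the reduced-state legs do not spawn a new leading term in the variance.
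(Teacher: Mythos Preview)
Your proposal is correct and follows essentially the same route as the paper: rewrite the targets as SWAP observables on $\rho^{\otimes 5}$ (resp.\ $\rho^{\otimes 4}$) with identities on the ``untouched'' legs, verify unbiasedness via Born's rule and the twirl $\Phi_2(X)=\mathbb{S}$, and bound the leading variance term by squaring the postprocessing operator and using $\Phi_2(X^2)\approx d(\mathbb{I}+\mathbb{S})$. Your explanation of why $l'\neq l$ is needed and how the unprocessed legs manufacture the marginals is more explicit than the paper's, which simply writes down the twirled expression. The one place you are slightly looser than the paper is the variance of $\hat M_{4,1}$: because the estimator carries an extra average over $l'\neq l$, the all-indices-coincide term picks up an additional $1/(N_M-1)$, giving a leading scaling $D^2/N_M^5$ rather than $D^2/N_M^4$; your weaker bound still suffices for the stated conclusion, but it is worth noting that $\hat M_{4,1}$ is in fact strictly cheaper than $\hat M_4^{(2,3)}$.
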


\begin{proof}

Following the proof of Proposition \ref{prop:CCNR measurement}, one can easily prove that these two estimators are indeed unbiased. Here we only prove the case of global protocol, the proof for local one is quite similar.
\begin{equation}
\begin{split}
\mathbb{E}_{U,r}\left(\hat{M}_{4,1}^{(2,3)'}\right)=&\frac{1}{N_M^4(N_M-1)}\sum_{i,j,k,l=1}^{N_M}\sum_{l'\neq l}\mathbb{E}_{U,r}\tr\left[\hat{Q}_{4,1}\left(\hat{r}_{A1,B1}^i\otimes\hat{r}_{A1,B2}^j\otimes\hat{r}_{A2,B2}^k\otimes\hat{r}_{A2,B1}^l\otimes\hat{r}_{A2,B1}^{l'}\right)\right]\\
=&\mathbb{E}_{U,r}\tr\left[\hat{Q}_{4,1}\left(\hat{r}_{A1,B1}^i\otimes\hat{r}_{A1,B2}^j\otimes\hat{r}_{A2,B2}^k\otimes\hat{r}_{A2,B1}^l\otimes\hat{r}_{A2,B1}^{l'}\right)\right]\\
=&\tr\left[\Phi_2(X_A^{(1,2)})\otimes\Phi_2(X_A^{(3,4)})\otimes\Phi_2(X_B^{(2,3)})\otimes\Phi_2(X_B^{(5,1)})\rho^{\otimes 5}\right]\\
=&\tr\left[\mathbb{S}_A^{(1,2)}\otimes\mathbb{S}_A^{(3,4)}\otimes\mathbb{S}_B^{(2,3)}\otimes\mathbb{S}_B^{(5,1)}\rho^{\otimes 5}\right]\\
=&\tr\left[\mathbb{S}_A^{(1,2)}\otimes\mathbb{S}_A^{(3,4)}\otimes\mathbb{S}_B^{(2,3)}\otimes\mathbb{S}_B^{(4,1)}\left(\rho^{\otimes 3}\otimes\rho_A\otimes\rho_B\right)\right]=M_{4,1}^{(2,3)'}.
\end{split}
\end{equation}
And similarly,
\begin{equation}
\begin{split}
\mathbb{E}_{U,r}\left(\hat{M}_{4,2}^{(2,3)'}\right)=&\tr\left[\Phi_2(X_A^{(1,2)})\otimes\Phi_2(X_B^{(2,3)})\otimes\Phi_2(X_B^{(4,1)})\rho^{\otimes 4}\right]\\
=&\tr\left[\mathbb{S}_A^{(1,2)}\otimes\mathbb{S}_B^{(2,3)}\otimes\mathbb{S}_B^{(4,1)}\rho^{\otimes 4}\right]=M_{4,2}^{(2,3)'}.
\end{split}
\end{equation}

Besides the similarity in estimation, the sample complexity is similar to $\hat{M}_4^{(2,3)}$, too. Following the spirit of statistical analysis of $\hat{M}_4^{(2,3)}$, we give a simple proof for the global random protocol here. Recalling that the leading term in variance is the one that all the indices are coincident,
\begin{equation}
\begin{split}
\mathbb{E}_{U,r}\left[\left(\hat{M}_{4,1}^{(2,3)'}\right)^2\right]\approx&\frac{1}{N_M^4(N_M-1)}\mathbb{E}_{U,r}\left\{\tr\left[\hat{Q}_{4,1}^2\left(\hat{r}_{A1,B1}\otimes\hat{r}_{A1,B2}\otimes\hat{r}_{A2,B2}\otimes\hat{r}_{A2,B1}\otimes\hat{r}_{A2,B1}'\right)\right]\right\}\\
=&\frac{1}{N_M^4(N_M-1)}\tr\left[\Phi_2(X_A^{(1,2)2})\otimes\Phi_2(X_A^{(3,4)2})\otimes\Phi_2(X_B^{(2,3)2})\otimes\Phi_2(X_B^{(5,1)2})\rho^{\otimes 5}\right]\\
\sim&\frac{D^2}{N_M^5},
\end{split}
\end{equation}
and
\begin{equation}
\begin{split}
\mathbb{E}_{U,r}\left[\left(\hat{M}_{4,2}^{(2,3)'}\right)^2\right]\approx&\frac{1}{N_M^4}\mathbb{E}_{U,r}\left\{\tr\left[\hat{Q}_{4,2}^2\left(\hat{r}_{A1,B1}\otimes\hat{r}_{A1,B2}\otimes\hat{r}_{A2,B2}\otimes\hat{r}_{A2,B1}\right)\right]\right\}\\
=&\frac{1}{N_M^4}\tr\left[\Phi_2(X_A^{(1,2)2})\otimes\Phi_2(X_B^{(2,3)2})\otimes\Phi_2(X_B^{(4,1)2})\rho^{\otimes 4}\right]\\
\sim&\frac{Dd_B}{N_M^4}.
\end{split}
\end{equation}
Therefore, $N_U$ and $N_M$ proposed in Proposition \ref{prop:random variance} are enough to suppress the error in estimating $\tr[O_A\otimes O_B(\rho_{AB}-\rho_A\otimes\rho_B)^{\otimes 4}]$.

\end{proof}

\section{Detection Capability and Physical Simulation}\label{sec:physics}
\subsection{Detection of Bound Entanglement}
There are also PPT states that are entangled. By definition, they cannot be detected by the PPT criterion. This is called bound entanglement. 
In Ref.~\cite{bennet1999unextendible}, the authors propose a systematic way to construct bound entanglement using unextendible product bases. A typical bounded state in $3\times 3$ quantum system is defined as follows:
\begin{equation}
\begin{split}
    \rho_{AB}=\frac{1}{4}\left(\mathbb{I}_{AB}-\sum_{i=1}^5\ketbra{u_i}{u_i}\right)
\end{split}
\end{equation}
where 
\begin{equation}
\begin{split}
&\ket{u_1}=\ket{0}\otimes \frac{\ket{0}-\ket{1}}{\sqrt{2}}\\
&\ket{u_2}=\ket{2}\otimes\frac{\ket{1}-\ket{2}}{\sqrt{2}}\\
&\ket{u_3}=\frac{\ket{0}-\ket{1}}{\sqrt{2}}\otimes\ket{2}\\
&\ket{u_4}=\frac{\ket{1}-\ket{2}}{\sqrt{2}}\otimes\ket{0}\\
&\ket{u_5}=\frac{\ket{0}+\ket{1}+\ket{2}}{\sqrt{3}}\otimes\frac{\ket{0}+\ket{1}+\ket{2}}{\sqrt{3}}.
\end{split}
\end{equation}
Numerical results shows that the entanglement of this state can not only be detected by the original CCNR and enhanced CCNR criterion \cite{zhang2008entanglement}, it can also be detected by the criterion proposed in this work using only second and fourth moments. Specifically speaking, we derive that
\begin{equation}
\begin{split}
    E_4^{(2,3)}(\rho_{AB}-\rho_A\otimes\rho_B)=0.693>\sqrt{(1-\tr(\rho_A^2))(1-\tr(\rho_B^2))}=0.635,
\end{split}
\end{equation}
which shows the effectiveness of the proposed entanglement criteria.

Note that, with a different approach, another criterion also shows that the detection of bound entanglement is possible in a randomized measurement scheme \cite{imai2021bound}. The technique is also used to certify certain forms of multipartite entanglement \cite{ketter2022stat}.

\subsection{Local Entanglement Decay in Thermal System}\label{subsec:localentdecay}

Quantum thermalization is a dominant dynamical phase in interacting quantum many body systems. Due to quantum thermalization, the initial localized information, like polarization, correlation function, and local entanglement, will be scrambled during the evolution and cannot be recovered by local measurements. To study this phenomenon and investigate the detection capability of the moment-based permutation criteria, we choose a 10-qubit Ising model evolved under a long range $XY$ Hamiltonian with open boundary condition, 
\begin{equation}\label{eq:XY Hamiltonian}
\begin{split}
H_{XY}=\sum_{i<j}J_{ij}(\hat{\sigma}_i^+\hat{\sigma}_j^-+\hat{\sigma}_i^-\hat{\sigma}_j^+)+ B_z\sum_i\hat{\sigma}_i^z,
\end{split}
\end{equation}
where $\hat{\sigma}_i^z$, $\hat{\sigma}_i^+$, and $\hat{\sigma}_i^-$ are the spin-$\frac{1}{2}$ Pauli-$Z$, raising, and lowering operator acting on the $i$-th qubit; $J_{ij}=\frac{J_0}{|i-j|^\alpha}$ is the interaction strength following the power-law decay with $J_0$ and $\alpha$ set to be $420s^{-1}$ and $1.24$, respectively \cite{brydges2019probing}; $B_z$ stands for transverse field and is set to be $400s^{-1}$.

We use four criteria to detect the local entanglement in this system, and summarize these four criteria as four quantities. When they are larger than 0, the entanglement is successfully detected. They have been normalized by some constants to make sure the initial values of them in the numerical simulation are same. 
\begin{enumerate}
    \item $E(\rho_{AB})=1-\frac{1}{E_4^{(2,3)}(\rho_{AB})}$. 
    
    \item $E(\rho_{AB})=1-\frac{\sqrt{(1-\tr\rho_A^2)(1-\tr\rho_B^2)}}{E_4^*(\rho_{AB})}$, where $E_4^*(\rho_{AB})=E_4^{(2,3)}(\rho_{AB}-\rho_A\otimes\rho_B)$. 
    
    \item $E(\rho_{AB})=1-\frac{\mathrm{max}\{\tr\rho_A^2,\tr\rho_B^2\}}{\tr\rho_{AB}^2}$.
    
    \item $E(\rho_{AB})=1-\frac{\tr[(\rho_{AB}^{T_A})^3]}{\mathrm{P_3}(\rho_{AB})}$ where $\mathrm{P_3}(\rho_{AB})=\beta x^3+(1-\beta x)^3$, $\beta=\lfloor\frac{1}{\tr[(\rho_{AB}^{T_A})^2]}\rfloor$, $x=\frac{\beta+\sqrt{\beta\left\{\tr[(\rho_{AB}^{T_A})^2](\beta+1)-1\right\}}}{\beta(\beta+1)}$.
\end{enumerate}

We also list more results of our numerical simulations, varying the choices of $A$ and $B$, shown in Fig.~\ref{fig:Quench Dynamics}. The initial states of $A$ and $B$ is the GHZ-type state, $\frac{1}{\sqrt{2}}(\ket{0}^{\otimes N_{AB}}+\ket{1}^{\otimes N_{AB}})$. $C$ acts as the bath, which is initialized to be the tensor product of $\ket{0}$. One could find that, no matter for what choices of $A$ and $B$, the $E_4^*$ criterion has obvious advantage over the others. 

\begin{figure}
    \centering
    \includegraphics[width=14cm]{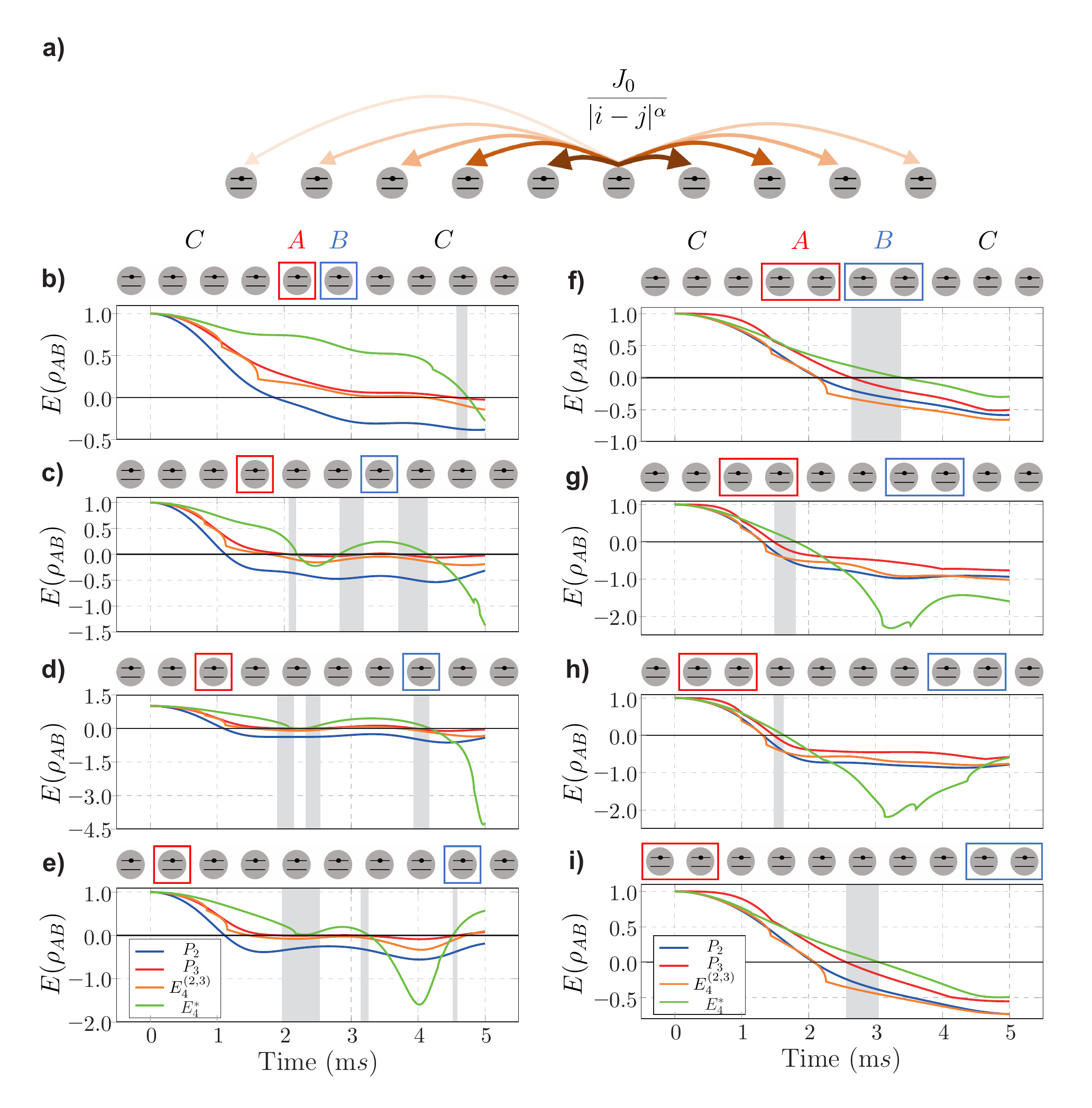}
    \caption{ \textbf{Local entanglement decaying in thermal system} a) The illustration of the 10-qubit Ising Hamiltonian,  where the grey balls represent qubits, and lines with different widths represent the power-law decay interaction. b) - i) Entanglement dynamics of local system $AB$ indicated by four criteria. The grey areas represent the time periods in which the entanglement can only be detected by the $E_4^*$ criterion.}
    \label{fig:Quench Dynamics}
\end{figure}

\subsection{Entanglement of MBL Eigenstates}
In addition to thermalization, many-body localization (MBL) \cite{abanin2019manybody} is another dynamical phase. One of the most commonly-used characteristics to distinguish between the thermal and MBL Hamiltonian is the entanglement scaling behaviour of their eigenstates. The thermal eigenstates, except for the ground state, usually have volume-law entanglement. While MBL eigenstates have area-law entanglement \cite{serbyn2013local}.  We investigate a 10-qubit disordered Ising chain with long-range interaction and an open boundary condition
\begin{equation}\label{eq:MBLhamiltonian}
\begin{split}
H_{\mathrm{Ising}}=\sum_{i<j}J_{i,j}\hat{\sigma}_i^x\hat{\sigma}_j^x+\frac{B_z}{2}\sum_i\hat{\sigma}_i^z+\sum_iD_i\hat{\sigma}_i^z,
\end{split}
\end{equation}
where $J_{i,j}=J_0/|i-j|^\alpha$ is the coupling strength following power-law decay with $\alpha=1.13$ and $B_z=4J_0$ is a uniform transverse field. The last term of this Hamitonian represents a local disordered potential, where $D_i\in[-W,W]$ is sampled from a uniform distribution where $W$ stands for the disorder strength. This Hamiltonian has been proved numerically and experimentally to have an MBL phase for large enough $W$ \cite{smith2016many,wu2016understanding}.

We apply $E_4^{(2,3)}(\cdot)$ to the eigenstate $\psi_{AB}$ of $H_{\mathrm{Ising}}$, where $A$ and $B$ are disjoint parties of these 10 qubits, see Fig.~\ref{fig:MBLeigen}. $\psi_{AB}$ is chosen to be the one with the middle eigenenergy of the spectrum.  For each value of $W$, we perform 50 numerical experiments with independent samples $\{D_i\}$ and take an average to calculate $E_4^{(2,3)}$. The results are shown in Fig.~\ref{fig:MBLeigen}(b), from which one can see the scaling transition from volume law to area law with the increment of disorder strength. For comparison, we also compute the entanglement scaling for a 10-qubit random pure state, which shows the most rapid increment of entanglement with increasing system size. These results show numerical evidences that our quantities have the potential to quantify the entanglement, not only to witness it.

\begin{figure}
    \centering
    \includegraphics[width=8cm]{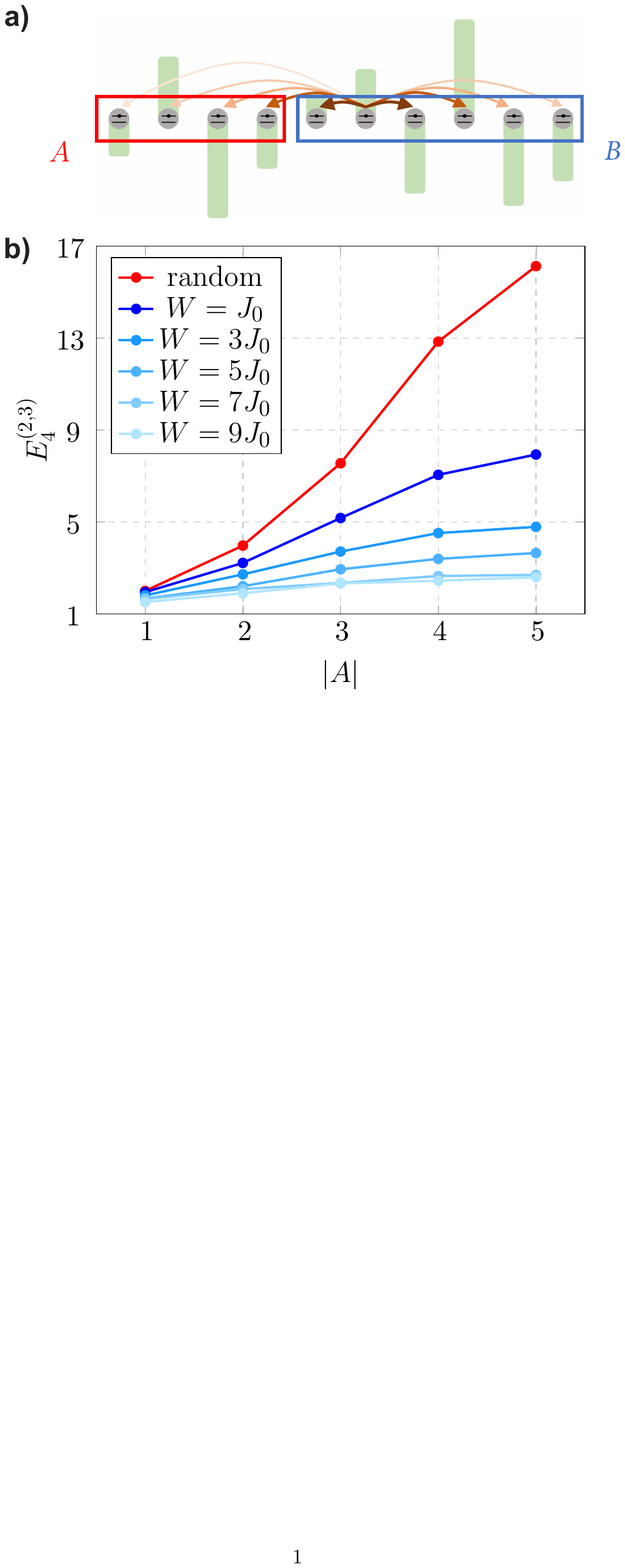}
    \caption{Entanglement of MBL eigenstates. (a) Illustration of our Hamiltonian, with arrows standing for power-law decay coupling interaction $J_{i,j}$ and solid bars standing for local disorder $\{D_i\}$. The spin chain is divided into two disjoint parties, $A$ and $B$, labeled by squares. (b) The scaling of the entanglement indicator $E_4^{(2,3)}$ of the middle eigenstate of $H_{\mathrm{Ising}}$ for different disorder strength $W$, where $|A|$ labels the qubit number of party $A$.}
    \label{fig:MBLeigen}
\end{figure}

\subsection{Entanglement of ETH Eigenstates}\label{sec:ETH}
Eigenstate Thermalization Hypothesis (ETH) is the most important hypothesis in quantum thermodynamics \cite{ueda2020quantum}. In ETH, the reduced density matrix of an eigenstate of a generic Hamiltonian, $\rho_A=\tr_{\bar{A}}(\ketbra{i}{i})$ with $H\ket{i}=E_i\ket{i}$, has the canonical form $\rho_A\propto e^{-H_A/T_i}$, where $T_i$ is the temperature given by the energy condition $\langle H\rangle_{T_i}=E_i$, and $H_A$ is the Hamiltonian that constructed by projecting $H$ into subsystem $A$. Normally speaking, the absolute values of temperature of eigenstate with the eigenenergy near the middle of the spectrum are higher than those of the edge eigenstates. Hence, the local reduced 
states of the middle eigenstates are close to the maximally mixed states and have less entanglement. When we increase or decrease the energy of the chosen eigenstate, the absolute value of temperature will fall and the local density matrices will deviate from the maximum mixed states, and the local entanglement will increase. Therefore, if we draw a diagram to investigate the relationship between the local entanglement of eigenstates and the eigenenergy, we expect to see a rainbow structure in this diagram.

Here we choose a 10-qubit 1D quantum Ising spin system with mixed fields (QIMF) to investigate this phenomenon. The Hamiltonian of this system is 
\begin{equation}
\begin{split}
H_{\mathrm{QIMF}}=\sum_{i=1}^N\left(g\hat{\sigma}_i^y+h\hat{\sigma}_i^x+J\hat{\sigma}_i^x\hat{\sigma}_{i+1}^x\right),
\end{split}
\end{equation}
where $N=10$, $\hat{\sigma}_i^x$ and $\hat{\sigma}_i^y$ are the spin-$\frac{1}{2}$ Pauli-$X$ and Pauli-$Y$ operators acting on the $i$-th qubit, respectively. Here we adopt the periodic boundary condition, $N+1=1$. This Hamiltonian has been proved to satisfy the ETH when the parameters are set to be $(g,h,J)=(0.9045,0.8090,1)$ \cite{kim2014eth,cotler2021emergent}, which are also the values we use in the numerical experiments. To observe the rainbow structure, we do the spectral decomposition of this Hamiltonian to get the eigenstates $\ket{E}$ of different eigenvalues $E$. Then, we compute $E_4^{(2,3)}(\rho_{AB})$ with $\rho_{AB}=\tr_{\overline{AB}}(\ketbra{E})$ for local parties $A$ and $B$ and different values of $E$.

The results of the numerical simulation are shown in Fig.~\ref{fig:Entanglement Eigenstates}, from which one can see the clear rainbow structure in the four diagrams. These results meet the prediction of ETH, and show another numerical evidence that the quantities we develop in this work, $E_{2n}^\pi$, have the potential to be the entanglement quantifiers.

\begin{figure}
    \centering
    \includegraphics[width=14cm]{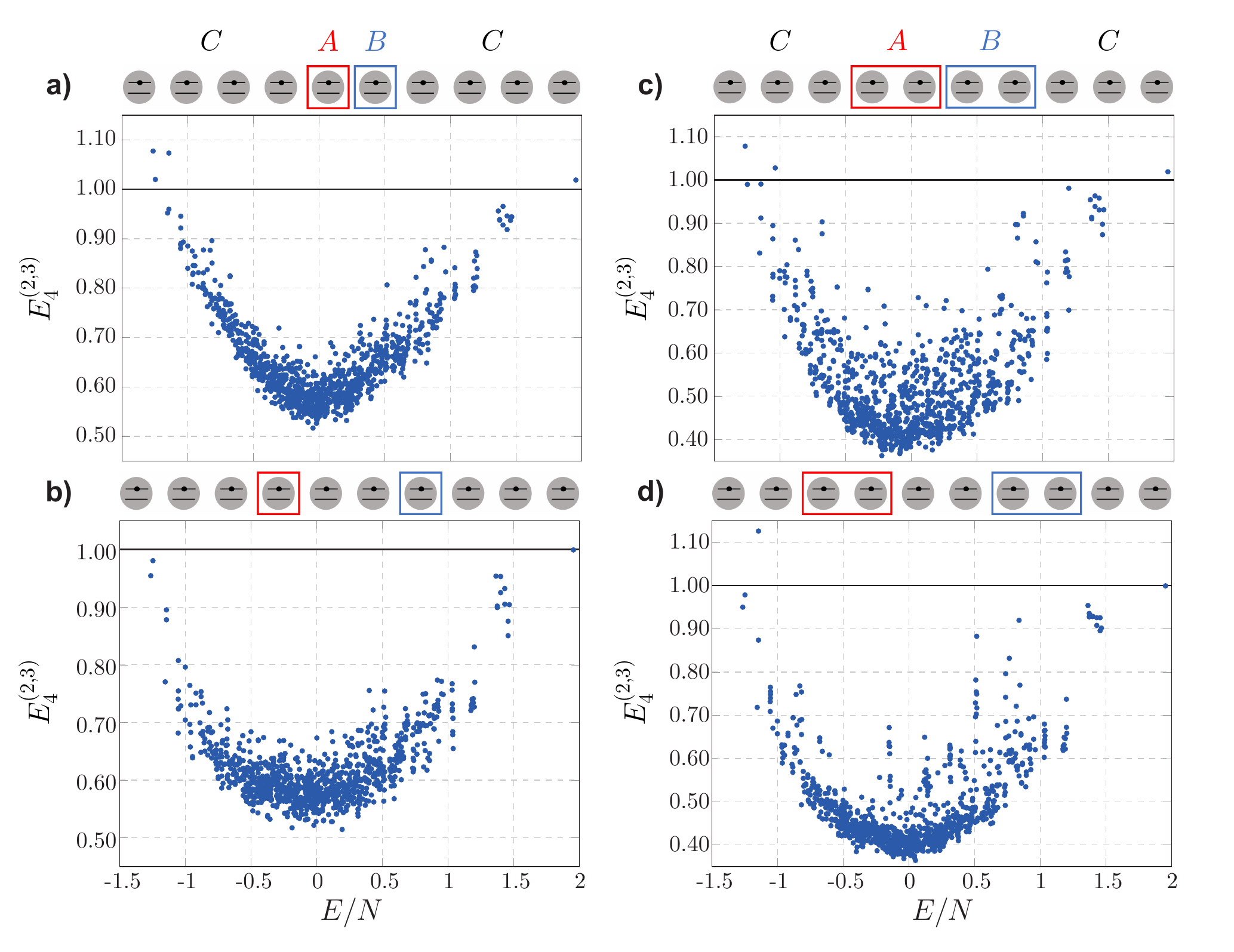}
    \caption{\textbf{Entanglement of ETH eigenstates} We investigate the relationship between the local entanglement of eigenstate of $H_{\mathrm{QIMF}}$ and the eigenenergy of eigenstates. In a) - d), the local parties are chosen to be $A$ and $B$ that marked by squares. The entanglement between $A$ and $B$ is quantified with $E_4^{(2,3)}=E_4^{(2,3)}(\rho_{AB})$ where $\rho_{AB}$ is the reduced density matrix for the eigenstate of $H_{\mathrm{QIMF}}$, $\ket{E}$. Here $E$ denotes the eigenvalue of eigenstate and $N$ denotes the qubit number of the Ising system. }
    \label{fig:Entanglement Eigenstates}
\end{figure}

\subsection{Multipartite Entanglement Detection}\label{subsec:multientdetection}
We provide details about detecting multipartite entanglement. We take the case in which moments $M_2^\pi$, $M_4^\pi$, $M_6^\pi$, and $M_8^\pi$ are given as an example. According to the Theorem~2 in the main text, if only four orders of moments are given, the minimal value of the permutation norm is reached when four different singular values exist. Denote the singular values to be $\lambda_1$, $\lambda_2$, $\lambda_3$, and $\lambda_4$ with the degeneracy of $q_1$, $q_2$, $q_3$ and $q_4$, respectively. Here, we have $q_1+q_2+q_3+q_4\le L$ where $L$ is the number of singular values. $E_8^\pi(\rho)$ can be calculated by the following optimization problem,
\begin{equation}
\begin{aligned}
\min_{q_1,q_2,q_3,q_4\in\mathbb{N}}& E_{8}^\pi(\rho)= q_1\lambda_1+q_2\lambda_2+q_3\lambda_3+q_4\lambda_4,\\
\mathrm{s.t.}& \ q_1\lambda_1^2+q_2\lambda_2^2+q_3\lambda_3^2+q_4\lambda_4^2=M_2^\pi,\\
& \ q_1\lambda_1^4+q_2\lambda_2^4+q_3\lambda_3^4+q_4\lambda_4^4=M_4^\pi,\\
& \ q_1\lambda_1^6+q_2\lambda_2^6+q_3\lambda_3^6+q_4\lambda_4^6=M_6^\pi,\\
& \ q_1\lambda_1^8+q_2\lambda_2^8+q_3\lambda_3^8+q_4\lambda_4^8=M_8^\pi,\\
& \ q_1+q_2+q_3+q_4\le L.
\end{aligned}
\end{equation}
First, we solve the equation set of the constraints for each possible value of $\{q_1,q_2,q_3,q_4\}$ to get the corresponding $\{\lambda_1,\lambda_2,\lambda_3,\lambda_4\}$. Then, we find the minimal value of $q_1\lambda_1+q_2\lambda_2+q_3\lambda_3+q_4\lambda_4$ for all the choices of $\{q_1,q_2,q_3,q_4\}$. This minimal value is thus $E_{8}^\pi(\rho)$.

The state we use in the main text is 
\begin{equation}
    \rho = \frac{1}{4}\left(\mathbb{I}_8-\sum_{i=1}^4\ketbra{\psi_i}{\psi_i}\right),
\end{equation}
where $\{\ket{\psi}_i\}_i=\{\ket{0,1,+},\ket{1,+,0},\ket{+,0,1},\ket{-,-,-}\}$. Note that this state is separable in any bipartition and hence its entanglement cannot be detected by any criterion extended from the bipartite case, such as PPT and CCNR. The permutation operation we use is $\pi=\tbinom{1,2,3,4,5,6}{1,3,2,4,5,6}$. The moments $M_{2n}^\pi(\rho)$, the optimized permutation norm $E_{2n}^\pi(\rho)$, and the singular values distribution for the minimal permutation norm are listed in Table \ref{tab:moments}. With the moments up to the eighth order, the optimized permutation norm is larger than 1, and hence the entanglement can be successfully detected.
\begin{table}[htbp]
    \centering
    \begin{tabular}{c|c|c|c}
        \hline
        $n$ & $M_{2n}^\pi$ & $E_{2n}^\pi$ & Singular Value Distribution\\
        \hline
        1 & 0.25 & 0.5 & $1\times 0.5$\\
        2 & 0.01928711 & 0.94882494 & $3\times0.282788+1\times 0.10045973$\\
        3 & 0.00175476 & 0.97296386 & $1\times 0.3353396 +2\times 0.2309162 +1\times0.1757919$\\
        4 & 0.00016548 & 1.08649082 & $2\times 0.30870218+2\times0.15042053+1\times0.08412272+1\times0.08412269$\\
        \hline
    \end{tabular}
    \caption{Moments and the optimized permutation norm.}
    \label{tab:moments}
\end{table}

\end{document}